\documentclass[11pt,a4paper]{article}

\usepackage{amsmath} % AMS Math Package
\usepackage{amsthm} % Theorem Formatting
\usepackage{amssymb}	% Math symbols such as \mathbb
\usepackage{graphicx} % Allows for eps images
\usepackage{xurl} % Allow linkbreaks inside hyperlinks
\usepackage[pagebackref,plainpages=false,colorlinks,linkcolor=NavyBlue,anchorcolor=blue,citecolor=NavyBlue]{hyperref}

\hypersetup{breaklinks=true}

\usepackage[dvipsnames]{xcolor}
\usepackage[dvips,margin=1in,bottom=1in]{geometry}

\usepackage[algoruled]{algorithm2e} %algosection
\usepackage[capitalize,noabbrev]{cleveref}
\crefname{algorithm}{algorithm}{algorithms}
\Crefname{algorithm}{Algorithm}{Algorithms}
\crefname{protocol}{protocol}{protocols}
\Crefname{protocol}{Protocol}{Protocols}

\crefname{appendix}{appendix}{appendices} 
\Crefname{appendix}{Appendix}{Appendices}

\usepackage{tcolorbox}

\allowdisplaybreaks[1]

\usepackage{xspace}
\usepackage[T1]{fontenc}
\AtBeginDocument{%
  \DeclareFontShape{T1}{cmr}{m}{scit}{<->ssub*cmr/m/sc}{}%
}
\usepackage[utf8]{inputenc}
\usepackage[english]{babel}

\usepackage{adjustbox} % Adjust the maximum width of tables
\usepackage{footnotehyper} 
\makesavenoteenv{table}  
\usepackage{multirow}
\usepackage{makecell}
\usepackage{booktabs}

\usepackage{placeins}

\usepackage{xparse}

\usepackage{enumerate}
\usepackage{enumitem}

\usepackage[normalem]{ulem}

\usepackage{tikz}
\usetikzlibrary{quantikz2}

\usepackage{authblk}

\usepackage{diagbox}
\usepackage{mathtools}
\usepackage{stmaryrd}

\usepackage{thmtools}

\usepackage{thm-restate}
\usepackage{xpatch}
\makeatletter

\newcounter{protocol}
\xpatchcmd{\algocf@caption@algo}
  {\renewcommand{\theHalgocf}{\thealgocf}}
  {\relax}
  {}
  {\PackageWarning{algorithm-numbering}{Could not patch algorithm2e hyperlinks}}
\renewcommand*{\theHalgocf}{algorithm.\thealgocf}
\newcommand{\UseProtocolCounter}{%
  \let\c@algocf\c@protocol
  \let\thealgocf\theprotocol
  \renewcommand*{\theHalgocf}{protocol.\theprotocol}%
  \crefalias{algocf}{protocol}%
  \SetAlgorithmName{Protocol}{protocol}{List of Protocols}%
}

\xpatchcmd{\thmt@restatable}
  {\csname #2\@xa\endcsname\ifx\@nx#1\@nx\else[{#1}]\fi}
  {\ifthmt@thisistheone
     \csname #2\@xa\endcsname\ifx\@nx#1\@nx\else[{#1}]\fi
   \else
     \csname #2\@xa\endcsname
       \ifx\@nx#1\@nx
         [restated]%
       \else
         [{#1, restated}]%
       \fi
   \fi}
  {}
  {\PackageWarning{thm-restate}{Patch for restated theorem failed}}
\makeatother

\declaretheorem[style=plain,numberwithin=section]{theorem}
\declaretheorem[style=plain,numberlike=theorem]{lemma,corollary}
\declaretheorem[style=remark,numberlike=theorem]{remark}
\declaretheorem[style=plain,numberlike=theorem]{definition}
\declaretheorem[style=plain,numberlike=theorem]{proposition}

\declaretheorem[style=definition,numberlike=theorem]{problem,conjecture}
\numberwithin{equation}{section}

\DeclarePairedDelimiter\rbra{\lparen}{\rparen}
\DeclarePairedDelimiter\sbra{\lbrack}{\rbrack}
\DeclarePairedDelimiter\cbra{\{}{\}}
\DeclarePairedDelimiter\ssbra{\llbracket}{\rrbracket}
\DeclarePairedDelimiter\abs{\lvert}{\rvert}
\DeclarePairedDelimiter\norm{\lVert}{\rVert}
\DeclarePairedDelimiter\ceil{\lceil}{\rceil}
\DeclarePairedDelimiter\floor{\lfloor}{\rfloor}

\DeclarePairedDelimiterX{\sbraCond}[2]{(}{)}{#1 \;\delimsize|\; #2} % for conditional probability
\DeclarePairedDelimiterX{\innerprod}[2]{\langle}{\rangle}{#1 \delimsize| #2}

\RequirePackage{mleftright}
\mleftright
\RequirePackage{physics2}
\usephysicsmodule{ab}

\newcommand{\ketbra}[2]{\ensuremath{\ket{#1}\!\bra{#2}}}

\DeclareMathOperator*{\E}{\mathbb{E}}
\DeclareMathOperator*{\Var}{\mathrm{Var}}

\makeatletter
\def\@buildmath#1{%
  \expandafter\def\csname bb#1\endcsname{\ensuremath{\mathbb{#1}}}%
  \expandafter\def\csname bf#1\endcsname{\ensuremath{\mathbf{#1}}}%
  \expandafter\def\csname sf#1\endcsname{\ensuremath{\mathsf{#1}}}%
  \expandafter\def\csname cal#1\endcsname{\ensuremath{\mathcal{#1}}}%
  \expandafter\def\csname rm#1\endcsname{\ensuremath{\mathrm{#1}}}%
  \expandafter\def\csname tt#1\endcsname{\ensuremath{\mathtt{#1}}}%
  \expandafter\def\csname frak#1\endcsname{\ensuremath{\mathfrak{#1}}}%
}
\def\@buildmathletters#1{%
  \ifx#1\relax\else
    \@buildmath{#1}%
    \expandafter\@buildmathletters
  \fi
} % Cycling through letters
\@buildmathletters ABCDEFGHIJKLMNOPQRSTUVWXYZabcdefghijklmnopqrstuvwxyz\relax
\makeatother

\newcommand{\Tr} {\operatorname{Tr}}
\newcommand{\poly} {\operatorname{poly}}

\newcommand{\polylog} {\operatorname{polylog}}

\newcommand{\supp} {\operatorname{supp}}

\newcommand{\yes}{{\rm yes}}
\newcommand{\no}{{\rm no}}

\newcommand{\Out}{\mathrm{out}}
\newcommand{\hsep}[2]{h_{\mathrm{Sep}\rbra*{ #1, #2 }}}
\newcommand{\hsepM}[2]{h_{\mathrm{Sep}\sbra*{ #1; #2 }}}
\newcommand{\hsepPlus}[2]{h_{\mathrm{Sep}^+\rbra*{ #1, #2 }}}
\newcommand{\hsepPlusM}[2]{h_{\mathrm{Sep}^+\sbra*{ #1; #2 }}}
\newcommand{\totimes}{\!\otimes\!}
\newcommand{\wCG}{\widetilde{C}_G}

\newcommand{\PTIME}{\textnormal{\textsf{P}}\xspace}

\newcommand{\BPP}{\textnormal{\textsf{BPP}}\xspace}
\newcommand{\BQP}{\textnormal{\textsf{BQP}}\xspace}
\newcommand{\NP}{\textnormal{\textsf{NP}}\xspace}

\newcommand{\PH}{\textnormal{\textsf{PH}}\xspace}
\newcommand{\MA}{\textnormal{\textsf{MA}}\xspace}
\newcommand{\StoqMA}{\textnormal{\textsf{StoqMA}}\xspace}
\newcommand{\StoqMAtwo}{\textnormal{\textsf{StoqMA}(2)}\xspace}
\newcommand{\StoqMAk}{\textnormal{\textsf{StoqMA}(\textit{k})}\xspace}
\newcommand{\SymStoqMA}{\textnormal{\textsf{SymStoqMA}}\xspace}

\newcommand{\SymStoqMAk}{\textnormal{\textsf{SymStoqMA}(\textit{k})}\xspace}

\newcommand{\SepQMAtwo}{\textnormal{\textsf{SepQMA}(2)}\xspace}

\newcommand{\ProdStoqMAk}{\textnormal{\textsf{ProdStoqMA}(\textit{k})}\xspace}
\newcommand{\AM}{\textnormal{\textsf{AM}}\xspace}

\newcommand{\QMA}{\textnormal{\textsf{QMA}}\xspace}
\newcommand{\QCMA}{\textnormal{\textsf{QCMA}}\xspace}
\newcommand{\QMAtwo}{\textnormal{\textsf{QMA}(2)}\xspace}
\newcommand{\QMAk}{\textnormal{\textsf{QMA}}(\textit{k})\xspace}
\newcommand{\PreciseQMA}{\textnormal{\textsf{PreciseQMA}}\xspace}
\newcommand{\PreciseQMAtwo}{\textnormal{\textsf{PreciseQMA}(2)}\xspace}
\newcommand{\PreciseStoqMA}{\textnormal{\textsf{PreciseStoqMA}}\xspace}
\newcommand{\PreciseStoqMAtwo}{\textnormal{\textsf{PreciseStoqMA}(2)}\xspace}
\newcommand{\PreciseStoqMAk}{\textnormal{\textsf{PreciseStoqMA}(\textit{k})}\xspace}

\newcommand{\SBP}{\textnormal{\textsf{SBP}}\xspace}

\newcommand{\SBQP}{\textnormal{\textsf{SBQP}}\xspace}

\newcommand{\PSPACE}{\textnormal{\textsf{PSPACE}}\xspace}

\newcommand{\EXP}{\textnormal{\textsf{EXP}}\xspace}
\newcommand{\NEXP}{\textnormal{\textsf{NEXP}}\xspace}

\newcommand{\StoqCIU}{\textnormal{\textsc{StoqCIU}}\xspace}
\newcommand{\SepStoqCIU}{\textnormal{\textsc{SepStoqCIU}}\xspace}
\newcommand{\RCD}{\textnormal{\textsc{RCD}}\xspace}
\newcommand{\SepRCD}{\textnormal{\textsc{SepRCD}}\xspace}
\newcommand{\CleanCC}{\textnormal{\textsc{CleanCC}}\xspace}

\newcommand{\GSCON}{\textnormal{\textsc{GSCON}}\xspace}

\newcommand{\GapCG}{\textnormal{\textsc{GapCG}}\xspace}

\newcommand{\SAT}{\textnormal{\textsc{SAT}}\xspace}

\newcommand{\Real} {\operatorname{Re}}

\renewcommand{\H}{\mathrm{H}}

\newcommand{\td}{\mathrm{T}}
\newcommand{\TV}{\mathrm{TV}}
\newcommand{\F}{\mathrm{F}}
\newcommand{\dH}{d_{\mathrm H}} %Hellinger distance
\newcommand{\KL}{\mathrm{KL}}

\newcommand{\binset}{\{0,1\}}
\newcommand{\innerprodF}[2]{\langle #1 , #2 \rangle}

\newcommand{\Sep}{\mathrm{Sep}}
\DeclareMathOperator*{\conv}{\mathrm{conv}}

\newcommand{\negl}{\mathrm{negl}}

\newcommand{\full}{\mathrm{full}}
\newcommand{\unif}{\mathrm{unif}}
\newcommand{\cons}{\mathrm{cons}}
\newcommand{\branch}{\mathrm{branch}}
\newcommand{\match}{\mathrm{match}}
\newcommand{\route}{\mathrm{route}}
\newcommand{\dum}{\mathrm{dum}}
\newcommand{\ver}{\mathrm{ver}}
\newcommand{\dyad}{\mathrm{dyad}}
\newcommand{\sym}{\mathrm{sym}}
\newcommand{\DTIME}{\textnormal{\textrm{DTIME}}\xspace}
\newcommand{\DTISP}{\textnormal{\textrm{DTISP}}\xspace}
\newcommand{\BPTIME}{\textnormal{\textrm{BPTIME}}\xspace}
\newcommand{\Uniform}{\mathrm{Unif}}

\newcommand{\EE}{\widetilde{\mathbb E}}
\newcommand{\GammaZero}{\Gamma_{\bar{0}}}

\newcommand{\algoref}[1]{Algorithm \ref{#1}}

\newcommand{\parheading}[1]{%
  \par\addvspace{1em}%
  \noindent\emph{#1}\enspace\ignorespaces%
}
\newcommand{\subparheading}[1]{%
  \par\addvspace{0.8em}%
  \noindent\underline{#1}\enspace\ignorespaces%
}
\newcommand{\parheadingItem}[1]{%
  \par\addvspace{0.75em}%
  \noindent\emph{#1}\enspace\ignorespaces%
}

\NewDocumentCommand{\set}{m g}{%
  \left\{ #1
  \IfNoValueF{#2}{ \,\middle|\, #2 }
  \right\}%
}

\newtcolorbox{takeawaybox}{
    colback=purple!2,
    colframe=purple!50!black,
    boxrule=0.3pt,
    arc=0pt,
    left=45pt,
    right=45pt,
    top=6pt,
    bottom=6pt,
    before skip=8pt,
    after skip=8pt
}
\newcommand{\takeaway}[1]{%
\begin{takeawaybox}
\centering\itshape #1
\end{takeawaybox}%
}

\renewcommand{\epsilon}{\varepsilon}
\begin{document}
% Reduce the line spacing between equation and text
\setlength{\abovedisplayskip}{6pt}
\setlength{\belowdisplayskip}{6pt}

\title{The power of unentanglement without destructive interference}

\author[1]{Yupan Liu\thanks{Email: \url{yupan.liu@epfl.ch}}}
\author[2]{Pei Wu\thanks{Email: \url{pei.wu@psu.edu}}}
\affil[1]{School of Computer and Communication Sciences, \'Ecole Polytechnique F\'ed\'erale de Lausanne}
\affil[2]{Department of Computer Science and Engineering, The Pennsylvania State University}
\date{}

\maketitle
\pagenumbering{roman}
\thispagestyle{empty}

\begin{abstract}
    Stoquasticity, originating in physical systems free of the so-called \emph{sign problem} and avoiding destructive interference from sign cancellations, gives rise to $\sf StoqMA$, introduced by \hyperlink{cite.BBT06}{Bravyi, Bessen, and Terhal (2006)}, a quantum-inspired intermediate class between $\sf MA$ and $\sf AM$. Unentanglement similarly gives rise to ${\sf QMA}(2)$, introduced by \hyperlink{cite.KMY09}{Kobayashi, Matsumoto, and Yamakami (CJTCS 2009)}, which generalizes $\sf QMA$ to two unentangled proofs and still has only the trivial $\sf NEXP$ upper bound. 
    
    In this work, we initiate a systematic study of the power of unentanglement without destructive interference via ${\sf StoqMA}(2)$, the class of unentangled stoquastic Merlin--Arthur proof systems. Beyond its complexity-theoretic interest, ${\sf StoqMA}(2)$ is closely connected to the optimality of non-negative tensor optimization algorithms. We highlight:
    \begin{enumerate}[label={\arabic*.}]
        \item ${\sf NP} \subseteq {\sf StoqMA}(2)$ with $\widetilde{O}(\sqrt{n})$-qubit proofs and completeness error $2^{-{\rm polylog}(n)}$, paralleling the best known ${\sf QMA}(2)$ lower bounds except for perfect completeness. 
        Meanwhile, the Sum-of-Squares algorithm of \hyperlink{cite.BKS14}{Barak, Kelner, and Steurer (STOC 2014)} gives an exponential-time upper bound for ${\sf StoqMA}(2)$. Our tightened analysis shows the essential \emph{optimality} of the parameters in both our protocol and the BKS algorithm under the Exponential-Time Hypothesis (ETH). 
        \item Under nearly perfect completeness, ${\sf StoqMA}(2)$ admits sharper deterministic upper bounds: for ${\sf StoqMA}(2)_1$, the parameter dependence in the general ETH-optimal time bound can be \emph{exponentially} improved, or the bound achieved \emph{simultaneously} with polynomial space.        
        \item For logarithmic-size proofs, ${\sf NP} \subseteq {\sf StoqMA}(2)_{\log}$ with completeness error $O(n^{-2})$ and implicitly vanishing gap. Meanwhile, ${\sf StoqMA}(2)_{\log} \subseteq {\sf MA}$, and consequently quantum-inspired randomness enables \emph{exponentially} shorter unentangled proofs even under the standard derandomization assumption ${\sf MA}={\sf NP}$.
        \item ${\sf PreciseStoqMA}(2)$, a variant of ${\sf StoqMA}(2)$ with exponentially small promise gap, \emph{cannot} achieve perfect completeness unless ${\sf EXP}={\sf NEXP}$. In contrast, ${\sf PreciseStoqMA}$ achieves perfect completeness, since ${\sf PSPACE} \subseteq {\sf PreciseStoqMA}_1$. 
    \end{enumerate}

    Our lower bounds are obtained by stoquastizing the short-proof ${\sf QMA}(2)$ protocols using \emph{distribution testing} techniques. Our upper bounds for the nearly perfect completeness case are proved via our \emph{rectangular closure testing} framework, a new combinatorial technique tailored to ${\sf StoqMA}(2)_1$. 
\end{abstract}

\newpage
\tableofcontents
\thispagestyle{empty}

%%%%%%%%%%%%%%%%%%%%%%%%%%%%%%%%%%%%%%%%%%%%%%%%%%%%%%%%%%%
%%%%%%%%%%%%%%%%%%%%%%%%%%%%%%%%%%%%%%%%%%%%%%%%%%%%%%%%%%%
%%%%%%%%%%%%%%%%%%%%%%%%%%%%%%%%%%%%%%%%%%%%%%%%%%%%%%%%%%%

\newpage
\pagenumbering{arabic}

%%%%%%%%%%%%%%%%%%%%%%%%%%%%%%%%%%%%%%%%%%%%%%%%%%%%%%%%%%%
%%%%%%%%%%%%%%%%%%%%%%%%%%%%%%%%%%%%%%%%%%%%%%%%%%%%%%%%%%%
%%%%%%%%%%%%%%%%%%%%%%%%%%%%%%%%%%%%%%%%%%%%%%%%%%%%%%%%%%%

\section{Introduction}
The notion of \emph{stoquasticity}, a blend of \emph{sto}cha\emph{stic} and \emph{qua}ntum, was first introduced in~\cite{BDOT06} to describe Hamiltonians whose off-diagonal matrix elements are real and non-positive, a condition that naturally connects the model to \emph{stochastic processes} such as Markov chains and random walks. Motivated by this connection, Bravyi, Bessen, and Terhal~\cite{BBT06} introduced the complexity class $\StoqMA$, for which the stoquastic local Hamiltonian problem is complete. 

A useful way to understand \StoqMA{} is to compare it directly with \MA{}~\cite{Bab85}, the class of Merlin--Arthur proof systems and the standard randomized generalization of \NP{}. Indeed, \MA{} admits a natural ``quantum'' formulation~\cite{BDOT06}: the verifier uses classical reversible circuits built from Toffoli, CNOT, and X gates, together with the usual $\ket{0}$ ancillary qubits and $\ket{+}$ ancillary qubits that provide randomness, and then measures a designated output qubit in the \emph{computational} basis. Allowing the witness to be quantum does not increase the power of this model, as quantum witnesses offer \emph{no} advantage over classical ones. 

By contrast, in \StoqMA{} the only formal difference is that the designated output qubit is measured in the \emph{Hadamard} basis instead. Consequently, \StoqMA{} naturally contains \MA{}.\footnote{ As in~\cite[Appendix A.4]{BBT06}, let the \MA{} verifier be in the state $\ket{0}\ket{R_0} + \ket{1}\ket{R_1}$ before its final measurement. After appending $\ket{+}\ket{0}$ and applying one Toffoli gate, the resulting \StoqMA{} verifier has the state
$\rbra*{ \ket{000}\ket{R_0} +\allowbreak \ket{100}\ket{R_0} +\allowbreak \ket{001}\ket{R_1} +\allowbreak \ket{111}\ket{R_1} }/\sqrt{2}$
before the final measurement. Hence, the acceptance probability $p$ becomes $(1+p)/2$: the accepting branch interferes constructively, while the rejecting branch is split into orthogonal labels. }
This apparently minor syntactic change has important structural consequences. By the Perron--Frobenius theorem in linear algebra, the acceptance probability of a \StoqMA{} verifier is always maximized by a witness state whose amplitudes are \emph{non-negative} in the computational basis,\footnote{In this work, we refer to such states as \emph{non-negative states}.} and this structural property makes the underlying optimization substantially easier than in \QMA{}, the fully quantum analogue of \NP{}. 

Whereas the local Hamiltonian problem, which asks one to approximate the minimum eigenvalue of an underlying Hermitian matrix, is complete for \QMA{}, and \QMA{} is thus at least contained in \PSPACE{}~\cite{Kitaev99,KSV02}, the stoquastic local Hamiltonian problem can be reformulated as the problem of estimating the trace of a power of an entrywise non-negative matrix with efficiently computable local contributions. Using the techniques in~\cite{BGM06,GS89}, this places \StoqMA{} at least inside \AM{}~\cite{BBT06}, and hence inside the polynomial-time hierarchy (\PH{})~\cite{Bab85}. In this sense, \StoqMA{} is a \emph{semi-quantum} class: it is quantum enough to go beyond \MA{}, yet still structured enough to admit unexpectedly classical upper bounds, unlike \BQP{}, for which even proving containment in \PH{} is technically difficult in light of oracle separations~\cite{RT19}.

\paragraph{Further perspectives on \StoqMA{}.}
Beyond serving as a natural quantum-inspired intermediate class between \MA{} and \AM{}, \StoqMA{} admits several further perspectives: 

    \parheadingItem{Physical perspective} Traditionally, \StoqMA{} corresponds to Hamiltonians without the \emph{sign problem}: a regime with no destructive interference from sign cancellations, where random-walk-based Monte Carlo simulations become applicable. A standard example is stoquastic $k$-\SAT{}, which is both complete for the perfect-completeness version of \StoqMA{} ($\StoqMA_1$) and contained in \MA{}~\cite{BBT06,BT10}.
    More recently, this containment was extended slightly to \StoqMA{} with negligible completeness error~\cite{AGL20}.\footnote{This improvement is achieved via a careful analysis of the underlying random walk for \emph{yes} instances in~\cite[Section 5]{AG19}, together with a probabilistic argument for the lazy random walk, such as~\cite[Proposition 2.5]{ST13}.} Beyond being a formal model, \StoqMA{} also captures physically meaningful Hamiltonians, such as the transverse-field Ising model whose associated Hamiltonian problem is \StoqMA{}-complete~\cite{BH17} and is closely related to adiabatic quantum computation~\cite{AL18}.
    
    \parheadingItem{Complexity-theoretic perspective.} From a more modern viewpoint, \StoqMA{} occupies a distinguished place in the classification of local Hamiltonian problems. In the qubit $2$-local setting, every problem falls into one of four categories: in \PTIME{}, \NP{}-complete, \StoqMA{}-complete, or \QMA{}-complete~\cite{CM16}, yielding a quantum analogue of Schaefer's dichotomy theorem~\cite{Schaefer78}.\footnote{Other examples include local Hamiltonian problems with symmetric $2$-local interactions and no 1-local terms, which are either \QMA{}-complete or in \StoqMA{}~\cite{PM15}, as well as a recent classification of EPR Hamiltonians~\cite{MS26}, whose levels are \QMA{}-complete, \StoqMA{}-complete, \NP{}-complete, and reducible to the $\textsc{EPR}^*$ problem conjectured to be in \BPP{}.} Under derandomization assumptions implying $\AM=\NP$~\cite{KvM02,MV05}, the intermediate \StoqMA{}-complete level would disappear. Likewise, under the standard derandomization assumption that $\MA=\NP$, proving $\StoqMA=\MA$ would eliminate the same level.
    
    \parheadingItem{Distribution-testing perspective.} More recently, an intrinsic connection between \emph{distribution testing} and the final Hadamard-basis measurement in a \StoqMA{} verifier has been identified in~\cite{Liu21}. Roughly speaking, the verifier's final measurement computes a Hellinger-affinity-type quantity between two induced distributions that are determined jointly by the verification circuit and the witness. 
    This perspective also yields another non-trivial subclass $\mathsf{eStoqMA}$ of \StoqMA{} that is contained in \MA{}
    via distribution-testing techniques~\cite{CR14}, as well as parallel repetition for \StoqMA{}~\cite{Liu21}, which immediately implies  error reduction for \StoqMA{} with negligible completeness error.

\paragraph{Unentangled Merlin--Arthur proof systems and their motivations.}
Parallel to the development of stoquastic Merlin-Arthur proof systems, the notion of \emph{unentanglement} gives rise to the class \QMAtwo{} introduced by Kobayashi, Matsumoto, and Yamakami~\cite{KMY09}, in which the verifier has full quantum power but receives two unentangled quantum proofs, or equivalently the witness is required to be a separable state across two registers.\footnote{See also~\cite{JLW26} for a comprehensive survey about \QMAtwo{}.} This viewpoint was then substantially clarified by Harrow and Montanaro~\cite{HM13}, who showed that \QMAk{}, for up to polynomially many provers, collapses to \QMAtwo{}. Their work established error reduction for \QMAtwo{}, showing the class is robust under the usual choice of constant completeness and soundness parameters.

On the upper-bound side, however, \QMAtwo{} is notoriously difficult. The underlying optimization is over separable states and is therefore \emph{more subtle} than that over arbitrary quantum states, which obstructs many of the techniques available for \QMA{}. For the general class, the trivial upper bound of \NEXP{} remains the best known. Nevertheless, several restricted settings are better understood. For example, when the verifier's induced measurements are restricted to Bell or one-way LOCC, the corresponding subclasses are contained in \QMA{}~\cite{BCY11,BH13,LS15}. Several further perspectives on \QMAtwo{} are as follows:

    \parheadingItem{Power of short unentangled proofs.} \QMAk{} with short proofs already captures \NP{} in striking ways. 
    Blier and Tapp~\cite{BT07} first established that $\NP$ admits two $O(\log n)$-qubit quantum proofs, with a \emph{vanishing} gap. A scaled-up version of this idea later implied that $\mathsf{PreciseQMA}(2)=\NEXP$~\cite{Pereszlenyi12}, and the soundness analysis was subsequently improved in~\cite{Beigi10,CF13,LNN12}.
    Later work focused on obtaining a \emph{constant} gap, a goal first realized in~\cite{ABDFS09}, and later refined to use two $\widetilde{O}(\sqrt{n})$-qubit quantum proofs by combining PCP machinery with prover-elimination techniques from~\cite{HM13}. This line of work also yields conditional hardness of \QMAtwo{} under the Exponential-Time Hypothesis (ETH). Moreover, the measurement underlying this \QMAtwo{} protocol can be relaxed further, even to Bell measurements~\cite{CD10}, with subsequent improvements in~\cite{CF13,NZ23}. More recently, imposing a \emph{non-negative} promise on both \emph{yes} and \emph{no} instances of \QMAtwo{}, thereby defining $\QMA^+(2)$, further strengthens the power of unentanglement, yielding $\QMA^+(2)=\NEXP$~\cite{JW23}, with subsequent developments in~\cite{JW24,BFM24,BFLMW24}. 

    \parheadingItem{Optimization over separable states.} A central way to understand \QMAtwo{} is as tensor optimization in disguise: the verifier's maximum acceptance probability is obtained by optimizing over the separable states, rather than by estimating a top eigenvalue. This connects \QMAtwo{} to a broader landscape of tensor optimization and hardness of approximation, including themes around the Unique Games Conjecture~\cite{BBH+12}.
    This viewpoint has led to several distinct algorithmic approaches. One sophisticated line of developments uses semidefinite-programming hierarchies based on relaxations of separable states, such as PPT and extendibility~\cite{DPS04, NOP09, HNW17, FF21}. 
    The convergence rate of the corresponding SDP hierarchies is often guaranteed by quantum information-theoretic or de Finetti type arguments asserting that symmetric states must be close to separable. In restricted measurement regimes such as one-way LOCC, this SDP hierarchy yields quasipolynomial-time algorithms~\cite{BCY11, BH13, LS15}. Another line of attacks includes smarter search algorithms exploring the set of quantum states such as epsilon-net methods~\cite{SW15, BH15}, and Sum-of-Squares-based algorithms~\cite{BKS14, BKS17}. 
    More recently, algebraic-geometric based approaches have been explored~\cite{HNW17, KR24}. 
    Altogether, this optimization perspective places \QMAtwo{} at the interface of separability testing, convex optimization, and classical hardness of approximation.
    
    \parheadingItem{Physical and information-theoretic perspective.} One canonical \QMAtwo{}-complete problem is \textsc{Separable Sparse Hamiltonian}~\cite{CS12}. By contrast, \textsc{Separable Local Hamiltonian}, though once conjectured to be \QMAtwo{}-complete, is in fact in \QMA{}, and the hardness techniques of~\cite{CS12} apply only in the \emph{sparse} setting. Another natural \QMAtwo{}-complete problem is \textsc{Pure Product Isometry Output} (\textsc{PIPO}), which asks whether a given isometry $U$ maps some pure state $\ket{\psi}$ to a state close to a pure product state~\cite{GHMW15}. Interestingly, \textsc{PIPO} also serves as a starting point for proving that \textsc{Low Entropy Low Energy State}, namely the intersection of the local Hamiltonian problem and the task of deciding whether a state has low bipartite entropy, is \QMAtwo{}-hard via a new channel-to-Hamiltonian simulation that yields a \emph{local} Hamiltonian~\cite{GK25}. 

\paragraph{From stoquastic to unentangled stoquastic.} 
Since stoquasticity originates from \emph{sign-problem-free} physical systems, which avoid destructive interference from sign cancellations, our model \StoqMAtwo{} naturally combines the two themes above and captures the power of \emph{unentanglement without destructive interference}. Algebraically, this absence of destructive interference is reflected by \emph{non-negativity}. This interaction leads to the following two questions:
\begin{quote}
\begin{enumerate} [label={\upshape(\alph*)}]
    \item \label{question:lower-bound}
    Can unentanglement remain computationally powerful even without destructive interference?
    More concretely, can \StoqMAtwo{} recover some known \QMAtwo{}-type lower bounds?
\end{enumerate}
\end{quote}

Prior work suggests that the ability to perform distribution-testing-type tasks makes \StoqMA{} appear more powerful than \MA{}, with this extra power closely tied to non-negativity~\cite{Liu21}; moreover, non-negativity can dramatically strengthen unentanglement~\cite{JW23}. Together, these results support the intuition behind Question \ref{question:lower-bound}. 

\begin{quote}
\begin{enumerate} [label={\upshape(\alph*)}]
    \setcounter{enumi}{1}
    \item \label{question:upper-bound}
    Can stoquasticity impose new computational limitations on unentangled proof systems? More precisely, does $\StoqMAtwo{}$ admit (algorithmic) upper bounds beyond \NEXP{}, still out of reach for general \QMAtwo{}?
\end{enumerate}
\end{quote}

Question \ref{question:upper-bound} reflects the viewpoint that the intrinsic non-negativity of \StoqMA{} makes its convex structure easier to exploit and therefore imposes extra structure, as supported by the best known upper bounds: $\QMA \subseteq \SBQP$~\cite{Vyalyi03,MW05,Kuperberg15}, whereas $\StoqMA \subseteq \SBP$~\cite{BBT06}, where \SBQP{} and \SBP{} are variants of \BQP{} and \BPP{}, respectively, with a completeness-to-soundness ratio at least $2$. In the general polynomial optimization literature, a similar sign-free phenomenon has been exploited~\cite{BGGLT17}, admitting faster approximation algorithms.

\vspace{0.5em}
We provide surprisingly sharp answers to these questions.

\subsection{Main results}
\label{subsec:main-results}
We now present and explain our main results from both the proof-system and algorithmic perspectives. These two perspectives are closely connected and together establish ETH-based \emph{optimality} for both short stoquastic unentangled proofs for \NP{}, mirroring~\cite{ABDFS09,CF13}, and the Barak--Kelner--Steurer algorithm~\cite{BKS14} for non-negative tensor optimization problems.

\paragraph{The proof-system perspective.} 
We begin by stating our first main result concerning \StoqMAtwo{} with short proofs, given in \Cref{thm-informal:StoqMAtwo-short-proofs}. For convenience, we write $\StoqMA_\ell\ssbra{k,c,\Delta}$ for the class of \StoqMAk{} proof systems in which each proof has length $\ell(n)$, completeness $c(n)$, and promise gap $\Delta(n) \coloneqq c(n) - s(n)$, where $s(n)$ denotes the soundness parameter. For simplicity, we may omit the subscript $\ell$ when $\ell(n)$ is polynomially bounded. 

\begin{theorem}[The power of \StoqMAtwo{} with short proofs, informal version of \Cref{thm:power-stoqma-np,thm:power-stoqma-log-np}] 
\label{thm-informal:StoqMAtwo-short-proofs}
The following inclusions hold:\footnote{The notation $\widetilde{O}(\cdot)$ suppresses poly-logarithmic factors; see the beginning of \Cref{sec:preliminaries} for the formal definition.}
\begin{enumerate}[label={\upshape(\arabic*)}]
    \item \label{thmitem:StoqMAtwo-short-constantGap}$\displaystyle\NP \subseteq \StoqMA_{\widetilde{O}(\sqrt{n})}\ssbra*{2,1-2^{-\polylog(n)},\Omega(1)} \subseteq \MA$.
    \item \label{thmitem:StoqMAtwo-short-vanishGap}$\displaystyle\NP \subseteq \StoqMA_{O(\log{n})}\ssbra*{2,1-O\rbra*{n^{-2}},\Omega\rbra*{n^{-2}}} \subseteq \MA$. 
\end{enumerate}
\end{theorem}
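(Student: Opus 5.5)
The plan is to prove the two parts of each item separately: the lower bounds $\NP \subseteq \StoqMA(2)$ by stoquastizing the known short-proof \QMAtwo{} protocols, and the upper bounds $\subseteq \MA$ by a random-walk verifier in the style of~\cite{BBT06,BT10,AGL20}. On the lower-bound side, I would take the honest proofs to be the non-negative states $\ket{\psi}=\frac{1}{\sqrt{m}}\sum_{i}\ket{i}\ket{c_i}$ encoding an assignment. For item~\ref{thmitem:StoqMAtwo-short-vanishGap} the source is the Blier--Tapp 3-coloring protocol~\cite{BT07}. For item~\ref{thmitem:StoqMAtwo-short-constantGap} the source is the PCP-based protocol of~\cite{ABDFS09}, with $\widetilde{O}(\sqrt{n})$ registers per proof and prover reduction as in~\cite{HM13,CF13}. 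Every test of those protocols that uses destructive interference (Fourier-basis uniformity tests, swap-test-based symmetry) has to be replaced by a test the \StoqMA{} verifier can perform. By the distribution-testing viewpoint of~\cite{Liu21}, a final Hadamard-basis measurement computes a Hellinger-affinity-type quantity between two distributions induced by the circuit and the witness.

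Concretely, I would implement three kinds of checks using only reversible classical gates, $\ket{+}$ ancillas and the final $X$-basis measurement:
\begin{enumerate}[label={\upshape(\roman*)}]
    \item a \emph{uniformity test}, comparing the distribution of the index register against the uniform distribution;
    \item a \emph{consistency test} between the two unentangled copies, i.e.\ an affinity between the induced distributions on $(i,c_i)$;
    \item the \emph{constraint test} (edge or PCP clause), implemented classically in the computational basis and lifted to \StoqMA{} via the Toffoli trick of the footnote.
\end{enumerate}
The verifier chooses among these at random. Completeness follows because the honest non-negative product witness passes each test with probability $1$ up to the $(1+p)/2$ rescaling. The $1-2^{-\polylog(n)}$ and $1-O(n^{-2})$ completeness errors come from the slack in the uniformity test, e.g.\ a padding or dummy-index component.

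Soundness is where the work lies. By Perron--Frobenius we may restrict to non-negative product witnesses $\ket{\phi_1}\otimes\ket{\phi_2}$. Passing the affinity tests then forces the induced distributions to be close in Hellinger distance to uniform over indices, and forces the two copies to be close to each other. From such witnesses one rounds to a near-satisfying assignment, as in the soundness analyses of~\cite{BT07,Beigi10,CF13,ABDFS09}, with Hellinger bounds substituting for the trace-distance and fidelity bounds. I expect this step to be the main obstacle for item~\ref{thmitem:StoqMAtwo-short-constantGap}. The~\cite{ABDFS09} soundness relies on the product test and on Fourier-type (interference) arguments to rule out witnesses concentrating on few indices. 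Recovering a \emph{constant} gap from affinity-only statistics requires a careful distribution-testing argument, i.e.\ a Hellinger analogue of the birthday/collision bound on $\sqrt{n}$ samples. My first attempt would be to analyse the collision statistics across the $\widetilde{O}(\sqrt{n})$ registers directly through Hellinger affinities.

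For the upper bounds $\subseteq \MA$, both classes have completeness $1-\negl$ (item~\ref{thmitem:StoqMAtwo-short-constantGap}) or $1-O(n^{-2})$ with matching gap (item~\ref{thmitem:StoqMAtwo-short-vanishGap}). I would design an \MA{} verifier in which Merlin sends a classical string pair $(x,y)$ in the support of a good non-negative product witness. Arthur then runs the lazy random walk of~\cite{BT10,AGL20} defined by the verifier's local structure and rejects when it hits a "bad" string. Completeness follows as in~\cite{AGL20}, since the yes-witness support is nearly closed under the walk. For soundness, any starting pair that Arthur accepts with noticeable probability yields a set $S$ nearly closed under the walk. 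The key claim is that $S$ can be shrunk or enlarged to a \emph{rectangle} $S_A\times S_B$ that is still nearly closed. The uniform non-negative product state on $S_A\times S_B$ is then an admissible unentangled witness with high acceptance, contradicting soundness. Establishing this rectangular closure property is the delicate part. For item~\ref{thmitem:StoqMAtwo-short-vanishGap}, the alternative is to exploit the $O(\log n)$ proof length: the distributions live on $\poly(n)$ points, so Merlin can send a classical description of a good product witness (discretized amplitudes) and Arthur estimates acceptance by sampling, to $1/\poly$ precision.
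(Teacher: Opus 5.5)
Your fallback for the upper bound in item~(2), where Merlin sends a dyadic description of a rounded non-negative product witness and Arthur estimates its value by sampling, is exactly the paper's \Cref{thm:StoqMA2_log-in-MA}. The rest of the proposal has genuine gaps, in the \MA{} bound for item~(1) and in both lower bounds.

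\emph{The \MA{} bound in item (1).} Your key claim is false: a set nearly closed under the walk cannot in general be shrunk or enlarged to a nearly closed rectangle. Take a verifier on $\ell$-bit registers $a,b$. Controlled by a $\ket{+}$ bit, it either applies $(a,b)\mapsto(a+1,b+1) \bmod 2^\ell$ or flags a $\ket{0}$ ancilla when $a\neq b$. The diagonal $\{(a,a)\}$ is exactly closed, so a \cite{BT10}-style walk started at $(a,a)$ never meets a bad string. Yet every non-negative product witness is accepted with probability bounded away from $1$. The shift branch needs both proofs spread out, while the equality branch needs them concentrated on a common point. The paper points out this obstruction itself: a clean component need not support any product witness. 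Closing under all crossed pairs, as its rectangular closure test does, is a global exponential-time procedure, not an \MA{} one. The paper instead gets \MA{} for item~(1) in two steps. It collapses \StoqMAtwo{} to \StoqMA{} via the multi-to-single witness reduction of \cite{GJ26}. It then applies $\StoqMA_{1-\negl}\subseteq\MA$ from \cite{AGL20}.

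\emph{The lower bounds.} Your uniformity test compares the index distribution with the uniform distribution by a Hellinger affinity. This cannot have completeness near $1$ when the label is coherently attached to the index. For example, swapping the index register of $\frac{1}{\sqrt n}\sum_v\ket{v}\ket{\iota(v)}$ with $\ket{+}^{\otimes\log n}$ gives overlap $\frac{1}{n^2}\sum_a|\iota^{-1}(a)|^2$, which equals $1$ only for constant labelings.

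The missing idea is to make the uniformity test itself a classical collision predicate on computational-basis branches:
\begin{itemize}
\item for item~(1), Paninski's tester (\Cref{thm:paninski}) on the vertex components of $K=\widetilde O(\sqrt n)$ copies;
\item for item~(2), simply ``reject if $u=v$'' on two copies.
\end{itemize}
This predicate is combined with the same-vertex/edge consistency predicate and lifted through the branch-overlap test on $(\Gamma_{\mathrm{branch}},I)$.

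For this to work you need a symmetric witness $\ket{\psi}^{\otimes K}$, so that the branch weights are exactly the i.i.d.\ product $p^{\otimes K}$. Soundness then becomes a classical generalized birthday-paradox argument (\Cref{lem:birthday}), split into an ambiguous-label case and a nearly-deterministic-label case. No Hellinger analysis is needed. You return to two unentangled proofs via \Cref{thm:sym-to-stoq-k} and \Cref{thm:stoq-k-to-2}.

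This route also explains the parameters, which your ``padding'' explanation does not. The completeness error is intrinsic to distributional uniformity testing: it is Paninski's error in item~(1), and the honest collision weight $\Theta(1/n)$ in item~(2). In item~(2), the $1-O(n^{-2})$ completeness with $\Omega(n^{-2})$ gap comes from the quadratic gap loss of \Cref{thm:sym-to-stoq-k}, applied to a symmetric protocol with $\Theta(1/n)$ gap.
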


Surprisingly, \Cref{thm-informal:StoqMAtwo-short-proofs}\ref{thmitem:StoqMAtwo-short-constantGap} matches the prior constant-gap \QMAtwo{} protocols with sublinear-size proofs~\cite{ABDFS09,CD10}, apart from the loss of perfect completeness.
A similar short-proof phenomenon is unlikely to hold even for \QMA{}, since otherwise it would violate the Quantum Exponential-Time Hypothesis~\cite{CCZZ22}. Likewise, this phenomenon appears unlikely for $\MA(2)$, which trivially collapses to \MA{} and would therefore violate randomized ETH (rETH)~\cite{DHMTW14}. 
Therefore, since \StoqMA{} naturally contains \MA{}, this argument suggests that collapsing \StoqMAtwo{} to \StoqMA{}, referred to as a \emph{multi-to-single witness reduction}, would incur at least a \emph{quadratic} blow-up in the proof length. It turns out that this quadratic loss in the multi-to-single witness reduction (e.g.,~\cite{GJ26}) is \emph{unavoidable}, assuming rETH.\footnote{This conditional optimality result follows by combining \Cref{thm-informal:StoqMAtwo-short-proofs}\ref{thmitem:StoqMAtwo-short-constantGap} with $\SAT \in \StoqMA_\ell\ssbra{1,1,1/2}$ (see \Cref{subsec:omitted-rETH-opt-BPTIME-bound}), where $\ell(n)=n$ and $n$ denotes the number of variables in the \SAT{} formula. Consequently, applying a multi-to-single witness reduction to \Cref{thm-informal:StoqMAtwo-short-proofs}\ref{thmitem:StoqMAtwo-short-constantGap} without incurring a quadratic blow-up would refute rETH.}

Similarly, \Cref{thm-informal:StoqMAtwo-short-proofs}\ref{thmitem:StoqMAtwo-short-vanishGap} matches the prior inverse-polynomial-gap \QMAtwo{} protocols with logarithmic-size proofs~\cite{BT07,Beigi10,CF13,LNN12}, except for the loss of perfect completeness. Together with a sufficiently structured PCP for \NEXP{}, such as~\cite[Theorem 7.4]{JW23}, we obtain the following scaling-up version of \Cref{thm-informal:StoqMAtwo-short-proofs}\ref{thmitem:StoqMAtwo-short-vanishGap}: 

\begin{corollary}
    \label{corr:NEXP-in-PreciseStoqMAtwo}
    $\NEXP \subseteq \StoqMA\ssbra*{2,1-2^{-\poly(n)},\Omega\rbra*{2^{-\poly(n)}}} \subseteq \PreciseStoqMAtwo$.
\end{corollary}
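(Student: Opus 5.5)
The plan is to scale up the protocol behind \Cref{thm-informal:StoqMAtwo-short-proofs}\ref{thmitem:StoqMAtwo-short-vanishGap} by an exponential padding argument, with a structured PCP for \NEXP{} in place of the \NP{} starting point. The second inclusion is by definition: \PreciseStoqMAtwo{} allows completeness and soundness separated by an inverse-exponential gap, and $c=1-2^{-\poly(n)}$ with $\Delta=\Omega(2^{-\poly(n)})$ fits this regime. All the work is in the first inclusion.

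First I would fix the form of the log-size protocol. The proof of \Cref{thm:power-stoqma-log-np} presumably starts from an \NP{}-complete constraint satisfaction problem of size $N$ with structural properties such as bounded degree, a constant-size alphabet (e.g.\ 3-coloring), and regularity. The verifier receives two unentangled proofs on $O(\log N)$ qubits, intended to be $\sum_v \sqrt{1/N}\ket{v}\ket{\sigma(v)}$. It then runs stoquastic distribution-testing subtests, namely a uniformity test, a consistency/symmetry test between the two copies, and a constraint test. The results are completeness $1-O(N^{-2})$ and gap $\Omega(N^{-2})$, and the verifier runs in time $\poly(\log N)$ given succinct local access to the constraints.

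Second, given $L\in\NEXP$ and an input $x$ with $|x|=n$, I would apply the structured PCP of~\cite[Theorem 7.4]{JW23}. This maps $x$ to an instance of size $N=2^{\poly(n)}$ with the structure the \NP{} protocol needs, described succinctly: a $\poly(n)$-time circuit computes the neighbors and constraint of any vertex index. Running the protocol from the first step on this implicit instance gives proofs of $O(\log N)=\poly(n)$ qubits. The verifier also runs in $\poly(\log N)=\poly(n)$ time, because every step it takes on vertex indices (reversible evaluation of the adjacency and constraint maps, plus the final Hadamard-basis measurement) uses only the succinct circuit. The parameters then become $c=1-O(N^{-2})=1-2^{-\poly(n)}$ and $\Delta=\Omega(N^{-2})=\Omega(2^{-\poly(n)})$, as required.

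The main obstacle is the second step. I need to check that the PCP output meets every hypothesis the soundness analysis of \Cref{thm:power-stoqma-log-np} relies on: the regularity and degree bounds, alphabet size, and constant soundness or unsatisfiability fraction. I also need the reductions to these normal forms, such as degree-reduction and regularization, to preserve succinctness, so that the stoquastic verifier can implement them with classical reversible gates and $\ket{+}$ ancillas. If \cite[Theorem 7.4]{JW23} does not directly give the needed normal form, I would compose it with succinct versions of the standard \NP{}-hardness gadgets, which are local and hence computable by $\poly(n)$-size circuits on indices. Finally, I would confirm that the $N^{-2}$ dependence in the analysis is the only polynomial loss in $N$, so that no hidden factor exceeds $2^{\poly(n)}$.
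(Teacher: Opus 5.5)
Your proposal is correct and follows essentially the paper's own (very brief) argument. The paper also obtains the first inclusion by scaling up the two-prover logarithmic-size protocol of \Cref{thm:power-stoqma-log-np} (the Dinur-PCP-based symmetric protocol, converted via \Cref{thm:sym-to-stoq-k}) with the structured succinct PCP of~\cite[Theorem 7.4]{JW23}, so that instance size $N=2^{\poly(n)}$ turns the $\Theta(N^{-2})$ completeness error and gap into $2^{-\poly(n)}$, and it obtains the second inclusion by definition; the paper does not spell out the normal-form checks you flag as the main obstacle either, and simply relies on that PCP being ``sufficiently structured.''
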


Here, ``precise'' means that the promise gap of the class variant is \emph{exponentially} small. As a consequence of \Cref{corr:NEXP-in-PreciseStoqMAtwo} and the equivalence $\NEXP=\PreciseQMAtwo$ established in~\cite{Pereszlenyi12}, we obtain the following equivalence:
\begin{corollary}
    \label{corr:PreciseStoqMAtwo-eq-PreciseQMAtwo}
    $\PreciseStoqMAtwo = \NEXP = \PreciseQMAtwo$.
\end{corollary}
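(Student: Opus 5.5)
The statement $\PreciseStoqMAtwo = \NEXP = \PreciseQMAtwo$ reduces to a chain of inclusions:
\[
\NEXP \subseteq \PreciseStoqMAtwo \subseteq \PreciseQMAtwo = \NEXP .
\]
The last equality is Pereszl\'enyi's theorem~\cite{Pereszlenyi12}, which we use as a black box. The first inclusion is exactly \Cref{corr:NEXP-in-PreciseStoqMAtwo}. So the only new ingredient is the middle inclusion, a simulation of a stoquastic unentangled verifier by a general quantum unentangled verifier. That simulation must preserve the number of unentangled proofs, the proof lengths up to polynomial factors, and both the completeness and soundness parameters, including the exponentially small gap.

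For the middle inclusion I would argue at the level of verifiers. A \StoqMAtwo{} verifier is a quantum circuit made of classical reversible gates (X, CNOT, Toffoli). It acts on two unentangled proof registers together with $\ket{0}$ and $\ket{+}$ ancillas, and it measures a designated output qubit in the Hadamard basis. A \QMAtwo{} verifier can implement this directly:
\begin{itemize}
\item it prepares each $\ket{+}$ ancilla by applying a Hadamard gate to a $\ket{0}$ qubit;
\item it runs the same reversible circuit;
\item it applies a Hadamard to the output qubit;
\item it measures in the computational basis.
\end{itemize}
For every separable witness $\ket{\psi_1}\otimes\ket{\psi_2}$, the acceptance probability is then identical to that of the original verifier. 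Hence, for the same witnesses, the simulated verifier has the same completeness $c(n)$ and soundness $s(n)$.

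The only point needing care is whether the two definitions quantify over the same witness set. The \StoqMAtwo{} definition may restrict attention to non-negative states, or it may allow arbitrary product states. In either case the \PreciseQMAtwo{} soundness condition quantifies over all product states. For \emph{no} instances the \StoqMAtwo{} soundness bound should already hold for arbitrary product states. If the definition only states it for non-negative ones, I would invoke Perron--Frobenius twice. First, fixing one proof, the acceptance operator $\tfrac12\bigl(I + \text{(non-negative matrix)}\bigr)$ is entrywise non-negative on the other register, so replacing that register's amplitudes by their absolute values does not decrease acceptance. Applying the same step to the second register shows that the maximum over product states is attained at non-negative product states.

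Completeness transfers trivially. The parameter condition also transfers: $c - s \ge 2^{-\poly(n)}$ is exactly the precise regime of \PreciseQMAtwo{}. I do not anticipate a real obstacle here. The substantive content lies in \Cref{corr:NEXP-in-PreciseStoqMAtwo}, which in turn rests on scaling up \Cref{thm-informal:StoqMAtwo-short-proofs}\ref{thmitem:StoqMAtwo-short-vanishGap} via a structured PCP for \NEXP{}.
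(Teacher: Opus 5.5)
Your chain $\NEXP \subseteq \PreciseStoqMAtwo \subseteq \PreciseQMAtwo = \NEXP$ follows the paper's route. The first inclusion is \Cref{corr:NEXP-in-PreciseStoqMAtwo}, the second is the gate-level simulation of a stoquastic verifier by a general one (which the paper calls straightforward), and the rest is Pere\-szl\'enyi's theorem.

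One sub-step is argued incorrectly, however. You rightly note that the \StoqMAtwo{} soundness condition only covers non-negative witnesses. You therefore need that no product state beats the best non-negative product state. Your register-by-register Perron--Frobenius argument does not establish this.

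Let $M\coloneqq\bra{\bar{0}}\bra{\bar{+}}V^\dagger X_{\sfO}V\ket{\bar{0}}\ket{\bar{+}}$ be the entrywise non-negative matrix, so that the acceptance probability is $\tfrac12+\tfrac12\bra{\Psi}M\ket{\Psi}$. Fix a proof $\ket{\psi_1}=\sum_i a_i\ket{i}$ with signed or complex amplitudes. The reduced operator $(\bra{\psi_1}\otimes I)M(\ket{\psi_1}\otimes I)$ on the other register has entries $\sum_{i,i'}\overline{a_i}\,a_{i'}\,M_{(i,j),(i',j')}$, and these need not be non-negative. So your first step already presupposes that the fixed proof is non-negative, in either order, and cannot be started from an arbitrary product state.

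The fix is to take absolute values in both registers at once. Let $\ket{\psi_j^+}$ be obtained by replacing each amplitude of $\ket{\psi_j}$ by its modulus. The triangle inequality then gives $\abs{\bra{\psi_1}\bra{\psi_2}M\ket{\psi_1}\ket{\psi_2}}\le\bra{\psi_1^+}\bra{\psi_2^+}M\ket{\psi_1^+}\ket{\psi_2^+}$. This is exactly \Cref{prop:opt-stoqSepVal-nonNeg}, and no Perron--Frobenius argument is needed. With it, soundness transfers to all product states as the \PreciseQMAtwo{} definition requires, and the rest of your argument stands.
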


As we will discuss later in the algorithmic perspective, the protocol underlying \Cref{thm-informal:StoqMAtwo-short-proofs}\ref{thmitem:StoqMAtwo-short-constantGap} is \emph{optimal} under the Exponential-Time Hypothesis (ETH), when combined with the Sum-of-Squares algorithm of~\cite{BKS14} for non-negative tensor optimization problems.  
Interestingly, our results (\Cref{thm-informal:StoqMAtwo-short-proofs,thm-informal:StoqMAk-in-EXP}), achieving ETH-based optimality, use the class \StoqMAtwo{} in regimes that collapse to \MA{}. 
Here, the \MA{} upper bound in \Cref{thm-informal:StoqMAtwo-short-proofs}\ref{thmitem:StoqMAtwo-short-constantGap} is obtained by combining~\cite{GJ26,AGL20}, while the one in \Cref{thm-informal:StoqMAtwo-short-proofs}\ref{thmitem:StoqMAtwo-short-vanishGap} follows from our new upper bound for $\StoqMAk_{\log}$, where the subscript $\log$ denotes logarithmic-size proofs:\footnote{Notably, the inclusion $\StoqMA_{\log} \subseteq \BPP$ in \Cref{thm-informal:upper-bound-StoqMAk-short-witness} can be extended to a bounded-error randomized algorithm for $\StoqMA_\ell$ with time complexity \emph{singly exponential} in the proof length $\ell$, and this dependence is \emph{optimal} under randomized ETH, up to poly-logarithmic factors.}

\begin{theorem}[Upper bounds for \StoqMAk{} with logarithmic-size proofs, informal version of~\Cref{thm:StoqMA-BPTIME-upper-bound,thm:StoqMAk_log-in-MA}]
\label{thm-informal:upper-bound-StoqMAk-short-witness}
For every integer-valued function $k(n)$, it holds that
\[ \StoqMAk_{\log} \subseteq \MA \quad\text{while}\quad \StoqMA_{\log} = \BPP. \]
\end{theorem}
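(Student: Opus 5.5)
Two things need proving: $\StoqMA_{\log} = \BPP$, and $\StoqMAk_{\log} \subseteq \MA$ for every $k(n)$.

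\textbf{The single-proof case.} The inclusion $\BPP \subseteq \StoqMA_{\log}$ is immediate, since a \BPP{} machine is already a stoquastic verifier that ignores its witness (see the footnote on how \MA{} embeds into \StoqMA{}). For the converse, fix a verifier $V$ acting on an $\ell = O(\log n)$-qubit witness. By the Perron--Frobenius property recalled in the introduction, the maximum acceptance probability is attained by a non-negative state. The witness space then has dimension $d = 2^\ell = \poly(n)$. The acceptance probability has the form $\bra{\psi} M \ket{\psi}$, where $M$ is a $d\times d$ matrix with entrywise non-negative real entries, and each entry $M_{xy}$ has the form $\frac12\rbra{\delta_{xy} + \innerprod{0\cdots}{\dots}}$: an inner product of classical reversible computations on $\ket{0}$/$\ket{+}$ ancillas. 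Each such entry is therefore a collision-type probability between two classical distributions, and we can estimate it to $1/\poly$ additive precision by sampling, polynomially many times. We then compute $\lambda_{\max}$ of the estimated symmetric matrix. Since $\norm{\hat M - M}_{\mathrm{op}} \le d\,\max_{xy}\abs{\hat M_{xy}-M_{xy}}$, taking the per-entry precision to be $\Delta/(3d)$ with a Chernoff plus union bound over the $d^2$ entries separates the yes and no cases. The same routine, run with $d = 2^\ell$, gives the singly-exponential-in-$\ell$ randomized bound mentioned in the footnote. The step that needs care is showing that each $M_{xy}$ really is a sampleable non-negative quantity. I would use the distribution-testing view of~\cite{Liu21}: the $\ket{+}$-projection of the output qubit is $\frac12(1 + \Real\langle R_0 | R_1\rangle)$ with non-negative amplitudes, so off-diagonal terms are sums of products of square roots of probabilities. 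I would estimate these with the Hellinger-affinity estimator, i.e.\ by sampling $x$ from one branch and checking the preimage under the other. If that turns out to be delicate, I would fall back to exact computation, since with $O(\log n)$ witness qubits but polynomially many $\ket{+}$ ancillas, exact computation is not available, so estimation is the intended route.

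\textbf{The $k$-proof case.} Here the maximum is over non-negative product states $\ket{\psi_1}\otimes\cdots\otimes\ket{\psi_k}$, each on $O(\log n)$ qubits. Merlin sends classical descriptions of the $k$ unit vectors, discretized to $1/\poly$ precision; this takes $k\cdot\poly(n)$ bits, which is polynomial as long as $k(n)$ is polynomially bounded, as it is in any class with a polynomial-time verifier. Arthur must estimate $\bra{\psi}M\ket{\psi}$ for the product state described. Estimating the entries of $M$ directly is infeasible, because $M$ has $2^{k\ell}$ rows. Instead, Arthur simulates the verifier on the given product state. Each $\ket{\psi_i}$ is a non-negative vector of dimension $\poly(n)$, so Arthur can sample from $\psi_i^2$ efficiently and knows each amplitude exactly. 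Since $\ket{\psi}$ is non-negative, its amplitudes are $\sqrt{p}$ for a product distribution $p$. The acceptance probability is then a Hellinger-type affinity between two distributions obtained by pushing $p$ (together with the ancilla randomness) through the classical reversible circuit, and Arthur estimates it by importance sampling. He draws $(x,r)$, computes the output, computes the candidate preimages under the reversible circuit (which is invertible), and forms the estimator with ratio terms $\sqrt{p(x')/p(x)}$. This is efficient because $p$ factorizes into explicitly known $\poly(n)$-size pieces. Discretization error is controlled by continuity of the acceptance probability in $\ket{\psi}$.

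\textbf{Main obstacle.} The hardest step is bounding the variance of this ratio estimator. Ratios $\sqrt{p(x')/p(x)}$ can blow up when $p(x)$ is tiny. I would handle this by having Merlin's discretization floor each amplitude at $1/\poly$, or by truncating the estimator and charging the truncation loss to the promise gap. This is also where the inverse-polynomial gap must be preserved, so the precision parameters should be chosen relative to $\Delta$.
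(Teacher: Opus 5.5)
Your single-proof argument is essentially the paper's: reduce to the \RCD{} pair $(\Gamma,I)$, estimate the $2^\ell\times 2^\ell$ matrix $M_\Gamma$ by sampling the $\ket{+}$ ancillas, bound the operator-norm error, and compare $\lambda_{\max}$ to the midpoint. One simplification: no Hellinger-type estimator is needed there. For a basis witness $\ket{a}$ the ancilla amplitudes are uniform and $\Gamma$ is a permutation, so $\bra{b}M_\Gamma\ket{a}=\Pr_u[\Gamma(a,0^{m_0},u)\in\{b\}\times\{0^{m_0}\}\times U]$ with $U=\{0,1\}^r$. This is a plain Bernoulli mean, needing one forward evaluation of $\Gamma$ per sample.

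Your $k$-proof architecture also matches the paper's. Merlin sends discretized amplitude lists, and Arthur samples $a\sim p=p_1\times\cdots\times p_k$ and $u\sim U$, computes $\Gamma(a,0^{m_0},u)=(a',z,u')$, and averages $R=\mathbb{I}[z=0^{m_0}]\prod_i\sqrt{p_i(a'_i)/p_i(a_i)}$. However, the step you flag as the main obstacle is left open, and neither proposed fix closes it. Flooring amplitudes at $\delta=1/\poly(n)$ bounds each factor of $R$ by $1/\delta$, but $R$ is a product of $k$ factors. The only range bound you get is $\delta^{-k}$, which is superpolynomial once $k=\omega(1)$. Hoeffding or Chebyshev from that bound alone then needs superpolynomially many samples, so flooring gives the result only for constant $k$. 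Truncating at $T$ is harmless for soundness, but for completeness you must bound the lost mass $\E[R\,\mathbb{I}[R>T]]$ on the honest witness. That is the same tail estimate you have not established.

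The missing idea is that no modification is needed. Since $\Gamma$ is a permutation, the map $(a,u)\mapsto(a',u')$ is injective on $\{z=0^{m_0}\}$, so for every non-negative product witness
\[
  \E[R^2] \;=\; \sum_{(a,u):\,z=0^{m_0}} 2^{-r}\,p(a)\cdot\frac{p(a')}{p(a)} \;=\; \sum_{(a,u):\,z=0^{m_0}} 2^{-r}\,p(a') \;\le\; \sum_{(a',u')} 2^{-r}\,p(a') \;=\; 1 .
\]
Hence $\Var[R]\le 1$ uniformly in $k$, in the witness (honest or cheating), and in how small the amplitudes are. Then $N=O(1/\Delta^2)$ samples suffice by Chebyshev, which is exactly how the paper finishes.

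With this inserted, the rest of your plan goes through. Completeness follows from the rounding error $\sum_i\norm{\ket{\psi'_i}-\ket{\psi_i}}_2\le\Delta/5$. Soundness holds because the rounded state is itself a non-negative product state.
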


Since the proof-length advantages in stoquastic short unentangled proofs remain even assuming the standard derandomization assumption $\MA=\NP$, our first result and its corollaries give a positive answer to Question \ref{question:lower-bound}: 
\takeaway{Even assuming $\MA=\NP$, quantum-inspired randomness enables \emph{sublinear} unentangled proofs and (at least) \emph{quadratic} proof-length advantages.}

\paragraph{The algorithmic perspective.}
Next, we turn to our second main result:
\begin{theorem}[Upper bounds for \StoqMAk{} with perfect completeness, informal version of~\Cref{thm:StoqMAtwo-perfect-completeness,thm:PreciseStoqMAtwo-perfect-completeness}]
\label{thm-informal:upper-bound-StoqMAtwo-noYesError}
For all integer-valued functions $\ell(n)$, $k(n)$, $m_0(n)$, and $m_+(n)$, and every $\Delta(n) \in (0,1/2]$, where $m_0(n)$ and $m_+(n)$ denote the numbers of $\ket{0}$ and $\ket{+}$ ancillary qubits in the underlying stoquastic verification circuit, the following inclusion holds:\footnote{The simultaneous exponential-time and polynomial-space upper bound in \Cref{thm:StoqMAtwo-perfect-completeness} directly extends to the regime where the completeness error is doubly exponentially small, while the singly exponential-time upper bound in \Cref{thm:PreciseStoqMAtwo-perfect-completeness} directly extends to the regime where the completeness error is triply exponentially small.}
\begin{align*}
    \StoqMA_\ell\ssbra{k,1,\Delta} &\subseteq \DTISP\sbra[\bigg]{ \exp\rbra[\Big]{\widetilde{O}\rbra[\Big]{\frac{k\ell}{\Delta^2} \max\cbra{k\ell,m_+}}}, O\rbra[\Big]{\frac{k\ell}{\Delta^2} \max\cbra{k\ell,m_0,m_+}} }\\
    &\qquad \,\cap\, \DTIME\sbra*{ \exp\rbra[\big]{\widetilde{O}\rbra{\max\cbra{k\ell,m_+}}}}.
\end{align*}
\end{theorem}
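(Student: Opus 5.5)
The plan is to first turn perfect completeness into an exact combinatorial condition on the verifier circuit, and then to decide that condition by an exhaustive "rectangular closure" search. Let $U$ be the stoquastic verification circuit. It is a classical reversible circuit, so it acts as a permutation of computational basis strings. Write the input as $\ket{\phi}=\ket{\psi_1}\otimes\cdots\otimes\ket{\psi_k}\otimes\ket{0^{m_0}}\otimes\ket{+^{m_+}}$. The acceptance probability is $\frac12\bigl(1+\bra{\phi}U^\dagger X_{\mathrm{out}}U\ket{\phi}\bigr)$. Here $P\coloneqq U^\dagger X_{\mathrm{out}}U$ is an involutive permutation matrix on strings $(w_1,\dots,w_k,a,r)$.

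By the Perron--Frobenius argument applied to each factor separately (the others held fixed), we may assume each $\psi_i$ is non-negative. Then acceptance equals $1$ if and only if $\ket{\phi}$ is a $+1$ eigenvector of $P$. This means: for every $(w,r)$ in the support $S_1\times\cdots\times S_k$, writing $P(w,0^{m_0},r)=(w',a',r')$, we need $a'=0^{m_0}$, $w'\in S_1\times\cdots\times S_k$, and $\prod_i\psi_i(w_i)=\prod_i\psi_i(w'_i)$.

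So a yes-instance is witnessed by a pair of objects:
\begin{enumerate}[label={\upshape(\roman*)}]
    \item a rectangle $S=S_1\times\cdots\times S_k$ that is \emph{closed} under $P$, in the sense that images of $S\times\{0^{m_0}\}\times\{0,1\}^{m_+}$ stay inside $S\times\{0^{m_0}\}\times\{0,1\}^{m_+}$;
    \item a solution of the linear system in unknowns $\alpha_i(w_i)=\log\psi_i(w_i)$, $w_i\in S_i$, given by $\sum_i\alpha_i(w_i)=\sum_i\alpha_i(w'_i)$ for every pair of strings linked by $P$.
\end{enumerate}
Conversely, any such pair yields a product witness accepted with probability exactly $1$. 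In a no-instance every witness accepts with probability at most $1-\Delta$, so no such pair exists.

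For the $\DTIME$ bound I will use the following minimality observation. Fix a seed $w\in\{0,1\}^{k\ell}$ and let $\mathrm{RC}(w)$ be the smallest $P$-closed rectangle containing $w$. It is obtained by repeatedly adding the coordinate projections of $P$-images to the sides until nothing changes; if an image ever has $a'\neq 0^{m_0}$, the seed is rejected. Every closed rectangle containing $w$ contains $\mathrm{RC}(w)$. Restricting a feasible log-amplitude assignment to $\mathrm{RC}(w)$ remains feasible, because the constraints of $\mathrm{RC}(w)$ are a subset of those of the larger rectangle and involve only its variables.

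Hence the algorithm is as follows. For each seed $w$, compute $\mathrm{RC}(w)$; each of at most $k2^{\ell}$ additions requires scanning $2^{k\ell+m_+}$ strings. Then solve the resulting linear feasibility problem by Gaussian elimination over $\mathbb{Q}$. Accept if and only if some seed succeeds. The total running time is $\exp(\widetilde O(\max\{k\ell,m_+\}))$.

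For the simultaneous time--space bound I would replace the global closure by a local, depth-bounded exploration. A no-instance yields a \emph{robust} violation: every non-negative product witness loses $\Delta$ in acceptance. Via the Hellinger-affinity view of the final Hadamard measurement, this should translate into the following claim. Inside any candidate rectangle grown from a seed, a violation (an escape from the ancilla-zero subspace, or an inconsistent cycle of log-amplitude constraints) is reachable by a chain of $L=O(k\ell/\Delta^2)$ steps of $P$ interleaved with coordinate swaps.

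The deterministic algorithm then enumerates such chains by depth-first search, storing only the current chain. Each step records $O(\max\{k\ell,m_0,m_+\})$ bits, which gives space $O(L\max\{k\ell,m_0,m_+\})$ and time $\exp(\widetilde O(L\max\{k\ell,m_+\}))$. A yes-instance has no violating chain, since the closed feasible rectangle contains all of them.

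The main obstacle is exactly this robustness claim: turning the acceptance gap $\Delta$ into a short certificate chain. I would try a potential argument. Starting from a seed, consider the witness that is uniform, with amplitudes propagated along a breadth-first tree of the constraint graph, and show that if all chains of length $L$ are consistent then this witness accepts with probability greater than $1-\Delta$. The $\Delta^{-2}$ dependence should come from a Hellinger-to-trace-distance conversion, and the $k\ell$ factor from a union over coordinates. This is a contradiction in the no case.
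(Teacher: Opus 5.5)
\paragraph{The $\DTIME$ half.} This part of your argument is sound. Seeding, closing the rectangle, and rejecting on any escape from the zero-ancilla sector is the paper's explicit-table rectangular closure test. You run it directly on $k$-partite rectangles, whereas the paper runs it after prover compression.

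Condition (ii), however, is vacuous. The system is homogeneous, so $\alpha\equiv 0$ (uniform amplitudes) always solves it. Equivalently, if $R=S_1\times\cdots\times S_k$ is closed, then $P$ maps $R\times\{0^{m_0}\}\times\{0,1\}^{m_+}$ bijectively onto itself. So the uniform product subset state on $R$ is accepted with probability exactly $1$; this is the paper's lemma that perfectly agreeing pairs admit product subset states. The Gaussian elimination is therefore unnecessary. More to the point, an ``inconsistent cycle of log-amplitude constraints'' can never occur: the only possible violation is an escape from the zero-ancilla sector.

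\paragraph{The gap: the $\DTISP$ half.} The polynomial-space bound rests entirely on your robustness claim that a violation appears within depth $L$, and you leave that claim unproved. The sketch via BFS-propagated amplitudes, a Hellinger-to-trace conversion, and a union over coordinates does not deliver it. Also, the paper's $\Delta^{-2}$ does not come from any such conversion.

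The missing idea is a growth lemma. Apply the soundness promise to the uniform subset state on the \emph{current, not yet closed} rectangle $R_t$ of the closure.
\begin{itemize}
\item Its acceptance probability is $\frac12(1+p_t)$, where $p_t$ is the fraction of $(w,u)\in R_t\times\{0,1\}^{m_+}$ whose image under $P$ lies in $R_t\times\{0^{m_0}\}\times\{0,1\}^{m_+}$. Soundness forces $p_t\le 1-2\Delta$.
\item If round $t$ has no bad transition, every escaping $(w,u)$ is sent to some $(w',0^{m_0},u')$ with $w'\notin R_t$.
\item Since $P$ is injective and each $w'$ pairs with at most $2^{m_+}$ strings $u'$, at least $2\Delta|R_t|$ distinct such $w'$ arise, and all of them lie in $R_{t+1}$.
\end{itemize}
Hence $|R_{t+1}|\ge(1+2\Delta)|R_t|$. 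Because $|R_t|\le 2^{k\ell}$, every seed in a \emph{no} instance meets a bad transition within $L=O(k\ell/\Delta)$ rounds. In a \emph{yes} instance, a seed in the witness support never meets one, at any depth.

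Now evaluate the round-$t$ membership predicates recursively. The certificate is a depth-$L$ tree rather than a single chain, but depth-first search stores only one root-to-leaf path. This gives space $O(L(k\ell+m_0+m_++\log L))$ and time $(k\,2^{k\ell+m_+})^{O(L)}\cdot\poly(n)$, both within the stated bound.

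The paper runs the same counting argument with two provers after prover compression. The compressed instance has gap $\gamma=\Omega(\Delta^2)$ and proof length $k\ell$, and that is what produces its $L=O(k\ell/\Delta^2)$.
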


Here, $\DTISP[t,s]$ denotes the class of problems decidable simultaneously in deterministic time $t$ and space $s$.
Taken together, \Cref{corr:PreciseStoqMAtwo-eq-PreciseQMAtwo} and the deterministic time bound in \Cref{thm-informal:upper-bound-StoqMAtwo-noYesError} imply the following separation under the standard assumption $\EXP\neq\NEXP$:
\begin{corollary}
    \label{corr:PreciseStoqMAtwo-no-perfect-completeness}
    $\PreciseStoqMAtwo_1\subsetneq \PreciseStoqMAtwo$ unless $\EXP=\NEXP$. 
\end{corollary}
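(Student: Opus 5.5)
We argue by contradiction, combining \Cref{corr:PreciseStoqMAtwo-eq-PreciseQMAtwo} with the deterministic time bound of \Cref{thm-informal:upper-bound-StoqMAtwo-noYesError}. The inclusion $\PreciseStoqMAtwo_1 \subseteq \PreciseStoqMAtwo$ is immediate from the definitions, since perfect completeness is a special case of completeness $c(n)$. So it suffices to show that if $\PreciseStoqMAtwo_1 = \PreciseStoqMAtwo$, then $\EXP = \NEXP$.

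First, suppose the two classes are equal. By \Cref{corr:PreciseStoqMAtwo-eq-PreciseQMAtwo}, $\PreciseStoqMAtwo = \NEXP$, so every $\NEXP$ language lies in $\PreciseStoqMAtwo_1$. Next, unfold the definition of $\PreciseStoqMAtwo_1$. Any such language is in $\StoqMA_\ell\ssbra{2,1,\Delta}$ with $k=2$, proof length $\ell(n)=\poly(n)$, ancilla counts $m_0(n),m_+(n)\le\poly(n)$ (the verifier is a polynomial-size circuit), and promise gap $\Delta(n)\ge 2^{-\poly(n)}$. Now apply the second intersectand of \Cref{thm-informal:upper-bound-StoqMAtwo-noYesError}, namely $\DTIME\sbra{\exp(\widetilde{O}(\max\cbra{k\ell,m_+}))}$. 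This yields a deterministic algorithm running in time $\exp(\poly(n))$.

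The key point is that this time bound does not depend on $\Delta$ at all, whereas the $\DTISP$ bound carries a $1/\Delta^2$ factor that would be doubly exponential here. The remaining check is that the $\widetilde{O}$ suppresses only polylogarithmic factors in the relevant quantities, so that $\widetilde{O}(\poly(n))=\poly(n)$. It must also not hide some $\log(1/\Delta)$ dependence that would blow up. Even if it did hide a $\polylog(1/\Delta)$ factor, that factor is $\poly(n)$ when $\Delta\ge 2^{-\poly(n)}$, so the bound remains singly exponential.

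Hence $\NEXP \subseteq \PreciseStoqMAtwo_1 \subseteq \EXP$. Since $\EXP\subseteq\NEXP$ trivially, we get $\EXP=\NEXP$. Contrapositively, if $\EXP\ne\NEXP$ then $\PreciseStoqMAtwo_1 \ne \PreciseStoqMAtwo$, and together with the trivial inclusion this gives the strict containment claimed.

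The main obstacle is not in this argument itself but in making sure the formal version of the time bound (\Cref{thm:PreciseStoqMAtwo-perfect-completeness}) genuinely applies in the precise regime, i.e.\ with exponentially small $\Delta$ and exactly perfect completeness. We rely on that formal statement as given.
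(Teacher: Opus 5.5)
Your proof is correct and is essentially the paper's argument. You combine $\PreciseStoqMAtwo=\NEXP$ (\Cref{corr:PreciseStoqMAtwo-eq-PreciseQMAtwo}) with the explicit-table bound of \Cref{thm:PreciseStoqMAtwo-perfect-completeness}, which for $k=2$ and $\epsilon=0$ places $\PreciseStoqMAtwo_1$ in $\EXP$ in time $O(2^{4\ell+r}T_\Gamma)$, a bound that does not depend on $\Delta$ and so settles your concern about hidden gap dependence. Equality of the two classes would then force $\NEXP\subseteq\EXP$.
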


Here, the subscript $1$ indicates perfect completeness.
Notably, the failure of perfect-completeness closure for \PreciseStoqMAtwo{}, as provided in \Cref{corr:PreciseStoqMAtwo-no-perfect-completeness}, contrasts with the closure under perfect completeness for the general quantum classes: specifically, $\PreciseQMA=\PreciseQMA_1$~\cite{FL16,FL18} and $\PreciseQMAtwo=\PreciseQMAtwo_1$~\cite{Pereszlenyi12}. Nevertheless, this failure \emph{does not extend} to the single-prover precise stoquastic setting: our inclusion $\PSPACE\subseteq \PreciseStoqMA_1$, established in \Cref{sec:cleanCC-in-PreciseStoqMA}, together with~\cite{FL16,FL18}, also implies that \PreciseStoqMA{} is closed under perfect completeness:\footnote{\Cref{thm:PreciseStoqMA-eq-PreciseQMA} follows from \Cref{thm:PreciseStoqMA-perfect-completeness}, and we refer the reader to \Cref{sec:cleanCC-in-PreciseStoqMA} for details.}
\begin{theorem}
    \label{thm:PreciseStoqMA-eq-PreciseQMA}
    $\PreciseStoqMA = \PreciseStoqMA_1 = \PSPACE = \PreciseQMA$.
\end{theorem}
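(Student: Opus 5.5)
The plan is to prove the chain $\PreciseStoqMA_1 \subseteq \PreciseStoqMA \subseteq \PreciseQMA = \PSPACE \subseteq \PreciseStoqMA_1$, which collapses every class in the statement to the same one.

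\textbf{The trivial and quoted links.} First, $\PreciseStoqMA_1 \subseteq \PreciseStoqMA$ holds by definition, since perfect completeness is a special case of completeness with an exponentially small gap. Second, $\PreciseStoqMA \subseteq \PreciseQMA$ holds because a stoquastic verifier is a special quantum verifier. Its final measurement in the Hadamard basis equals a Hadamard gate followed by a computational-basis measurement. Its $\ket{+}$ ancillas can be prepared from $\ket{0}$ by Hadamard gates. Both costs are polynomial, and they leave the completeness and soundness parameters unchanged. Third, $\PreciseQMA = \PSPACE$ is the result of~\cite{FL16,FL18}, which I take as given.

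\textbf{The remaining inclusion.} It thus suffices to show $\PSPACE \subseteq \PreciseStoqMA_1$, the result the paper isolates in \Cref{sec:cleanCC-in-PreciseStoqMA} (\Cref{thm:PreciseStoqMA-perfect-completeness}). I would take a \PSPACE{}-complete problem about \emph{clean} reversible classical computations, which I expect is \CleanCC{}. Roughly, we are given a polynomial-size reversible circuit $C$ built from Toffoli, CNOT and X gates, and we ask whether $C$ applied $2^{\poly(n)}$ times, or a suitably succinct iterated version, maps the all-zero input to an accepting configuration with clean ancillas. Following the Fefferman--Lin approach, I would encode this computation as a history state over a clock register. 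The witness is the uniform superposition $\frac{1}{\sqrt{T}}\sum_t \ket{t}\ket{x_t}$ over the configurations $x_t$ of the reversible computation.

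\textbf{Why the verifier can be stoquastic with perfect completeness.} The computation is classical and reversible, so all amplitudes of this history state are non-negative. The propagation checks are permutation matrices, so each check compares $\ket{t}\ket{x_t}$ with $\ket{t+1}\ket{C x_t}$. Such a comparison is exactly a swap/Hadamard-basis test: apply a controlled permutation, then measure the control qubit in the $\ket{+}$ basis. The verifier picks a random check using $\ket{+}$ ancillas. It performs either an input check, an output check, or a propagation check on a random clock step. Every check is a Toffoli-type circuit followed by one final Hadamard-basis measurement, which is the stoquastic verifier format. On \emph{yes} instances the honest history state passes every check with probability exactly $1$, because it is a $+1$ eigenvector of each permutation test. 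This gives perfect completeness.

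\textbf{Soundness (main obstacle).} On \emph{no} instances, I would show the acceptance probability is at most $1 - 1/\exp(\poly)$. The acceptance operator is $\frac12(I + P)$ averaged over the checks, where each $P$ is a permutation. Its top eigenvalue is $1$ only if some state is invariant under all the permutations and also satisfies the input and output constraints. Because the computation is deterministic and reversible, the orbit structure forces any invariant state to contain the accepting output, which contradicts the \emph{no} instance. The real difficulty is to quantify the spectral gap. I would model this as a random walk on a path of length $T = 2^{\poly}$, giving a spectral gap of order $1/T^2$, which is exponentially small and therefore within the allowed gap. Two technical points remain. The clock must be implementable with reversible gates, for example a binary counter with carry handled by Toffolis. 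The \CleanCC{} cleanness condition must be used to rule out cheating witnesses that sit on configurations outside the orbit. Once these are settled, the chain closes and the stated equalities follow.
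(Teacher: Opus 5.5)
Your outer chain $\PreciseStoqMA_1\subseteq\PreciseStoqMA\subseteq\PreciseQMA=\PSPACE\subseteq\PreciseStoqMA_1$ is exactly how the paper concludes, via \cite{FL16,FL18}. Your proof of the key inclusion $\PSPACE\subseteq\PreciseStoqMA_1$, however, takes a different route, and you have misidentified \CleanCC{}. In the paper (following \cite{AG20}) it is not an iterated-circuit problem. It is a succinctly described graph on $\binset^n$ of polynomial degree with marked vertices, and the question is whether some connected component is entirely unmarked. The paper's witness is the subset state on a clean component. The verifier implements one random-walk step as a permutation of (port, vertex) pairs via the return-index circuit $\wCG$, together with a marking branch. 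Soundness is a path-form Cauchy--Schwarz (Poincar\'e) bound: in each component $H$, a path from a marked vertex $s$ to a heaviest vertex $t$ gives $\alpha_t^2\le |H|\bigl(\alpha_s^2+\sum_i(\alpha_{u_{i+1}}-\alpha_{u_i})^2\bigr)$, so the total penalty is at least $|V|^{-2}$.

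Your clock construction is in fact a special \CleanCC{} instance. Take the graph $(t,x)\sim(t+1,Cx)$, and mark bad initial configurations at $t=0$ and rejecting ones at $t=T-1$. Since $C$ is a bijection, every component is a path of length $T$, and in a \emph{no} instance each has a marked endpoint. The same argument therefore settles your ``main obstacle'' with a $T^{-2}$ bound.

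What your route buys is a self-contained reduction straight from a reversible simulation of space-bounded computation, essentially a stoquastized Fefferman--Lin history state, with a one-dimensional soundness analysis. What it costs is twofold. First, the \PSPACE{}-hardness of your iterated-reversible-circuit problem must come from Bennett or Lange--McKenzie--Tapp, not from \cite{AG20}, with the reversible step realized as a garbage-free Toffoli circuit that is a bijection on all strings. Second, you need the clock register. The paper's route needs no clock and handles arbitrary succinct graphs, at the price of invoking the \PSPACE{}-completeness of \CleanCC{}.

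One quantitative caveat: if propagation uses a uniformly random clock step $s$, the Dirichlet form is diluted by a factor $1/T$, and you get $\Omega(T^{-3})$ rather than $T^{-2}$. Forward/backward ports, as in the return-index trick, recover $\Theta(T^{-2})$. Either bound is $2^{-\poly(n)}$, so the conclusion stands.
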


Furthermore, \Cref{corr:PreciseStoqMAtwo-eq-PreciseQMAtwo,corr:PreciseStoqMAtwo-no-perfect-completeness,thm:PreciseStoqMA-eq-PreciseQMA} together reveal a subtle distinction among precise stoquastic proof systems: although the precise stoquastic complexity classes match their quantum counterparts in both the single-prover and two-prover settings without perfect completeness, this strength does not persist under perfect completeness once unentanglement is present.

\vspace{1em}
Our third main result records a general upper bound for \StoqMAk{}:\footnote{It is important to note that, as in \Cref{thm-informal:upper-bound-StoqMAtwo-noYesError}, the time bound in \Cref{thm-informal:StoqMAk-in-EXP} also has an \emph{implicit exponential dependence} on $m_+$ of the form $2^{O(m_+)}$, arising from deterministically transforming a stoquastic verification circuit into an entrywise non-negative matrix of dimension $2^{k\ell}$. See \Cref{remark:m+-dependence-in-BKS} for details.}
\begin{theorem}[Informal version of \Cref{thm:StoqMAk-in-EXP}]
    \label{thm-informal:StoqMAk-in-EXP}
    For every integer-valued $\ell(n)$ and $k(n)$, and for all functions $c(n) \in (1/2,1]$ and $\Delta(n) \in (0,1/2]$ such that $c(n)-\Delta(n)\geq 1/2$, it holds that
    \[\StoqMA_\ell\ssbra{k,c,\Delta} \subseteq \DTIME\sbra*{\exp\rbra*{ \widetilde{O}\rbra[\bigg]{\frac{k^2\ell^2}{\Delta^2}} }}.\] 
\end{theorem}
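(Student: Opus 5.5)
The plan is to reduce the acceptance probability of a \StoqMAk{} verifier to a non-negative tensor optimization problem and then apply the Sum-of-Squares algorithm of~\cite{BKS14}, with a sharpened analysis of its degree.

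\emph{Step 1: a non-negative bilinear form.} Fix the stoquastic verification circuit $U$ acting on $k$ proof registers of $\ell$ qubits each, together with $m_0$ $\ket{0}$-ancillas and $m_+$ $\ket{+}$-ancillas. The final measurement is in the Hadamard basis, so the acceptance probability on a witness $\ket{\psi}$ is
\[
  \tfrac12 + \tfrac12\bra{\psi}A\ket{\psi},
\]
where $A$ is a $2^{k\ell}\times 2^{k\ell}$ matrix. Since $U$ consists of classical reversible gates, $A$ is entrywise non-negative. Each entry can be computed deterministically by summing over the $2^{m_+}$ ancilla branches, which is the implicit $2^{O(m_+)}$ factor mentioned in the footnote.

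\emph{Step 2: restrict to non-negative product witnesses.} By the Perron--Frobenius argument already used for \StoqMA{}, applied register by register, the maximum of $\bra{\psi}A\ket{\psi}$ over product states $\ket{\psi_1}\otimes\cdots\otimes\ket{\psi_k}$ is attained by non-negative unit vectors $\psi_i$. Replacing each $\psi_i$ by its entrywise absolute value cannot decrease the value, because $A\ge 0$ entrywise.

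Deciding the instance then amounts to distinguishing
\[
  \max \langle A,\ \psi_1\psi_1^\top\otimes\cdots\otimes\psi_k\psi_k^\top\rangle \ \ge\ 2c-1
  \qquad\text{from}\qquad
  \le\ 2c-1-2\Delta .
\]
The hypothesis $c-\Delta\ge 1/2$ guarantees that both thresholds lie in $[0,1]$, so the gap is an additive $2\Delta$ in a quantity normalized to at most $1$.

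\emph{Step 3: run the SoS hierarchy.} This is the $2\to 4$ (more generally $2\to 2k$) style non-negative tensor norm problem. The key result of~\cite{BKS14} is that for non-negative tensors, degree-$d$ SoS approximates the injective norm to additive error $\epsilon$ relative to the maximum, using $d=\widetilde O(\text{dimension-log}/\epsilon^2)$. The mechanism is a rounding by conditioning on $d$ sampled coordinates and then applying a Pinsker-type entropy decrement argument.

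Here the number of variables is $N=k\cdot 2^{\ell}$, with $\log$-dimension $k\ell$, and the problem has degree $2k$. I would take
\[
  d=\widetilde O\!\left(\frac{k^2\ell^2}{\Delta^2}\right).
\]
The SDP then has size $N^{O(d)}=2^{\widetilde O(\ell d)}$. To land inside $\exp\!\big(\widetilde O(k^2\ell^2/\Delta^2)\big)$, the accounting has to be tightened: one should measure $d$ in units of conditioned registers rather than monomials, so that the total size is $\exp\!\big(\widetilde O(k\ell\cdot k\ell/\Delta^2)\big)$. The SDP is solved deterministically by the ellipsoid method to precision $\Delta/10$.

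\emph{Main obstacle: the tightened degree analysis.} I would try to show that conditioning on $t=O(k\ell/\Delta^2)$ whole registers per prover suffices, via the following chain:
\begin{itemize}
  \item the mutual information between registers in the pseudo-distribution is at most $k\ell$;
  \item so after $t$ conditionings, some conditioning has average pairwise mutual information at most $k\ell/t$;
  \item Pinsker's inequality turns this into an $\ell_1$ closeness to a product distribution of order $\sqrt{k\ell/t}\le\Delta$;
  \item non-negativity of $A$ makes the objective Lipschitz in this $\ell_1$ sense, because the Hellinger-type overlap is controlled by total variation.
\end{itemize}
Each conditioning costs a pseudo-distribution degree of $O(\ell)$ bits times $k$, giving total degree $\widetilde O(k^2\ell^2/\Delta^2)$ as required. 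Verifying the Lipschitz/rounding step for $k$-fold products, rather than $k=2$, is where I expect the real work. My fallback would be to handle it by a hybrid argument over the $k$ registers, which accounts for the extra factor of $k$.
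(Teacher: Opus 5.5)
Steps 1--2 are fine. Step 3, however, does not give the $k^2$ dependence, and the missing idea is to \emph{symmetrize before running SoS}.

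You keep $k$ distinct registers. The natural potential is then $\sum_i \H(A^{(i)})\le k\ell$. A failed direct rounding only tells you that the total correlation $\KL(P\|\prod_i P_i)=\sum_j \rmI(A^{(j)};A^{(1)},\dots,A^{(j-1)})$ exceeds $2\epsilon^2$, hence that some term exceeds $2\epsilon^2/k$. That bounds the number of rounds only by $O(k^2\ell/\Delta^2)$. Each round is a reweighting of degree $2(k-1)$, so the degree is $O(k^3\ell/\Delta^2)$. This is the original $t^3$ of BKS, and it is also where your ``hybrid over the $k$ registers'' fallback lands. Average \emph{pairwise} mutual information is not even the right quantity for $k\ge 3$, since pairwise independence does not imply closeness to a product law.

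The paper first applies \Cref{thm:StoqMAk-in-SymStoqMAk}, passing to $\SymStoqMA$ with proof length $\ell'=O((\ell+\log k)\log k)$ and gap $\Delta/2$; this is the source of the $\widetilde O$. The value is then $\max_{\|x\|=1}\Tr M(xx^T)^{\otimes k}$ for a single $x\in\mathbb R^{d}$ with $d=2^{\ell'}$. Now $(A_1,\dots,A_k)$, defined by $\Pr[(A_1,\dots,A_k)=(i_1,\dots,i_k)]=\EE[\prod_r x_{i_r}^2]$, is exchangeable with a single marginal $A$. The potential is $\H(A)\le\log d$. Exchangeability plus the chain rule give $\KL(P\|Q)\le k\,\rmI(A_k;A_1,\dots,A_{k-1})$, so each failed round lowers $\H(A)$ by $2\epsilon^2/k$. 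This yields $O(k\log d/\epsilon^2)$ rounds, which is exactly your guessed count, but it is justified only in the symmetric setting. The resulting degree is $O(k^2\log d/\epsilon^2)$.

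Your accounting also needs fixing. Conditioning is reweighting by the square monomial $\prod_{j<k}x_{i_j}^2$, which costs degree $2(k-1)$, not ``$O(\ell)$ bits times $k$''. An SoS relaxation of degree $D$ in $N$ variables has size $N^{O(D)}$, so $D$ cannot be ``measured in units of conditioned registers''. With your $D=\widetilde O(k^2\ell^2/\Delta^2)$ you get $2^{\widetilde O(k^2\ell^3/\Delta^2)}$. What is needed is $D=O(k^2\ell'/\Delta^2)$, which gives $d^{O(D)}=\exp(\widetilde O(k^2\ell^2/\Delta^2))$.

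Finally, the detour through $\ell_1$ loses a square. With $y=\sqrt P$ and $z=\sqrt Q$ entrywise, the direct-rounding loss is at most $\langle y,My\rangle-\langle z,Mz\rangle\le 2\|y-z\|=2\sqrt2\,\dH(P,Q)$. This is controlled by Hellinger distance, but not linearly by total variation, even for entrywise non-negative $M$: take $M$ swapping two basis states, $y=(\sqrt{1-\delta},\sqrt\delta)$, $z=(1,0)$, so the loss is about $2\sqrt\delta$ while the total variation is $\delta$. Pinsker followed by $\dH^2\le\TV$ only gives $\dH\lesssim\KL^{1/4}$, which forces a $\Delta^{-4}$ dependence. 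The paper instead uses $2\dH^2\le\KL$ (\Cref{lemma:Hsquare-leq-KL}) directly.
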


\Cref{thm-informal:StoqMAk-in-EXP} follows from the Sum-of-Squares algorithm of Barak, Kelner, and Steurer~\cite{BKS14}, as stated in \Cref{thm:bks-sos}. Remarkably, together with \Cref{thm-informal:StoqMAtwo-short-proofs}\ref{thmitem:StoqMAtwo-short-constantGap}, our refined analysis implies that the BKS algorithm is essentially \emph{optimal} under ETH: any improvement over the stated dependence, up to poly-logarithmic factors, on the tensor order $t$, corresponding to the number of provers $k$, or on the dimension $d=2^\ell$, corresponding to the witness length, would yield an algorithm for \SAT{} that runs faster than ETH permits. 
This ETH-based optimality also explains an interesting subtlety concerning the \emph{loss of perfect completeness} in \Cref{thm-informal:StoqMAtwo-short-proofs}: it occurs precisely in the regime in which the dependence on both $k$ and $\ell$ in the time complexity in \Cref{thm-informal:upper-bound-StoqMAtwo-noYesError} is \emph{exponentially} improved over that of the general algorithm in \Cref{thm-informal:StoqMAk-in-EXP}.

Finally, \Cref{thm-informal:upper-bound-StoqMAtwo-noYesError,thm-informal:StoqMAk-in-EXP} together establish non-trivial algorithmic upper bounds for \StoqMAtwo{}. These results contrast with the trivial \NEXP{} upper bound for \QMAtwo{}, which remains the state of the art more than two decades after the introduction of \QMAtwo{}~\cite{KMY09}. Consequently, our upper bounds give a positive answer to Question \ref{question:upper-bound} and indicate that 

\takeaway{Stoquasticity imposes exploitable structure in unentangled proof systems, yielding upper bounds much better than \NEXP{}.}

\paragraph{A stoquastic complexity zoo.}
For readability, \Cref{fig:classes-relation} summarizes the relationships among the complexity classes discussed in this subsection: the classes introduced in this work are shown in {\color{purple}\emph{purple}} shaded boxes with dashed borders, while previously known classes are shown in {\color{gray}\emph{gray}} shaded boxes with solid borders. New inclusions are indicated by bold {\color{purple}\emph{purple}} lines, and new equivalences appear in {\color{purple}\emph{purple}} shaded boxes with bold borders. In addition, the new results proved in the follow-up work~\cite{GJ26} are shown in {\color{RoyalBlue}\emph{blue}} shaded boxes with dashed borders.

\begin{figure}[!ht]
    \centering
    \includegraphics[width=0.63\linewidth]{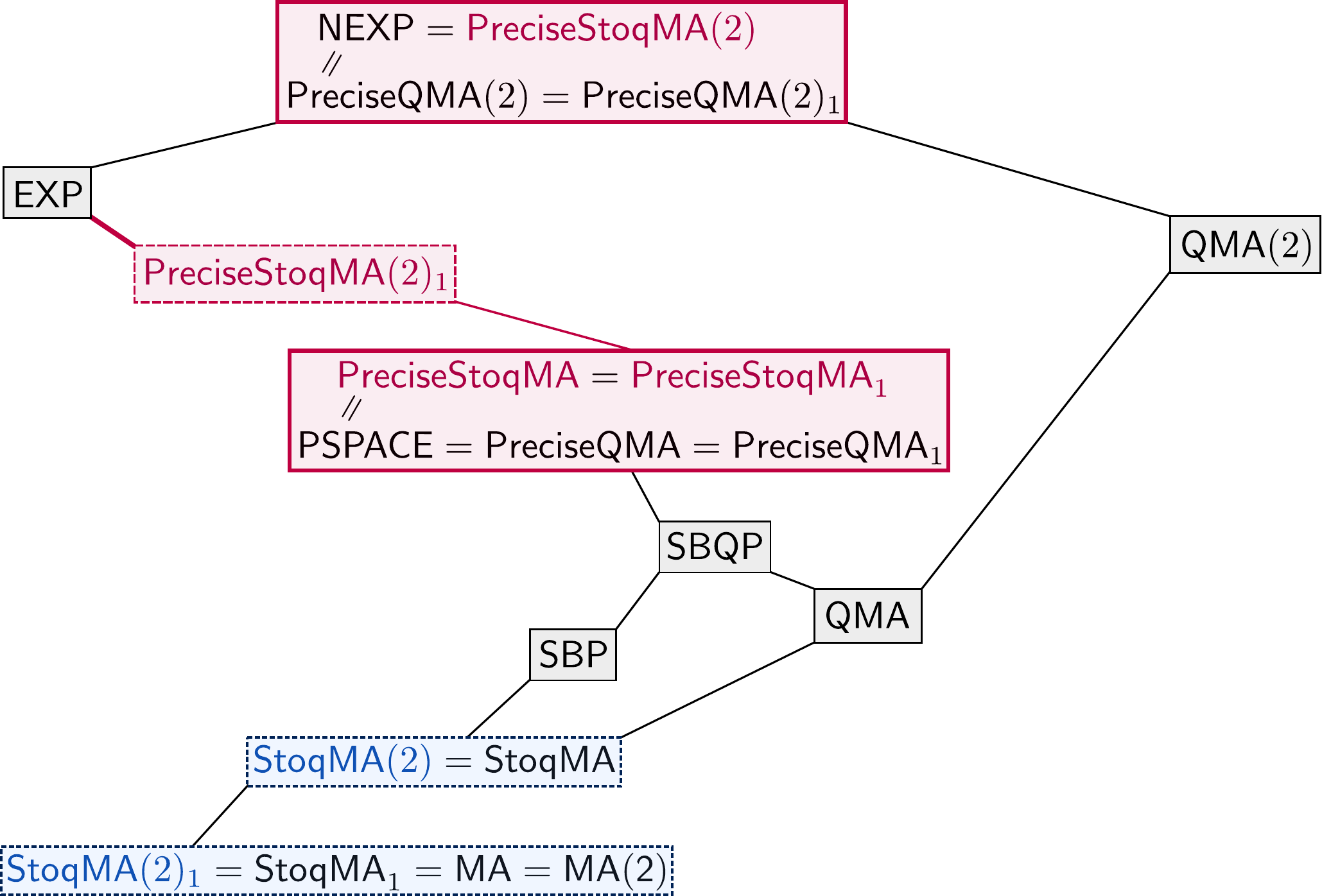}
    \caption{Relationships among \StoqMAtwo{}, its variants, and other complexity classes.}
    \label{fig:classes-relation}
\end{figure}

\subsection{Proof techniques}

To explain our main results in \Cref{subsec:main-results}, we first present basic properties of \StoqMAk{} and its symmetric variant, which lead to the robustness of \StoqMAtwo{}. Equipped with these properties, we then explain how our lower and upper bounds are achieved. 

\paragraph{Basic properties of \StoqMAk{}.} 
We start with a consequence of parallel repetition for \StoqMAk{} (\Cref{thm:StoqMAk-w-parallel-repetition-loss}), where the subscript ``$1-\negl$'' denotes negligible completeness error:

\begin{theorem}[Error reduction for $\StoqMAk_{1-\negl}$, informal version of \Cref{thm:StoqMAk-w-parallel-repetition-loss,corr:error-reduction-StoqMAk-negl}] 
\label{thm-informal:StoqMAk-error-reduction-yes}
Let $\epsilon(n)$ be a negligible function, and let $\Delta(n) \geq 1/\poly(n)$ denote the promise gap. Then, for every polynomially bounded $r(n)$ satisfying $r(n) \geq \ceil*{n \big/ \log_2\rbra[\big]{\frac{1}{1-\epsilon-\Delta}}}$, it holds that
    \[ \StoqMA_\ell\ssbra*{k,1-\epsilon,\Delta} \subseteq \StoqMA_{r\ell}\ssbra*{k,1-\epsilon',\Omega\rbra*{1}}. \]
Here, $\epsilon'(n)$ is also a negligible function. 
\end{theorem}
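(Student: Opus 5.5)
The plan is to derive the statement from a parallel-repetition theorem for \StoqMAk{} with a quantified loss, and then tune the number of repetitions $r$. Following the distribution-testing view of \cite{Liu21}, the verifier is $V$ with $k$ proof registers. The repeated verifier $V^{\otimes r}$ receives $k$ unentangled proofs, the $i$-th of which has $r\ell$ qubits (one $\ell$-qubit block per repetition). It runs the $r$ copies of $V$ in parallel and combines the $r$ output qubits into a single Hadamard-basis measurement. We use the AND-type combination available in the stoquastic setting: accept exactly when the joint output register projects onto $\ket{+}^{\otimes r}$. This is implemented with reversible classical gates and one final Hadamard-basis measurement. The reason this works is that $\Pi_+^{\otimes r}$ projects onto a product of $\ket{+}$ states, and $\ket{+}^{\otimes r}$ is unitarily related to $\ket{+}$ on one qubit tensored with an ancilla check, using only Toffoli/CNOT gates in the $\ket{\pm}$ picture. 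Hence the new verifier is still stoquastic, and the number of provers stays $k$.

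Completeness. Honest provers send $\ket{w_i}^{\otimes r}$. These are product across provers and non-negative, and the acceptance probability is $p^r$ with $p\ge 1-\epsilon$. So the acceptance probability is at least $(1-\epsilon)^r\ge 1-r\epsilon$. Since $r$ is polynomial and $\epsilon$ negligible, $\epsilon'=r\epsilon$ is negligible.

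Soundness, which is the main obstacle. Dishonest provers may entangle the $r$ blocks \emph{within} each proof. We need the maximum acceptance of $\Pi_{\rm acc}^{(r)}$ over $\Sep$ states on $k$ parties, each holding $r$ blocks, to be at most $s^{r}$ up to a controlled loss. For \QMAtwo{} this multiplicativity fails in general. For stoquastic verifiers, Perron--Frobenius lets us restrict to non-negative product witnesses, and the acceptance operator is an entrywise non-negative matrix. The plan is to argue multiplicativity of the separable-restricted maximum for non-negative tensor products, or, failing exact multiplicativity, to prove a lossy bound of the form $\max \le (1-\Delta')^{r}$ with $\Delta'$ polynomially related to $\Delta$. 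This is the ``parallel repetition with loss'' theorem referred to as \Cref{thm:StoqMAk-w-parallel-repetition-loss}.

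Concretely, I would first try to show that for entrywise non-negative $M_1,M_2$, the $\Sep$-maximum of $M_1\otimes M_2$ equals the product of the maxima. The idea is to view $\bra{\psi}M\ket{\psi}$ for non-negative $\ket\psi$ as a Hellinger-affinity-type quantity between induced distributions (as in \cite{Liu21}) and use its multiplicativity/tensorization across blocks, conditioning block by block via the chain rule. If this only yields a per-block bound like $s^{1/2}$ or $(1+s)/2$ rather than $s$, that is acceptable: the soundness becomes $(1-\Omega(\Delta))^{r}$-type, written as $(1-\epsilon-\Delta)^{\Omega(r)}$.

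Finally, the parameters. The bound $r\ge \lceil n/\log_2(1/(1-\epsilon-\Delta))\rceil$ gives $(1-\epsilon-\Delta)^r\le 2^{-n}$. So after repetition the soundness is at most $2^{-\Omega(n)}$ (or at most a constant below $1$ after accounting for the loss), while completeness is $1-\epsilon'$. Hence the gap is $\Omega(1)$, and the proof length is $r\ell$ per prover. Because $\Delta\ge 1/\poly(n)$, $r$ is polynomially bounded, so the verifier remains efficient. I expect the soundness multiplicativity under unentanglement to be the only genuinely hard step; the circuit construction and parameter accounting are routine.
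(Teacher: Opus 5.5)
Your proposal has a genuine gap, first in the construction. The ``AND-type'' combination (accept iff the $r$ output qubits project onto $\ket{+}^{\otimes r}$) cannot be realized by a stoquastic verification circuit. Toffoli, CNOT and X are permutation matrices, so on a non-negative witness the pre-measurement state $\ket{\phi}$ is non-negative. The acceptance probability $\frac12\bigl(1+\bra{\phi}X_{\sfO}\ket{\phi}\bigr)$ is therefore always at least $\frac12$, which is why the class definition requires $s\geq\frac12$. Your combination would accept with probability $\prod_j p_j$, for example $2^{-r}$ when all $p_j=\frac12$, and your claimed soundness $2^{-\Omega(n)}$ is likewise impossible. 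Your Hadamard-picture justification also fails. In the $\ket{\pm}$ basis CNOT becomes a reversed CNOT and X becomes Z, so these gates only give GF(2)-linear tests such as the parity $X_{\sfO_1}\cdots X_{\sfO_r}$, which accepts with $\frac12+\frac12\prod_j(2p_j-1)$ (the lossless combination used for \StoqMA{} in \cite{Liu21}). Toffoli is not even a permutation in that basis. The paper instead runs the $r$ copies unmeasured and SWAP-tests their output qubits against fresh $\ket{+}^{\otimes r}$ ancillas. This accepts with probability $\frac12+\frac12\bra{\Psi}M^{(1)}\otimes\cdots\otimes M^{(r)}\ket{\Psi}$, where $M^{(j)}\coloneqq\bra{\bar{0}}\bra{\bar{+}}V^\dagger\ketbra{+}{+}_{\sfO}V\ket{\bar{0}}\ket{\bar{+}}$. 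Completeness is then $\frac12+\frac12(1-\epsilon)^r\geq 1-r\epsilon/2$. Once soundness is shown to multiply, soundness is $\frac12+\frac12(1-\epsilon-\Delta)^r\leq\frac12+2^{-n-1}$ for your choice of $r$, which is where the $\Omega(1)$ gap comes from.

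The soundness step has a more serious problem: the lemma you propose, multiplicativity of the stoquastic separable value for arbitrary entrywise non-negative Hermitian $M_1,M_2$, is false. Take $M=\ketbra{01}{10}+\ketbra{10}{01}$ on two qubits, which is non-negative but not PSD. Product witnesses $(a_0\ket{0}+a_1\ket{1})\otimes(b_0\ket{0}+b_1\ket{1})$ give $2a_0a_1b_0b_1\leq\frac12$. For $M\otimes M$, let each prover hold $(\ket{00}+\ket{11})/\sqrt{2}$ across its two blocks. This yields the maximally entangled state between the two copies and value $\frac14\Tr(M^2)=\frac12>\frac14$. This is why the parity combination above, whose operators involve $X_{\sfO}$ and are not PSD, does not extend to \StoqMAk{}. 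It is also why a Hellinger/chain-rule tensorization cannot work without a further hypothesis: it does not control entanglement between blocks held by the same prover. The missing ingredient is positive semidefiniteness. For entrywise non-negative PSD $M,M'$, the argument goes as follows. Write $\ket{\phi_1}=\sum_i\ket{i}\ket{\eta_i}$ and $\ket{\phi_2}=\sum_j\ket{j}\ket{\vartheta_j}$, and use non-negativity of $M$ to move absolute values onto the $M'$-terms. Bound $\bigl|\bra{\eta_i}\bra{\vartheta_j}M'\ket{\eta_{\hat{i}}}\ket{\vartheta_{\hat{j}}}\bigr|$ by Cauchy--Schwarz for the PSD form $M'$, giving $h_{\mathrm{Sep}^+}(M')\norm{\eta_i}\norm{\vartheta_j}\norm{\eta_{\hat{i}}}\norm{\vartheta_{\hat{j}}}$. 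Finally collapse to the non-negative product state $\sum_i\norm{\eta_i}\ket{i}\otimes\sum_j\norm{\vartheta_j}\ket{j}$ tested against $M$. The SWAP-test combination is chosen precisely so that the relevant operator is a tensor product of such PSD matrices, and then no ``lossy fallback'' is needed.
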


It is noteworthy that \Cref{thm-informal:StoqMAk-error-reduction-yes} mirrors parallel repetition for \StoqMA{}~\cite[Section 5]{Liu21}, but with parameter loss. The main obstacle to extending the approach of~\cite[Section 5]{Liu21} to \StoqMAtwo{} is that its soundness analysis relies crucially on the \emph{eigenvalue properties} of stoquastic Hermitian matrices,\footnote{For convenience, we refer to entrywise non-negative matrices as \emph{stoquastic} matrices.} whereas the analogous argument fails for \emph{separable values} of stoquastic Hermitian matrices in general;\footnote{Roughly speaking, for any \StoqMAtwo{} verification circuit $V_x$, the maximum acceptance probability of $V_x$ corresponds to the separable value of the matrix $\bra{\bar{0}}\bra{\bar{+}}V_x^\dagger \ketbra{+}{+}_\Out V_x \ket{\bar{0}}\ket{\bar{+}}$. See \Cref{subsec:separable-values} for a formal definition.} see \Cref{remark:StoqMAk-parallel-repetition-parameter-loss}. Instead, we work directly with separable values of stoquastic positive semi-definite (PSD) matrices, thereby showing that separable values are multiplicative for stoquastic PSD matrices (\Cref{lemma:stoqSepVal-multiplicative-PSD}). Interestingly, for general \QMAtwo{}, such a multiplicative property holds only for \emph{separable} Hermitian matrices; cf.~\cite[Lemma 10]{HM13}.

\vspace{1em}
However, unlike \QMAtwo{}, which admits error reduction~\cite{KMY09,ABDFS09,HM13}, general error reduction for \StoqMA{} remains open and would collapse \StoqMA{} to \MA{}~\cite{AGL20}. Nevertheless, two basic properties for \QMAtwo{}, namely \emph{prover compression}~\cite{HM13} and \emph{witness symmetrization}~\cite{ABDFS09}, still admit stoquastic analogues with only a \emph{quadratic gap loss}:

\begin{theorem}[Robustness of \StoqMAtwo{}, informal version of \Cref{thm:StoqMAk-in-StoqMAtwo,thm:StoqMAk-eq-SymStoqMAk}]
\label{thm-informal:robustness-StoqMAtwo}
    Let $\epsilon(n)$ be the completeness error, and let $\Delta(n)$ be the promise gap. Then the following inclusions hold: 
    \begin{enumerate}[label={\upshape(\arabic*)}]
    \item \label{thmitem:prover-compression}\emph{\textbf{Prover compression}.} For every polynomially bounded integer-valued $k(n) > 2$,
    \[ \StoqMA_{\ell}\ssbra*{k,1-\epsilon,\Delta} \subseteq \StoqMA_{k\ell}\ssbra*{2,1-O(\epsilon \Delta),\Omega(\Delta^2)}. \]
    \item \label{thmitem:symmetrize}\emph{\textbf{Witness symmetrization}.}
    For every polynomially bounded integer-valued $k(n) \geq 2$,
    \begin{subequations}
    \begin{align}
        \StoqMA_{\ell}\ssbra*{k,1-\epsilon,\Delta} &\subseteq \SymStoqMA_{O(\ell\log{k})}\ssbra*{k,1-\epsilon-O(\Delta),\Omega(\Delta)}, \label{eq:StoqMAk-in-SymStoqMAk}\\
        \SymStoqMA_{\ell}\ssbra*{k,1-\epsilon,\Delta} &\subseteq \StoqMA_{\ell}\ssbra*{k,1-O(\epsilon\Delta),\Omega(\Delta^2)}. \label{eq:SymStoqMAk-in-StoqMAk}
    \end{align}
    \end{subequations}
    \end{enumerate}
\end{theorem}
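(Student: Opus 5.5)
The plan has three parts: the inclusion \eqref{eq:StoqMAk-in-SymStoqMAk}, the inclusion \eqref{eq:SymStoqMAk-in-StoqMAk}, and then prover compression as a composition of these with a stoquastic product test. The common tool is a stoquastic SWAP-type test, implemented by Toffoli/CNOT gates and a $\ket{+}$ control whose output is measured in the Hadamard basis. Its acceptance probability is $\tfrac12(1+\Tr(\rho\sigma))$, a Hellinger-affinity quantity, and because the optimal witnesses may be taken non-negative (Perron--Frobenius applied to each factor of the separable value), all overlaps are non-negative reals.

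For \eqref{eq:StoqMAk-in-SymStoqMAk} we follow the symmetrization of~\cite{ABDFS09}. Each symmetric prover is asked for $\ket{\phi}=\frac1{\sqrt k}\sum_{i}\ket{i}\ket{\psi_i}$, which uses $\ell+\lceil\log k\rceil$ qubits. With constant probability the verifier measures the index registers in the computational basis; if the $k$ indices are distinct, it permutes the proof registers into order and runs the original verifier. Otherwise it accepts, and this costs an $O(\Delta)$ completeness loss once we tune the mixing probabilities. Index collisions occur with constant probability, so we repeat the indices $O(\log k)$-fold; this is where the proof length $O(\ell\log k)$ comes from.

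Soundness is the delicate part. Since $\ket{\phi}^{\otimes k}$ is arbitrary, the conditional states $\ket{\psi_i}$ are products and therefore sound. The difficulty is controlling the weights of the index distribution. We handle this with a second branch that measures the index register in the Hadamard basis, testing that the index distribution is close to uniform. This is exactly the kind of distribution test a stoquastic verifier can perform, and it gives a linear loss $\Omega(\Delta)$ after mixing.

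For \eqref{eq:SymStoqMAk-in-StoqMAk} we take $k$ unentangled but possibly distinct proofs. With probability $p$ the verifier runs a stoquastic permutation test (a symmetric-subspace projection built from controlled swaps on $\ket{+}$ controls); with probability $1-p$ it runs the symmetric verifier. For honest identical non-negative proofs the permutation test accepts with probability $1$, so completeness is $1-(1-p)\epsilon$. For soundness, write $\sigma_j$ for the witnesses. If the permutation test passes with probability at least $1-\delta$, the pairwise overlaps are at least $1-O(\delta)$, and the trace distance to the symmetric product $\sigma_1^{\otimes k}$ is $O(\sqrt{\delta})$. Acceptance is then bounded by $\min\bigl(1-p\delta,\ p+(1-p)(s+O(\sqrt\delta))\bigr)$. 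Choosing $p=\Theta(\Delta)$ and optimizing over $\delta\approx\Delta^2$ gives the gap $\Omega(\Delta^2)$ and completeness $1-O(\epsilon\Delta)$, after suitable rescaling. The square-root conversion from overlap to distance is the source of the quadratic gap loss.

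Prover compression mirrors~\cite{HM13}. We first symmetrize with the above. Then we ask two provers each for a $k\ell$-qubit state claimed to be $\ket{\psi_1}\cdots\ket{\psi_k}$, and run either the product test (a SWAP test between the two copies, subsystem by subsystem) or the original verifier on one copy. Soundness needs a stoquastic product-test lemma: if the product test passes with high probability, the state is $O(\sqrt{\delta})$-close to a product state. We will prove this by restricting the~\cite{HM13} analysis to non-negative states, or, if that is not convenient, by working with separable values of the stoquastic PSD matrices. The same mixing calculation as in \eqref{eq:SymStoqMAk-in-StoqMAk} then yields $1-O(\epsilon\Delta)$ completeness and an $\Omega(\Delta^2)$ gap.

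The main obstacle is the soundness analysis of the first inclusion, namely keeping the loss linear while the index registers can be adversarial. My first attempt would be to exploit non-negativity, so that the uniformity test controls the classical index distribution directly in Hellinger distance.
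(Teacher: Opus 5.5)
\textbf{Gap in the symmetrization step.} Your plan for $\StoqMAk\subseteq\SymStoqMAk$ with proof length $O(\ell\log k)$, which you correctly identify as the crux, fails for two reasons.

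First, the counting is wrong. If each of the $k$ copies carries a single uniform label from $[k]$, all labels are distinct only with probability $k!/k^k\approx\sqrt{2\pi k}\,e^{-k}$, so collisions are almost certain, not merely constant-probability. Giving each copy $r=\lceil 12\ln k\rceil$ labeled slots helps only after two changes. You must replace ``the indices are distinct'' by ``the bipartite graph joining copy $i$ to role $a$ whenever some slot of copy $i$ carries label $a$ has a perfect matching''. You must also route one slot per role from distinct copies along a canonical such matching. Then you still need to show, via Hall's theorem, that this random graph has a perfect matching with probability at least $3/4$. This matching step is absent from your plan.

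Second, and more seriously, accepting on failure gives a cheating prover a trivial strategy: put all label weight on one value. Your repair, a Hadamard-basis uniformity check on the index register, destroys completeness. In the honest witness the index register is entangled with the data, so its reduced state is $\frac1k\sum_{i,j}\innerprod{\psi_j}{\psi_i}\ketbra{i}{j}$. Its overlap with the uniform superposition is only $1/k$ when the $\ket{\psi_i}$ have disjoint supports. A Hadamard-type test detects coherence of the label register, not closeness of the classical label distribution to uniform.

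The missing idea is that no uniformity test is needed at all. On every label branch without a perfect matching, run a fixed dummy circuit that accepts with probability exactly the soundness threshold $c-\Delta$. Since the reversible circuit only reads the label register, label branches do not interfere. On a matching branch the routed registers come from distinct tensor factors of $\ket{\Phi}^{\otimes k}$, so they carry a product of mixtures of non-negative pure states. Hence every branch accepts with probability at most $c-\Delta$ and soundness is immediate, while completeness drops by at most $\Pr[\text{no matching}]\cdot\Delta\le\Delta/4$.

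\textbf{Fixable errors in the other two parts.}

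\emph{Reverse inclusion: the reference state.} Pairwise overlaps $1-O(\delta)$ do \emph{not} put the witness $O(\sqrt\delta)$-close to $\ket{\sigma_1}^{\otimes k}$. Take $\ket{\sigma_1}\otimes\ket{\psi}^{\otimes(k-1)}$ with $\abs{\innerprod{\sigma_1}{\psi}}^2=1-1/k$. All pairwise overlaps are at least $1-1/k$, and the symmetric-projector test rejects with probability below $1/(2k)$. Yet the distance to $\ket{\sigma_1}^{\otimes k}$ is $\sqrt{1-(1-1/k)^{k-1}}\to\sqrt{1-1/e}$. Anchoring at $\sigma_1$ only gives $O(\sqrt{k\delta})$, costing a factor of $k$ in the gap. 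You need an optimally chosen symmetric product, namely $\td(\ket\Phi,\ket\psi^{\otimes k})\le 2\sqrt{1-\bra\Phi\Pi_{\sym}\ket\Phi}$ for some $\ket\psi$. The paper derives this from the fact that a closest product state to a symmetric state can be taken symmetric.

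\emph{Reverse inclusion: the mixing weights.} Your weights are inverted. Completeness $1-O(\epsilon\Delta)$ requires running the original verifier with probability $\Theta(\Delta)$ and the test with probability $1-\Theta(\Delta)$. With your assignment, completeness is about $1-\epsilon$. Moreover, the verifier branch's possible loss of $\epsilon$, at weight close to $1$, is no longer dominated by the test's rejection at weight $\Theta(\Delta)$.

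\emph{Prover compression.} Symmetrizing first is unnecessary and breaks the claimed parameters. It inflates each proof to $O(k\ell\log k)$ qubits and adds an $O(\Delta^2)$ term to the completeness error. The paper instead applies the stoquastic product test directly to two $k\ell$-qubit proofs. The test uses controlled swaps on $\ket{+}$ branch qubits followed by the branch-overlap test, and accepts with probability $\frac12+\frac12P_{\mathrm{prod}}$. The soundness analysis then has two ingredients: the Soleimanifar--Wright bound $P_{\mathrm{prod}}(\rho,\rho)\le1-\eta/3$, and, for $\rho\neq\sigma$, the inequality $\Tr(\rho_S\sigma_S)\le\frac12(\Tr\rho_S^2+\Tr\sigma_S^2)$.
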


When the completeness error is \emph{negligible}, by combining \Cref{thm-informal:StoqMAk-error-reduction-yes,thm-informal:robustness-StoqMAtwo}\ref{thmitem:prover-compression}, we obtain $\StoqMAk_{1-\negl}=\StoqMAtwo_{1-\negl}$ for every polynomially bounded $k\geq 2$.\footnote{As stated in \Cref{thm:StoqMAk-in-StoqMAtwo}, without error reduction for \StoqMAk{}, which is unknown in general, our current prover-compression technique would introduce a quadratic gap loss.}
The proof of \Cref{thm-informal:robustness-StoqMAtwo}\ref{thmitem:prover-compression} adopts the construction underlying \Cref{thm-informal:StoqMAk-error-reduction-yes} and the improved analysis of~\cite{SW22} to implement the product test (cf.~\cite[Protocol 1]{HM13}). Without error reduction, the quadratic gap loss is obtained via optimizing the branch occurrence probability in~\cite[Protocol 2]{HM13}, rather than using the original value $1/2$. To prove \Cref{thm-informal:robustness-StoqMAtwo}\ref{thmitem:symmetrize}, the inclusion in \Cref{eq:SymStoqMAk-in-StoqMAk} is obtained similarly by refining the construction in~\cite[Lemma 4.8]{ABDFS09}. 
The most interesting direction is the inclusion in \Cref{eq:StoqMAk-in-SymStoqMAk}, which improves the witness length from $k\ell$ in~\cite[Lemma 4.8]{ABDFS09} to $O(\ell \log{k})$ while preserving an $\Omega(\Delta)$ promise gap. 

The main idea behind \Cref{eq:StoqMAk-in-SymStoqMAk} is a \emph{coupon-collector-type} branch argument: instead of asking the prover to send $k$ identical copies of the $k$-tuple witness state $\ket{\psi_1}\ket{\psi_2}\cdots\ket{\psi_k}$, the verifier asks for $k$ bundles of \emph{coherent labeled} states, of the form $\rbra[\big]{ \frac{1}{\sqrt{k}}\sum_{j\in[k]} \ket{j}\ket{\psi_j} }^{\otimes r}$. Each bundle contains $r=O(\log{k})$ independent labeled basis branches, so in the branch-weight analysis the verifier encounters many labels from $[k]$. 
By the coupon-collector intuition, a constant total branch weight is supported on configurations in which these labels contain enough information to select one branch labeled $1$, one branch labeled $2$, and so on, using distinct bundles. When this event occurs, the verifier routes the chosen registers to the original \StoqMAk{} verifier; otherwise, it applies a fixed dummy behavior. This construction preserves soundness and loses only a constant fraction of the original promise gap.

\paragraph{Lower bounds via distribution testing.}
The connection between distribution testing and \StoqMA{} was originally revealed in~\cite{Liu21}. More precisely, any stoquastic verification circuit can be viewed as a \emph{reversible branch-overlap test}, implicitly in~\cite[Theorem 22]{Liu21}: given a pair of reversible circuits $(R_0,R_1)$ and a non-negative witness state $\ket{\Psi}$, the corresponding stoquastic verification circuit accepts with probability  
\begin{equation}
    \label{eq:RCD-intro}
    \frac{1+\innerprod{R_0(\Psi)}{R_1(\Psi)}}{2}, \quad\text{where } \quad \ket{R_b(\Psi)} \coloneqq R_b \ket{\Psi}\ket{\bar{0}}\ket{\bar{+}}, \;\forall b\in\binset.
\end{equation}

We extend this viewpoint to the \emph{symmetric} unentangled setting \SymStoqMAk{}, where the witness state is constrained to have the form $\ket{\Psi} \coloneqq \ket{\psi}^{\otimes K}$.\footnote{Although $K=k$, we use $K$ here to emphasize $K$ identical copies of $\ket{\psi}$, rather than the number of provers.} This variant is equivalent to \StoqMAk{} up to a quadratic gap loss (\Cref{thm-informal:robustness-StoqMAtwo}\ref{thmitem:symmetrize}). Instead of using one witness to induce \emph{one distributional object}, the symmetric tensor-power witness creates \emph{many coherent factor branches} whose squared amplitudes behave as a product distribution. 

Consequently, our lower-bound constructions can be viewed as coherent analogues of distribution testing. In a classical distribution-testing algorithm, one draws independent samples from an unknown distribution. In \SymStoqMAk{}, however, no such intermediate measurement is performed. 
Let $G=(V,E)$ be the constraint graph, where $V$ contains $n$ vertices. Instead, the verifier receives the witness $\ket{\Psi}$, where a computational-basis branch is labeled by a tuple
\[ \rbra*{(v_1,a_1),\ldots,(v_K,a_K)}\in (V\times \Sigma)^K. \]
We call $(v_i,a_i)$ the \emph{vertex-label branch} in the $i$-th tensor factor of $\ket{\Psi}$, and the full tuple a \emph{$K$-vertex-label branch}. The squared amplitude of this $K$-vertex-label branch is
\[ \prod_{i=1}^K p(v_i,a_i), \quad\text{where } p(v_i,a_i)\coloneqq\alpha_{v_i,a_i}^2 \;\;\text{and}\;\; \ket{\psi}=\sum_{(v,a)\in V\times\Sigma}\alpha_{v,a}\ket{v,a}. \]

Therefore the coherent branch weights are governed by the product distribution $p^{\otimes K}$. This product-distribution interpretation is used only in the analysis: it describes the total squared amplitude of accepting or rejecting branches, not outcomes of actual measurements. This distinction is important, since actual polynomially many classical samples would suggest an \MA{} containment, as in~\cite[Section 3.1]{Liu21}. 

\vspace{1em}
We first explain \Cref{thm-informal:StoqMAtwo-short-proofs}\ref{thmitem:StoqMAtwo-short-vanishGap} as a warm-up, which can be viewed as a stoquastic analogue of~\cite{BT07,Beigi10,CF13,LNN12}. Our starting point is the \textsc{Gap Constraint Graph Problem} (\GapCG{}) used in Dinur's PCP theorem~\cite{Dinur07}. For a satisfying labeling $\iota \colon V \to \Sigma$, where $(\iota(u),\iota(v)) \in R_{uv}$ for each $(u,v)\in E$, the honest one-copy witness is:
\[ \ket{\phi_\rmL} =\frac{1}{\sqrt{|V|}}\sum_{v\in V}\ket{v,\iota(v)}.\]
The verifier receives the witness state $\ket{\Phi} \coloneqq \ket{\phi_\rmL}\otimes\ket{\phi_\rmL}$. A cheating witness may try to concentrate branch weight on a small region of the graph, making an unsatisfying labeling appear satisfying. This observation motivates the following simple branch-local protocol $\calA_{\branch}$ for a \emph{fixed $2$-vertex-label branch} $((u,a),(v,b))$. The protocol chooses the uniformity or consistency test with probability $1/2$ each and accepts if the chosen test accepts: 
\begin{itemize}
    \item \textbf{Uniformity test:} Reject if $u=v$;
    \item \textbf{Consistency test:} Reject if $u=v$ and $a\neq b$; or $(u,v)\in E$ and $(a,b) \notin R_{uv}$. 
\end{itemize}

Since only two vertex-label branches are present, a direct calculation shows that the relevant collision and edge-detection probabilities are of order $1/\poly(n)$. To implement $\calA_\branch$ as a stoquastic verification circuit, we express the two acceptance conditions corresponding to the above tests, as predicates $A_\unif$ and $A_\cons$, implement them by classical reversible circuits $\Gamma_\unif$ and $\Gamma_\cons$, and combine these circuits using an additional $\ket{+}$ ancillary qubit to obtain a reversible circuit $\Gamma_\branch$ for the uniform mixture of the two tests. The actual stoquastic verifier is then obtained by applying the branch-overlap test to the circuit pair $(\Gamma_\branch,I)$ on the input state $\ket{+}\ket{\Psi}\ket{\bar{0}}$, where $\ket{\Psi}$ equals $\ket{\Phi}$ for \emph{yes} instances. 

\vspace{1em}
We next explain \Cref{thm-informal:StoqMAtwo-short-proofs}\ref{thmitem:StoqMAtwo-short-constantGap}, a stoquastic analogue of~\cite{ABDFS09,CD10}. We use the same starting point, \GapCG{} from~\cite{Dinur07}, and the same one-copy witness $\ket{\phi_\rmL}$. To achieve a constant promise gap, the short-proof protocols of~\cite{CD10} use $\widetilde{O}(\sqrt{n})$ identical copies of $\ket{\phi_\rmL}$, which can be compressed into two $\widetilde{O}(\sqrt{n})$-qubit proofs using the prover-compression technique~\cite{HM13}. Their perfect completeness relies on (at least) Bell-type measurements, while constant soundness follows from a birthday-paradox analysis. 

To stoquastize these protocols, the main technical challenge is that the uniformity test is typically implemented using the quantum Fourier transform~\cite{ABDFS09,CD10}, and hence is not stoquastic. One might instead use the two-branch uniformity test from $\calA_\branch$, but this gives only a $1/\poly(n)$ promise gap. To recover a constant gap, we use \emph{Paninski's coincidence-based uniformity-testing idea}~\cite{Paninski08}: in distribution testing, deviations from uniformity can be detected from the collision pattern among very sparse samples, with the optimal $\Theta(\sqrt n)$ behavior for constant-distance testing. Paninski's test is particularly useful because it separates marginal uniformity from label consistency, making the soundness analysis modular and leading to a predicate $\widehat{A}_\unif$. With a refined soundness analysis, a similar birthday-paradox mechanism applies to coherent $K$-vertex-label branches.
We thus obtain a branch-local protocol $\widehat{\calA}_{\branch}$ for a fixed $\widetilde{O}(\sqrt{n})$-vertex-label branch: the protocol accepts if both tests $\widehat{A}_\unif$ and $A_\cons$ accept. The actual stoquastic verifier is then obtained as before by applying the branch-overlap test to the circuit pair $\rbra[\big]{\widehat{\Gamma}_\branch,I}$, where $\widehat{\Gamma}_\branch$ is the classical reversible circuit for $\widehat{A}_\unif \wedge A_\cons$. 

\paragraph{Upper bounds via rectangular closure testing.}
To derive upper bounds for $\StoqMAtwo_1$, a natural starting point is to extend the known random-walk approach used to prove $\StoqMA_1 \subseteq \MA$ to the unentangled setting, which relies crucially on the local optimality phenomenon of stoquastic $k$-\SAT{}. This  approach~\cite{BBT06,BT10} uses this structure \emph{dynamically}: perfect completeness can be interpreted as the existence of a clean connected component in the configuration graph of a stoquastic $k$-SAT instance, and a random walk from a single seed string suffices to test this component. 

However, this local structure does not directly extend to \StoqMAtwo{}. Given a pair of reversible circuits $(R_0,R_1)$, \Cref{eq:RCD-intro} allows us equivalently to consider the circuit pair $(\Gamma,I)$, where $\Gamma=R_0^\dagger R_1$. Consequently, for any $2\ell$-qubit non-negative witness state $\ket{\psi_1}\otimes\ket{\psi_2}$, perfect completeness yields the following identity, with $r=m_+$ in the notation of \Cref{thm-informal:upper-bound-StoqMAtwo-noYesError}:
\begin{equation}
    \label{eq:projection-intro}
    \Gamma \rbra*{\ket{\psi_1}\otimes\ket{\psi_2}}\ket{\bar{0}}\ket{+}^{\otimes r} = \rbra*{\ket{\psi_1}\otimes\ket{\psi_2}}\ket{\bar{0}}\ket{+}^{\otimes r}.
\end{equation}
Since $\Gamma$ is a bijective permutation, \Cref{eq:projection-intro} implies \emph{support invariance}. More precisely, if $\ket{\psi_1}$ and $\ket{\psi_2}$ have supports $S$ and $T$, respectively, then the joint support is the rectangle $S\times T$; moreover, one may take the corresponding product subset state $\ket{S}\otimes \ket{T}$, where the subset state of $S \subseteq \binset^\ell$ is denoted by $\ket{S}=\frac{1}{\sqrt{|S|}}\sum_{j\in S} \ket{j}$. 
This rectangular structure is \emph{global} and therefore breaks the direct \emph{local} optimality analogy with $\StoqMA_1$: a clean connected component may be highly non-rectangular and hence may not support any product witness.

Nevertheless, our combinatorial approach recovers a \emph{global} analogue of \emph{local} random-walk exploration by testing a property we call \emph{rectangular closure}. The key observation from \Cref{eq:projection-intro} is that, whenever $(s,t,u)$ satisfies $s\in S$, $t\in T$, and $u\in U\coloneqq \binset^r$, every transition of $\Gamma$ from $(s,t,\bar{0},u)$ to  $(s',t',a_0,u')$ must satisfy $a_0=\bar{0}$ and $(s',t',u') \in S\times T\times U$. In other words, a closed rectangle never sends a valid configuration outside the zero-ancilla sector. Our rectangular closure testing algorithm $\calT$ formalizes this idea: starting from every seed pair $(s_0,t_0)$, it repeatedly closes the current rectangle under all transitions of $\Gamma$, adding every pair $(s',t')$ forced by the transition relation underlying $\Gamma$. 
This \emph{deterministic} process tests not only pairs reached by an ordinary walk, but also all crossed pairs required by separability.

The resulting algorithm is therefore \emph{global} rather than \emph{local}. For \emph{yes} instances, \Cref{eq:projection-intro} guarantees that a seed inside a closed rectangle never changes the zero-ancilla register. For \emph{no} instances, the soundness promise implies that every non-closed rectangle has noticeable escaping mass; by reversibility, this escaping mass forces multiplicative growth of the rectangular closure. If $\gamma$ denotes the rejection probability for \emph{no} instances, which is comparable to the promise gap $\Delta$, then it suffices to run $L=\Theta\rbra*{\ell/\gamma}$ closure rounds so that every seed encounters a nonzero ancilla transition. 
Implementing our $L$-round rectangular closure testing algorithm recursively yields an upper bound that is \emph{simultaneously} exponential-time and polynomial-space for inverse-polynomial gaps, while an explicit-table implementation yields a \emph{singly} exponential-time upper bound for exponentially small gaps. 

Finally, by replacing exact invariance (\Cref{eq:projection-intro}) with an approximate version, one can extend our upper bounds to the nearly perfect-completeness regime. Combining our upper bounds with our prover-compression technique (\Cref{corr:StoqMAk-eq-StoqMAtwo-small-yesError}) establishes \Cref{thm-informal:upper-bound-StoqMAtwo-noYesError}.

\paragraph{A general upper bound via Sum-of-Squares.}
The upper bound of $\StoqMAk\subseteq\EXP$ is a corollary of the Sum-of-Squares algorithm of Barak, Kelner, and Steurer~\cite{BKS14}. 

After passing to the symmetric form of the protocol, the acceptance probability can be written as a tensor optimization over the unit sphere
\[
    \max_{\|x\|=1} \Tr M (xx^T)^{\otimes t},
\]
where $M$ acting on $(\mathbb C^d)^{\otimes t}$ is the entrywise non-negative matrix induced by the stoquastic verifier, $t$ is the tensor order corresponding to the number of provers, and $d=2^\ell$ is the dimension of each proof. The BKS algorithm relaxes this optimization to a bounded-degree pseudoexpectation and rounds it using the non-negative vector
$x_i^*=\sqrt{\widetilde{\mathbb E}[x_i^2]}.$
The role of non-negativity is crucial here: it allows the rounded vector to be compared directly with the pseudoexpectation without cancellations from signs or phases.

At a high level, the analysis views the pseudoexpectation
as defining a joint exchangeable distribution on random variables $A_1,\ldots,A_t$. If these variables are close to independent, then the above direct rounding already gives a good approximation. 
If they are noticeably correlated, the BKS conditioning step reweights the pseudo-distribution by a square monomial, which decreases an entropy potential. Since the entropy is bounded by $\log d$, only finitely many conditioning steps are possible before rounding succeeds. BKS's analysis yields a degree-$O(t^3\log d)$ SoS upper bound, which corresponds to an SDP of size $\exp(O(t^3\log^2 d))$.

We slightly tighten BKS's analysis, exploiting the symmetry in the entropy argument, avoiding a wasteful triangle inequality, yielding an improved degree-$O(t^2\log d)$ SoS upper bound. Although the improvement looks quantitatively modest, it makes the parameter
dependence \emph{tight}. Combined with
\Cref{thm-informal:StoqMAtwo-short-proofs}, the degree-$O(t^2\log d)$
SoS gives rise to a $\SAT$ algorithm which matches the ETH barrier up to logarithmic and gap-dependent factors:
an improvement to $\exp((t\log d)^{2-\Omega(1)})$ time of the SoS algorithm, or any improvement of our protocols~\Cref{thm-informal:StoqMAtwo-short-proofs} with shorter proofs or smaller number of provers would yield a subexponential-time
algorithm for $\SAT$. 

\subsection{Discussion and open problems}

Our results in \Cref{thm-informal:StoqMAtwo-short-proofs} stoquastize known short unentangled quantum proofs for \NP{}, with either constant gap~\cite{ABDFS09,CD10,NZ23} or vanishing gap~\cite{BT07,Beigi10,CF13,LNN12}, and push the upper bound down to \MA{}, which is the best possible target under the standard derandomization assumption $\MA=\NP$. This short-proof picture raises a natural question of whether short unentangled quantum proofs can yield lower bounds beyond \NP{}.

An intriguing target is \QCMA{}. One possible starting point is the \textsc{Ground State Connectivity Problem} (\GSCON{}): as established in~\cite{CNS18}, $\GSCON \in \QMA_{\widetilde{O}(n)}\ssbra*{2,1,1/\poly(n)}$, which suggests the following question:
\begin{enumerate}[label={\upshape(\roman*)}]
    \item \label{open:QCMA-lower-bound} Can this protocol be made substantially shorter, or have constant promise gap (assuming PCP for \QCMA{})? More broadly, can one identify a natural \QCMA{}-complete or \QMA{}-complete problem that admits a genuinely short \QMAtwo{} proof system?
\end{enumerate}

From the algorithmic perspective, our upper bounds leave open the possibility of better time-space trade-off algorithms for \StoqMAtwo{} and, more generally, for non-negative tensor optimization problems. The Barak--Kelner--Steurer algorithm~\cite{BKS14}, with a refined analysis in~\Cref{thm-informal:StoqMAk-in-EXP}, achieves near-optimal exponential-time dependence under ETH, but its space usage is not obviously polynomial.
In parameter regimes where the ETH-based optimality bound does not apply, \Cref{thm-informal:upper-bound-StoqMAtwo-noYesError} matches the dependence on the parameters $k$ and $\ell$ of the upper bound that is ETH-optimal in the applicable regimes, while using only polynomial space. By contrast, our fastest algorithm in \Cref{thm-informal:upper-bound-StoqMAtwo-noYesError} achieves an exponential improvement over the ETH-optimal bound in the parameters $k$ and $\ell$. A natural open problem is therefore:
\begin{enumerate}[label={\upshape(\roman*)}]
    \setcounter{enumi}{1}
    \item Is there a randomized or deterministic algorithm for \StoqMAtwo{} with proof length $\ell$ and constant gap that runs in time $\exp(\widetilde O(\ell))$ and uses only $\poly(n,\ell)$ space? More generally, can algorithms for non-negative tensor optimization avoid explicitly storing exponentially large matrices or high-degree SDP objects while preserving their time complexity?
\end{enumerate}

\subsection{Related work}

Most prior work connecting to \StoqMA{} concerns stoquastic Hamiltonians. A line of work starting from~\cite{Bravyi15,Liu21} studies restricted variants of stoquastic local Hamiltonian problems, especially under additional promises on the ground state or its approximation that lead to \MA{} containments. This perspective has proved useful beyond the stoquastic setting: analogous \MA{} containments extend to general local Hamiltonians either with constant precision~\cite{GLG22,LG25} or when the ground state admits a succinct representation~\cite{Jiang25}, with improved hardness results in~\cite{WL25}.\footnote{In addition, a related constant-precision setting with an \NP{} containment was studied in~\cite{WFC24}.}  A common formulation of the former includes a guiding state as part of the input, and its stoquastic counterpart was recently studied from several perspectives~\cite{Waite25,HBS26}. 

Further work on stoquastic Hamiltonians has studied the distinction between global and termwise stoquasticity~\cite{IPM+20}, the complexity of curing the sign problem~\cite{KT20,KMP+20}, oracle separations between stoquastic adiabatic and classical computation~\cite{Hastings21,GHV21}, the equivalence of logarithmically many adaptive queries and polynomially many parallel queries to a \StoqMA{} oracle~\cite{GPY20}, and the \StoqMA{}-hardness of geometrically local stoquastic Hamiltonians~\cite{REG26,WB25}. 
In addition, a stoquastic version of the {Consistency of Local Density Matrices} Problem was proposed in~\cite{Liu07}, although its \StoqMA{}-completeness remains open. Very recently, a homology-related \MA{}-complete problem was identified in~\cite{HGR+25}.

For \QMAtwo{}, alternative proofs of parallel repetition for \SepQMAtwo{} appear in~\cite{GSU13,LW17}. Quantum upper bounds for \QMAtwo{} were given in~\cite{GSSSY22,GR25} in terms of certain low levels of different quantum analogues of the polynomial-time hierarchy. In addition, a recent work identifies a natural \QMAtwo{}-complete problem related to quantum channel testing~\cite{CM25}. Finally, another recent work~\cite{BHW25} shows that \QMAtwo{} collapses to \QMA{} assuming a certain non-standard version of the quantum PCP conjecture.

\paragraph{Concurrent and follow-up work.} The concurrent work of~\cite{GR26}, which appeared two days after ours, establishes a complete stoquastic counterpart of~\cite{CS12} with respect to \emph{sparsity}: \textsc{Stoquastic Sparse Hamiltonians Problem} is \StoqMA{}-complete, while its \emph{separable} version is \StoqMAtwo{}-complete. That work also considers a stoquastic version of prover compression~\cite{HM13,SW22}, similar in spirit to \Cref{thm:stoq-k-to-2} of our work. Their result obtains a quadratic loss in the promise gap under the additional assumption that the completeness error is negligible, whereas our corresponding result achieves the same gap loss without this assumption.

The follow-up work of~\cite{GJ26}, which was released fifteen days after our work, establishes a \emph{multi-to-single witness reduction} for stoquastic Merlin--Arthur proof systems, thereby collapsing \StoqMAk{} to \StoqMA{} for every polynomially bounded $k\geq 2$. While our algorithmic upper-bound results (\Cref{thm-informal:upper-bound-StoqMAtwo-noYesError,thm-informal:StoqMAk-in-EXP}) remain state of the art, the complexity-theoretic upper bound for \StoqMAk{} was significantly improved in~\cite{GJ26}. It is important to note, however, that multi-to-\emph{single} witness reduction and prover compression are \emph{different phenomena}. More precisely, the latter is \emph{more efficient} than the former with respect to the proof length $\ell$:
\begin{itemize}
    \item Applying the multi-to-single witness reduction in~\cite{GJ26} to \StoqMAk{} with proof length $\ell$ results in \StoqMA{} with proof length $O(k^4\ell^2/\Delta^3)$.\footnote{More specifically, assuming that each proof in the original \StoqMAk{} protocol has length $\ell$, the positive de Finetti theorem of~\cite{GJ26} uses $R = O\rbra*{k^2 \cdot k\ell / \Delta^3}$ copies in its separately symmetric extension, and the resulting single proof has length $O\rbra*{R \cdot k\ell} = O(k^4\ell^2/\Delta^3)$. }
    \item Applying prover compression in \Cref{thm-informal:robustness-StoqMAtwo}\ref{thmitem:prover-compression} to the same class results in \StoqMAtwo{} with proof length $k\ell$. 
\end{itemize}
This comparison aligns with the rETH-based quadratic lower bound with respect to the proof length $\ell$ for multi-to-single witness reductions, as sketched immediately after \Cref{thm-informal:StoqMAtwo-short-proofs}.

%%%%%%%%%%%%%%%%%%%%%%%%%%%%%%%%%%%%%%%%%%%%%%%%%%%%%%%%%%%
%%%%%%%%%%%%%%%%%%%%%%%%%%%%%%%%%%%%%%%%%%%%%%%%%%%%%%%%%%%
%%%%%%%%%%%%%%%%%%%%%%%%%%%%%%%%%%%%%%%%%%%%%%%%%%%%%%%%%%%

\section{Preliminaries}
\label{sec:preliminaries}

We assume that the reader has basic familiarity with quantum computation and quantum information theory. For background, the textbooks~\cite{NC10} and~\cite{deWolf19} provide useful starting points. We adopt the following notation throughout the paper:
\begin{itemize}
    \item For every positive integer $n$, we write $\sbra{n}\coloneqq \cbra{1,2,\ldots,n}$ and $\ssbra{n}\coloneqq \cbra{0,1,\ldots,n-1}$.
    \item The notation $\widetilde{O}(f)$ stands for $O(f\polylog(f))$. 
    \item The notations $\ln$ and $\log$ denote the natural logarithm and the logarithm to the base $2$, respectively.
    \item A function $\mu\colon \bbN\to\bbR_{\ge 0}$ is \emph{negligible} if, for every integer $c\geq 1$, there exists an integer $n_c>0$ such that $\mu(n)<n^{-c}$ for all $n\geq n_c$.  
    \item The notation $\poly(\cdot)$ denotes an unspecified polynomial, and $\polylog(\cdot)$ denotes an unspecified polylogarithmic factor. We also write $\negl$ for an unspecified negligible function. 
    \item We adopt $\bbI[\cdot]$ to denote the indicator function of an event or predicate. 
\end{itemize}

In addition, we denote a \emph{promise problem} by a pair
$\calI=(\calI_{\yes},\calI_{\no})$ of disjoint subsets of $\binset^*$.
Inputs in $\calI_{\yes}$ are called \emph{yes} instances,
and inputs in $\calI_{\no}$ are called \emph{no} instances. Whenever convenient, we write $x\in\calI$ to mean that $x\in \calI_{\yes}\cup\calI_{\no}$. No condition is imposed on inputs outside $\calI_{\yes} \cup \calI_{\no}$. 

\paragraph{Linear algebra conventions.}
We use $\ket{\bar{0}}$ and $\ket{\bar{+}}$ as shorthand for $\ket{0}^{\otimes a}$ and $\ket{+}^{\otimes a}$, respectively, whenever $a>1$ and is clear from the context. We also adopt the notation $\norm{\ket{\psi}}_2^2 \coloneqq \innerprod{\psi}{\psi}$ for the Euclidean norm of a vector $\ket{\psi}$ in Dirac notation. 

\parheadingItem{Non-negative states and matrices.}
A $d$-dimensional quantum state $\ket{\psi}$, viewed as a unit vector, is called \textit{non-negative} if $\innerprod{x}{\psi} \geq 0$ for all $x \in [d]$. 
For any non-negative state $\ket{\psi}$, the support of $\ket{\psi}$ is denoted by $\supp(\ket{\psi}) \coloneqq \set{s \in [d]}{\innerprod{s}{\psi} > 0}$, which consists of the basis indices (typically binary strings) corresponding to nonzero amplitudes. 
For any nonempty set $S \subseteq \{0,1\}^n$ with $d=2^n$, the corresponding \emph{subset state}, a notion first considered in~\cite[Section 3]{Wat00}, is defined by $\ket{S}  \coloneqq \frac{1}{\sqrt{|S|}} \sum_{i \in S} \ket{i}$. 
In addition, a $d_1\times d_2$ matrix $M$ is called \textit{entrywise non-negative} if $\bra{x}M\ket{y} \geq 0$ for all $x\in[d_1]$ and $y\in[d_2]$. 

\parheadingItem{Matrices and matrix norms.} A square matrix $A$ is \emph{Hermitian} if $A=A^\dagger$. A Hermitian matrix $A$ is \emph{positive semi-definite}, written $A\succeq 0$ or $0\preceq A$, if $\bra{\psi}A\ket{\psi}\geq 0$ for every vector $\ket{\psi}$. Moreover, we say that a positive semi-definite matrix $A$ is a density matrix if $\Tr(A)=1$. 
For Hermitian matrices $A$ and $B$ of the same dimension, we write $A\preceq B$ if $B-A$ is positive semi-definite. 
In addition, we denote the trace norm by $\norm{A}_1 \coloneqq \Tr(|A|)$, the Frobenius norm by $\norm{A}_2 \coloneqq \sqrt{\Tr\rbra*{A^\dagger A}}$, and the operator norm by $\norm{A}_{\infty} \coloneqq \sigma_{\max}(A)$, where $\sigma_{\max}(A)$ denotes the largest singular value of $A$. For simplicity, we also write $\norm{A} = \norm{A}_{\infty}$. 

\parheadingItem{State spaces.}
Let $\calS(d)$ be the set of $d$-dimensional unit vectors, and let
$\calS_+(d)$ be the set of $d$-dimensional \emph{non-negative} unit
vectors. Similarly, let $\calD(d)$ be the set of $d$-dimensional density
matrices, and let $\calD_+(d)$ be the set of $d$-dimensional
\emph{entrywise non-negative} density matrices.

\subsection{Information-theoretic measures}

We first list several useful information-theoretic measures for quantum states, together with their basic properties. 

\begin{definition}[Trace distance]
    Let $\rho_0$ and $\rho_1$ be quantum states in $\calD(N)$.
    The trace distance between $\rho_0$ and $\rho_1$ is defined by
    \[\td(\rho_0,\rho_1) \coloneqq \frac{1}{2}\Tr\rbra{|\rho_0-\rho_1|}.\]
\end{definition}

\begin{definition}[Fidelity]
    Let $\rho_0$ and $\rho_1$ be quantum states in $\calD(N)$.
    The (Uhlmann) fidelity and the squared fidelity between $\rho_0$ and $\rho_1$ are defined by
    \[\F(\rho_0,\rho_1) \coloneqq \Tr\abs{\sqrt{\rho_0}\sqrt{\rho_1}} = \Tr\sqrt{\sqrt{\rho_0}\rho_1\sqrt{\rho_0}} \quad\text{and}\quad \F^2(\rho_0,\rho_1) \coloneqq \F(\rho_0,\rho_1)^2.\]
\end{definition}
When one of the states is \emph{pure}, say $\rho_0=\ketbra{\psi}{\psi}$ and $\rho_1=\rho$, it is easy to verify that
\begin{equation}
    \label{eq:one-pure-state}
    \F^2(\ketbra{\psi}{\psi},\rho) = \bra{\psi}\rho\ket{\psi}.
\end{equation}
When both states are pure, the following identity holds; see, e.g.,~\cite[Section 9.2.3]{NC10}:
\begin{equation}
    \label{eq:pure-pure-Fuchs-vanGraaf}
    \td(\ketbra{\psi_0}{\psi_0},\ketbra{\psi_1}{\psi_1}) = \sqrt{1-\F^2\rbra*{\ketbra{\psi_0}{\psi_0},\ketbra{\psi_1}{\psi_1}}}.
\end{equation}
By the data-processing inequality for the trace distance, e.g.,~\cite[Theorem 9.2]{NC10}, we obtain:
\begin{lemma}[Measurement bound for trace distance]
    \label{lemma:meas-bound-td}
    Let $\rho_0$ and $\rho_1$ be quantum states in $\calD(N)$.
    For every two-outcome measurement $\cbra*{\Pi,I-\Pi}$ with $0 \preceq \Pi \preceq I$, it holds that
    \[ \abs*{\Tr(\Pi\rho_0) - \Tr(\Pi\rho_1)} \leq \td(\rho_0,\rho_1). \]
\end{lemma}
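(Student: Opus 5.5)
The plan is to reduce the claim to the data-processing inequality for trace distance applied to the measurement channel. Concretely, I would introduce the two-outcome measurement channel
\[ \calM(\rho) \coloneqq \Tr(\Pi\rho)\ketbra{0}{0} + \Tr\rbra*{(I-\Pi)\rho}\ketbra{1}{1}. \]
This map is completely positive and trace preserving. Its Kraus operators are $\ket{0}\sqrt{\Pi}$ and $\ket{1}\sqrt{I-\Pi}$, which are well defined precisely because $0 \preceq \Pi \preceq I$. Data processing for the trace distance (\cite[Theorem 9.2]{NC10}) then gives $\td(\calM(\rho_0),\calM(\rho_1)) \leq \td(\rho_0,\rho_1)$.

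Next I would evaluate the left-hand side directly. Set $p_b \coloneqq \Tr(\Pi\rho_b)$. Then $\calM(\rho_0)-\calM(\rho_1)$ is diagonal with entries $p_0-p_1$ and $-(p_0-p_1)$. Its trace norm is therefore $2\abs{p_0-p_1}$, so $\td(\calM(\rho_0),\calM(\rho_1)) = \abs{p_0-p_1}$, which is exactly the claimed bound.

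As a self-contained alternative that avoids citing data processing, I would take the Jordan decomposition $\rho_0-\rho_1 = P - Q$ with $P,Q\succeq 0$ of orthogonal support. Since $\Tr(\rho_0-\rho_1)=0$, we get $\Tr P = \Tr Q = \td(\rho_0,\rho_1)$. Then
\[ \Tr\rbra*{\Pi(\rho_0-\rho_1)} = \Tr(\Pi P) - \Tr(\Pi Q) \in \sbra*{-\Tr Q,\, \Tr P}, \]
because $0 \leq \Tr(\Pi P) \leq \Tr P$ and likewise for $Q$, using $0\preceq\Pi\preceq I$. Taking absolute values finishes the proof. There is no real obstacle here. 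The only points needing care are verifying that the measurement map is a valid channel, which uses $0\preceq\Pi\preceq I$, and checking that $\Tr P = \Tr Q = \frac12\norm{\rho_0-\rho_1}_1$.
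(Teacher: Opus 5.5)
Your proposal is correct and takes the same route as the paper. The paper obtains this lemma directly from the data-processing inequality for trace distance \cite[Theorem 9.2]{NC10}, exactly as in your first argument, and your Jordan-decomposition alternative is the standard proof of that inequality specialized to a two-outcome measurement.

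One cosmetic point about the Kraus operators: $\ket{0}\sqrt{\Pi}$ and $\ket{1}\sqrt{I-\Pi}$ actually produce $\ketbra{0}{0}\otimes\sqrt{\Pi}\rho\sqrt{\Pi}+\ketbra{1}{1}\otimes\sqrt{I-\Pi}\rho\sqrt{I-\Pi}$, not $\calM(\rho)$. So $\calM$ is that channel followed by tracing out the second system, which is still CPTP, and your argument is unaffected.
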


\vspace{1em}
Next, we list several information-theoretic measures for probability distributions, together with their basic properties. 

\begin{definition}[Total variation distance]
    Let $p_0$ and $p_1$ be probability distributions over $[N]$. The total variation distance between $p_0$ and $p_1$ is defined by
	\[\TV(p_0,p_1) \coloneqq \frac{1}{2}\sum_{x\in[N]} |p_0(x)-p_1(x)|.\]
\end{definition}

\begin{definition}[Hellinger distance]
    Let $p_0$ and $p_1$ be probability distributions over $[N]$. The squared Hellinger distance and Hellinger distance between $p_0$ and $p_1$ are defined by
	\[\dH^2(p_0,p_1) \coloneqq 1-\sum_{x\in[N]}\sqrt{p_0(x)p_1(x)} \quad\text{and}\quad \dH(p_0,p_1) \coloneqq \sqrt{\dH^2(p_0,p_1)}.\]
\end{definition}

\begin{definition}[Kullback--Leibler divergence]
    Let $p_0$ and $p_1$ be probability distributions over $[N]$. The Kullback--Leibler (KL) divergence between $p_0$ and $p_1$ is defined by
	\[\KL(p_0\|p_1) \coloneqq \sum_{x \colon p_0(x)>0} p_0(x) \log\frac{p_0(x)}{p_1(x)}.\]
\end{definition}

\begin{lemma}[Adapted from~{\cite[Equation (3.3.9)]{Reiss89}}]
    \label{lemma:Hsquare-leq-KL}
    Let $p_0$ and $p_1$ be probability distributions over $[N]$. Then the following inequality holds:
    \[ 2\dH^2(p_0,p_1) \leq \KL(p_0\|p_1). \]
\end{lemma}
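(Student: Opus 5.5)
We prove $2\dH^2(p_0,p_1)\le \KL(p_0\|p_1)$ by comparing the two quantities term by term over the support of $p_0$, using $\ln u\le u-1$. If some $x$ has $p_0(x)>0$ and $p_1(x)=0$, then $\KL(p_0\|p_1)=+\infty$ and there is nothing to prove. So assume $\supp(p_0)\subseteq\supp(p_1)$ and write $A\coloneqq\cbra{x\colon p_0(x)>0}$.

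First, express the Hellinger affinity as an expectation under $p_0$:
\[ \sum_{x\in[N]}\sqrt{p_0(x)p_1(x)} = \sum_{x\in A} p_0(x)\sqrt{\frac{p_1(x)}{p_0(x)}} = \E_{x\sim p_0}\sbra*{\sqrt{p_1(x)/p_0(x)}}. \]
Terms with $p_0(x)=0$ contribute nothing, so the restriction to $A$ is harmless.

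Next, apply $\ln u\le u-1$ with $u=\sqrt{p_1(x)/p_0(x)}$ for each $x\in A$. This gives
\[ -\tfrac12\ln\frac{p_0(x)}{p_1(x)}\le \sqrt{\frac{p_1(x)}{p_0(x)}}-1. \]
Multiply by $p_0(x)$ and sum over $A$. The left side becomes $-\tfrac12\KL_{\ln}(p_0\|p_1)$, where $\KL_{\ln}$ denotes the divergence with natural logarithm. The right side becomes $\sum_x\sqrt{p_0(x)p_1(x)}-\sum_{x\in A}p_0(x)=-\dH^2(p_0,p_1)$, since $\sum_{x\in A}p_0(x)=1$. Rearranging,
\[ 2\dH^2(p_0,p_1)\le \KL_{\ln}(p_0\|p_1). \]

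The only remaining point is the base of the logarithm. The paper defines $\log$ as base $2$, so $\KL(p_0\|p_1)=\KL_{\ln}(p_0\|p_1)/\ln 2\ge\KL_{\ln}(p_0\|p_1)$. Here we use $\KL_{\ln}\ge 0$, which is Gibbs' inequality (or follows from the display above, since $\dH^2\ge0$). Hence the stated base-$2$ inequality follows from the natural-log one.
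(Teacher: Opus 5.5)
Your proof is correct. The paper does not prove this lemma itself; it only cites Reiss's Equation (3.3.9). Your argument is therefore a self-contained replacement for that citation rather than an alternative to an existing proof.

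The pointwise bound $\ln u\le u-1$ at $u=\sqrt{p_1(x)/p_0(x)}$, averaged under $p_0$, is the standard derivation of $2\dH^2\le\KL_{\ln}$. Your separate treatment of the case $p_0(x)>0=p_1(x)$ correctly disposes of the infinite-divergence case.

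Your final step supplies exactly the ``adaptation'' the citation leaves implicit. The paper's $\KL$ is defined with base-$2$ logarithms, so $\KL=\KL_{\ln}/\ln 2\ge\KL_{\ln}$, which holds because $\KL_{\ln}\ge 0$. This makes the base-$2$ inequality used in \Cref{lem:progress} follow from the natural-log one.

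As a side remark, applying Jensen's inequality to the concave $\ln$ instead of the linear bound gives the slightly stronger $\KL_{\ln}\ge -2\ln\bigl(1-\dH^2\bigr)$. The weaker form you prove is all the paper needs.
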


\begin{definition}[Shannon entropy]
    Let $p$ be a probability distribution over $[N]$. The Shannon entropy of $p$ is defined by 
    \[ \H(p) \coloneqq - \sum_{x \in \sbra{N}} p\rbra{x} \log \rbra{p\rbra{x}}, \]
    with the convention that $0\log(0)=0$.   
    
    \noindent Equivalently, if $X$ is a random variable taking values in $\Omega$, then the law, or distribution, of $X$ is the probability distribution $p_X$ over $\Omega$ denoted by $p_X(x)$, and we write
    \[ \H(X) \coloneqq \H(p_X), \quad\text{where}\;\; p_X(x) \coloneqq \Pr[X=x]. \]    
\end{definition}

\begin{definition}[Conditional entropy and mutual information]
    For jointly distributed random variables $X$ and $Y$, the law of $(X,Y)$ is the joint distribution $p_{X,Y}$, and we define the joint entropy and the conditional entropy of $X$ given $Y$ by
    \[ \H(X,Y) \coloneqq \H(p_{X,Y}) 
    \quad\text{and}\quad 
    \H(X\mid Y) \coloneqq \H(X,Y) - \H(Y). \]
    The mutual information between $X$ and $Y$ is defined by
    \[ \rmI(X;Y) \coloneqq \H(X)+\H(Y)-\H(X,Y). \]
\end{definition}

\subsection{(Stoquastic) separable values of square Hermitian matrices}
\label{subsec:separable-values}

Let $M$ be a square Hermitian matrix of dimension $d_1d_2\cdots d_k$. 
Let $\Sep(d_1,\cdots,d_k)$ and $\Sep^+(d_1,\cdots,d_k)$ denote, respectively, the set of separable states and the set of separable \emph{non-negative} states on $\calD\rbra[\big]{\prod_{j\in[k]} d_j}$:
\begin{align*}
    \Sep(d_1,\cdots,d_k) &\coloneqq \conv\set{\ketbra{\psi_1}{\psi_1}\totimes\cdots\totimes\ketbra{\psi_k}{\psi_k}}{\ket{\psi_j}\in\calS(d_j) \text{ for every } j\in[k] },\\
    \Sep^+(d_1,\cdots,d_k) &\coloneqq \conv\set{\ketbra{\psi_1}{\psi_1}\totimes\cdots\totimes\ketbra{\psi_k}{\psi_k}}{ \ket{\psi_j}\in\calS_+(d_j) \text{ for every } j\in[k] },
\end{align*}
where $\conv(S)$ denotes the convex hull of a set $S$. 

Following~\cite[Equation (5)]{HM13}, we define the corresponding support functions, called the \emph{separable value} and the \emph{stoquastic separable value}, by
\begin{subequations}
\label{eq:separable-values}
\begin{align}
    \hsep{d_1}{\cdots,d_k}(M) &\coloneqq \max_{\ket{\psi_1}\in\calS(d_1),\cdots,\ket{\psi_k}\in\calS(d_k)} \Tr\rbra*{M \rbra{\ketbra{\psi_1}{\psi_1}\totimes \cdots \totimes \ketbra{\psi_k}{\psi_k}} },\\
    \hsepPlus{d_1}{\cdots,d_k}(M) &\coloneqq \max_{\ket{\psi_1}\in\calS_+(d_1),\cdots,\ket{\psi_k}\in\calS_+(d_k)} \Tr\rbra*{M \rbra{\ketbra{\psi_1}{\psi_1}\totimes \cdots \totimes \ketbra{\psi_k}{\psi_k}} }.
\end{align}
\end{subequations}

By definition, it follows that $\hsepPlus{d_1}{\cdots,d_k}(M) \leq \hsep{d_1}{\cdots,d_k}(M)$.
When all proofs have the same dimension, namely $d_1=\cdots=d_k=d$, we adopt the notation 
\[\hsepM{k}{d}(M) \coloneqq \hsep{d}{\cdots,d}(M) \quad\text{and}\quad \hsepPlusM{k}{d}(M) \coloneqq \hsepPlus{d}{\cdots,d}(M)\] 
for convenience.
From \Cref{eq:separable-values}, it is straightforward to see the following identity, which also immediately applies to $\hsepPlus{d_1}{\cdots,d_k}$: 

\begin{proposition}
    \label{prop:hsep-shift}
    For all real numbers $a,b \geq 0$, it holds that
    \[ \hsep{d_1}{\cdots,d_k}(aM+bI) = a \cdot \hsep{d_1}{\cdots,d_k}(M) + b.\]
\end{proposition}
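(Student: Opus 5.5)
The plan is to prove the identity directly from the definition of $\hsep{d_1}{\cdots,d_k}$ in \Cref{eq:separable-values}. Fix arbitrary unit vectors $\ket{\psi_j}\in\calS(d_j)$ for $j\in[k]$, and write $\rho\coloneqq\ketbra{\psi_1}{\psi_1}\totimes\cdots\totimes\ketbra{\psi_k}{\psi_k}$. Linearity of the trace gives
\[ \Tr\rbra*{(aM+bI)\rho} = a\Tr(M\rho) + b\Tr(\rho). \]
Since each $\ket{\psi_j}$ is a unit vector, $\Tr(\rho)=\prod_{j\in[k]}\innerprod{\psi_j}{\psi_j}=1$. Hence the objective for $aM+bI$ equals the affine function $t\mapsto at+b$ applied to the objective for $M$, pointwise on the feasible set.

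Next I take the maximum over the same feasible set $\calS(d_1)\times\cdots\times\calS(d_k)$. This set is compact and the objective is continuous, so both maxima are attained. Because $a\ge 0$, the map $t\mapsto at+b$ is non-decreasing. Any maximizer for $M$ therefore also maximizes $a\Tr(M\rho)+b$, which yields $\max_\rho\rbra*{a\Tr(M\rho)+b}=a\max_\rho\Tr(M\rho)+b$.

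When $a=0$ the statement is immediate: both sides equal $b$, regardless of whether $M$'s maximum is involved. The hypothesis $b\ge0$ is not actually needed, since the shift argument works for any real $b$. The only point needing care is $a\ge0$, which is why it appears in the statement.

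For the stoquastic version $\hsepPlus{d_1}{\cdots,d_k}$, I repeat the argument verbatim with each $\calS(d_j)$ replaced by $\calS_+(d_j)$. This set is also compact and consists of unit vectors, so $\Tr(\rho)=1$ still holds. No step is a real obstacle; the only thing to check is the unit-trace normalization of product pure states.
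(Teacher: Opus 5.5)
Your proposal is correct and matches the paper, which simply notes that the identity follows directly from the definition of the separable value: linearity of the trace, unit trace of product pure states, and monotonicity of $t\mapsto at+b$ for $a\ge 0$. The same argument carries over verbatim to the stoquastic version. Your side remark that $b\ge 0$ is not actually needed is also correct.
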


\subsection{Stoquastic algorithmic toolkit}

The following definition is implicitly given in~\cite[Definition 4]{BBT06}:
\begin{definition}[Stoquastic verification circuit]
    \label{def:stoq-verifier}
    We say that an $n$-qubit quantum circuit $Q$ is a \emph{stoquastic verification circuit} if $Q$ acts on the registers $\sfW$, $\sfA_0$, and $\sfA_+$, uses only Toffoli, CNOT, and X gates before the final measurement, and then measures the designated output qubit in the register $\sfO$ in the Hadamard basis. The circuit $Q$ accepts if and only if the final measurement yields the outcome $+$. The witness register $\sfW$ contains an $m_w$-qubit state, and the ancillary registers $\sfA_0$ and $\sfA_+$ are initialized to $\ket{0}^{\otimes m_0}$ and $\ket{+}^{\otimes m_+}$, respectively, where $n=m_w+m_0+m_+$. The remaining qubits are contained in the register $\sfR$, which has $n-1$ qubits.     
\end{definition}

The SWAP test was originally proposed for pure states in~\cite{BCWdW01} and was later shown to extend to mixed states~\cite{KMY09}: 

\begin{lemma}[SWAP test for mixed states, adapted from~{\cite[Proposition 9]{KMY09}}]
\label{lemma:swap-test}
    Let $\rho_0$ and $\rho_1$ be two $m$-qubit quantum states, which may in general be \emph{mixed}. Then there exists an explicit $(2m+1)$-qubit stoquastic circuit that accepts with probability $\rbra*{1+\Tr(\rho_0\rho_1)}/2$. This circuit implementation uses a single copy of each $\rho_0$ and $\rho_1$, together with $m$ Toffoli gates and $2m$ CNOT gates.\footnote{A Fredkin gate, namely, a controlled-swap gate on two qubits, can be implemented exactly using two CNOT gates and a Toffoli gate; see, e.g., \cite[Exercise 4.25]{NC10}. \label{footnote:Fredkin}} 
\end{lemma}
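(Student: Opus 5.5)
The approach is to realize the SWAP test with the ancilla control in the $\ket{+}$ state, so that the Hadamard-basis measurement of the output qubit does the job of the usual final Hadamard gate. Let the registers $\sfW_0$ and $\sfW_1$ hold $\rho_0$ and $\rho_1$, each on $m$ qubits. Add one control qubit $\sfO$ prepared as $\ket{+}$; it counts as an $\sfA_+$ ancilla under \Cref{def:stoq-verifier}. For each $i\in[m]$, apply a controlled-SWAP (Fredkin) gate that swaps the $i$-th qubit of $\sfW_0$ with the $i$-th qubit of $\sfW_1$, controlled on $\sfO$. Then measure $\sfO$ in the Hadamard basis and accept on outcome $+$.

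This is a stoquastic verification circuit. By \Cref{footnote:Fredkin}, each Fredkin gate decomposes exactly into one Toffoli and two CNOTs, giving $m$ Toffoli and $2m$ CNOT gates in total on $2m+1$ qubits. No $\ket{0}$ ancillas are needed.

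For the acceptance probability, first take pure inputs $\ket{\psi_0}\ket{\psi_1}$. After the controlled swaps the state is
\[
\tfrac{1}{\sqrt2}\rbra*{\ket{0}\ket{\psi_0}\ket{\psi_1}+\ket{1}\ket{\psi_1}\ket{\psi_0}}.
\]
Projecting $\sfO$ onto $\ket{+}$ leaves the unnormalized vector $\tfrac12\rbra*{\ket{\psi_0}\ket{\psi_1}+\ket{\psi_1}\ket{\psi_0}}$. Its squared norm is
\[
\tfrac14\rbra*{2+2\abs{\innerprod{\psi_0}{\psi_1}}^2}=\tfrac{1+\abs{\innerprod{\psi_0}{\psi_1}}^2}{2}.
\]
Operator-theoretically, the acceptance operator on $\sfW_0\sfW_1$ is $\tfrac12(I+\mathrm{SWAP})$.

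For mixed inputs, the input is the product state $\rho_0\otimes\rho_1$, so the acceptance probability equals
\[
\Tr\sbra*{\tfrac12(I+\mathrm{SWAP})(\rho_0\otimes\rho_1)}=\tfrac{1+\Tr(\rho_0\rho_1)}{2},
\]
using the swap-trick identity $\Tr[\mathrm{SWAP}(A\otimes B)]=\Tr(AB)$. This identity is verified on product basis operators and extended by linearity. Equivalently, the pure-state formula is linear in each of $\ketbra{\psi_0}{\psi_0}$ and $\ketbra{\psi_1}{\psi_1}$ separately, so it extends to ensembles.

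No step is a real obstacle. The only points to check are that the Fredkin decomposition uses only Toffoli and CNOT, which the cited exercise confirms, and that the Hadamard-basis output measurement is exactly the interference step of the standard SWAP test.
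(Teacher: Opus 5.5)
Your proposal is correct and follows the same route as the paper, which cites \cite[Proposition 9]{KMY09}: the standard SWAP test with the initial Hadamard absorbed into the $\ket{+}$ control ancilla, the final Hadamard absorbed into the Hadamard-basis measurement, and each Fredkin gate decomposed into one Toffoli and two CNOTs as in \Cref{footnote:Fredkin}. Your mixed-state extension uses the acceptance operator $\frac{1}{2}(I+\mathrm{SWAP})$ together with $\Tr[\mathrm{SWAP}(\rho_0\otimes\rho_1)]=\Tr(\rho_0\rho_1)$, which is exactly the content of the cited proposition.
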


We also need the reversible branch-overlap test, which we refer to simply as the \emph{branch-overlap test}, as stated in \Cref{lemma:branch-overlap-test}. This test is implicit in~\cite[Theorem 22]{Liu21}, with an explicit circuit implementation given in~\cite[Figure 1]{Liu21}, and was originally inspired by the SWAP test~\cite{BCWdW01}. 

\begin{lemma}[Reversible branch-overlap test, adapted from~{\cite[Theorem 22]{Liu21}}]
    \label{lemma:branch-overlap-test}
    Let $R_0$ and $R_1$ be polynomial-size classical reversible circuits consisting only of Toffoli, CNOT, and X gates. Assume that both $R_0$ and $R_1$ act on the same $m(n)$-qubit full input state 
    \[\ket{\Psi_\full} \coloneqq \ket{\Psi}\otimes \ket{0}^{\otimes m_0} \otimes \ket{+}^{\otimes r},\] 
    where $\ket{\Psi}$ is the non-negative input state, and $m_0(n)$ and $r(n)$ are polynomially bounded integer-valued functions. For $b\in\binset$, define $\ket{R_b(\Psi)} \coloneqq R_b \ket{\Psi_\full}$. Then there exists an explicit $(m+1)$-qubit stoquastic circuit that accepts with probability $\rbra*{1+\innerprod{R_0(\Psi)}{R_1(\Psi)}}/2$. This circuit implementation uses a single query to each of $R_0$ and $R_1$, together with one additional $\ket{+}$ ancillary qubit and a single $X$ gate. 
\end{lemma}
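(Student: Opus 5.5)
The circuit will be a controlled version of the standard SWAP-test construction. A control qubit $\sfC$ in state $\ket{+}$ selects which of $R_0$, $R_1$ acts on $\ket{\Psi_\full}$. The final Hadamard-basis measurement on $\sfC$ then returns the real part of the overlap between the two branches. Concretely, I would work with the composite circuit $\Gamma \coloneqq R_0^\dagger R_1$, which is again a Toffoli/CNOT/X circuit because each such gate is self-inverse.

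The plan proceeds in three steps. First, prepare $\ket{+}_\sfC\otimes\ket{\Psi_\full}$, using the one extra $\ket{+}$ ancilla as the control. Second, apply $R_1$ controlled on $\sfC=1$. Then apply $R_0$ controlled on $\sfC=0$; this is realized as an $X$ on $\sfC$, then controlled-$R_0$, then $X$ again. To match the "single $X$ gate" count, I expect the cheaper route is to implement $\ket{0}\ket{R_0(\Psi)}+\ket{1}\ket{R_1(\Psi)}$ directly. Here controlling a Toffoli/CNOT/X gate adds one more control wire: $X\mapsto$ CNOT, CNOT $\mapsto$ Toffoli, and Toffoli $\mapsto$ a $C^3X$, which I decompose into Toffolis with a clean $\ket{0}$ ancilla that is returned to $\ket{0}$. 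One convenient arrangement is to run $R_0$ unconditionally, then apply controlled-$\Gamma$ with $\Gamma=R_1R_0^\dagger$. That costs one query to each of $R_0$, $R_1$ (counting $R_0^\dagger$ within the controlled part). The exact gate bookkeeping is the fiddly part, and I would follow \cite[Figure 1]{Liu21}.

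Third, measure $\sfC$ in the Hadamard basis. The state before measurement is $\frac{1}{\sqrt2}\rbra*{\ket{0}\ket{R_0(\Psi)}+\ket{1}\ket{R_1(\Psi)}}$. The probability of outcome $+$ is
\[
\frac14\norm*{\ket{R_0(\Psi)}+\ket{R_1(\Psi)}}_2^2=\frac{1+\Real\innerprod{R_0(\Psi)}{R_1(\Psi)}}{2},
\]
using that $R_b$ is unitary, so each branch has unit norm. Since $\ket{\Psi}$ and $\ket{+}^{\otimes r}$ are non-negative and $R_b$ are permutation matrices, both branch vectors are real and non-negative. Hence the inner product is real, and the expression equals the claimed acceptance probability.

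Finally, I would check that the result is a stoquastic verification circuit in the sense of \Cref{def:stoq-verifier}. It contains only Toffoli/CNOT/X gates, $\ket{0}$ and $\ket{+}$ ancillas, and a single Hadamard-basis measurement on the output qubit $\sfC$. This qubit belongs to $\sfA_+$, which gives $m+1$ qubits plus any borrowed clean ancilla. The main obstacle is purely syntactic: controlling the circuits while staying within the Toffoli/CNOT/X gate set and within the qubit count $m+1$. If an extra clean ancilla is needed for the $C^3X$ decompositions, I would absorb it into the $\ket{0}^{\otimes m_0}$ register, noting that the lemma says "adapted". Alternatively, I would first transform $R_0$ and $R_1$ so that controlling them needs only Toffolis.
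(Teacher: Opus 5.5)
Your core construction is the same one the paper imports from \cite[Figure 1]{Liu21}. A $\ket{+}$ control qubit selects which circuit acts, and the Hadamard-basis measurement of the control accepts with probability $\frac14\norm{\ket{R_0(\Psi)}+\ket{R_1(\Psi)}}_2^2=\rbra{1+\innerprod{R_0(\Psi)}{R_1(\Psi)}}/2$. You may drop the real part because the $R_b$ are permutation matrices acting on a real non-negative vector. That part of the argument is correct.

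\textbf{The gap is in the resource claims, which the lemma also asserts.} One of your counts is wrong.
\begin{itemize}
\item Running $R_0$ unconditionally and then controlled-$R_1R_0^\dagger$ uses $R_0$ twice, once as $R_0$ and once as controlled-$R_0^\dagger$. So it does not give a single query to each circuit.
\item Your other route (an $X$, then controlled-$R_0$, then another $X$) uses two $X$ gates.
\end{itemize}

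\textbf{Missing observation: the second $X$ is unnecessary.} Apply controlled-$R_1$, then one $X$ on $\sfC$, then controlled-$R_0$, both controlled on $\ket{1}$. This produces
\[
\tfrac{1}{\sqrt2}\rbra{\ket{0}\ket{R_1(\Psi)}+\ket{1}\ket{R_0(\Psi)}}.
\]
Since $\bra{+}X=\bra{+}$, the $+$-outcome probability is unchanged; equivalently, $\frac14\norm{\ket{R_0(\Psi)}+\ket{R_1(\Psi)}}_2^2$ is symmetric in the two branches. This uses exactly one query to each of $R_0$ and $R_1$, one extra $\ket{+}$ qubit, and one $X$ gate.

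\textbf{Qubit count.} Your fallback does not give $m+1$ qubits.
\begin{itemize}
\item A fresh clean ancilla for the $C^3X$ gates gives $m+2$ qubits.
\item You cannot reuse qubits of the $\ket{0}^{\otimes m_0}$ register as clean ancillas mid-circuit, since $R_b$ is acting on them.
\end{itemize}
Instead, decompose each $C^3X$ (controls $c_1,c_2,c_3$, target $t$) into four Toffolis using any idle wire $a$ as a borrowed (dirty) qubit:
\[
\mathrm{Tof}(c_1,c_2;a),\quad \mathrm{Tof}(a,c_3;t),\quad \mathrm{Tof}(c_1,c_2;a),\quad \mathrm{Tof}(a,c_3;t).
\]
This restores $a$ and flips $t$ by $c_1c_2c_3$; an idle wire exists once $m\ge 4$. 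Alternatively, treat controlled-$R_b$ as the query, as the lemma's statement implicitly does.
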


%%%%%%%%%%%%%%%%%%%%%%%%%%%%%%%%%%%%%%%%%%%%%%%%%%%%%%%%%%%
%%%%%%%%%%%%%%%%%%%%%%%%%%%%%%%%%%%%%%%%%%%%%%%%%%%%%%%%%%%
%%%%%%%%%%%%%%%%%%%%%%%%%%%%%%%%%%%%%%%%%%%%%%%%%%%%%%%%%%%

\section{Parallel repetition in stoquastic Merlin--Arthur proof systems}
\label{sec:parallel-repetition-StoqMAk}

We begin by defining \StoqMAk{} via \emph{stoquastic verification circuits} (\Cref{def:stoq-verifier}), which naturally extends~\cite[Definition 4]{BBT06} to the setting of \emph{unentangled} witness states: 

\begin{definition}[\StoqMAk{}]
\label{def:StoqMAk}
A promise problem $\calI = (\calI_{\yes}, \calI_{\no})$ is in $\StoqMA_\ell(k,c,s)$ if there exists a stoquastic verifier $V_\calI$ such that for every input $x \in \calI$, the verifier is specified by an $n$-qubit stoquastic verification circuit $V_{\calI}(x)$ that acts on the $k\ell$-qubit witness register $\sfW$ together with the ancillary registers $\sfA_0$ and $\sfA_+$, which are initialized to $\ket{0}^{\otimes m_0}$ and $\ket{+}^{\otimes m_+}$, respectively, where $n=k\ell+m_0+m_+$, and produces a Hadamard-basis measurement outcome on the single-qubit register $\sfO$ together with the remaining qubits in the register $\sfR$, such that the following hold for efficiently computable functions $c(n)$, $s(n)$, and $\ell(n)$: 
\begin{description}
    \item[Completeness. ] If $x\in \calI_{\yes}$, then there exists a collection of unentangled states $\ket{\psi_1}, \cdots, \ket{\psi_k}$,  each on at most $\ell(n)$ qubits and with non-negative amplitudes such that 
    \[\Pr\sbra*{V_{\calI}(x) \text{ accepts } \ket{\psi_1}\otimes\cdots\otimes\ket{\psi_k}} \geq c(n).\] 
    \item[Soundness. ] If $x \in \calI_{\no}$, then for every collection of unentangled states $\ket{\psi_1}, \cdots, \ket{\psi_k}$, each on at most $\ell(n)$ qubits and with non-negative amplitudes, we have 
    \[\Pr\sbra*{V_{\calI}(x) \text{ accepts } \ket{\psi_1}\otimes\cdots\otimes\ket{\psi_k}} \leq s(n).\]
\end{description}
Furthermore, $c(n)$ and $s(n)$ satisfy $1/2 \leq s(n) < c(n) \leq 1$ and $c(n)-s(n) \geq 1/\poly(n)$. For convenience, we write $\StoqMA_{\ell}\ssbra{k,c,\Delta}$ to emphasize the promise gap $\Delta(n) \coloneqq c(n)-s(n)$, abbreviate the stoquastic verification circuit $V_{\calI}(x)$ as $V_x$, and use the subscript $1-\epsilon$ in $\StoqMAk_{1-\epsilon}$ to denote the case of completeness error at most $\epsilon$.
\end{definition}

\begin{remark}[Optimal witness of any \StoqMAk{} verifier is non-negative]
It is worth noting that the restriction in \Cref{def:StoqMAk} to non-negative witness states \emph{does not change} the computational power of the class, because optimal stoquastic separable values can be attained by non-negative states (\Cref{prop:opt-stoqSepVal-nonNeg}). Although this phenomenon appears similar to that in the \StoqMA{} case, the underlying reason is different: in the \StoqMA{} case, the conclusion follows from the Perron-Frobenius theorem (e.g.,~\cite[Remark 2.2]{Liu21}). 
\end{remark}

\paragraph{Parallel repetition for \StoqMAk{}.}
The main result of this section is that \StoqMAk{} admits \emph{parallel-repetition}-type phenomenon: 

\begin{theorem}[Parallel repetition for \StoqMAk{}]
    \label{thm:StoqMAk-w-parallel-repetition-loss}
    For any efficiently computable functions $a(n)$ and $b(n)$ with $0 \leq b(n) < a(n) \leq 1$, and any efficiently computable positive integer-valued functions $k(n)$ and $r(n)$ bounded by $\poly(n)$, the following inclusion holds:
    \[ \StoqMA_{\ell}\ab(k,\frac{1}{2}+\frac{a}{2},\frac{1}{2}+\frac{b}{2}) \subseteq \StoqMA_{r\ell}\ab(k, \frac{1}{2}+\frac{(1+a)^r}{2^{r+1}}, \frac{1}{2}+\frac{(1+b)^r}{2^{r+1}}). \]
\end{theorem}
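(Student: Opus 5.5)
The plan is to recast the verifier's acceptance probability as a quadratic form of an entrywise non-negative PSD matrix, build a repeated verifier whose matrix is the $r$-fold tensor power, and then use multiplicativity of the stoquastic separable value under tensor products.

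\emph{Step 1: the acceptance matrix.} For a stoquastic verification circuit $V_x$, the acceptance probability on $\ket{\Psi_\full}=\ket{\Psi}\ket{\bar 0}\ket{\bar +}$ equals $\bra{\Psi_\full}V_x^\dagger\ketbra{+}{+}_\Out V_x\ket{\Psi_\full}=\frac12+\frac12\bra{\Psi}H_x\ket{\Psi}$. Here
\[ H_x\coloneqq(I\otimes\bra{\bar 0}\bra{\bar +})\,V_x^\dagger X_\Out V_x\,(I\otimes\ket{\bar 0}\ket{\bar +}). \]
Since $V_x^\dagger X_\Out V_x$ is a permutation matrix, and $\ket{\bar 0}$, $\ket{\bar+}$ are non-negative vectors, $H_x$ is a real symmetric, entrywise non-negative matrix with $\norm{H_x}\le 1$. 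Set $M_x\coloneqq(I+H_x)/2$. Then $M_x$ is entrywise non-negative, PSD, and the acceptance probability is $\frac12+\frac12\bra{\Psi}M_x\ket{\Psi}-\frac14$. To normalize, I instead use the branch-overlap form directly, as in Step 2. With this convention the hypothesis says $\hsepPlusM{k}{2^\ell}(H_x)\ge a$ in the yes case and $\le b$ in the no case. By \Cref{prop:hsep-shift} (which applies to $h_{\Sep^+}$ as noted), $\hsepPlusM{k}{2^\ell}(M_x)\ge (1+a)/2$ in the yes case and $\le(1+b)/2$ in the no case.

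\emph{Step 2: the repeated verifier.} Each of the $k$ provers now sends an $r\ell$-qubit state, viewed as $r$ blocks of $\ell$ qubits. For $i\in[r]$, let $C_i$ be a fresh $\ket{+}$ control qubit, and let $R_1$ apply, for every $i$, the controlled classical reversible circuit ``$C_i$-controlled $V_x^\dagger X_\Out V_x$'' to the $i$-th blocks of all $k$ provers together with a fresh copy of the ancillas. Controlled Toffoli/CNOT/X gates are again Toffoli gates, possibly with extra ancillas. Let $R_0=I$. For each $i$, the overlap of $\ket{+}_{C_i}\ket{\cdot}$ with its controlled image is $(I+V^\dagger X V)/2$ sandwiched by the ancilla states, i.e.\ $M_x$ on block $i$. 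Hence $\innerprod{R_0(\Psi)}{R_1(\Psi)}=\bra{\Psi}M_x^{\otimes r}\ket{\Psi}$, with the registers regrouped. By \Cref{lemma:branch-overlap-test}, the new stoquastic verifier accepts with probability $\frac12+\frac12\bra{\Psi}M_x^{\otimes r}\ket{\Psi}$. For completeness, the honest provers send $r$-fold tensor powers of the optimal non-negative witnesses, which gives acceptance $\frac12+\frac12\rbra{\frac{1+a}{2}}^r=\frac12+\frac{(1+a)^r}{2^{r+1}}$.

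\emph{Step 3: soundness via multiplicativity (the main obstacle).} It remains to show that for entrywise non-negative PSD $M$ and $N$ on $k$-partite spaces, $h_{\Sep^+}(M\otimes N)\le h_{\Sep^+}(M)\,h_{\Sep^+}(N)$, where the bipartition puts each prover's first block together with its remaining blocks. Induction on $r$, with $N=M_x^{\otimes(r-1)}$, then gives the soundness bound $\frac12+\frac{(1+b)^r}{2^{r+1}}$.

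To prove this, I write each non-negative $\ket{\psi_j}=\sum_{y}\ket{y}\ket{\phi_{j,y}}$ with non-negative vectors $\ket{\phi_{j,y}}$. For a tuple $\mathbf y=(y_1,\dots,y_k)$, set $\ket{\phi_{\mathbf y}}=\bigotimes_j\ket{\phi_{j,y_j}}$. Then
\[ \bra{\psi}M\otimes N\ket{\psi}=\sum_{\mathbf y,\mathbf z}M_{\mathbf y\mathbf z}\bra{\phi_{\mathbf y}}N\ket{\phi_{\mathbf z}}. \]
Because $M_{\mathbf y\mathbf z}\ge0$, I can bound each term by its absolute value. Because $N\succeq0$, Cauchy--Schwarz gives $\abs{\bra{\phi_{\mathbf y}}N\ket{\phi_{\mathbf z}}}\le w_{\mathbf y}w_{\mathbf z}$ with $w_{\mathbf y}=\bra{\phi_{\mathbf y}}N\ket{\phi_{\mathbf y}}^{1/2}$. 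By rescaling the non-negative product vector, $w_{\mathbf y}\le h_{\Sep^+}(N)^{1/2}\prod_j\norm{\phi_{j,y_j}}_2$. Define the non-negative unit vectors $\ket{u_j}=\sum_y\norm{\phi_{j,y}}_2\ket{y}$. Again using $M\ge0$ entrywise, the sum is at most $h_{\Sep^+}(N)\bra{u_1\cdots u_k}M\ket{u_1\cdots u_k}\le h_{\Sep^+}(N)h_{\Sep^+}(M)$. The delicate points are that both non-negativity of $M$ (to pass to absolute values) and PSD-ness of $N$ (for Cauchy--Schwarz) are used, which is why the argument goes through $M_x=(I+H_x)/2$ rather than $H_x$. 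I also need to check that the register regrouping matches the $k$-partite cut.
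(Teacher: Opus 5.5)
Your proposal is correct and is essentially the paper's argument: the repeated verifier accepts with probability $\frac12+\frac12\bra{\Psi}M_x^{\otimes r}\ket{\Psi}$, where $M_x=\bra{\bar 0}\bra{\bar +}V_x^\dagger\ketbra{+}{+}_{\sfO}V_x\ket{\bar 0}\ket{\bar +}$ is entrywise non-negative and PSD, and soundness follows from the same decomposition-plus-Cauchy--Schwarz multiplicativity argument as in \Cref{lemma:stoqSepVal-multiplicative-PSD}. The only difference is cosmetic: the paper realizes this acceptance probability by running the copies in parallel and SWAP-testing their output qubits against $\ket{+}^{\otimes r}$, rather than with your controlled-$V_x^\dagger X_{\sfO}V_x$ branch-overlap circuit, which works equally well.

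One harmless slip: in Step 1 the acceptance probability equals $\bra{\Psi}M_x\ket{\Psi}$ exactly, not $\frac12+\frac12\bra{\Psi}M_x\ket{\Psi}-\frac14$, but nothing downstream uses that aside.
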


It is noteworthy that the proof technique underlying \Cref{thm:StoqMAk-w-parallel-repetition-loss} extends to the more general setting in which the parallel repetitions are associated with \emph{distinct} \StoqMAk{} verifiers, and hence with different promise problems. More specifically, given two promise problems $\calI^{(1)}\coloneqq\rbra[\big]{\calI_\yes^{(1)},\calI_\no^{(1)}}$ and $\calI^{(2)}\coloneqq\rbra[\big]{\calI_\yes^{(2)},\calI_\no^{(2)}}$, their \emph{direct product} is defined by:
\[ (\calI_\yes,\calI_\no) \coloneqq \calI^{(1)} \times \calI^{(2)}, \quad \text{where } \calI_\yes\coloneqq \calI_\yes^{(1)} \times \calI_\yes^{(2)} \text{ and } \calI_\no\coloneqq \calI_\no^{(1)} \times \calI_\no^{(2)}. \]

Furthermore, we refer to the corresponding closure property of \StoqMAk{} in the parameter regime of \Cref{thm:StoqMAk-w-parallel-repetition-loss} as \emph{direct product with parameter loss}, as stated in \Cref{thm:StoqMAk-closed-under-direct-product-w-loss}. Despite the parameter loss, a direct calculation yields the following error-reduction consequence for \StoqMAk{} when the completeness error is \emph{negligible}:

\begin{corollary}[Error reduction for $\StoqMAk_{1-\negl}$]
    \label{corr:error-reduction-StoqMAk-negl}
    Let $\delta(n)$ and $b(n)$ be efficiently computable non-negative functions with $0 \leq b(n) < 1-1/\!\poly(n)$ and $b(n) < 1-\delta(n)$, and suppose that $\delta(n)$ is negligible, and let $k(n)$ and $r(n)$ be efficiently computable positive integer-valued functions that are polynomially bounded, where $r(n)$ is chosen as the number of repetitions. Then parallel repetition gives the following inclusion:
    \[ \StoqMA_\ell\ab( k,\frac{1}{2}+\frac{1-\delta}{2},\frac{1}{2}+\frac{b}{2} ) \subseteq \StoqMA_{r\ell}\ab( k,\frac{1}{2}+\frac{1-\delta'}{2}, \frac{1}{2}+\frac{\delta''}{2} ). \] 
    Here, the resulting threshold parameters are 
    \[\delta'(n) \coloneqq 1-\ab(1-\frac{\delta(n)}{2})^{r(n)} \quad\text{and}\quad \delta''(n)\coloneqq \ab(\frac{1+b(n)}{2})^{r(n)}.\]
    Furthermore, $\delta'(n)$ is negligible for every polynomially bounded $r(n)$, while $\delta''(n)$ is negligible whenever $r(n)$ is chosen sufficiently large.
    In particular, to guarantee the soundness error bound $\delta''(n) \leq 2^{-n}$, it suffices to take
    \[r(n) \geq \ceil*{ \frac{n}{\log\ab(2/(1+b(n)))} }.\] 
\end{corollary}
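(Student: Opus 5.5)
This corollary is a direct instantiation of \Cref{thm:StoqMAk-w-parallel-repetition-loss}, so the plan is to choose the parameters $a$ and $b$ and then simplify. Take $a(n)\coloneqq 1-\delta(n)$ and keep $b(n)$ as given. The hypotheses of the theorem require $0\le b<a\le 1$, and these follow from $b<1-\delta$ and $\delta\ge 0$. Efficient computability and polynomial boundedness of $k$ and $r$ are inherited. The theorem then places the class inside $\StoqMA_{r\ell}\bigl(k,\tfrac12+\tfrac{(1+a)^r}{2^{r+1}},\tfrac12+\tfrac{(1+b)^r}{2^{r+1}}\bigr)$.

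Next I rewrite the new thresholds in the claimed form. For completeness,
\[
\frac{(1+a)^r}{2^{r}}=\Bigl(\frac{2-\delta}{2}\Bigr)^r=\Bigl(1-\frac{\delta}{2}\Bigr)^r=1-\delta',
\]
so the completeness becomes $\tfrac12+\tfrac{1-\delta'}{2}$. For soundness,
\[
\frac{(1+b)^r}{2^r}=\Bigl(\frac{1+b}{2}\Bigr)^r=\delta'',
\]
which gives $\tfrac12+\tfrac{\delta''}{2}$. This is pure algebra.

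The remaining work is the asymptotic claims.
\begin{itemize}
\item \textbf{$\delta'$ is negligible.} Bernoulli's inequality gives $(1-\delta/2)^r\ge 1-r\delta/2$, so $\delta'\le r\delta/2$. A polynomial times a negligible function is negligible.
\item \textbf{Bound on $\delta''$.} Since $b<1-1/\poly(n)$, we have $(1+b)/2\le 1-1/(2\poly(n))$. Hence $\log\bigl(2/(1+b)\bigr)\ge \Omega(1/\poly(n))$, and the prescribed $r=\lceil n/\log(2/(1+b))\rceil$ is polynomially bounded. With this $r$, $\delta''=2^{-r\log(2/(1+b))}\le 2^{-n}$, so $\delta''$ is negligible.
\end{itemize}

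The one step needing care is confirming that the chosen $r$ is polynomially bounded, which is exactly where the hypothesis $b<1-1/\poly(n)$ is used. It is also worth checking that the resulting promise gap $(\delta''-\delta')$-type difference, namely $\tfrac12(1-\delta'-\delta'')$, remains at least $1/\poly(n)$, as \Cref{def:StoqMAk} requires. It does, because both $\delta'$ and $\delta''$ are negligible for large $n$. I expect no genuine obstacle beyond these checks.
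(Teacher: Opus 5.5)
Your proposal is correct and matches the paper's argument. The paper's ``direct calculation'' is exactly this: instantiate \Cref{thm:StoqMAk-w-parallel-repetition-loss} with $a=1-\delta$, rewrite $(1+a)^r/2^{r+1}=(1-\delta')/2$ and $(1+b)^r/2^{r+1}=\delta''/2$, use Bernoulli's inequality to get $\delta'\le r\delta/2$, and use $b<1-1/\poly(n)$ to show that the prescribed $r$ is polynomially bounded.

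One small caveat concerns your optional promise-gap check. For a general polynomially bounded $r$ (e.g.\ $r=1$), $\delta''$ is not negligible. The new gap $\tfrac12\bigl[(1-\delta/2)^r-((1+b)/2)^r\bigr]$ is nevertheless still at least $1/\poly(n)$, because $(1+b)/2\le 1-1/(2\poly(n))$.
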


\begin{remark}[Parameter loss in parallel repetition for \StoqMAk{}]
    \label{remark:StoqMAk-parallel-repetition-parameter-loss}
    There is parameter loss in the regime underlying \Cref{thm:StoqMAk-w-parallel-repetition-loss}, whereas some subclasses of \StoqMAk{} admit direct product \emph{without} parameter loss, as is already implicit in~\cite[Section 5]{Liu21}. This closure property holds for the subclass of \StoqMAk{} that we call \ProdStoqMAk{}. The definition of \ProdStoqMAk{} is identical to that of \Cref{def:StoqMAk}, except that for a \ProdStoqMAk{} verifier $V_x$, the maximum acceptance probability $\omega(V_x)$ has the product form $\omega(V_x) = \rbra[\big]{1+\hsepPlus{2^\ell}{\cdots,2^\ell}(M_x)}/2$, where $M_x = \bra{\bar{0}}\bra{\bar{+}} V_x^\dagger X_\sfO V_x \ket{\bar{0}} \ket{\bar{+}}$ and $M_x = \bigotimes_{j\in[k]} M_x^{(j)}$. Here, each factor $M_x^{(j)}$ is Hermitian and entrywise non-negative and corresponds to the $j$-th component of the witness state for $j\in[k]$. Consequently, \ProdStoqMAk{} admits parallel repetition without parameter loss. The detailed statement and proof can be found in an earlier version of our work~\cite[Section 3.4]{LW26}.
\end{remark}

\paragraph{Stoquastic separable values are multiplicative.} 
To prove the soundness parts of \Cref{thm:StoqMAk-w-parallel-repetition-loss}, we establish the following \emph{multiplicative} property of stoquastic separable values under tensor products, whose two cases follow by straightforward induction from \Cref{lemma:stoqSepVal-multiplicative-PSD} and \Cref{lemma:stoqSepVal-multiplicative-product}, respectively: 
\begin{theorem}[Tensor-product multiplicativity of stoquastic separable values]
    \label{thm:stoqSepVal-multiplicative}
    For any entrywise non-negative Hermitian square matrices $M_1,\cdots,M_r$ of dimensions $\prod_{i\in[k]} d^{(1)}_i$, $\cdots$, $\prod_{i\in[k]} d^{(r)}_i$, respectively, the stoquastic separable value is \emph{multiplicative under tensor products}, 
    \begin{equation}
        \hsepPlus{\prod_{i\in[r]} d^{(i)}_1}{\cdots,\prod_{j\in[r]} d^{(j)}_k}\rbra*{M_1\totimes\cdots\totimes M_r} = \prod_{j\in[r]} \hsepPlus{d^{(j)}_1}{\cdots,d^{(j)}_k }(M_j),
    \end{equation}
    provided that one of the following conditions holds: 
    \begin{enumerate}[label={\upshape(\arabic*)}]
        \item All matrices $M_1,\cdots,M_r$ are \emph{positive semi-definite}.
        \item For each $j\in[r]$, there exist Hermitian square matrices $M_1^{(j)}, \cdots, M_k^{(j)}$ of dimensions $d_1^{(j)}, \cdots, d_k^{(j)}$, respectively, such that $M_j$ admits the \emph{product decompositions} 
        \[M_j = M_1^{(j)} \totimes \cdots \totimes M_k^{(j)},\]
        where each factor $M_i^{(j)}$ is Hermitian and entrywise non-negative for all $i\in[k]$ and $j\in[r]$.
    \end{enumerate}
\end{theorem}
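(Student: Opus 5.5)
We prove the two-matrix identity
\[
\hsepPlus{d^{(1)}_1d^{(2)}_1}{\cdots,d^{(1)}_kd^{(2)}_k}(M_1\totimes M_2)=\hsepPlus{d^{(1)}_1}{\cdots,d^{(1)}_k}(M_1)\cdot\hsepPlus{d^{(2)}_1}{\cdots,d^{(2)}_k}(M_2)
\]
under each hypothesis, and then get the general statement by induction on $r$. The induction works because both hypotheses are preserved under tensor products. Tensor products of entrywise non-negative PSD matrices are again entrywise non-negative and PSD. In case (2), the product form survives after the registers are regrouped party-wise. Throughout, we implicitly reorder tensor factors so that party $i$ holds the register of dimension $d^{(1)}_i d^{(2)}_i$.

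\textbf{Lower bound ($\geq$), both cases.} Let $\ket{\psi^{(j)}_1},\dots,\ket{\psi^{(j)}_k}$ be non-negative maximizers for $M_j$. Give party $i$ the state $\ket{\psi^{(1)}_i}\totimes\ket{\psi^{(2)}_i}$, which is again a non-negative unit vector. The objective then factorizes as $\bra{\psi^{(1)}}M_1\ket{\psi^{(1)}}\cdot\bra{\psi^{(2)}}M_2\ket{\psi^{(2)}}$, where $\ket{\psi^{(j)}}=\bigotimes_i\ket{\psi^{(j)}_i}$. This gives the product of the two values.

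\textbf{Upper bound in case (1).} We only need that $M_1$ is PSD and that both matrices are entrywise non-negative.
\begin{enumerate}[label={\upshape(\roman*)}]
    \item Fix non-negative unit vectors $\ket{\psi_i}$ on the joint register of party $i$. Expand each in the computational basis of the second factor: $\ket{\psi_i}=\sum_{a}\ket{\phi_{i,a}}\totimes\ket{a}$, where the $\ket{\phi_{i,a}}$ are unnormalized non-negative vectors.
    \item For a tuple $\mathbf a=(a_1,\dots,a_k)$, set $\ket{\phi_{\mathbf a}}=\bigotimes_i\ket{\phi_{i,a_i}}$. Then
    \[
    \bra{\psi}M_1\totimes M_2\ket{\psi}=\sum_{\mathbf a,\mathbf b}(M_2)_{\mathbf a,\mathbf b}\,\bra{\phi_{\mathbf a}}M_1\ket{\phi_{\mathbf b}}.
    \]
    Every summand here is non-negative.
    \item Since $M_1\succeq 0$, Cauchy--Schwarz gives $\bra{\phi_{\mathbf a}}M_1\ket{\phi_{\mathbf b}}\le\sqrt{\bra{\phi_{\mathbf a}}M_1\ket{\phi_{\mathbf a}}\bra{\phi_{\mathbf b}}M_1\ket{\phi_{\mathbf b}}}$. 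Each $\ket{\phi_{\mathbf a}}$ is a scaled product of non-negative vectors, so this is at most $h_1\,n_{\mathbf a}n_{\mathbf b}$. Here $h_1$ is the stoquastic separable value of $M_1$, $n_{\mathbf a}=\prod_i\norm{\phi_{i,a_i}}_2$, and we use \Cref{prop:hsep-shift}-style homogeneity.
    \item The entries of $M_2$ are non-negative, so the sum is at most $h_1\sum_{\mathbf a,\mathbf b}(M_2)_{\mathbf a,\mathbf b}n_{\mathbf a}n_{\mathbf b}=h_1\bra{\eta}M_2\ket{\eta}$. Here $\ket{\eta}=\bigotimes_i\ket{\eta_i}$ with $\ket{\eta_i}=\sum_a\norm{\phi_{i,a}}_2\ket{a}$.
    \item Each $\ket{\eta_i}$ is non-negative and a unit vector, since $\sum_a\norm{\phi_{i,a}}^2=\norm{\psi_i}^2=1$. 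Hence the last expression is at most $h_1h_2$.
\end{enumerate}
The main point to get right is this step. The Cauchy--Schwarz bound must use PSD-ness together with the fact that only non-negative product vectors appear, so that the separable value, not the top eigenvalue, controls each diagonal term. Non-negativity of $M_2$'s entries is what lets us replace cross terms by norms without sign issues.

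\textbf{Upper bound in case (2).} Here $M_1\totimes M_2$ regroups to $\bigotimes_i\rbra{M^{(1)}_i\totimes M^{(2)}_i}$. For a product state the objective factorizes over parties, so the stoquastic separable value of a party-wise product is $\prod_i\max_{x\in\calS_+}\bra{x}A_i\ket{x}$, where $A_i$ is the matrix for party $i$.

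Each factor $M^{(j)}_i$ is Hermitian and entrywise non-negative, hence real symmetric and non-negative. For such a matrix $A$, the maximum of $\bra{x}A\ket{x}$ over non-negative unit $x$ equals the spectral radius $\rho(A)$. Indeed, the inequality $\bra{x}A\ket{x}\le\lambda_{\max}(A)\le\rho(A)$ holds for all $x$, and Perron--Frobenius gives a non-negative eigenvector achieving $\rho(A)$.

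Finally, $\rho(A\totimes B)=\rho(A)\rho(B)$, because the eigenvalues of $A\totimes B$ are the products of eigenvalues of $A$ and $B$. Multiplying over the parties $i$ yields the claim, and induction on $r$ finishes both cases.
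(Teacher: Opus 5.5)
Your proof is correct and essentially the paper's route: the lower bound and the induction on $r$ are identical, and case (1) is the argument of \Cref{lemma:stoqSepVal-multiplicative-PSD} with the two matrices' roles swapped. That argument expands in the computational basis of one factor, uses entrywise non-negativity of that factor for the cross terms and Cauchy--Schwarz from positive semi-definiteness of the other, and then repackages the norms into non-negative unit vectors. Case (2) follows \Cref{lemma:stoqSepVal-multiplicative-product} via Perron--Frobenius; you use exact multiplicativity of the spectral radius under Kronecker products where the paper uses the bound $\lambda_{\max}(A\otimes B)\le\lambda_{\max}(A)\lambda_{\max}(B)$ for non-negative matrices.
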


\vspace{1em}
In the remainder of this section, we first present the multiplicativity of stoquastic separable values in~\Cref{subsec:stoqSeq-multiplicative}, and then provide two \StoqMAk{}-complete problems arising from the class definition in \Cref{subsec:StoqMAk-complete-probs}. Lastly, we establish the closure property underlying \Cref{thm:StoqMAk-w-parallel-repetition-loss} in \Cref{subsec:StoqMAk-closed-direct-product}. 

\subsection{The multiplicativity of the stoquastic separable values}
\label{subsec:stoqSeq-multiplicative}

We begin by showing that the separable value and the stoquastic separable value, both defined in \Cref{eq:separable-values}, coincide when $M$ is an \emph{entrywise non-negative} matrix:
\begin{proposition}[Optimal stoquastic separable value achieved by non-negative states]
    \label{prop:opt-stoqSepVal-nonNeg}
    For every entrywise non-negative square Hermitian matrix $M$ of dimension $\prod_{j\in[k]} d_j$, it holds that
    \[ \hsep{d_1}{\dots,d_k}(M) = \hsepPlus{d_1}{\cdots,d_k}(M). \]
\end{proposition}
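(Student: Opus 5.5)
One inequality is already given by the definitions: $\hsepPlus{d_1}{\cdots,d_k}(M) \leq \hsep{d_1}{\dots,d_k}(M)$, since every non-negative unit vector is a unit vector. So the plan is to prove the reverse inequality. Given an arbitrary maximizer, we construct a non-negative product state whose value is at least as large.

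Fix unit vectors $\ket{\psi_j}\in\calS(d_j)$ attaining $\hsep{d_1}{\dots,d_k}(M)$; they exist by compactness. Write $\ket{\psi_j}=\sum_{x\in[d_j]}\alpha^{(j)}_x\ket{x}$ and define the entrywise absolute-value vectors $\ket{|\psi_j|}\coloneqq\sum_{x}|\alpha^{(j)}_x|\ket{x}$. Each $\ket{|\psi_j|}$ is still a unit vector, since the moduli of the amplitudes are unchanged, and it lies in $\calS_+(d_j)$.

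Set $\ket{\Psi}=\ket{\psi_1}\totimes\cdots\totimes\ket{\psi_k}$. Its amplitude on a basis string $\mathbf{x}=(x_1,\dots,x_k)$ is $\beta_{\mathbf{x}}=\prod_j\alpha^{(j)}_{x_j}$, so $|\beta_{\mathbf{x}}|=\prod_j|\alpha^{(j)}_{x_j}|$. This is exactly the amplitude of $\ket{|\psi_1|}\totimes\cdots\totimes\ket{|\psi_k|}$ on $\mathbf{x}$. In other words, taking entrywise absolute values commutes with the tensor product, and this is the only place where the product structure matters. Since $M$ is entrywise non-negative, the triangle inequality gives
\[
\Tr\rbra*{M\ketbra{\Psi}{\Psi}}=\sum_{\mathbf{x},\mathbf{y}}\overline{\beta_{\mathbf{x}}}\,M_{\mathbf{x}\mathbf{y}}\,\beta_{\mathbf{y}}\le\sum_{\mathbf{x},\mathbf{y}}|\beta_{\mathbf{x}}|\,M_{\mathbf{x}\mathbf{y}}\,|\beta_{\mathbf{y}}|.
\]
The left-hand side is real because $M$ is Hermitian, so it is bounded by its modulus. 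The right-hand side is the value of $M$ on the non-negative product state $\ket{|\psi_1|}\totimes\cdots\totimes\ket{|\psi_k|}$, and hence is at most $\hsepPlus{d_1}{\cdots,d_k}(M)$. Combining this with the first inequality gives equality.

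The only point that needs care is checking that the absolute-value map preserves both normalization and the product form. The first holds factorwise and the second follows from the amplitude identity above, so I do not expect a real obstacle. Unlike the single-prover \StoqMA{} case, no Perron--Frobenius argument is needed: entrywise non-negativity of $M$ alone is enough.
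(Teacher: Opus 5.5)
Your proof is correct and follows the paper's argument: take a maximizer of the separable value and replace each factor by its entrywise absolute value, which preserves both normalization and product form. Entrywise non-negativity of $M$ together with the triangle inequality then shows the value does not decrease. The paper writes this out only for $k=2$ and notes that the general case is identical, whereas you treat general $k$ directly via the amplitude identity $|\beta_{\mathbf{x}}|=\prod_j|\alpha^{(j)}_{x_j}|$.
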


\begin{proof}
    For simplicity, we present the proof only for the case $k=2$, and the argument extends straightforwardly to general $k$. 
    Let 
    \[\ket{\psi_1^\star} \coloneqq \sum_{i\in[d_1]} a_i\ket{i} \quad\text{and}\quad \ket{\psi_2^\star} \coloneqq \sum_{j\in[d_2]} b_j \ket{j}\] 
    be two pure states that achieve the optimum stoquastic separable value. Accordingly, define the corresponding non-negative states 
    \[\ket{\psi_1^+} \coloneqq \sum_i \abs{a_i}\ket{i} \quad\text{and}\quad \ket{\psi_2^+} \coloneqq \sum_j \abs{b_j} \ket{j}.\] 
    Since $M$ is entrywise non-negative and $\hsep{d_1}{d_2}(M)$ is non-negative, we obtain: 
    \begin{subequations}
    \label{eq:pacc-bound}
    \begin{align}
        \hsep{d_1}{d_2}(M) &= \bra{\psi_1^\star}\bra{\psi_2^\star} M \ket{\psi_1^\star}\ket{\psi_2^\star}\\
        &= \sum_{i,j,i',j'} \Real\rbra*{a_i^*b_j^*a_{i'}b_{j'}} \bra{i}\bra{j}M\ket{i'}\ket{j'}\\
        &\leq \sum_{i,j,i',j'} \abs*{a_i^*b_j^*a_{i'}b_{j'}} \bra{i}\bra{j}M\ket{i'}\ket{j'}\\
        &\leq \sum_{i,j,i',j'} \abs{a_i^*}\abs{b_j^*}\abs{a_{i'}}\abs{b_{j'}} \bra{i}\bra{j}M\ket{i'}\ket{j'}\\
        &= \bra{\psi_1^+}\bra{\psi_2^+} M \ket{\psi_1^+}\ket{\psi_2^+}\\ 
        &\leq \hsepPlus{d_1}{d_2}(M) \leq \hsep{d_1}{d_2}(M).
    \end{align}
    \end{subequations}
    Here, the fourth line follows from the triangle inequality. 
    Therefore, the non-negative states $\ket{\psi_1^+}$ and $\ket{\psi_2^+}$ also achieve the desired optimum stoquastic separable value. 
\end{proof}

\paragraph{Stoquastic separable values of positive semi-definite matrices.} We next establish the main technical lemma of this subsection: 

\begin{lemma}[Stoquastic separable values of positive semi-definite matrices are multiplicative]
    \label{lemma:stoqSepVal-multiplicative-PSD}
    For any entrywise non-negative positive semi-definite matrices $M$ and $M'$ of dimensions $\prod_{i\in[k]} d_i$ and $\prod_{j\in[k]} d'_j$, respectively, the stoquastic separable value is multiplicative under tensor products: 
    \[ \hsepPlus{d_1 d'_1}{\cdots, d_k d'_k}\rbra*{M\totimes M'} = \hsepPlus{d_1}{\cdots,d_k }(M) \cdot \hsepPlus{d'_1}{\cdots,d'_k }(M'). \]
\end{lemma}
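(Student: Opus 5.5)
The plan is to prove the two inequalities separately. The ``$\geq$'' direction is a direct construction. The ``$\leq$'' direction is a two-stage Cauchy--Schwarz argument: it uses positive semi-definiteness of $M$ to decouple the primed indices, and entrywise non-negativity of both matrices to remove all sign issues.

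\emph{Lower bound.} Let $\ket{\psi_1},\dots,\ket{\psi_k}$ be non-negative states attaining $\hsepPlus{d_1}{\cdots,d_k}(M)$, and let $\ket{\psi'_1},\dots,\ket{\psi'_k}$ be non-negative states attaining $\hsepPlus{d'_1}{\cdots,d'_k}(M')$. Assign to party $i$ the state $\ket{\psi_i}\otimes\ket{\psi'_i}\in\calS_+(d_id'_i)$. This state is non-negative because it is a tensor product of non-negative vectors. After reordering registers, the value of $M\otimes M'$ on the product of these states factorizes as the product of the two optimal values.

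\emph{Upper bound.} Fix arbitrary $\ket{\Psi_i}\in\calS_+(d_id'_i)$ on registers $\sfA_i\sfA'_i$, and write $\ket{\Psi_i}=\sum_{a'_i\in[d'_i]}\ket{\phi^{(i)}_{a'_i}}\otimes\ket{a'_i}$, where each $\ket{\phi^{(i)}_{a'_i}}$ is an unnormalized non-negative vector in $\mathbb{C}^{d_i}$. For a tuple $a'=(a'_1,\dots,a'_k)$, set $\ket{\phi_{a'}}\coloneqq\bigotimes_i\ket{\phi^{(i)}_{a'_i}}$. Reordering registers, the value becomes
\[ \sum_{a',b'} \bra{a'}M'\ket{b'}\,\bra{\phi_{a'}}M\ket{\phi_{b'}}. \]
Every term here is non-negative, since $M'$, $M$, and all $\phi$'s are entrywise non-negative. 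Because $M\succeq 0$, the Cauchy--Schwarz inequality for the semi-inner product induced by $M$ gives
\[ \bra{\phi_{a'}}M\ket{\phi_{b'}}\leq\sqrt{\bra{\phi_{a'}}M\ket{\phi_{a'}}\bra{\phi_{b'}}M\ket{\phi_{b'}}}. \]
Each $\ket{\phi_{a'}}$ is a product of non-negative vectors, so after normalization it is a feasible point for $\hsepPlus{d_1}{\cdots,d_k}(M)$. Hence $\bra{\phi_{a'}}M\ket{\phi_{a'}}\leq h\,\norm{\ket{\phi_{a'}}}_2^2$, where $h\coloneqq\hsepPlus{d_1}{\cdots,d_k}(M)$. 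Combining with the non-negativity of the entries of $M'$, the value is at most $h\sum_{a',b'}\bra{a'}M'\ket{b'}\chi_{a'}\chi_{b'}=h\bra{\chi}M'\ket{\chi}$, where $\chi_{a'}\coloneqq\norm{\ket{\phi_{a'}}}_2=\prod_i\norm{\ket{\phi^{(i)}_{a'_i}}}_2$.

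The crucial observation is that $\ket{\chi}=\bigotimes_i\ket{\chi_i}$ is itself a product state, with $\ket{\chi_i}\coloneqq\sum_{a'_i}\norm{\ket{\phi^{(i)}_{a'_i}}}_2\ket{a'_i}$. Each $\ket{\chi_i}$ is non-negative and has unit norm, since $\sum_{a'_i}\norm{\ket{\phi^{(i)}_{a'_i}}}_2^2=\norm{\ket{\Psi_i}}_2^2=1$. Therefore $\bra{\chi}M'\ket{\chi}\leq\hsepPlus{d'_1}{\cdots,d'_k}(M')$, which yields the desired upper bound.

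The main obstacle is the off-diagonal cross terms $\bra{\phi_{a'}}M\ket{\phi_{b'}}$ with $a'\neq b'$. These are not separable-value evaluations, and in general Hermitian settings they are exactly where multiplicativity fails. The PSD hypothesis on $M$ is what licenses the Cauchy--Schwarz step. Entrywise non-negativity ensures that replacing each cross term by this upper bound is monotone inside the sum weighted by $\bra{a'}M'\ket{b'}\geq 0$, so no sign cancellation can be lost. The remaining step that needs care is checking that the norm vector $\ket{\chi}$ factorizes across parties; this holds because $\ket{\phi_{a'}}$ is a tensor product of the per-party slices.
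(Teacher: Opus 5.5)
Your proof is correct and follows essentially the same argument as the paper. Slice each non-negative state along one computational basis, bound the cross terms by Cauchy--Schwarz using positive semi-definiteness of one factor, and use entrywise non-negativity of the other factor to keep the bound monotone; the vector of slice norms is then a non-negative product state that is feasible for the other separable value. The difference is cosmetic: the paper slices along the registers of $M$ and applies Cauchy--Schwarz to $M'$, while you slice along the registers of $M'$ and apply it to $M$, and you handle general $k$ directly rather than only $k=2$.
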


\begin{proof}
    For simplicity, we prove only the case where $k=2$, and the same argument directly extends to general $k$. Let the operator $M$ act on quantum registers $\sfA$ and $\sfB$, and let the operator $M'$ act on quantum registers $\sfA'$ and $\sfB'$. Following \Cref{prop:opt-stoqSepVal-nonNeg}, it suffices to consider only non-negative quantum states. 

    The lower bound is straightforward. This is because there exist non-negative states $\ket{\psi_1}\in\calS_+\rbra{d_1}$, $\ket{\psi_2}\in\calS_+\rbra{d_2}$, $\ket{\psi'_1}\in\calS_+\rbra{d'_1}$, and $\ket{\psi'_2}\in\calS_+\rbra{d'_2}$ such that the following holds: 
    \begin{align*}
        \Tr\rbra*{M_{\sfA\sfB} \rbra*{ \ketbra{\psi_1}{\psi_1}_{\sfA} \totimes \ketbra{\psi_2}{\psi_2}_{\sfB}} } &= \hsep{d_1}{d_2}(M),\\
        \Tr\rbra*{M'_{\sfA'\sfB'} \rbra*{ \ketbra{\psi'_1}{\psi'_1}_{\sfA'} \totimes \ketbra{\psi'_2}{\psi'_2}_{\sfB'}} } &= \hsep{d'_1}{d'_2}(M').
    \end{align*}

    Then, there exist non-negative states $\ket{\phi_1}_{\sfA\sfA'}\coloneq\ket{\psi_1}_{\sfA}\totimes\ket{\psi'_1}_{\sfA'}\in\calS_+(d_1d'_1)$ and $\ket{\phi_2}_{\sfB\sfB'}\coloneq\ket{\psi_2}_{\sfB}\totimes\ket{\psi'_2}_{\sfB'}\in\calS_+(d_2d'_2)$, which imply the desired lower bound: 
    \begin{align*}
         \hsepPlus{d_1d'_1}{d_2d'_2}(M\totimes M') 
         &\geq \Tr\rbra*{\rbra*{M_{\sfA\sfB}\totimes M'_{\sfA'\sfB'}}\rbra*{\ketbra{\phi_1}{\phi_1}_{\sfA\sfA'}\totimes\ketbra{\phi_2}{\phi_2}_{\sfB\sfB'}}}\\
         &= \hsepPlus{d_1}{d_2}(M) \cdot \hsepPlus{d'_1}{d'_2}(M').
    \end{align*}

    To establish the upper bound, let $\ket{\phi_1}_{\sfA\sfA'}\in\calS_+(d_1d'_1)$ and $\ket{\phi_2}_{\sfB\sfB'}\in\calS_+(d_2d'_2)$ be non-negative states that attain the maximum separable value of $M\totimes M'$. Decompose these states in the computational basis of their first register, $\sfA$ and $\sfB$, respectively:
    \begin{equation}
        \label{eq:hsep-PSD-decomposition}
        \ket{\phi_1}_{\sfA\sfA'} = \sum_{i=1}^{d_1} \ket{i}_{\sfA}\totimes\ket{\eta_i}_{\sfA'} \quad\text{and}\quad \ket{\phi_2}_{\sfB\sfB'} = \sum_{j=1}^{d_2} \ket{j}_{\sfB}\totimes\ket{\vartheta_j}_{\sfB'}.
    \end{equation}
    A direct calculation implies the following normalization equalities: 
    \begin{equation}
        \label{eq:hsep-PSD-normalization}
        \sum_{i=1}^{d_1} \innerprod{\eta_i}{\eta_i}=1 \quad\text{and}\quad \sum_{j=1}^{d_2} \innerprod{\vartheta_j}{\vartheta_j}=1.
    \end{equation}

    Now we expand $\hsepPlus{d_1d'_1}{d_2d'_2}(M\totimes M')$ using the decompositions in \Cref{eq:hsep-PSD-decomposition}: 
    \begin{subequations}
        \label{eq:hsep-PSD-UB}
        \begin{align}
            \hsepPlus{d_1d'_1}{d_2d'_2}(M\totimes M')  &= \bra{\phi_1}_{\sfA\sfA'}\bra{\phi_2}_{\sfB\sfB'} \rbra*{M_{\sfA\sfB}\totimes M'_{\sfA'\sfB'}} \ket{\phi_1}_{\sfA\sfA'}\ket{\phi_2}_{\sfB\sfB'}\\
            &= \sum_{i,j,\hat{i},\hat{j}} \bra{i}_{\sfA}\!\bra{\eta_i}_{\sfA'}\!\bra{j}_{\sfB}\!\bra{\vartheta_j}_{\sfB'} \rbra*{M_{\sfA\sfB}\totimes M'_{\sfA'\sfB'}} \ket{\hat{i}}_{\sfA} \!\ket{\eta_{\hat{i}}}_{\sfA'} \!\ket{\hat{j}}_{\sfB} \!\ket{\vartheta_{\hat{j}}}_{\sfB'}\\
            &\leq \sum_{i,j,\hat{i},\hat{j}} \bra{i}_{\sfA}\!\bra{j}_{\sfB}M_{\sfA\sfB}\ket{\hat{i}}_{\sfA}\!\ket{\hat{j}}_{\sfB} \cdot \abs*{\bra{\eta_i}_{\sfA'}\!\bra{\vartheta_j}_{\sfB'} M'_{\sfA'\sfB'} \ket{\eta_{\hat{i}}}_{\sfA'}\!\ket{\vartheta_{\hat{j}}}_{\sfB'}}.
        \end{align}
    \end{subequations}
    Here, the last line follows from the fact that $M$ is entrywise non-negative. 
    
    Note that the entrywise non-negativity of $M'$ implies that all entries of $M'$ are real. Since $M'$ is positive semi-definite, it follows that $M'$ is real symmetric. Consequently, $M'$ can be viewed as a Gram matrix, and we may therefore apply the Cauchy–Schwarz inequality to the positive semi-definite bilinear form induced by $M'$. In particular, for every $i\in[d_1]$, $j\in[d_2]$, $\hat{i}\in[d_1]$, and $\hat{j}\in[d_2]$, the following holds: 
    \begin{subequations}
        \label{eq:hsep-PSD-UBterm}
        \begin{align}
            &\abs*{\bra{\eta_i}_{\sfA'}\!\bra{\vartheta_j}_{\sfB'} \!M'_{\sfA'\sfB'}\! \ket{\eta_{\hat{i}}}_{\sfA'}\!\ket{\vartheta_{\hat{j}}}_{\sfB'}}\\
            \leq~& \sqrt{\bra{\eta_i}_{\sfA'}\!\bra{\vartheta_j}_{\sfB'} \!M'_{\sfA'\sfB'}\! \ket{\eta_i}_{\sfA'}\!\ket{\vartheta_j}_{\sfB'}} \cdot \sqrt{\bra{\eta_{\hat{i}}}_{\sfA'}\!\bra{\vartheta_{\hat{j}}}_{\sfB'} \!M'_{\sfA'\sfB'}\! \ket{\eta_{\hat{i}}}_{\sfA'}\!\ket{\vartheta_{\hat{j}}}_{\sfB'}}\\
            \leq~& \hsep{d'_1}{d'_2}(M') \cdot \sqrt{\innerprod{\eta_i}{\eta_i}} \cdot \sqrt{\innerprod{\vartheta_j}{\vartheta_j}} \cdot \sqrt{\innerprod{\eta_{\hat{i}}}{\eta_{\hat{i}}}} \cdot \sqrt{\innerprod{\vartheta_{\hat{j}}}{\vartheta_{\hat{j}}}}
        \end{align}
    \end{subequations}
    Plugging \Cref{eq:hsep-PSD-UBterm} into \Cref{eq:hsep-PSD-UB}, we obtain the desired upper bound:
    \begin{align*}
        &\hsepPlus{d_1d'_1}{d_2d'_2}(M\totimes M')\\
        \leq~& \hsepPlus{d'_1}{d'_2}(M') \cdot \sum_{i,j,\hat{i},\hat{j}} \bra{i}_{\sfA}\!\bra{j}_{\sfB}M_{\sfA\sfB}\ket{\hat{i}}_{\sfA}\!\ket{\hat{j}}_{\sfB} \cdot \sqrt{\innerprod{\eta_i}{\eta_i}} \cdot \sqrt{\innerprod{\vartheta_j}{\vartheta_j}} \cdot \sqrt{\innerprod{\eta_{\hat{i}}}{\eta_{\hat{i}}}} \cdot \sqrt{\innerprod{\vartheta_{\hat{j}}}{\vartheta_{\hat{j}}}}\\
        =~& \hsepPlus{d'_1}{d'_2}(M') \cdot  \bra{\xi}_{\sfA} \bra{\varsigma}_{\sfB} M_{\sfA\sfB} \ket{\xi}_{\sfA} \ket{\varsigma}_{\sfB}\\
        \leq~& \hsepPlus{d'_1}{d'_2}(M') \cdot \hsepPlus{d_1}{d_2}(M).
    \end{align*}

    Here, $\ket{\xi}=\sum_{i=1}^{d_1} \sqrt{\innerprod{\eta_i}{\eta_i}} \ket{i}$ and $\ket{\varsigma}=\sum_{j=1}^{d_2} \sqrt{\innerprod{\vartheta_j}{\vartheta_j}} \ket{j}$ are non-negative states, where the normalization is ensured by \Cref{eq:hsep-PSD-normalization}. 
\end{proof}

However, this multiplicative property only holds for Hermitian matrices in \emph{product} form, and the proof is deferred to \Cref{subsec:omitted-parallel-repetition-StoqMAk}: 

\begin{restatable}[Stoquastic separable values of product Hermitian matrices are multiplicative]{lemma}{stoqSepValMultiplicativeProduct}
    \label{lemma:stoqSepVal-multiplicative-product}
    Let $M$ and $M'$ be entrywise non-negative product Hermitian matrices of dimensions $\prod_{i\in[k]} d_i$ and $\prod_{i\in[k]} d'_i$, respectively, and suppose that these matrices admit the product decompositions 
    \begin{equation}
        \label{eq:sep-vals-product-form}
        M = M_1 \totimes \cdots \totimes M_k \quad\text{and}\quad M' = M'_1 \totimes \cdots \totimes M'_k,
    \end{equation}
    where each factor $M_i$ and $M'_i$ is Hermitian and entrywise non-negative for $i\in[k]$. Then the stoquastic separable value is multiplicative under tensor products, namely
    \[ \hsepPlus{d_1 d'_1}{\cdots, d_k d'_k}\rbra*{M \totimes M'} = \hsepPlus{d_1}{\cdots,d_k }(M) \cdot \hsepPlus{d'_1}{\cdots,d'_k }(M'). \]
\end{restatable}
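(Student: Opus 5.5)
The plan is to compute both sides in closed form. For product matrices, the stoquastic separable value should factor into a product of Perron eigenvalues. The key tool is the Perron--Frobenius theorem. For an entrywise non-negative Hermitian (hence real symmetric) matrix $A$, the spectral radius $\rho(A)=\norm{A}$ is itself an eigenvalue, namely $\lambda_{\max}(A)=\rho(A)\geq 0$, and it has a non-negative eigenvector.

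\emph{Step 1 (single-matrix formula).} I will show that for $M=M_1\totimes\cdots\totimes M_k$ with every $M_i$ entrywise non-negative and Hermitian,
\[ \hsepPlus{d_1}{\cdots,d_k}(M)=\prod_{i\in[k]}\rho(M_i). \]
For non-negative unit vectors $\ket{\psi_i}$ the objective factorizes as
\[ \Tr\rbra*{M\rbra*{\ketbra{\psi_1}{\psi_1}\totimes\cdots\totimes\ketbra{\psi_k}{\psi_k}}}=\prod_{i}\bra{\psi_i}M_i\ket{\psi_i}. \]
Each factor is real and satisfies $\abs{\bra{\psi_i}M_i\ket{\psi_i}}\leq\norm{M_i}=\rho(M_i)$. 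Hence the product is at most $\prod_i\rho(M_i)$.

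Conversely, choose each $\ket{\psi_i}$ to be a non-negative Perron eigenvector of $M_i$. Then every factor equals $\rho(M_i)$, so the bound is attained. The non-negativity of each attained factor is what prevents sign issues in the product.

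\emph{Step 2 (regrouping).} Up to the register reordering that pairs $\sfA_i$ with $\sfA'_i$, we have
\[ M\totimes M'=\bigotimes_{i\in[k]}\rbra*{M_i\totimes M'_i}. \]
This reordering is consistent with the grouping $d_id'_i$ used in $\hsepPlus{d_1d'_1}{\cdots,d_kd'_k}$. Each factor $M_i\totimes M'_i$ is again entrywise non-negative and Hermitian, so Step 1 applies to $M\totimes M'$ as well.

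\emph{Step 3 (spectral radius is multiplicative).} The eigenvalues of $M_i\totimes M'_i$ are the pairwise products of the eigenvalues of $M_i$ and $M'_i$. Therefore $\rho(M_i\totimes M'_i)=\rho(M_i)\rho(M'_i)$.

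Combining Steps 1--3 gives
\[ \hsepPlus{d_1d'_1}{\cdots,d_kd'_k}(M\totimes M')=\prod_i\rho(M_i)\rho(M'_i)=\hsepPlus{d_1}{\cdots,d_k}(M)\cdot\hsepPlus{d'_1}{\cdots,d'_k}(M'). \]

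The main obstacle is the upper bound in Step 1. Since the $M_i$ are not assumed positive semi-definite, individual quadratic forms could be negative, and one might worry that a product of negative factors exceeds the product of maxima. This is exactly why I bound absolute values by the operator norm and then use Perron--Frobenius to identify $\norm{M_i}$ with the attainable value $\lambda_{\max}(M_i)$. Degenerate cases, such as some $M_i=0$, are consistent with the formula since both sides become $0$.
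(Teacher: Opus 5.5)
Your proposal is correct and follows essentially the same route as the paper. Both proofs regroup $M\otimes M'$ as $\bigotimes_i (M_i\otimes M'_i)$, bound each factor's quadratic form by its Perron eigenvalue (the paper cites $\abs{\lambda}\leq\lambda_{\max}$ for entrywise non-negative matrices, where you use $\norm{M_i}=\rho(M_i)$), use multiplicativity of the top eigenvalue under tensor products, and attain the bound with non-negative Perron eigenvectors. One minor point: your worry about negative factors never arises, since $\bra{\psi_i}M_i\ket{\psi_i}\geq 0$ automatically for non-negative $\ket{\psi_i}$ and entrywise non-negative $M_i$, but your absolute-value bound covers that case harmlessly anyway.
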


\subsection{Two \StoqMAk{}-complete problems from the class definition}
\label{subsec:StoqMAk-complete-probs}

We introduce two \StoqMAk{}-complete problems that arise directly from the class definition, and each admits an interpretation in terms of property testing over different objects. Our second complete problem ($\SepRCD_k$) is directly analogous to the \StoqMA{}-complete problem \textsc{Reversible Circuit Distinguishability} (\RCD{}) introduced in~\cite[Section 4]{Liu21}. 

\subsubsection{Separable Stoquastic Close Image to Uniform}
We start with an observation due to~\cite[Theorem 9]{Kobayashi03}: deciding whether $\Tr(\rho_0\rho_1)$ is at least $2/3$ or at most $1/3$ for \emph{single-qubit} states $\rho_0$ and $\rho_1$, whose purifications can be prepared efficiently, is \BQP{}-complete. This observation suggests that the acceptance probability of quantum computation can be interpreted as a closeness testing problem. Moreover, when neither $\rho_0$ nor $\rho_1$ admits such a measurement, we can instead implement the measurement induced by the SWAP test for mixed states~\cite{BCWdW01,KMY09}, which accepts with probability $(1+\Tr(\rho_0\rho_1))/2$. 

Since the SWAP test can be viewed as a stoquastic verifier, as implicitly observed in~\cite[Section 4]{Liu21}, we introduce our first \StoqMAk{}-complete problem $\SepStoqCIU_k$:
\footnote{This \StoqMA{}-complete variant, \textsc{Stoquastic Close Image to Uniform} (\StoqCIU{}), is obtained by removing the tensor-product promise; that is, in the promise, replacing the maximization over $\ket{\psi_1},\ket{\psi_2}\in\calS_+\rbra*{2^\ell}$ with the maximization over $\ket{\psi}\in\calS_+\rbra*{2^{2\ell}}$.}

\begin{definition}[$k$-\textsc{Separable Stoquastic Close Image to Uniform}, $\SepStoqCIU_k$]
    \label{def:SepStoqCIU}
    Let $R$ be a polynomial-size classical reversible circuit consisting only of Toffoli, CNOT, and X gates. Suppose that $R$ acts on an $m(n)$-qubit input state of the form 
    \[\ket{\psi_1}\otimes \cdots \otimes \ket{\psi_k} \otimes\ket{0}^{\otimes m_0}\otimes\ket{+}^{\otimes m_+},\] 
    where each $\ket{\psi_j}$ is an $\ell(n)$-qubit non-negative state for every $j\in[k]$, and $k\ell+m_0+m_+=m$. Assume that $R$ has $r(n)$ designated output qubits, where $1 \leq r(n) \leq m(n)$ and $r=r_0+r_+$. Let $\Phi_R(\cdot)$ denote the quantum channel obtained by applying $R$ and then tracing out all non-output qubits.\footnote{Notably, the underlying quantum channel belongs to a subclass of \emph{completely positive trace-preserving maps} introduced and studied in~\cite{JS22}, which preserve non-negative amplitudes in both input and output states.} The promise problem $\SepStoqCIU_k[\alpha(n),\beta(n)]$ is to decide whether one of the following holds:
    \begin{itemize}
        \item \emph{\textbf{Yes}:} $\max\limits_{\ket{\psi_1},\cdots,\ket{\psi_k}\in\calS_+(2^{\ell})} \F^2\rbra*{\Phi_R\rbra*{\bigotimes_{j\in[k]}\ketbra{\psi_j}{\psi_j}}, \ketbra{0}{0}^{\otimes r_0} \totimes \ketbra{+}{+}^{\otimes r_+}} \geq \alpha(n)$. 
        \item \emph{\textbf{No}:}  $\max\limits_{\ket{\psi_1},\cdots,\ket{\psi_k}\in\calS_+(2^{\ell})} \F^2\rbra*{\Phi_R\rbra*{\bigotimes_{j\in[k]}\ketbra{\psi_j}{\psi_j}}, \ketbra{0}{0}^{\otimes r_0} \totimes \ketbra{+}{+}^{\otimes r_+}} \leq \beta(n)$. 
    \end{itemize}
    When $k=2$, we simply write this problem as $\SepStoqCIU[\alpha(n),\beta(n)]$. 
\end{definition}

We then establish our first \StoqMAk{}-complete problem: 

\begin{lemma}[$\SepStoqCIU_k$ is \StoqMAk{}-complete]
    \label{lemma:SepStoqCIU-StoqMAk-complete}
    For all efficiently computable functions $\alpha(n)$ and $\beta(n)$, and every efficiently computable positive integer-valued function $k(n)\leq \poly(n)$, such that $0 \leq \beta(n) < \alpha(n) \leq 1$ and $\alpha(n)-\beta(n)\geq 1/\!\poly(n)$, the following holds: 
    \begin{enumerate}[label={\upshape(\arabic*)}]
        \item \label{thmitem:SepStoqCIU-in-StoqMAk} $\SepStoqCIU_k[\alpha,\beta]$ is in $\StoqMA\rbra*{k,\frac{1+\alpha}{2},\frac{1+\beta}{2}}$. 
        \item \label{thmitem:SepStoqCIU-StoqMAkHard} $\SepStoqCIU_k[\alpha,\beta]$ is $\StoqMA\rbra*{k,\alpha,\beta}$-hard. 
    \end{enumerate}
\end{lemma}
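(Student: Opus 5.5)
For part \ref{thmitem:SepStoqCIU-in-StoqMAk}, I would build the verifier directly from the SWAP test of \Cref{lemma:swap-test}. Given $R$, the verifier receives the $k$ unentangled non-negative proofs $\ket{\psi_1},\dots,\ket{\psi_k}$. It appends $\ket{0}^{\otimes m_0}\ket{+}^{\otimes m_+}$ and applies $R$. The $r$ designated output qubits then carry $\sigma\coloneqq\Phi_R\rbra*{\bigotimes_j\ketbra{\psi_j}{\psi_j}}$. Next, the verifier prepares a fresh reference register $\ket{0}^{\otimes r_0}\ket{+}^{\otimes r_+}$ from its own ancillas; this is allowed, since $\ket{0}$ and $\ket{+}$ ancillas are free in a stoquastic verification circuit. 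Finally, it runs the SWAP test between the output qubits and the reference register. That test uses only Toffoli and CNOT gates, a $\ket{+}$ control, and a Hadamard-basis measurement, so the whole circuit is a stoquastic verification circuit in the sense of \Cref{def:stoq-verifier}.

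By \Cref{lemma:swap-test}, the acceptance probability is $(1+\Tr(\sigma\tau))/2$, where $\tau=\ketbra{0}{0}^{\otimes r_0}\totimes\ketbra{+}{+}^{\otimes r_+}$ is pure. By \Cref{eq:one-pure-state}, $\Tr(\sigma\tau)=\F^2(\sigma,\tau)$. Maximizing over non-negative product witnesses then gives completeness $(1+\alpha)/2$ and soundness $(1+\beta)/2$. Since each witness is restricted to non-negative states by definition, the maximum in the problem matches the maximum in \Cref{def:StoqMAk}.

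For part \ref{thmitem:SepStoqCIU-StoqMAkHard}, I start from an arbitrary verifier $V_x$ for $\StoqMA(k,\alpha,\beta)$. It consists of Toffoli, CNOT and X gates, so it is a classical reversible circuit $R\coloneqq V_x$ acting on $\ket{\psi_1}\cdots\ket{\psi_k}\ket{0}^{\otimes m_0}\ket{+}^{\otimes m_+}$. I take a single designated output qubit, namely $\sfO$, with $r=r_+=1$ and $r_0=0$. Then $\Phi_R(\cdot)$ is the reduced state of $\sfO$ after $V_x$, and its squared fidelity with $\ketbra{+}{+}$ is $\bra{+}\Phi_R(\cdot)\ket{+}$ by \Cref{eq:one-pure-state}. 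This is exactly the probability that the Hadamard-basis measurement yields $+$, i.e.\ the acceptance probability. The yes/no promises therefore transfer with the same thresholds $\alpha,\beta$, and the map $x\mapsto R$ is computable in polynomial time.

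I expect no real obstacle. The only points to check are bookkeeping. First, the reference register in part (1) must be counted among the verifier's ancillas $m_0,m_+$. Second, the gate count of the SWAP test, including the Fredkin decomposition, must stay polynomial. Third, the thresholds must satisfy the class convention $1/2\le s<c$, which holds automatically in part (1). In part (2), $\beta\ge 1/2$ is inherited from the source class.
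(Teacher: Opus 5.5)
Your proposal is correct and follows essentially the same route as the paper. For membership you apply the mixed-state SWAP test between $\Phi_R\bigl(\bigotimes_j\ketbra{\psi_j}{\psi_j}\bigr)$ and the pure reference $\ketbra{0}{0}^{\otimes r_0}\otimes\ketbra{+}{+}^{\otimes r_+}$, using $\F^2=\Tr(\rho_0\rho_1)$ when $\rho_1$ is pure; for hardness you take $R=V_x$ with the single output qubit $\sfO$, so that $\F^2(\ketbra{+}{+},\Phi_R(\cdot))$ is exactly the acceptance probability.
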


\begin{proof}
    We prove only the case where $k=2$ for simplicity, and the same argument directly extends to general $k$. 
    To establish \Cref{thmitem:SepStoqCIU-in-StoqMAk}, we simply consider the SWAP test~\cite{BCWdW01} with two quantum states 
    \[\rho_0\coloneqq \Phi_R(\ketbra{\psi_1}{\psi_1}\otimes \ketbra{\psi_2}{\psi_2}) 
    \quad\text{and}\quad
    \rho_1\coloneqq \ketbra{0}{0}^{\otimes r_0} \otimes \ketbra{+}{+}^{\otimes r_+},\] 
    which directly constitutes a \StoqMAtwo{} verifier given the witness state $\ket{\psi_1}\otimes\ket{\psi_2}$. Noting that $\rho_1$ is a pure state, it follows that $\F^2(\rho_0,\rho_1) = \Tr(\rho_0\rho_1)$. Consequently, the acceptance probability of the SWAP test for mixed states~\cite[Proposition 9]{KMY09} implies the desired threshold bound. 

    To prove \Cref{thmitem:SepStoqCIU-StoqMAkHard}, we begin with a stoquastic verifier $V_\calI$ corresponding to the promise problem $\calI\in\StoqMA_{\ell}(2)$. For any $x\in\calI$, we fix the stoquastic verifier to be $V_x \coloneqq V_\calI(x)$. 
    For any non-negative product state witness $\ket{\psi_1}\totimes \ket{\psi_2}$, the acceptance probability of $V_x$ is 
    \begin{align*}
        \Pr[V_x \text{ accepts} \ket{\psi_1}\totimes\ket{\psi_2}] 
        &= \norm*{\ketbra{+}{+}_{\sfO} V_x \rbra*{\ket{\psi_1}\ket{\psi_2}\ket{\bar{+}}\ket{\bar{0}}}}_2^2\\
        &= \Tr\rbra*{ \ketbra{+}{+}_{\sfO} \Phi_R(\ketbra{\psi_1}{\psi_1}\totimes\ketbra{\psi_2}{\psi_2}) }\\
        &= \F^2(\ketbra{+}{+},\Phi_R(\ketbra{\psi_1}{\psi_1}\totimes\ketbra{\psi_2}{\psi_2})).
    \end{align*}
    Here, the third line follows from the identity in \Cref{eq:one-pure-state}. 
    Therefore, we complete the proof by taking $\Phi_R$ to be the  channel defined by
    \[\forall \rho\in \calD_+\rbra[\big]{2^{2\ell}}, \quad \Phi_R(\rho) \coloneqq \Tr_{\overline{\sfO}} \rbra[\big]{ R \rbra*{\rho \otimes \ketbra{\bar{0}}{\bar{0}} \otimes \ketbra{\bar{+}}{\bar{+}}} R^\dagger },\] 
    where $R=V_x$ is the required classical reversible circuit.
\end{proof}

\subsubsection{Separable Reversible Circuit Distinguishability}
\label{subsubsec:SepRCD}

Next, we present our second \StoqMAk{}-complete problem $\SepRCD_k$, which is a direct generalization of~\cite[Section 4]{Liu21}: 

\begin{definition}[$k$-\textsc{Separable Reversible Circuit Distinguishability}, $\SepRCD_k$] 
    \label{def:SepRCD}
    Let $R_0$ and $R_1$ be polynomial-size classical reversible circuits consisting only of Toffoli, CNOT, and X gates. Suppose that both $R_0$ and $R_1$ act on an $m(n)$-qubit input state of the form $\ket{\psi_1}\otimes\cdots\otimes\ket{\psi_k}\otimes \ket{0}^{\otimes m_0} \otimes \ket{+}^{\otimes r}$, where each $\ket{\psi_j}$ is an $\ell(n)$-qubit non-negative state for every $j\in[k]$ and $k\ell+m_0+r=m$. With a slight abuse of notation, let $\ket{R_b}$ denote the resulting state, defined by $\ket{R_b} \coloneqq R_b \rbra*{\ket{\psi_1}\otimes\cdots\otimes\ket{\psi_k}\otimes \ket{\bar{0}} \otimes \ket{\bar{+}}}$, for each $b\in\binset$. The promise problem $\SepRCD_k[\alpha(n),\beta(n)]$ is to decide whether one of the following holds:
    \begin{itemize}
        \item \emph{\textbf{Yes}:} There exists a non-negative state $\ket{\psi_1}\otimes\cdots\otimes\ket{\psi_k}$ such that $\innerprod{R_0}{R_1} \geq \alpha(n)$. 
        \item \emph{\textbf{No}:} For all non-negative states $\ket{\psi_1}\otimes\cdots\otimes\ket{\psi_k}$, we have $\innerprod{R_0}{R_1} \leq \beta(n)$. 
    \end{itemize}
    When $k=2$, we simply write this problem as $\SepRCD[\alpha(n),\beta(n)]$, and we denote the version with detailed input-size parameters by $\SepRCD_{\alpha,\beta}(\ell,m_0,r)$. 
\end{definition}

It is worth noting that, when $k=1$, the promise problem $\SepRCD_1$ coincides with the \StoqMA{}-complete problem \RCD{} introduced in~\cite[Section 4.1]{Liu21}. 

\begin{restatable}[$\SepRCD_k$ is \StoqMAk{}-complete]{lemma}{SepRCDisComplteForStoqMA}
    \label{lemma:SepRCD-StoqMAk-complete}
    For all efficiently computable functions $\alpha(n)$ and $\beta(n)$, and every efficiently computable positive integer-valued function $k(n)\leq \poly(n)$, such that  $0 \leq \beta(n) < \alpha(n) \leq 1$ and $\alpha(n)-\beta(n)\geq 1/\!\poly(n)$, we have
    \[\SepRCD_k[\alpha,\beta] \text{ is } \StoqMA\rbra*{k,\frac{1+\alpha}{2},\frac{1+\beta}{2}}\text{-complete}. \]
    In particular, for any stoquastic verification circuit $V_x$, the hard instance is given by the reversible circuit pair $R_0 = V_x^\dagger X_{\sfO} V_x \coloneqq \Gamma_x$ and $R_1 = I$.
\end{restatable}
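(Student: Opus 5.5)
The plan is to prove membership and hardness separately, in the same way as \Cref{lemma:SepStoqCIU-StoqMAk-complete}. As there, it suffices to treat $k=2$; the argument extends verbatim to general $k$.

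\emph{Membership.} Given an instance $(R_0,R_1)$ of $\SepRCD_k[\alpha,\beta]$, the verifier applies the branch-overlap test of \Cref{lemma:branch-overlap-test} to the circuit pair $(R_0,R_1)$, with the witness $\ket{\Psi}=\ket{\psi_1}\otimes\cdots\otimes\ket{\psi_k}$. This test is a stoquastic verification circuit in the sense of \Cref{def:stoq-verifier}, and it accepts with probability $\rbra*{1+\innerprod{R_0}{R_1}}/2$.

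\begin{itemize}
    \item For yes instances, some non-negative product witness gives $\innerprod{R_0}{R_1}\geq\alpha$, so the acceptance probability is at least $(1+\alpha)/2$.
    \item For no instances, every non-negative product witness gives $\innerprod{R_0}{R_1}\leq\beta$, so the acceptance probability is at most $(1+\beta)/2$.
\end{itemize}

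Both bounds follow directly from the promise. One point needs checking: \Cref{def:StoqMAk} quantifies only over non-negative product witnesses, which matches the promise exactly. By \Cref{prop:opt-stoqSepVal-nonNeg}, allowing arbitrary product witnesses would not change the optimum either.

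\emph{Hardness.} Fix $\calI\in\StoqMA\rbra*{k,\frac{1+\alpha}{2},\frac{1+\beta}{2}}$ with verifier $V_x$. Write $\ket{\Psi_\full}=\ket{\psi_1}\cdots\ket{\psi_k}\ket{\bar 0}\ket{\bar +}$. Since $\ketbra{+}{+}_\sfO=(I+X_\sfO)/2$, the acceptance probability is
\[
\bra{\Psi_\full}V_x^\dagger\tfrac{I+X_\sfO}{2}V_x\ket{\Psi_\full}=\frac{1+\bra{\Psi_\full}\Gamma_x\ket{\Psi_\full}}{2},
\qquad\text{where } \Gamma_x\coloneqq V_x^\dagger X_\sfO V_x .
\]

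Set $R_0=\Gamma_x$ and $R_1=I$. Then $\innerprod{R_0}{R_1}=\bra{\Psi_\full}\Gamma_x^\dagger\ket{\Psi_\full}$. Because $\Gamma_x$ is a real permutation matrix, the overlap is real and equals $\bra{\Psi_\full}\Gamma_x\ket{\Psi_\full}$ (alternatively, $\Gamma_x$ is self-adjoint). Moreover, $\Gamma_x$ is a polynomial-size circuit of Toffoli, CNOT, and X gates: it consists of $V_x$, one $X$ gate, and the gate-reversed copy of $V_x$, each of these gates being self-inverse. So the acceptance probability equals $(1+\innerprod{R_0}{R_1})/2$, and the completeness and soundness thresholds $\frac{1+\alpha}{2}$ and $\frac{1+\beta}{2}$ translate exactly into the yes and no conditions $\alpha$ and $\beta$. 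The input sizes $\ell$, $m_0$, and $r=m_+$ carry over unchanged.

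The only delicate step is the conjugation identity $\ketbra{+}{+}=(I+X)/2$ together with the reality of the overlap; everything else is bookkeeping.
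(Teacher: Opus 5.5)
Your proposal is correct and follows the paper's proof essentially verbatim. For membership you apply the branch-overlap test (\Cref{lemma:branch-overlap-test}) to $(R_0,R_1)$; for hardness you use the instance $(\Gamma_x, I)$ with $\Gamma_x = V_x^\dagger X_{\sfO} V_x$, and $\ketbra{+}{+}_{\sfO} = (I+X_{\sfO})/2$ identifies the acceptance probability of $V_x$ with $\rbra*{1+\innerprod{R_0}{R_1}}/2$. The paper instead builds a new verifier $\widehat{V}_x$ from the branch-overlap test on this pair and notes it has the same acceptance probability as $V_x$, which is the same computation you carry out directly.
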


The proof follows the same strategy as the proof of~\cite[Theorem 4.1]{Liu21}, and the details are given in \Cref{subsec:omitted-parallel-repetition-StoqMAk}.

\subsection{\StoqMAk{} is closed under direct product}
\label{subsec:StoqMAk-closed-direct-product}

\begin{theorem}[\StoqMAk{} is closed under direct product]
    \label{thm:StoqMAk-closed-under-direct-product-w-loss}
    Let $\big\{\calI^{(j)}\big\}_{j\in[r]}$ be a collection of promise problems such that $\calI^{(j)} \in \StoqMA_\ell\big(k,\frac{1}{2}+\frac{a_j}{2}, \frac{1}{2}+\frac{b_j}{2}\big)$, and let $V^{(j)}$ denote the corresponding stoquastic verifier. 
    Then there exists an explicit stoquastic verifier $V'$ for $\calI' \coloneqq \calI^{(1)} \times \cdots \times \calI^{(r)}$, constructed from $V^{(1)},\cdots,V^{(r)}$, such that 
    \[\calI' \in \StoqMA_{r\ell}\bigg(k, \frac{1}{2}+\frac{1}{2^{r+1}} \cdot \prod_{j\in[r]} (1+a_j), \frac{1}{2}+\frac{1}{2^{r+1}} \cdot \prod_{j\in[r]} (1+b_j)\bigg).\]
\end{theorem}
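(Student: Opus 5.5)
The plan is to move to the $\SepRCD_k$ formulation of \Cref{lemma:SepRCD-StoqMAk-complete} and then invoke tensor-product multiplicativity (\Cref{thm:stoqSepVal-multiplicative}, case (1)) on positive semi-definite, entrywise non-negative matrices.

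For each $j\in[r]$ and input $x_j$, let $V^{(j)}$ be the verifier and define
\[ \Gamma_j \coloneqq V^{(j)\dagger} X_\sfO V^{(j)} \quad\text{and}\quad M_j \coloneqq \bra{\bar 0}\bra{\bar +}\Gamma_j\ket{\bar 0}\ket{\bar +}. \]
Then $M_j$ is Hermitian and entrywise non-negative, since $\Gamma_j$ is a permutation matrix and the ancillas are non-negative. On a non-negative product witness $\ket{\psi}$, $V^{(j)}$ accepts with probability $(1+\bra{\psi}M_j\ket{\psi})/2$. Thus the maximum acceptance probability is $(1+h_j)/2$ with $h_j=\hsepPlusM{k}{2^\ell}(M_j)$, and by \Cref{prop:opt-stoqSepVal-nonNeg} this equals $h_{\Sep}$. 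So $h_j\ge a_j$ on yes instances and $h_j\le b_j$ on no instances.

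\Cref{thm:stoqSepVal-multiplicative} cannot be applied to $M_j$ directly, because $M_j$ need not be PSD. I will therefore use the shifted matrix
\[ P_j \coloneqq \tfrac{1}{2}(I+M_j). \]
$P_j$ is entrywise non-negative (the identity adds only to the diagonal), and it is PSD because $\norm{M_j}\le 1$. By \Cref{prop:hsep-shift}, $\hsepPlus{}{}(P_j)=(1+h_j)/2$. Note that $P_j$ is exactly the acceptance operator of $V^{(j)}$.

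The construction of $V'$ is as follows. Witness $i$ is the concatenation of the $i$-th proofs of all $r$ verifiers, so it has $r\ell$ qubits. The verifier builds a reversible circuit $\Gamma'$ whose induced matrix on the witness is
\[ M' \coloneqq P_1\totimes\cdots\totimes P_r. \]
To do so, for each $j$ it introduces a fresh $\ket{+}$ ancilla $c_j$ and applies $\Gamma_j$ controlled on $c_j$. Sandwiching with $\ket{+}$ on $c_j$ yields $\tfrac12(I+\Gamma_j)$, and hence $\tfrac12(I+M_j)=P_j$ after projecting onto the ancillas. The controlled-$\Gamma_j$ uses only Toffoli, CNOT, and X gates (control adds one control per gate). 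Running the branch-overlap test (\Cref{lemma:branch-overlap-test}) on the pair $(\Gamma',I)$ gives acceptance probability $\tfrac12\bigl(1+\bra{\Psi}M'\ket{\Psi}\bigr)$ for a product witness $\ket{\Psi}$. Here the relevant overlap is real and non-negative: $\Gamma'$ is a real permutation, the full input state is non-negative, and the overlap equals $\bra{\Psi}M'\ket{\Psi}$.

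With this construction, the maximum acceptance probability of $V'$ is $\tfrac12\bigl(1+\hsepPlus{}{}(M')\bigr)$. By \Cref{thm:stoqSepVal-multiplicative}(1), applied with the prover registers grouped as $(d_1^{(1)}\cdots d_1^{(r)},\dots)$, we get
\[ \hsepPlus{}{}(M')=\prod_j \frac{1+h_j}{2}. \]
Yes instances therefore give at least $\tfrac12+\tfrac{1}{2^{r+1}}\prod_j(1+a_j)$, and no instances give at most $\tfrac12+\tfrac{1}{2^{r+1}}\prod_j(1+b_j)$.

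The main point needing care is the validity of the construction. First, controlled-$\Gamma_j$ with a $\ket{+}$ control must yield exactly $\tfrac12(I+M_j)$ after the ancilla projection. This requires the $\ket{0}$ ancillas of each $\Gamma_j$ to be separate, so that the projection factorizes across $j$. Second, in the soundness direction, the register regrouping must map product witnesses of $V'$ onto the separable structure used in \Cref{lemma:stoqSepVal-multiplicative-PSD}. This holds because the lemma allows arbitrary non-negative states on the combined registers $\sfA\sfA'$, which covers any cheating strategy that entangles the $r$ sub-proofs within a single prover.
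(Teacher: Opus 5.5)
Your proof is correct and is essentially the paper's argument: both constructions give acceptance probability $\tfrac12+\tfrac12\bra{\Psi}\bigotimes_{j} P_j\ket{\Psi}$ with $P_j=\bra{\bar{0}}\bra{\bar{+}}V^{(j)\dagger}\ketbra{+}{+}_{\sfO}V^{(j)}\ket{\bar{0}}\ket{\bar{+}}$, and soundness follows from the PSD multiplicativity lemma (\Cref{lemma:stoqSepVal-multiplicative-PSD}) just as you use it. The only difference is the circuit gadget: the paper runs the $V^{(j)}$ in parallel and SWAP-tests their $r$ output qubits against $\ket{+}^{\otimes r}$, whereas you place $\ket{+}$-controlled copies of $\Gamma_j$ inside a branch-overlap test (controlling only the middle gate, i.e.\ using $V^{(j)\dagger}\,\mathrm{CNOT}_{c_j\to\sfO}\,V^{(j)}$, avoids having to decompose controlled Toffolis).
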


\begin{proof}
    For each promise problem $\calI^{(j)}\in\StoqMA\rbra*{k, \frac{1+a_j}{2}, \frac{1+b_j}{2}}$, where $j\in[r]$, let $V^{(j)}_{x_j}$ be a corresponding stoquastic verifier with $x_j \in \calI^{(j)}$. Following the proof that $\SepStoqCIU_k \in \StoqMAk$ (\Cref{lemma:SepStoqCIU-StoqMAk-complete}\ref{thmitem:SepStoqCIU-in-StoqMAk}), we construct a new stoquastic verifier $V'_{x'}$, where $x'\coloneqq (x_1,\cdots,x_r)$, for the direct product of the \StoqMAk{} promise problems $\calI^{(1)},\cdots,\calI^{(r)}$, as illustrated in \Cref{fig:weak-conjunction-StoqMAtwo-verifiers} for the case $k=2$ and $r=3$. More precisely, for each $j\in[r]$, we execute the verification circuit $V^{(j)}_{x_j}$ on the registers $\rbra[\big]{\sfW^{(j)}, \sfA_0^{(j)}, \sfA_+^{(j)}}$, where $\sfW^{(j)}$, $\sfA_0^{(j)}$, and $\sfA_+^{(j)}$ contain $k\ell$ qubits, $m_0^{(j)}$ qubits, and $m_+^{(j)}$ qubits, respectively, in parallel and without performing the final Hadamard-basis measurement, and then apply a SWAP test between the designated output qubits of these verification circuits and the state $\ket{+}^{\otimes r}$, taking the output qubit of the SWAP test as the output qubit of $V'_{x'}$. 

\begin{figure}[ht!]
\centering
\begin{quantikz}[row sep={0.8em}, column sep=2em]
    \lstick{$\ket{+}$} & & \qw & \qw & \ctrl{1} & \meterD{\ket{+}} \\
    \lstick{$\ket{+}^{\otimes 3}$} & & \qwbundle{3} & \qw & \gate[4]{\hspace{1.5em} \mathrm{SWAP} \hspace{1.5em}} & \qwbundle{3} \\
    \lstick{$\ket{\psi_1^{(1)}}\totimes\ket{\psi_2^{(1)}}_{\sfW^{(1)}}$} & \qwbundle{2\ell} & \gate[3]{\quad V_{x_1}^{(1)} \quad} & \permute{1,4,5,2,6,7,3,8,9} & \qw & \qw \\
    \lstick{$\ket{\bar{0}}_{\sfA_0^{(1)}}$} & \qwbundle{m_0^{(1)}} & & & & \qw \\
    \lstick{$\ket{\bar{+}}_{\sfA_+^{(1)}}$} & \qwbundle{m_+^{(1)}} & & & & \qw \\
    \lstick{$\ket{\psi_1^{(2)}}\totimes\ket{\psi_2^{(2)}}_{\sfW^{(2)}}$} & \qwbundle{2\ell} & \gate[3]{\quad V_{x_2}^{(2)} \quad} & & \qwbundle{m_0^{(1)}+2\ell-1} & \qw \\
    \lstick{$\ket{\bar{0}}_{\sfA_0^{(2)}}$} & \qwbundle{m_0^{(2)}} & & & \qwbundle{m_+^{(1)}}  & \qw \\
    \lstick{$\ket{\bar{+}}_{\sfA_+^{(2)}}$} & \qwbundle{m_+^{(2)}} & & & \qwbundle{m_0^{(2)}+2\ell-1}  & \qw \\
    \lstick{$\ket{\psi_1^{(3)}}\totimes\ket{\psi_2^{(3)}}_{\sfW^{(3)}}$} & \qwbundle{2\ell} & \gate[3]{\quad V_{x_3}^{(3)}\quad} & & \qwbundle{m_+^{(2)}} & \qw \\
    \lstick{$\ket{\bar{0}}_{\sfA_0^{(3)}}$} & \qwbundle{m_0^{(3)}} & & & \qwbundle{m_0^{(3)}+2\ell-1} & \qw \\
    \lstick{$\ket{\bar{+}}_{\sfA_+^{(3)}}$} & \qwbundle{m_+^{(3)}} & & & \qwbundle{m_+^{(3)}} & \qw
\end{quantikz}
\caption{Weak conjunction of stoquastic verifiers $V_{x_1}^{(1)}$, $V_{x_2}^{(2)}$, and $V_{x_3}^{(3)}$. }
\label{fig:weak-conjunction-StoqMAtwo-verifiers}
\end{figure}
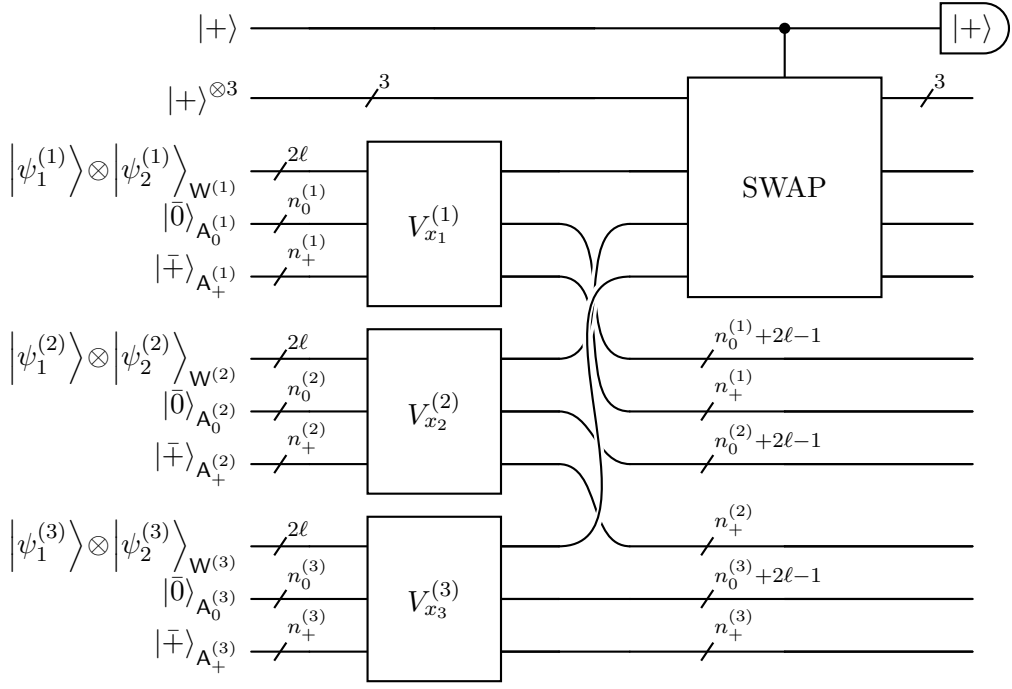

    We now analyze the maximum acceptance probability of $V'_{x'}$:
    \begin{itemize}
        \item For \emph{yes} instances, the witness state has an additional tensor-product structure: 
        \begin{equation}
            \label{eq:witness-product-form}
            \forall i \in [k], \quad \ket{\psi'_i} = \ket{\psi_i^{(1)}} \otimes \cdots \otimes \ket{\psi_i^{(r)}}.
        \end{equation}
        Consequently, by the same argument as in the proof of \Cref{lemma:SepStoqCIU-StoqMAk-complete}\ref{thmitem:SepStoqCIU-in-StoqMAk}, we obtain
        \begin{align*}
            \Pr\sbra*{V'_{x'} \text{ accepts } \ket{\psi'_1}\totimes\cdots\totimes \ket{\psi'_k}} &= \frac{1}{2} + \frac{1}{2} \prod_{j\in[r]} \Pr\sbra*{V^{(j)}_{x_j} \text{ accepts } \ket{\psi_1^{(j)}}\totimes\cdots\totimes\ket{\psi_k^{(j)}}}\\ 
            &\geq \frac{1}{2} + \frac{1}{2} \cdot \prod_{j\in[r]} \rbra*{\frac{1}{2} + \frac{a_j}{2}}.
        \end{align*}
        
        \item For \emph{no} instances, the quantum states $\ket{\psi'_1},\cdots,\ket{\psi'_k}$ may be highly entangled. We therefore complete the soundness analysis using the multiplicative property of stoquastic separable values (\Cref{lemma:stoqSepVal-multiplicative-PSD}). To this end, for each $j\in[r]$, we define the positive semi-definite matrix $M^{(j)}_{x_j}$ corresponding to the verification circuit $V_{x_j}^{(j)}$:
        \[M^{(j)}_{x_j} \coloneqq \bra{\bar{0}}_{\sfA_0^{(j)}}\bra{\bar{+}}_{\sfA_+^{(j)}} {V_{x_j}^{(j)}}^\dagger \ketbra{+}{+}_{\sfO^{(j)}} V_{x_j}^{(j)} \ket{\bar{0}}_{\sfA_0^{(j)}}\ket{\bar{+}}_{\sfA_+^{(j)}}.\]

        Let $\omega(V_x)$ denote the maximum acceptance probability of a \StoqMAk{} verifier $V_{x}$, namely, 
        \[\omega(V_x) \coloneqq \max_{\ket{\psi_1},\cdots,\ket{\psi_k}} \Pr\sbra*{V_x\text{ accepts }\ket{\psi_1}\otimes\cdots\otimes\ket{\psi_k}}.\] 
        Then, the maximum acceptance probability of $V'_{x'}$ can be expressed as:
        \begin{align*}
            \omega\rbra*{V'_{x'}}
            &= \frac{1}{2} + \frac{1}{2} \hsepPlus{2^{r\ell}}{\cdots,2^{r\ell}}\rbra*{ M^{(1)}_{x_1} \otimes\cdots\otimes M^{(r)}_{x_r} }\\
            &\leq \frac{1}{2} + \frac{1}{2} \prod_{j\in[r]} \hsepPlus{2^\ell}{\cdots,2^\ell} \rbra*{M^{(j)}_{x_j}}\\
            &= \frac{1}{2} + \frac{1}{2} \prod_{j\in[r]} \omega\rbra[\big]{V^{(j)}_{x_j}}\\
            &\leq \frac{1}{2} + \frac{1}{2} \cdot \prod_{j\in[r]} \rbra*{\frac{1}{2} + \frac{b_j}{2}}.
        \end{align*}
    \end{itemize}
    This completes the proof, where the second line follows from \Cref{lemma:stoqSepVal-multiplicative-PSD}. 
\end{proof}

%%%%%%%%%%%%%%%%%%%%%%%%%%%%%%%%%%%%%%%%%%%%%%%%%%%%%%%%%%%
%%%%%%%%%%%%%%%%%%%%%%%%%%%%%%%%%%%%%%%%%%%%%%%%%%%%%%%%%%%
%%%%%%%%%%%%%%%%%%%%%%%%%%%%%%%%%%%%%%%%%%%%%%%%%%%%%%%%%%%

\section{Robustness of \StoqMAtwo{}}

In this section, we study the robustness of the definition of \StoqMAtwo{} by establishing two basic properties analogous to those of \QMAtwo{}, particularly \emph{prover compression}~\cite{HM13} and \emph{witness symmetrization}~\cite{ABDFS09}. Notably, while \StoqMAtwo{} admits only parallel repetition (\Cref{thm:StoqMAk-w-parallel-repetition-loss}), rather than full error reduction as in \QMAtwo{}, our results imply that \StoqMAtwo{} is \emph{well-defined} with respect to these properties, \emph{up to a quadratic gap loss}.

\vspace{1em}
When there are more than two provers, we establish \emph{prover compression} for \StoqMAk{} by stoquastizing the well-known product test. This technique allows one to ``compress'' any $\StoqMAk{}$ protocol to a $\StoqMAtwo{}$ protocol while preserving the \emph{constant} gap:
\begin{theorem}[Prover compression for \StoqMAk{} up to a quadratic gap loss, informal version of \Cref{thm:stoq-k-to-2}]
\label{thm:StoqMAk-in-StoqMAtwo}
Let $k(n)$ be a polynomially bounded integer-valued function satisfying that $k(n) > 2$. Let $\epsilon(n)$ and $\Delta(n)$ be functions such that $0 \leq \epsilon(n) < 1/2$ and $\Delta(n) > 0$. Then the following inclusion holds:  
    \[\StoqMA_{\ell}\ssbra*{k,1-\epsilon,\Delta} \subseteq \StoqMA_{k\ell}\ssbra*{2,1-O(\epsilon \Delta),\Omega(\Delta^2)}\]
\end{theorem}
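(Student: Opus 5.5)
We stoquastize the Harrow--Montanaro construction~\cite{HM13}. The $\StoqMAtwo$ verifier asks each of the two provers for a $k\ell$-qubit non-negative state, which in the honest case is $\ket{\psi_1}\otimes\cdots\otimes\ket{\psi_k}$ on both sides. With probability $p$ (to be optimized; this is the ``branch occurrence probability'' mentioned in the introduction), the verifier runs the \emph{stoquastic product test}. This consists of $k$ SWAP tests (\Cref{lemma:swap-test}) between corresponding $\ell$-qubit subsystems of the two proofs, combined through a weak conjunction as in the proof of \Cref{thm:StoqMAk-closed-under-direct-product-w-loss}: we SWAP-test the vector of $k$ output qubits against $\ket{+}^{\otimes k}$, which is a stoquastic operation. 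With probability $1-p$, the verifier instead runs the original $\StoqMAk$ verifier $V_x$ on the first proof, splitting it into $k$ registers of $\ell$ qubits. The random choice of branch is implemented coherently by a $\ket{+}$ ancilla controlling the two circuits. We express both branches via the branch-overlap form of \Cref{lemma:SepRCD-StoqMAk-complete}, so that the acceptance probability is $\frac{1}{2}+\frac{1}{2}\rbra[\big]{p\cdot\mathrm{prod}+(1-p)\cdot\mathrm{ver}}$, where $\mathrm{prod}$ and $\mathrm{ver}$ are the overlap values of the two branches.

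\emph{Completeness.} On the honest witness every SWAP test accepts with certainty, so the product-test branch has overlap value $1$. The verification branch has value $2(1-\epsilon)-1=1-2\epsilon$. The total acceptance probability is therefore $1-(1-p)\epsilon$, and with $p=1-\Theta(\Delta)$ the completeness error becomes $O(\epsilon\Delta)$.

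\emph{Soundness.} Fix non-negative proofs $\ket{\phi_1}$ and $\ket{\phi_2}$, and let $\delta \coloneqq 1-\max_{\text{product }\ket{\theta}}\abs{\innerprod{\theta}{\phi_1}}^2$ measure how far $\ket{\phi_1}$ is from a product across the $k$ subsystems. For two copies of a pure state, the product test rejects with probability $\Omega(\delta)$; by the improved analysis of~\cite{SW22}, this holds with a $k$-independent constant. We need the same bound in the case where the two proofs differ. Here the standard trick applies: the acceptance probability of the product test is bounded by the average of its values on $\ket{\phi_1}^{\otimes 2}$ and $\ket{\phi_2}^{\otimes 2}$, because the test is a convex combination of projectors onto symmetric subspaces. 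Since everything is non-negative, we also need to check that the SWAP tests are exactly the ones analyzed there. In the verification branch, by \Cref{lemma:meas-bound-td} and \Cref{eq:pure-pure-Fuchs-vanGraaf}, $V_x$ accepts with probability at most $s+O(\sqrt{\delta})$; the closest product state may be taken non-negative by the argument of \Cref{prop:opt-stoqSepVal-nonNeg}. The total acceptance probability is thus at most $p\rbra{1-\Omega(\delta)}+(1-p)\rbra{s+O(\sqrt\delta)}$. We maximize this over $\delta$: the worst case is $\sqrt\delta\approx(1-p)/p$, which contributes an extra term of order $(1-p)^2/p$. Taking $1-p=c\Delta$ for a small constant $c$ then gives soundness $1-(1-p)(1-s)+O(\Delta^2)$. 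The resulting gap is $(1-p)\Delta-O(c^2\Delta^2)=\Omega(\Delta^2)$, while the completeness error is $O(\epsilon\Delta)$.

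\emph{Main obstacle.} The hardest step is the product-test soundness with two \emph{different} proofs, combined with the non-negativity constraints. We also have to confirm that the mixed-branch bookkeeping in the SWAP-test normalization $(1+x)/2$ does not erode the gap further. To handle the first issue, we would first try to reduce to the symmetric case via the convexity argument above, and only then invoke~\cite{SW22}.
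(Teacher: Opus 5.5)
Your proposal is correct and follows essentially the paper's proof: a $(1-\lambda,\lambda)$ mixture with $\lambda=\Theta(\Delta)$ of a stoquastic product test on the two $k\ell$-qubit proofs and the original verifier run on the first proof. Soundness then comes from the same ingredients: the Soleimanifar--Wright bound, the Cauchy--Schwarz/AM--GM averaging $\Tr(\rho_S\sigma_S)\le\frac{1}{2}\bigl(\Tr(\rho_S^2)+\Tr(\sigma_S^2)\bigr)$ for distinct proofs, the trace-distance bound $s+\sqrt{\delta}$ on the verification branch, and the same one-variable quadratic optimization. Your weak-conjunction implementation of the product test differs only cosmetically from the paper's branch-overlap test with $\ket{+}^{\otimes k}$-controlled SWAPs: since $\prod_{j}\frac{I+F_j}{2}=2^{-k}\sum_{S}F_S$, it has the identical acceptance probability $\frac{1}{2}+\frac{1}{2}P_{\mathrm{prod}}(\rho,\sigma)$, which also settles the bookkeeping concern you flagged.
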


When the completeness error $\epsilon(n)$ is \emph{negligible}, combining prover compression with parallel repetition for \StoqMAtwo{} (\Cref{thm:StoqMAk-w-parallel-repetition-loss}, more specifically \Cref{corr:error-reduction-StoqMAk-negl}) yields prover compression \emph{without} any gap loss, and thus achieves a promise gap exponentially close to $1/2$:\footnote{One subtlety behind \Cref{thm:StoqMAk-in-StoqMAtwo} concerns the dyadic truncation error, as explained in \Cref{footnote:dyadic-truncation-error}, which may prevent perfect completeness being preserved in the resulting \StoqMAtwo{} verification circuit. Nevertheless, together with parallel repetition for \StoqMAtwo{} (\Cref{thm:StoqMAk-w-parallel-repetition-loss}), we can simply choose $\lambda=1/2$, carry out the remaining construction, and amplify the promise gap later.}
\begin{corollary}[Prover compression for $\StoqMAk_{1-\negl}$]
\label{corr:StoqMAk-eq-StoqMAtwo-small-yesError}
Let $k(n)$ be a polynomially bounded integer-valued function satisfying $k(n) > 2$. 
Let $\epsilon(n)$ be a negligible non-negative function, and let $\Delta(n)$ be a positive function that is at least $1/\poly(n)$. Then the following bounds hold:
\begin{enumerate}[label={\upshape(\arabic*)}]
    \item There exist polynomially bounded integer-valued functions $r(n)$ and $q(n)$ such that 
    \begin{equation}
        \label{eq:prover-compression-negl-general}
        \StoqMA_\ell \ssbra*{k, 1-\epsilon, \Delta} \subseteq \StoqMA_{rk\ell} \ssbra*{2, 1-\epsilon', \frac{1}{2}-2^{-q}}.
    \end{equation}
    Here, $\epsilon'(n)$ is also a negligible non-negative function. 
    In particular, by setting $r_\star = \ceil*{1/\Delta}$, letting $\epsilon_\star \leq \epsilon/(2\Delta)$ and $\ell_\star(n) = O(k\ell/\Delta)$, it holds that
    \begin{equation}
        \label{eq:prover-compression-negl-linearGap}
        \StoqMA_\ell\ssbra{k,1-\epsilon,\Delta} \subseteq \StoqMA_{\ell_\star}\ssbra*{2,1-\epsilon_\star,\Omega\rbra{\Delta}}.
    \end{equation}
    \item \label{thmitem:prover-compression-negl-lengthPreserved}Without further parallel repetition, it holds that
    \begin{equation}
        \label{eq:prover-compression-negl-lengthPreserved}
        \StoqMA_\ell\ssbra{k,1-\epsilon,\Delta} \subseteq \StoqMA_{k\ell}\ssbra*{2,1-\epsilon/2,\Omega\rbra{\Delta^2}}.
    \end{equation}
\end{enumerate}
\end{corollary}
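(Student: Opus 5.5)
The plan is to combine \Cref{thm:StoqMAk-in-StoqMAtwo} with the parallel-repetition machinery of \Cref{thm:StoqMAk-w-parallel-repetition-loss} and \Cref{corr:error-reduction-StoqMAk-negl}, choosing the order of the two operations according to which bound we want.

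For part~(1), we first apply parallel repetition at the level of $k$ provers and only then compress. Write the original parameters as $c = \frac12+\frac{a}{2}$ and $s=\frac12+\frac{b}{2}$. Then $a = 1-2\epsilon$ and $b = a - 2\Delta$, so $1-b \geq 2\Delta \geq 1/\poly(n)$. By \Cref{corr:error-reduction-StoqMAk-negl} with $r(n) \ge \ceil{n/\log(2/(1+b))} = O(n/\Delta)$, we obtain a $\StoqMA_{r\ell}\ssbra{k,\cdot,\cdot}$ protocol with negligible completeness error $\delta'/2$ and soundness $\frac12+\delta''/2$, where $\delta''\le 2^{-n}$. This gives a gap essentially $1/2$.

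We then apply prover compression. The version wanted here is the refined one behind \Cref{thm:StoqMAk-in-StoqMAtwo} (\Cref{thm:stoq-k-to-2}), taken with branch parameter $\lambda=1/2$, as the footnote suggests. In that form a constant gap $\Delta_1\approx 1/2$ turns into a constant gap $\Omega(1)$, and the completeness error stays negligible, of order $O(\delta')$. Proof length becomes $k\cdot r\ell$. If the resulting constant gap is below $\frac12-2^{-q}$, we run \Cref{corr:error-reduction-StoqMAk-negl} once more, now on the two-prover protocol with $O(q)$ repetitions. Since repetition of this kind only needs $O(q)$ copies for constant $b$, this keeps the length at $rk\ell$ up to a polynomial factor absorbed into $r$ and preserves a negligible completeness error.

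For the linear-gap refinement \eqref{eq:prover-compression-negl-linearGap}, we instead take only $r_\star=\ceil{1/\Delta}$ repetitions. Then $\delta''=((1+b)/2)^{r_\star}\le (1-\Delta)^{1/\Delta}\le e^{-1}$ and the completeness is $1-O(r_\star\epsilon)$, so the gap is constant and the length is $O(\ell/\Delta)$ per prover. Compression then gives gap $\Omega(1)\ge\Omega(\Delta)$ and length $O(k\ell/\Delta)$. The completeness-error bookkeeping, roughly $\epsilon r_\star\cdot O(\text{constant gap})$, should match the stated $\epsilon_\star$ after tracking constants.

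Part~(2) is a direct application of \Cref{thm:StoqMAk-in-StoqMAtwo} with no repetition. The only point to check is that its completeness error $O(\epsilon\Delta)$ can be made at most $\epsilon/2$. We would do this by choosing the branch probability in the product-test mixture proportional to $\Delta$ with a small enough constant. The resulting completeness error is then a convex combination: a $(1-\lambda)$-fraction from the original test, weighted by its error $\epsilon$ and suppressed by the factor coming from the $\lambda$-mixing, plus the product-test branch, which is perfect on honest product witnesses.

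The main obstacle is the completeness of the product-test branch. The dyadic truncation of the branch probability, together with the exact acceptance of honest symmetric witnesses by the stoquastized product test, must not introduce non-negligible completeness error. We would handle this by fixing $\lambda=1/2$, which is exactly dyadic, and deferring all gap tuning to the parallel repetition.
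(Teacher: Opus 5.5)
Your plan is correct, but for part~(1) you reverse the paper's order of operations.

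The paper compresses first, with the exactly dyadic $\lambda=1/2$, and re-derives soundness from \Cref{eq:prove-compression-bound}. The worst case is $x=1-s=\epsilon+\Delta$, giving $\hat c=1-\epsilon/2$ and $\hat s=1-\frac{\gamma}{2}(\epsilon+\Delta)^2$. The gap is therefore $\frac12\bigl(\gamma(\epsilon+\Delta)^2-\epsilon\bigr)=\Theta(\Delta^2)$, because $\epsilon$ is negligible. That single step is part~(2). The paper then repeats the \emph{two-prover} protocol $r_\star=\lceil 1/\Delta\rceil$ times, lifting $\Theta(\Delta^2)$ to $\Omega(\Delta)$ with completeness error $r_\star\epsilon/4\le\epsilon/(2\Delta)$.

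You instead amplify at the $k$-prover level, which is legitimate since \Cref{thm:StoqMAk-w-parallel-repetition-loss} holds for every $k$. You then pay the quadratic loss only on an already constant gap, and this buys something. After $r_\star$ repetitions the soundness is at most $\frac12+\frac{1}{2e}$. Compression then yields a \emph{constant} gap at the same length $O(k\ell/\Delta)$, with completeness error again at most $r_\star\epsilon/4$. That is stronger than the stated $\Omega(\Delta)$. For the same reason, your $O(n/\Delta)$ repetitions before compressing in the general item are more than needed. $O(1/\Delta)$ suffices, since compression caps the gap at a constant anyway. The paper's order, in exchange, obtains part~(2) and the linear-gap bound from one compression step with an explicit gap formula.

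There are some loose ends.
\begin{itemize}
\item \Cref{thm:stoq-k-to-2} is stated for $\lambda=\gamma\Delta$. Your claim that compression with $\lambda=1/2$ sends a constant gap to a constant gap with negligible completeness error therefore needs the computation above. It does hold, because $\frac12\bigl(\gamma(1-s_1)^2-\epsilon_1\bigr)$ is a positive constant when $1-s_1$ is constant and $\epsilon_1$ is negligible.
\item In part~(2), the branch running $V$ is the one that must get the small probability $\lambda=\gamma\Delta$, so that the completeness error is $\lambda\epsilon\le\epsilon/2$. As written, you give the original test weight $1-\lambda$. That would leave an error of about $\epsilon$, and \Cref{eq:prove-compression-bound} would then only guarantee a gap of order $\Delta^3$.
\item You also switch between $\lambda\propto\Delta$ and $\lambda=1/2$ in part~(2). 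Both work; the latter is the paper's route and gives completeness exactly $1-\epsilon/2$.
\item The dyadic issue is not a genuine obstacle. Honest witnesses pass the product-test branch with probability $1$ for every dyadic $\lambda'$, so completeness is $1-\lambda'\epsilon$. Any power of two $\lambda'\in[\gamma\Delta/2,\gamma\Delta]$ keeps the soundness gap at least $\gamma\Delta^2/4$.
\end{itemize}
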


\begin{proof}
    For simplicity, we only establish \Cref{eq:prover-compression-negl-linearGap,eq:prover-compression-negl-lengthPreserved}, while \Cref{eq:prover-compression-negl-general} follows from similar reasoning. To this end, apply prover compression (\Cref{thm:stoq-k-to-2}) with the exactly dyadic choice $\lambda=1/2$, which increases the proof length from $\ell$ to $k\ell$. Writing $c=1-\epsilon$, $s=c-\Delta$, and $\gamma=c_{\mathrm{prod}}/4=1/12$, both cases of \Cref{eq:prove-compression-bound} are minimized at $x=1-s=\epsilon+\Delta$, giving the compressed verifier completeness and soundness thresholds 
    \[ \hat{c} = 1 - \frac{\epsilon}{2} \quad\text{and}\quad \hat{s} = 1 - \frac{\gamma}{2} (\epsilon+\Delta)^2 \geq 1 - 2\gamma\Delta^2,\] 
    which yields the promise gap $\hat{\Delta}=\hat{c}-\hat{s}= \rbra*{\gamma(\epsilon+\Delta)^2-\epsilon}/2$.
    Since $\epsilon$ is negligible and $\Delta\geq 1/\poly(n)$, we obtain $\hat{\Delta} \geq \gamma\Delta^2/4$ for all sufficiently large $n$.
    Without parallel repetition, using $\epsilon=o(\Delta^2)$, we have $\hat{\Delta} = \rbra[\big]{ \frac{\gamma}{2}+o(1) } \Delta^2 = \Theta(\Delta^2)$, which proves \Cref{eq:prover-compression-negl-lengthPreserved}.

    Now applying parallel repetition for \StoqMAk{} (\Cref{thm:StoqMAk-w-parallel-repetition-loss}) with $r_{\star} = \ceil*{1/\Delta}$, which increases the proof length further to $\ell_\star = r_\star k\ell$, we obtain the resulting gap
    \[ \Delta_\star = \frac{\hat{c}^{r_\star} - \hat{s}^{r_\star}}{2}
    \geq \frac{\hat{\Delta}}{2} \cdot r_\star \cdot  \hat{s}^{r_{\star} -1}
    \geq \frac{\gamma\Delta^2}{8} \cdot \frac{1}{\Delta} \cdot \rbra{1-2\gamma\Delta} \geq \frac{11\Delta}{1152}.\]
    Here, the first inequality follows from $x^k-y^k=(x-y)\sum_{j=0}^{k-1}x^{k-1-j}y^j$ for all integers $k\geq 2$, and the second inequality uses Bernoulli's inequality, $(1-a)^r \geq 1-ra$ for all $a\in[0,1]$. 
    Finally, the resulting completeness error $\epsilon_\star = \rbra*{ 1 - \hat{c}^{r_\star} }/2 \leq r_\star \epsilon /4 \leq \epsilon / (2\Delta)$. 
\end{proof}

\vspace{1em}
In addition to prover compression, we also prove \emph{witness symmetrization} for \StoqMAk{}, which can be viewed as a closure property under \emph{symmetrization} of the product witness states. More specifically, we define a symmetrized version of \StoqMAk{}, denoted by \SymStoqMAk{}, in which all $k$ factors (``proofs'') of the witness state are promised to be \emph{identical}, as stated in \Cref{def:SymStoqMAk}. This symmetric variant is analogous to the class $\textsf{SymQMA}(k)$ introduced in~\cite[Section 4.4]{ABDFS09}.

\begin{definition}[\SymStoqMAk{}]
\label{def:SymStoqMAk}
A promise problem $\calI = (\calI_{\yes}, \calI_{\no}) \in$ $\SymStoqMA_\ell(k,c,s)$ if there exists a stoquastic verifier $V_\calI$ such that for every input $x \in \calI$, the verifier is specified by an $n$-qubit stoquastic verification circuit $V_{\calI}(x)$, as in \Cref{def:StoqMAk}, such that the following hold for efficiently computable functions $c(n)$, $s(n)$, and $\ell(n)$: 
\begin{description}
    \item[Completeness. ] If $x\in \calI_{\yes}$, then there exists an $\ell(n)$-qubit non-negative state $\ket{\psi}$ such that 
    \[\Pr\sbra*{V_{\calI}(x) \text{ accepts } \ket{\psi}^{\otimes k}} \geq c(n).\] 
    \item[Soundness. ] If $x \in \calI_{\no}$, then for every $\ell(n)$-qubit non-negative state $\ket{\psi}$, we have
    \[\Pr\sbra*{V_{\calI}(x) \text{ accepts } \ket{\psi}^{\otimes k}} \leq s(n).\]
\end{description}
Furthermore, $c(n)$ and $s(n)$ satisfy $1/2 \leq s(n) < c(n) \leq 1$ and $c(n)-s(n) \geq 1/\poly(n)$. For convenience, we write the stoquastic verification circuit $V_{\calI}(x)$ as $V_x$. 
\end{definition}

To establish the equivalence between \StoqMAk{} and \SymStoqMAk{} up to a quadratic gap loss, we first show that \SymStoqMAk{} can simulate \StoqMAk{} using a new \emph{length-efficient} symmetrization, and then prove that $\StoqMAk{}$ can simulate \SymStoqMAk{} by stoquastizing the projector onto the symmetric subspace: 

\begin{theorem}[Witness symmetrization for \StoqMAk{} up to quadratic gap loss] 
\label{thm:StoqMAk-eq-SymStoqMAk}
Let $k(n)$ be a polynomially bounded integer-valued function satisfying that $k(n) \geq 2$. Let $\epsilon(n)$ and $\Delta(n)$ be functions such that $0 \leq \epsilon(n) < 1/2$ and $\Delta(n) > 0$. Then the following inclusions hold: 
\begin{enumerate}[label={\upshape(\arabic*)}]
    \item \label{thmitem:StoqMA-in-SymStoqMA-ssbra} $\displaystyle\StoqMA_\ell\ssbra*{k,1-\epsilon,\Delta} \subseteq \SymStoqMA_{O(\ell\log k+\log^2 k)}\ssbra*{k,1-\epsilon-\Delta/2,\Delta/2}.$
    \item \label{thmitem:SymStoqMA-in-StoqMA-ssbra} $\displaystyle\SymStoqMA_\ell\ssbra*{k,1-\epsilon,\Delta} \subseteq \StoqMA_\ell\ssbra*{k, 1-O(\epsilon\Delta),\Omega(\Delta^2)}.$
\end{enumerate}
\end{theorem}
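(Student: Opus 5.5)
The plan is to prove the two inclusions separately. Part \ref{thmitem:StoqMA-in-SymStoqMA-ssbra} uses the coupon-collector labeled-bundle construction sketched in the introduction. Part \ref{thmitem:SymStoqMA-in-StoqMA-ssbra} is a stoquastic product/symmetry test mixed with the original verifier. Throughout, every verifier will be built as a branch-overlap test on a pair $(\Gamma,I)$ (\Cref{lemma:branch-overlap-test}, \Cref{lemma:SepRCD-StoqMAk-complete}). The reason is that a reversible $\Gamma$ which \emph{never modifies} certain label registers makes the acceptance probability decompose as a convex combination over computational-basis values of those labels. This decomposition is the main analytical tool.

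\textbf{Part (1).} Let $V_x$ be the $\StoqMAk$ verifier with completeness $c=1-\epsilon$ and soundness $s=c-\Delta$.
\begin{enumerate}[label={\upshape(\alph*)}]
\item The symmetric witness is $\ket{\phi}^{\otimes k}$ with $\ket{\phi}=\rbra[\big]{\frac{1}{\sqrt k}\sum_{j\in[k]}\ket{j}\ket{\psi_j}}^{\otimes r}$, where $r=O(\log k)$. Each proof then has length $r(\ell+\lceil\log k\rceil)=O(\ell\log k+\log^2 k)$.
\item The reversible circuit $\Gamma$ acts as follows. It reads all $kr$ labels, which it treats as read-only. It then deterministically decides whether there is a system of distinct representatives, namely an injective assignment $j\mapsto$ (bundle $i_j$, copy) whose label equals $j$. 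This is bipartite matching, computable reversibly with garbage uncomputed. If such an assignment exists, $\Gamma$ routes the chosen registers into $V_x^\dagger X_\sfO V_x$ and routes them back. Otherwise it applies a fixed dummy gadget on fresh ancillas whose overlap equals $2s-1$, so the gadget accepts with probability $s$. For this gadget I would use a $\ket{+}$-controlled X on a fresh $\ket{+}$-type register, with $s$ rounded dyadically, absorbing the rounding error.
\item Since labels are never changed, for any non-negative witness the acceptance probability is $\sum_{L}q(L)\,\mathrm{acc}(L)$, where $q$ is the label distribution. For a label string $L$ with a matching, the routed registers come from distinct bundles of $\ket{\phi}^{\otimes k}$. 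Conditioned on $L$, the bundles remain in a product state, so the routed input is a product of (possibly mixed) $\ell$-qubit non-negative states. By convexity and \Cref{prop:opt-stoqSepVal-nonNeg}, this gives $\mathrm{acc}(L)\le s$ in the no case. Unmatched strings accept with probability exactly $s$, so soundness $s$ holds for every cheating $\ket{\phi}$.
\item In the yes case, the labels are i.i.d.\ uniform and each routed copy with label $j$ is exactly $\ket{\psi_j}$. Hence the acceptance probability is $s+\Pr[\text{match}]\cdot\Delta$.
\item A union bound over Hall violators, for a random bipartite graph with $k$ left vertices of degree $r=C\log k$ into $[k]$, gives $\Pr[\text{match}]\ge 1/2$. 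This yields completeness $\ge s+\Delta/2=1-\epsilon-\Delta/2$ and gap $\Delta/2$.
\end{enumerate}
The main obstacle in this part is the matching-probability bound; I would first try a standard Hall-condition counting argument.

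\textbf{Part (2).} Given the $\SymStoqMAk$ verifier $V_x$, the new $\StoqMAk$ verifier receives $\ket{\psi_1}\otimes\cdots\otimes\ket{\psi_k}$.
\begin{enumerate}[label={\upshape(\alph*)}]
\item With probability $\lambda$ it runs a stoquastic symmetry test, and with probability $1-\lambda$ it runs $V_x$. The mixing is implemented by a $\ket{+}$-controlled choice inside $\Gamma$, with $\lambda$ dyadic.
\item The symmetry test is a branch-overlap test with $R_0=I$ and $R_1$ a uniformly random permutation $\pi\in S_k$ of the $k$ registers. The permutation is selected by $\ket{+}$ ancillas through a reversible sampler, for instance $k-1$ controlled transpositions with dyadic approximations of the uniform choices. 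Its acceptance probability is $\frac12+\frac12\,\E_\pi\prod_i\innerprod{\psi_i}{\psi_{\pi(i)}}$.
\item Completeness: taking all $\ket{\psi_i}=\ket{\psi}$ passes the test with probability $1$, so the acceptance probability is $\ge 1-(1-\lambda)\epsilon$.
\item Soundness: let $\delta=1-\E_\pi\prod_i\innerprod{\psi_i}{\psi_{\pi(i)}}$. The test accepts with probability $1-\delta/2$. By comparing with the transposition-averaged overlaps, as in the product-test analysis, a small $\delta$ forces every $\ket{\psi_i}$ to be $O(\sqrt{\delta})$-close to a common $\ket{\psi}$; I would take $\ket{\psi}=\ket{\psi_1}$. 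By \Cref{eq:pure-pure-Fuchs-vanGraaf} and \Cref{lemma:meas-bound-td}, $V_x$ then accepts with probability at most $s+O(\sqrt{\delta})$.
\item Optimizing $\lambda(1-\delta/2)+(1-\lambda)\min\{1,\,s+O(\sqrt\delta)\}$ over $\delta$ shows that the worst case is $\delta=\Theta(\Delta^2)$. Choosing $\lambda=\Theta(\Delta)$ then gives soundness $\le 1-\Omega(\Delta)\cdot(\text{stuff})$, with gap $\Omega(\Delta^2)$ and completeness error $O(\epsilon\Delta)$ after rebalancing, in the same way as the calculation following \Cref{thm:StoqMAk-in-StoqMAtwo}.
\end{enumerate}
The delicate step in this part is bounding the distance to a common state with dependence independent of $k$. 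If that fails, I would fall back on the $k$-wise permutation projector identity $\Tr(\Pi_{\sym}\rho)$ together with a fidelity argument.
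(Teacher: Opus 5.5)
\textbf{Part (1).} This is essentially the paper's proof of \Cref{thm:StoqMAk-in-SymStoqMAk}: coherent labeled bundles, a Hall-type matching bound, routing through $V^\dagger X_{\sfO} V$, a dummy branch, and branch-wise convexity for soundness. One small correction: an $X$ on a $\ket{+}$ register acts trivially. The dummy therefore has to flip a $\ket{0}$ ancilla under a dyadic comparator on several $\ket{+}$ qubits. It should be rounded so that it accepts with probability at most $s$. With $\Pr[\text{match}]\ge 1/2$ exactly, that rounding leaves no slack, so prove $3/4$ as in \Cref{lemma:randomized-matching-bound}.

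\textbf{Part (2), weights.} Your weights are swapped relative to the parameters you claim. You put the symmetry test at weight $\lambda=\Theta(\Delta)$ and $V_x$ at weight $1-\lambda$. Then the completeness error is $(1-\lambda)\epsilon\approx\epsilon$, not $O(\epsilon\Delta)$. The soundness bound you optimize, $\lambda\delta/2+(1-\lambda)\bigl(c-\min\{1,s+C\sqrt{\delta}\}\bigr)$, equals $\lambda(\epsilon+\Delta)^2/(2C^2)-(1-\lambda)\epsilon$ at $\sqrt{\delta}=(1-s)/C$. That is negative as soon as $\epsilon\gg\Delta^3$. The symmetry test needs weight $1-\Theta(\Delta)$ and $V_x$ weight $\Theta(\Delta)$; \Cref{thm:sym-to-stoq-k} runs $V$ with probability $\Delta/8$.

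\textbf{Part (2), the genuine gap: step (d).} To bound $V_x$'s acceptance on $\ket{\Phi}=\bigotimes_i\ket{\psi_i}$ by $s+O(\sqrt{\delta})$, you need $\td(\ket{\Phi},\ket{\psi}^{\otimes k})=O(\sqrt{\delta})$ for a single non-negative $\ket{\psi}$, with a constant independent of $k$. Closeness of each factor does not give this, and your choice $\ket{\psi}=\ket{\psi_1}$ provably loses a factor of $k$.

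For example, take $\ket{\psi_1}=\ket{a}$ and $\ket{\psi_2}=\cdots=\ket{\psi_k}=\ket{b}$ with $\innerprod{a}{b}^2=1-\theta$. Then $\delta=(1-1/k)\theta$, but $1-\abs{\innerprod{\Phi}{\psi_1^{\otimes k}}}^2=1-(1-\theta)^{k-1}\approx k\delta$. The pairwise bound $\Pi_{\sym}\preceq\Pi^{(1i)}_{\sym}\otimes I$ only gives $\td(\ket{\Phi},\ket{\psi_1}^{\otimes k})\le\sqrt{2(k-1)\delta}$. That yields a gap of $\Omega(\Delta^2/k)$, not the claimed $\Omega(\Delta^2)$.

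Your fallback (projector identity plus fidelity) only reaches the normalized state $\ket{S}\propto\Pi_{\sym}\ket{\Phi}$, with $\td(\ket{\Phi},\ket{S})=\sqrt{\delta}$. But $\ket{S}$ may be entangled, and the $\SymStoqMA$ soundness promise says nothing about it.

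The missing ingredient is the one in \Cref{lem:sym-close}:
\begin{itemize}
\item $\ket{S}$ has product overlap at least $\sqrt{1-\delta}$, witnessed by $\ket{\Phi}$.
\item A closest product state to a symmetric pure state can be chosen symmetric \cite[Lemma~1]{HKW+09}. So there is $\ket{\psi}$ with $\abs{\innerprod{S}{\psi^{\otimes k}}}^2\ge 1-\delta$.
\item The triangle inequality then gives $\td(\ket{\Phi},\ket{\psi}^{\otimes k})\le 2\sqrt{\delta}$.
\item \Cref{prop:nonnegative-product-approx} lets you replace $\ket{\psi}$ by its entrywise absolute value, so the soundness promise applies.
\end{itemize}
Separately, your dyadic permutation sampler must be within total-variation distance $O(\Delta^2)$ of uniform on $S_k$, since that error enters the symmetry-test bound additively.
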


\vspace{1em}
In the remainder of this section, we establish \Cref{thm:StoqMAk-in-StoqMAtwo} in \Cref{subsec:StoqMAk-vs-StoqMAtwo}: we introduce the stoquastic product test in \Cref{subsubsec:stoq-product-test}, and then use it to prove prover compression for \StoqMAk{} in \Cref{subsubsec:prover-compression}. Next, we prove \Cref{thm:StoqMAk-eq-SymStoqMAk} in \Cref{subsec:SymStoqMAk-vs-StoqMAk}: we first present our length-efficient symmetrization of \StoqMAk{} by \SymStoqMAk{} in \Cref{subsubsec:StoqMAk-in-SymStoqMAk}, then introduce the stoquastic equality test in \Cref{subsubsec:stoq-equality-test}, finally use it to show that \StoqMAk{} can simulate \SymStoqMAk{} in \Cref{subsubsec:SymStoqMAk-in-StoqMAk}. 

\subsection{\StoqMAk{} vs.~\StoqMAtwo{}}
\label{subsec:StoqMAk-vs-StoqMAtwo}

We begin by establishing a stoquastic version of the product test (\Cref{thm:prodtest-constant}), and then use this procedure as a key ingredient in proving prover compression for \StoqMAk{}.

\subsubsection{The stoquastic product test}
\label{subsubsec:stoq-product-test}

\begin{lemma}[Stoquastic implementation of the product test]
\label{lem:stoq-prod}
For every pair of pure states $\ket{\rho},\ket{\sigma}\in (\mathbb{C}^{2^\ell})^{\otimes k}$, there is a stoquastic verification circuit that takes the state $\ket{\rho}\otimes\ket{\sigma}$ as input and accepts with probability
\begin{equation}\label{eq:stoq-prod-test}
    \frac12+\frac12\,P_{\mathrm{prod}}(\rho,\sigma), 
    \quad\text{where } P_{\mathrm{prod}}(\rho,\sigma)
     \coloneqq \frac{1}{2^k}\sum_{S\subseteq [k]} \Tr(\rho_S\sigma_S).
\end{equation}
In particular, if $\ket{\rho}=\ket{\sigma}$ is a product state across the $k$ tensor factors, then the circuit accepts with probability $1$.
\end{lemma}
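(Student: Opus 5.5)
The plan is to realize the Harrow--Montanaro product test with stoquastic gates only. The key observation is that the standard product test consists of $k$ SWAP tests in parallel, one on each pair of corresponding $\ell$-qubit subsystems of $\ket{\rho}$ and $\ket{\sigma}$. Its acceptance probability, when every SWAP test accepts, equals $2^{-k}\sum_{S\subseteq[k]}\Tr(\rho_S\sigma_S)$. This is exactly $P_{\mathrm{prod}}(\rho,\sigma)$, because the projector onto the tensor product of the $k$ local symmetric subspaces is $\bigotimes_{i\in[k]}\frac{I+F_i}{2}$, with $F_i$ the swap operator on the $i$-th pair of subsystems. Expanding this tensor product gives $2^{-k}\sum_S F_S$, and $\bra{\rho}\bra{\sigma}F_S\ket{\rho}\ket{\sigma}=\Tr(\rho_S\sigma_S)$.

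A stoquastic verifier, however, cannot measure $\frac{I+F_S}{2}$ directly. Instead, I would use the reversible branch-overlap test (\Cref{lemma:branch-overlap-test}) with circuits $R_0=I$ and $R_1$ defined as follows. Take $k$ fresh $\ket{+}$ ancillas $\ket{c_1,\dots,c_k}$, and for each $i$ with $c_i=1$ apply a controlled swap (Fredkin gates, cf.\ \Cref{footnote:Fredkin}) between the $i$-th $\ell$-qubit factor of $\ket{\rho}$ and the $i$-th factor of $\ket{\sigma}$. The full input is $\ket{\rho}\ket{\sigma}\ket{+}^{\otimes k}$, so
\[
\innerprod{R_0}{R_1}=\frac{1}{2^k}\sum_{c\in\binset^k}\bra{\rho}\bra{\sigma}F_{S(c)}\ket{\rho}\ket{\sigma}=\frac{1}{2^k}\sum_{S\subseteq[k]}\Tr(\rho_S\sigma_S),
\]
where $S(c)=\{i: c_i=1\}$. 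The ancilla register is left unchanged by the controlled swaps, so the $\ket{+}^{\otimes k}$ factors contribute only the uniform average. This is the same calculation as in the SWAP test (\Cref{lemma:swap-test}), applied coherently to all $2^k$ subsets at once. \Cref{lemma:branch-overlap-test} then yields a stoquastic circuit accepting with probability $\frac12+\frac12 P_{\mathrm{prod}}(\rho,\sigma)$.

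The step needing care is the identity $\bra{\rho}\bra{\sigma}F_S\ket{\rho}\ket{\sigma}=\Tr(\rho_S\sigma_S)$ for arbitrary complex pure states, since the lemma is stated for all pure states and not only non-negative ones. Here I would not rely on the non-negativity hypothesis of \Cref{lemma:branch-overlap-test}. I would check directly that the circuit (a Hadamard-basis measurement on the control of a controlled-$R_1$ acting on $\ket{+}$) accepts with probability $\frac12+\frac12\Real\innerprod{R_0}{R_1}$ for any input. In our case $\innerprod{R_0}{R_1}$ is real, because each $F_S$ is Hermitian.

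The identity itself follows by writing $\ket{\rho}\ket{\sigma}$ in a Schmidt-type decomposition over the $S$ versus $\bar S$ cuts and using the standard swap trick $\Tr(F(A\otimes B))=\Tr(AB)$ on the $S$-registers after tracing out $\bar S$.

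Finally, if $\ket{\rho}=\ket{\sigma}=\ket{\psi_1}\otimes\cdots\otimes\ket{\psi_k}$, then every $\rho_S$ is pure and equal to $\sigma_S$, so $\Tr(\rho_S\sigma_S)=1$ for all $S$. Hence $P_{\mathrm{prod}}=1$ and the acceptance probability is $1$.
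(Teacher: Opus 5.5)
Your proposal is correct and is essentially the paper's proof: the same circuit, with $k$ $\ket{+}$-ancillas controlling Fredkin swaps between corresponding $\ell$-qubit blocks, fed into the branch-overlap test against the identity and combined with the Harrow--Montanaro identity $\bra{\rho}\bra{\sigma}F_S\ket{\rho}\ket{\sigma}=\Tr(\rho_S\sigma_S)$. Your additional check, that the branch-overlap test gives $\frac12+\frac12\Real\innerprod{R_0}{R_1}$ for arbitrary (not necessarily non-negative) inputs and that this quantity is real because each $F_S$ is Hermitian, is a worthwhile detail the paper leaves implicit.
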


\begin{proof}
Let $\rho,\sigma$ be states over $\sfA=\sfA_1\cdots \sfA_k$ and $\sfB=\sfB_1\cdots \sfB_k$, respectively, where each $\sfA_i$ and $\sfB_i$ is an $\ell$-qubit register.  For every subset $S\subseteq [k]$, let
\[
    F_S \coloneqq \prod_{i\in S} \mathrm{SWAP}_{\sfA_i\sfB_i},
\]
where the factors commute because they act on disjoint register pairs. By the Harrow--Montanaro product test identity, as stated in the proof of~\cite[Lemma 2]{HM13},
\begin{equation}\label{eq:product-HM-identity}
    \frac{1}{2^k}\sum_{S\subseteq [k]} \Tr\bigl((\rho\otimes\sigma)F_S\bigr)
    =\frac{1}{2^k}\sum_{S\subseteq [k]} \Tr(\rho_S\sigma_S)
    =P_{\mathrm{prod}}(\rho,\sigma),    
\end{equation}
$P_{\mathrm{prod}}(\rho,\sigma)$ is the probability that the standard product test accepts $\rho$ and $\sigma$. 

\LinesNotNumbered
\begin{algorithm}[ht!]
    \UseProtocolCounter
    \caption{Stoquastic implementation of the product test.}
    \label{protocol:stoq-product-test}
    \SetEndCharOfAlgoLine{.}
    \setlength{\parskip}{5pt}
    \SetKwInOut{Registers}{Registers}
    \SetKwInOut{Parameter}{Parameters}

    \Registers{
    \begin{minipage}[t]{0.7\linewidth}
        \hangindent=3.2em
        \hangafter=1
        $\sfJ$: the $k$-qubit branch-index register, initialized to $\ket{+}^{\otimes k}$\; 
        $\sfA,\sfB$: two $k\ell$-qubit witness registers\;
        $\sfZ$: the ancillary register, initialized to $\ket{\bar{0}}$.
    \end{minipage}
    }

    \textbf{1.} For every $s\in\binset^k$, let $S(s) \coloneqq \set{i\in[k]}{s_i=1}$, where $s_i$ denotes the $i$-th bit of $s$. Define the reversible permutation
    $F_s \coloneqq \prod_{i\in S(s)} \mathrm{SWAP}_{\sfA_i,\sfB_i}$\;
    \textbf{2.} For every $i\in[k]$ and every qubit position $t\in[\ell]$, apply a controlled-SWAP between the $t$-th qubit of $\sfA_i$ and the $t$-th qubit of $\sfB_i$, controlled on the branch bit $\sfJ_i$.
    Equivalently, one can construct a reversible circuit $\Gamma$ on $(\sfJ,\sfA,\sfB,\sfZ)$ satisfying\footnote{Controlled-SWAP gates can be implemented using only CNOT and Toffoli gates; see \Cref{footnote:Fredkin} for details.}
    \[ \Gamma\ket{s}_{\sfJ}\ket{x}_{\sfA}{ \ket{y}_{\sfB} }\ket{\bar{0}}_{\sfZ}
    = \ket{s}_{\sfJ} F_s\rbra*{ \ket{x}_{\sfA} \ket{y}_{\sfB} } \ket{\bar{0}}_{\sfZ}. \]
    \textbf{3.} Apply the branch-overlap test (\Cref{lemma:branch-overlap-test}) to the reversible circuit pair $(\Gamma,I)$ on the initialized input state $\ket{+}_{\sfJ}^{\otimes k}\ket{\rho}_{\sfA}\ket{\sigma}_{\sfB}\ket{\bar{0}}_{\sfZ}$. 
\end{algorithm}

Next, we construct a stoquastic verification circuit that accepts with probability $\frac{1}{2}+\frac{1}{2} P_{\mathrm{prod}}(\rho,\sigma)$, as presented in \Cref{protocol:stoq-product-test}. By \Cref{lemma:branch-overlap-test}, for each branch $S$, the corresponding acceptance probability is given by 
\[
    \frac12+\frac12\,\bra{\rho}\bra{\sigma}F_S\ket{\rho}\ket{\sigma}
    =\frac12+\frac12\,\Tr\bigl((\rho\otimes\sigma)F_S\bigr).
\]
Consequently, averaging uniformly over $S$ yields (\ref{eq:stoq-prod-test}).

If $\ket{\rho}=\bigotimes_{i=1}^k \ket{\rho_i}$ is product and $\ket{\sigma}=\ket{\rho}$, then every
reduced density matrix $\rho_S$ is pure, so $\Tr(\rho_S^2)=1$ for all $S$.  Hence
$P_{\mathrm{prod}}(\rho,\rho)=1$, and the circuit accepts with probability $1$.
\end{proof}

\begin{theorem}[Stoquastic product test]
\label{thm:prodtest-constant}
There exists an absolute constant $c_{\mathrm{prod}}>0$ such that for every pure $k$-partite state
$\ket{\rho}$, if
\[
    1-\eta(\rho)
     \coloneqq \max\cbra*{ \abs*{ \innerprod{\rho}{\rho_1\otimes\cdots\otimes\rho_k} }^2:
    \ket{\rho_i}\in \bbC^{2^\ell} }.
\]
then
\[ P_{\mathrm{prod}}(\rho,\rho) \leq 1-c_{\mathrm{prod}}\,\eta(\rho)
    \quad\text{and}\quad
    P_{\mathrm{prod}}(\rho,\sigma) \leq 1-\frac{c_{\mathrm{prod}}}{2}\,\eta(\rho).
\]
Consequently, the stoquastic product test from \Cref{lem:stoq-prod} accepts with probability at most
\[
    1-\frac{c_{\mathrm{prod}} \eta(\rho)}{4}.
\]
In addition, one may take $c_{\mathrm{prod}}=1/3$.
\end{theorem}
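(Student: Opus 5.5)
The plan is to import the Harrow--Montanaro product-test analysis, in the refined form of~\cite{SW22}, and then handle the two-state case and the stoquastic acceptance probability by elementary manipulations.

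\emph{Step 1: the single-state bound.} For $P_{\mathrm{prod}}(\rho,\rho)=2^{-k}\sum_{S}\Tr(\rho_S^2)$, I would invoke the product-test soundness theorem of~\cite[Theorem 3]{HM13}, together with the improved constant analysis of~\cite{SW22}, which gives $P_{\mathrm{prod}}(\rho,\rho)\le 1-c\,\eta(\rho)$ for an absolute constant $c$. The value $c_{\mathrm{prod}}=1/3$ should follow from~\cite{SW22}.

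If a self-contained route is needed, I would proceed as follows. First, write $\Tr(\rho_S^2)$ as the expectation of products of purities over random subsets. Then use the HM decomposition of the state into a closest product state plus an orthogonal remainder. Finally, bound the contribution of the remainder by induction on $k$, using the fact that the acceptance probability is an average of swap-test probabilities. I expect this step to be the main obstacle, since the dimension-free linear dependence on $\eta$ is the hard content. I would therefore cite it rather than reprove it, only checking that the cited statement is phrased with the same $\eta$, namely one minus the maximal squared overlap with a product state.

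\emph{Step 2: the two-state bound.} For distinct pure states $\rho,\sigma$, each term satisfies $\Tr(\rho_S\sigma_S)\le \tfrac12\bigl(\Tr\rho_S^2+\Tr\sigma_S^2\bigr)$ by the operator AM--GM inequality (equivalently, Cauchy--Schwarz in the Hilbert--Schmidt inner product). Averaging over $S$ gives
\[
P_{\mathrm{prod}}(\rho,\sigma)\le \tfrac12\bigl(P_{\mathrm{prod}}(\rho,\rho)+P_{\mathrm{prod}}(\sigma,\sigma)\bigr).
\]
Applying Step 1 to $\rho$, and using $P_{\mathrm{prod}}(\sigma,\sigma)\le 1$, yields
\[
P_{\mathrm{prod}}(\rho,\sigma)\le 1-\tfrac{c_{\mathrm{prod}}}{2}\,\eta(\rho).
\]

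\emph{Step 3: the acceptance probability.} By \Cref{lem:stoq-prod}, the stoquastic circuit accepts with probability $\tfrac12+\tfrac12P_{\mathrm{prod}}(\rho,\sigma)$. Substituting the bound from Step 2 gives an acceptance probability of at most $1-\tfrac{c_{\mathrm{prod}}}{4}\eta(\rho)$, as claimed.
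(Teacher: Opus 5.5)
Your proposal follows the paper's proof. The single-state bound is imported from the Soleimanifar--Wright refinement of the Harrow--Montanaro product test. The two-state bound comes from $\Tr(\rho_S\sigma_S)\le\frac12\bigl(\Tr\rho_S^2+\Tr\sigma_S^2\bigr)$ averaged over $S$, together with $P_{\mathrm{prod}}(\sigma,\sigma)\le 1$, and the acceptance bound is read off from \Cref{lem:stoq-prod}.

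The one step you leave as ``should follow'' is the value $c_{\mathrm{prod}}=1/3$. The paper obtains it from the piecewise bound of~\cite[Theorem~8]{SW22}: $P_{\mathrm{prod}}(\rho,\rho)\le 1-\eta+\eta^2$ for $\eta\le\frac12$, and $P_{\mathrm{prod}}(\rho,\rho)\le 1-\frac23\eta+\frac13\eta^2$ for $\eta\ge\frac12$. Each piece is at most $1-\frac13\eta$ on its range, because the first needs only $\eta\le\frac23$ and the second only $\eta\le 1$.
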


\begin{proof}
Soleimanifar--Wright's analysis of the product test
\cite[Theorem~8]{SW22} gives
\[
    P_{\mathrm{prod}}(\rho,\rho)
    \le
    \begin{cases}
        1-\eta+\eta^2, & \eta\le \frac12,\\
        1-\frac23\eta+\frac13\eta^2, & \eta\ge \frac12,
    \end{cases}
\]
which implies the uniform bound
\[
    P_{\mathrm{prod}}(\rho,\rho)\le 1-\frac13\,\eta(\rho).
\]
Thus any fixed $c_{\mathrm{prod}}\in (0,1/3]$ is admissible, and in particular one may take
$c_{\mathrm{prod}}=1/3$.

By the Cauchy--Schwarz inequality and the AM-GM inequality, it follows that
\[
    \Tr(\rho_S\sigma_S)
    \le \sqrt{\Tr(\rho_S^2)}\,\sqrt{\Tr(\sigma_S^2)}
    \le \frac{\Tr(\rho_S^2)+\Tr(\sigma_S^2)}{2}
\]
for every $S\subseteq [k]$.  Consequently, averaging over $S$ gives
\[
    P_{\mathrm{prod}}(\rho,\sigma)
    \le \frac12\Bigl(P_{\mathrm{prod}}(\rho,\rho)+P_{\mathrm{prod}}(\sigma,\sigma)\Bigr)
    \le 1-\frac{c_{\mathrm{prod}}}{2}\,\eta(\rho),
\]
using the trivial bound $P_{\mathrm{prod}}(\sigma,\sigma)\le 1$.  The final claim follows from the
acceptance formula in \Cref{lem:stoq-prod}.
\end{proof}

\subsubsection{Prover compression for \StoqMAk{}}
\label{subsubsec:prover-compression}
An immediate consequence of the stoquastic product test (\Cref{thm:prodtest-constant}) is prover compression for \StoqMAk{}, which reduces the number of provers from $k$ to two: 

\begin{theorem}[Prover compression for \StoqMAk{} up to a quadratic gap loss]
\label{thm:stoq-k-to-2}
Let $k(n)$ be a polynomially bounded integer-valued function satisfying $k(n) > 2$. Let the completeness error $\epsilon(n)$ and the promise gap $\Delta(n)$ be functions such that $0 \leq \epsilon(n) < 1/2$ and $\Delta(n) > 0$. Then the following inclusion holds:
\[ \StoqMA_\ell \ssbra*{k,1-\epsilon,\Delta} \subseteq \StoqMA_{k\ell}\ssbra*{2,1-\lambda\epsilon, \frac{\gamma\Delta^2}{2}}. \]
Here, $\gamma \coloneqq c_{\mathrm{prod}}/{4}$, $\lambda \coloneqq \gamma\Delta$, and $c_\mathrm{prod}$ is the constant from~\cref{thm:prodtest-constant}, which may be taken as $1/3$. In particular, the resulting gap shows that our construction preserves a constant gap $\Delta$. 
\end{theorem}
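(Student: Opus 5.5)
The plan is to follow the Harrow--Montanaro two-branch construction \cite[Protocol 2]{HM13}. The one change is that the branch probability is tuned to $\lambda$ rather than fixed at $1/2$. The compressed verifier receives two unentangled non-negative $k\ell$-qubit proofs $\ket{\rho}_\sfA$ and $\ket{\sigma}_\sfB$, each split into $k$ blocks of $\ell$ qubits. With probability $\lambda$ it runs the stoquastic product test of \Cref{lem:stoq-prod} on $\ket{\rho}\otimes\ket{\sigma}$. With probability $1-\lambda$ it runs the original verifier $V_x$ on $\ket{\rho}$, treating the $k$ blocks as the $k$ proofs and discarding $\sfB$.

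To make this a single stoquastic verification circuit, I would encode the biased coin as a predicate on fresh $\ket{+}$ ancillas, selecting branch one on a $\lambda$-fraction of strings. Since $\lambda$ must be dyadic, I would truncate it and absorb the $2^{-q}$ error; \Cref{footnote:dyadic-truncation-error} indicates this is where exact completeness can be lost. Both branches have the form $(1+\langle R_0|R_1\rangle)/2$ for reversible pairs, namely $(\Gamma,I)$ for the product test and $(\Gamma_x,I)$ from \Cref{lemma:SepRCD-StoqMAk-complete} for $V_x$. So I can control each of these circuits on the coin register and apply one branch-overlap test (\Cref{lemma:branch-overlap-test}). This gives acceptance probability $\lambda P_1+(1-\lambda)P_2$, where $P_1$ and $P_2$ are the acceptance probabilities of the two branches.

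Completeness is immediate. The honest proofs are $\ket{\rho}=\ket{\sigma}=\ket{\psi_1}\otimes\cdots\otimes\ket{\psi_k}$, which are non-negative. The product test then accepts with probability $1$, and $V_x$ accepts with probability at least $1-\epsilon$. The total is therefore at least $1-(1-\lambda)\epsilon\ge 1-\epsilon$. I would need to check the exact threshold $1-\lambda\epsilon$ claimed in the statement against the normalization convention; the paper's corollary reports $1-\epsilon/2$ at $\lambda=1/2$, which suggests the relevant error is whichever branch carries weight $1-\lambda$.

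For soundness, set $\eta=\eta(\rho)$. By \Cref{thm:prodtest-constant}, the product branch accepts with probability at most $1-\frac{c_{\mathrm{prod}}}{4}\eta=1-\gamma\eta$. For the $V_x$ branch, I would use \Cref{prop:opt-stoqSepVal-nonNeg} to pass to the general separable value, and compare with the product state $\ket{\phi}$ closest to $\ket{\rho}$. Its fidelity is $1-\eta$, so by \eqref{eq:pure-pure-Fuchs-vanGraaf} and \Cref{lemma:meas-bound-td} the branch accepts with probability at most $s+\sqrt{\eta}$. Here $s=1-\epsilon-\Delta$ bounds acceptance on product states by soundness of $V_x$. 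The total acceptance is then at most
\[
\lambda(1-\gamma\eta)+(1-\lambda)\min\{1,s+\sqrt{\eta}\}.
\]

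I then maximize over $x=\sqrt{\eta}$, splitting into the cases $x\le 1-s$ and $x\ge 1-s$; this is the case analysis \eqref{eq:prove-compression-bound} that the corollary quotes. In the first case the bound is $1-\lambda\gamma x^2-(1-\lambda)(1-s-x)$. I expect the main obstacle to be choosing $\lambda$ against this concave trade-off so that the worst case is still at least a $\gamma\Delta^2/2$ gap below completeness. I would take $\lambda=\gamma\Delta$, compute the worst-case value over $x\in[0,1-s]$ and at the endpoint $x=1-s$, and keep the $\epsilon$ terms separate. This should yield soundness $1-\lambda\epsilon-\gamma\Delta^2/2$ or better, after which only routine bookkeeping remains.
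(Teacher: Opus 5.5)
\textbf{The gap: your branch weights are reversed.} You run the product test with probability $\lambda=\gamma\Delta$ and $V_x$ with probability $1-\lambda$. The paper does the opposite: the product test gets weight $1-\lambda$ and $V$ gets weight $\lambda$. The completeness mismatch you noticed, $1-(1-\lambda)\epsilon$ versus $1-\lambda\epsilon$, is a symptom of this, not a normalization issue. The corollary at $\lambda=1/2$ cannot settle it, because the two assignments coincide there.

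The soundness consequence is fatal. Your own bound in the regime $s+x\le 1$ is $1-\lambda\gamma x^2-(1-\lambda)(1-s-x)$. This is concave in $x$, with maximizer $(1-\lambda)/(2\lambda\gamma)=(1-\gamma\Delta)/(2\gamma^2\Delta)$, which lies far beyond $1-s$. So the worst case is the endpoint $x=1-s$, where the bound equals $1-\gamma^2\Delta(\epsilon+\Delta)^2$. This reaches your completeness value $1-(1-\lambda)\epsilon$ as soon as $(1-\lambda)\epsilon\ge\gamma^2\Delta(\epsilon+\Delta)^2$, which holds for instance for any constant $\epsilon$ once $\Delta$ is small. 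Even at $\epsilon=0$ it certifies only a $\gamma^2\Delta^3$ gap, not $\gamma\Delta^2/2$.

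This is a flaw in the construction, not just a loose estimate. Stoquastic verifiers exist whose value on an entangled non-negative witness is $1$ while their product value is $s<1$; for example, $M_\Gamma$ supported only on the entries $(00,11)$ and $(11,00)$ has value $1$ on $(\ket{00}+\ket{11})/\sqrt{2}$ and at most $1/2$ on product states. Against such a $V$, a prover sending $\ket{\rho}=\ket{\sigma}$ entangled still passes your rare product branch with probability at least $1/2$. Its total acceptance is therefore at least $1-\lambda/2$, which exceeds $1-(1-\lambda)\epsilon$ whenever $\epsilon>\lambda/(2(1-\lambda))$.

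\textbf{The fix.} Let the product test be the dominant branch, with weight $1-\lambda$, and run $V$ with weight $\lambda$. Then completeness is exactly $1-\lambda\epsilon$, and your estimates give
\[
c'-\Pr[\text{accept}]\;\ge\;\lambda\bigl(c-\min\{1,s+x\}\bigr)+(1-\lambda)\gamma x^2 .
\]
For $s+x\le 1$, this is the convex quadratic $\lambda\Delta-\lambda x+(1-\lambda)\gamma x^2$. Its minimizer is $x_*=\Delta/(2(1-\lambda))\le\Delta\le 1-s$, and its minimum value is $\gamma\Delta^2\bigl(1-\frac{1}{4(1-\lambda)}\bigr)\ge\gamma\Delta^2/2$, using $\lambda<1/2$. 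For $s+x\ge 1$, the bound is $-\lambda\epsilon+(1-\lambda)\gamma x^2$, which is increasing in $x$ and agrees with the first case at $x_0=1-s\ge x_*$.

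The point is that being $\eta$-far from product must cost $\Theta(\eta)$ with full weight, while the gain $\lambda\sqrt{\eta}$ on the $V$-branch is damped by the small weight $\lambda$. Balancing $\lambda x$ against $\gamma x^2$ is what produces the quadratic loss. The rest of your plan matches the paper's argument: the controlled branch-overlap implementation with a dyadic $\lambda$, and the closest-product-state step via trace distance together with the non-negativity reductions.
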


To establish the soundness, we also need the following simple proposition: 
\begin{proposition}[Product-state approximation preserves non-negativity]
\label{prop:nonnegative-product-approx}
Let $\ket{\xi}\in \calS_+(d_1\cdots d_k)$ be a non-negative state, and let $\ket{\Phi}=\ket{\phi_1}\otimes\cdots\otimes\ket{\phi_k}$ be a product state. For each $i\in[k]$, let $\ket{\phi_i^+}$ be obtained from
$\ket{\phi_i}$ by replacing every computational-basis amplitude by its absolute value, and set $\ket{\Phi^+}\coloneqq
    \ket{\phi_1^+}\otimes\cdots\otimes\ket{\phi_k^+}$. 
Then
\[
    \td(\ket{\xi},\ket{\Phi^+})
    \le
    \td(\ket{\xi},\ket{\Phi}).
\]
In particular, if $\ket{\Phi}=\ket{\psi}^{\otimes k}$, then
$\ket{\Phi^+}=(\ket{\psi^+})^{\otimes k}$.
\end{proposition}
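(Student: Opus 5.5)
The plan is to reduce the trace-distance comparison to a fidelity comparison via \Cref{eq:pure-pure-Fuchs-vanGraaf}. For pure states, $\td(\ket{\xi},\ket{\Phi}) = \sqrt{1-\abs{\innerprod{\xi}{\Phi}}^2}$. The function $t\mapsto\sqrt{1-t^2}$ is decreasing on $[0,1]$. It therefore suffices to show
\[
\abs{\innerprod{\xi}{\Phi}} \le \abs{\innerprod{\xi}{\Phi^+}} = \innerprod{\xi}{\Phi^+}.
\]
Both $\ket{\Phi}$ and $\ket{\Phi^+}$ are unit vectors, since taking entrywise absolute values preserves each factor's norm. So no normalization issue arises, and the formula applies to both states.

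The key step is a triangle inequality in the computational basis, exactly as in \Cref{prop:opt-stoqSepVal-nonNeg}. Write $\ket{\phi_i}=\sum_{x}a^{(i)}_x\ket{x}$ and $\ket{\xi}=\sum_{x_1,\dots,x_k}\xi_{x_1\cdots x_k}\ket{x_1\cdots x_k}$ with $\xi_{x_1\cdots x_k}\ge 0$. Then
\[
\abs{\innerprod{\xi}{\Phi}} = \abs[\Big]{\sum_{x_1,\dots,x_k}\xi_{x_1\cdots x_k}\prod_{i}a^{(i)}_{x_i}} \le \sum_{x_1,\dots,x_k}\xi_{x_1\cdots x_k}\prod_i\abs{a^{(i)}_{x_i}} = \innerprod{\xi}{\Phi^+}.
\]
This uses two facts: the amplitudes of $\ket{\Phi^+}$ factor as products of $\abs{a^{(i)}_{x_i}}$, and the amplitudes of $\xi$ are non-negative. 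The right-hand side is real and non-negative, so it equals its own absolute value. Plugging this into the Fuchs--van de Graaf identity gives the claimed inequality.

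For the final sentence of the statement, note that if every $\ket{\phi_i}=\ket{\psi}$, then every $\ket{\phi_i^+}=\ket{\psi^+}$ by definition. Hence $\ket{\Phi^+}=(\ket{\psi^+})^{\otimes k}$.

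The only point needing care is that the absolute-value map is applied factor by factor rather than to the joint vector. This is harmless because the absolute value of a product equals the product of the absolute values, so the joint absolute-value vector coincides with $\ket{\Phi^+}$. I do not expect any step to be a genuine obstacle.
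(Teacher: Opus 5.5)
Your proposal is correct and follows essentially the same route as the paper: bound $\abs{\innerprod{\xi}{\Phi}}$ by $\innerprod{\xi}{\Phi^+}$ via the triangle inequality in the computational basis, using non-negativity of $\ket{\xi}$, then conclude with the pure-state identity \Cref{eq:pure-pure-Fuchs-vanGraaf}. You also spell out two details the paper leaves implicit: the entrywise absolute value preserves normalization, and it factors across the tensor product.
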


\begin{proof}
Write $\ket{\xi}=\sum_z \xi_z\ket{z}$ with $\xi_z\ge 0$, and write
$\ket{\Phi}=\sum_z \Phi_z\ket{z}$. Then
\[
    \abs*{\innerprod{\xi}{\Phi}}
    =
    \abs*{\sum_z \xi_z \Phi_z}
    \le
    \sum_z \xi_z \abs{\Phi_z}
    =
    \innerprod{\xi}{\Phi^+}, 
\]
which implies the desired bound by \Cref{eq:pure-pure-Fuchs-vanGraaf}.
\end{proof}

We then continue to prove \Cref{thm:stoq-k-to-2}:
\begin{proof}[Proof of \Cref{thm:stoq-k-to-2}]
Let $V$ be the given $ \StoqMA_\ell \ssbra*{k,1-\epsilon,\Delta}$ verifier acting on witness registers $\sfA_1,\dots,\sfA_k$. Let $c(n) \coloneqq 1-\epsilon(n)$ and $s(n) \coloneqq c(n)-\Delta(n)$ be the completeness and soundness parameters of $V$, respectively. 
We now construct a new stoquastic verification circuit $W$ with witness registers $\sfA=\sfA_1\cdots \sfA_k$ and $\sfB=\sfB_1\cdots \sfB_k$. The witness state has the form $\ket{\Upsilon_1}_\sfA\otimes\ket{\Upsilon_2}_{\sfB}$, where $\ket{\Upsilon_1}$ and $\ket{\Upsilon_2}$ are called the first and second proof states, respectively, and each consists of $k\ell$ qubits. The verification circuit $W$ proceeds as follows: 
\begin{itemize}
    \item With probability $1-\lambda$, run the product-test branch from
    \Cref{protocol:stoq-product-test} on the two registers $\sfA$ and $\sfB$.
    \item With probability $\lambda$, ignore $\sfB$ and run the original verification circuit $V$ on the $k$ blocks $\sfA_1,\dots,\sfA_k$ of the first proof state. 
\end{itemize}
Since our protocol is a convex combination $(1-\lambda,\lambda)$ of two stoquastic verification circuits, the resulting verification circuit remains stoquastic.\footnote{The only subtlety is that the parameter $\lambda\in(0,1)$ should be a \emph{dyadic} number representable with $\poly(n)$ bits. We ignore the truncation error incurred by making $\lambda$ dyadic, since this error is \emph{exponentially small} and does not significantly affect the completeness and soundness parameters.\label{footnote:dyadic-truncation-error}}

\parheading{Completeness.} For \emph{yes} instances, there exist pure states $\ket{\psi_1},\dots,\ket{\psi_k}$ such that
$V$ accepts $\ket{\psi_1}\otimes\cdots\otimes\ket{\psi_k}$ with probability at least $c$.  Let
\[
    \ket{\Psi} \coloneqq \ket{\psi_1}\otimes\cdots\otimes\ket{\psi_k}.
\]

The honest two-proof witness for $W$ is $\ket{\Psi}_\sfA\otimes\ket{\Psi}_\sfB$. The product-test
branch accepts it with probability $1$, and the branch running $V$ accepts with probability at least $c$. Hence the completeness of $W$ is $c'\coloneqq 1-\lambda(1-c)$.

\parheading{Soundness.} Fix an arbitrary product witness $\ket{\rho}_\sfA\otimes\ket{\sigma}_\sfB$. 
Let $\eta(\rho)$ be defined as in \Cref{thm:prodtest-constant}, and set $x \coloneqq \sqrt{\eta(\rho)}$. 
Then there exists a product state
\[
    \ket{\Phi}=\ket{\phi_1}\otimes\cdots\otimes\ket{\phi_k}
\]
such that $\td(\ket{\rho},\ket{\Phi})\le x$. By \Cref{prop:opt-stoqSepVal-nonNeg} and \Cref{prop:nonnegative-product-approx}, it suffices to consider a non-negative product witness $\ket{\rho}_{\sfA}\otimes\ket{\sigma}_{\sfB}$, and the product state $\ket{\Phi}$ satisfying $\td(\ket{\rho},\ket{\Phi})\le x$ may also be chosen non-negative.
By the soundness $s$ of $V$, the branch running $V$ accepts $\ket{\Phi}$ with probability at most $s$. By the measurement bound for trace distance (\Cref{lemma:meas-bound-td}), together with $\td(\ket{\rho},\ket{\Phi})\le x$, the same branch accepts $\ket{\rho}$ with probability at most $s+x$. 
Combining this reasoning with the trivial upper bound $1$, its acceptance probability is at most $\min\cbra{1,s+x}$. 
By~\Cref{thm:prodtest-constant}, the product-test branch accepts
$\ket{\rho}_\sfA\otimes\ket{\sigma}_\sfB$ with probability at most $1-\gamma x^2$. Therefore,
\[
    \Pr[W \text{ accepts } \ket{\rho}\ket{\sigma}]
    \le (1-\lambda)(1-\gamma x^2)+\lambda\min\{1,\,s+x\}.
\]
Subtracting this from $c'$ yields
\begin{equation}
    \label{eq:prove-compression-bound}
    c'-\Pr[W \text{ accepts } \ket{\rho}\ket{\sigma}]
    \ge \lambda\bigl(c-\min\{1,\,s+x\}\bigr)+(1-\lambda)\gamma x^2.
\end{equation}
It remains to simplify the lower bound in \Cref{eq:prove-compression-bound}:
\begin{itemize}
    \item If $s+x\le 1$, then \Cref{eq:prove-compression-bound} gives
\[
    c'-\Pr[W \text{ accepts } \ket{\rho}\ket{\sigma}]
    \ge \lambda\Delta-\lambda x+(1-\lambda)\gamma x^2
    \eqqcolon f(x).
\]
This is a convex quadratic whose unique minimizer is
\[
    x_* \coloneqq \frac{\lambda}{2(1-\lambda)\gamma}=\frac{\Delta}{2(1-\lambda)}.
\]
Because $\lambda=\gamma\Delta\le \gamma<1/2$, we have $x_*\le \Delta\le 1-s$, so the minimizer lies
within the regime $s+x\le 1$.  Therefore, it follows that
\[
    f(x)\ge f(x_*)
    =\lambda\Delta-\frac{\lambda^2}{4(1-\lambda)\gamma}
    =\gamma\Delta^2\left(1-\frac{1}{4(1-\lambda)}\right)
    \ge \frac{\gamma\Delta^2}{2}.
\]

\item If $s+x\ge 1$, then \Cref{eq:prove-compression-bound} yields
\[
    c'-\Pr[W \text{ accepts } \ket{\rho}\ket{\sigma}]
    \ge -\lambda(1-c)+(1-\lambda)\gamma x^2
    \eqqcolon g(x).
\]
The function $g$ is increasing for $x\ge 0$, and the boundary between the two regimes is
$x_0=1-s\ge \Delta\ge x_*$.  Since $g(x_0)=f(x_0)$, we obtain
\[
    g(x)\ge g(x_0)=f(x_0)\ge f(x_*)\ge \frac{\gamma\Delta^2}{2}.
\]
\end{itemize}

Therefore, in all cases the acceptance probability is at most $c'-\frac{\gamma\Delta^2}{2}$, finishing the proof.
\end{proof}

\subsection{\SymStoqMAk{} vs.~\StoqMAk{}}
\label{subsec:SymStoqMAk-vs-StoqMAk}
We now establish the equivalence between \StoqMAk{} and \SymStoqMAk{} up to a quadratic gap loss. We first show that \SymStoqMAk{} can simulate \StoqMAk{} in \Cref{subsubsec:StoqMAk-in-SymStoqMAk}, and then, after introducing the stoquastic equality test in \Cref{subsubsec:stoq-equality-test}, prove that \StoqMAk{} can simulate \SymStoqMAk{} in \Cref{subsubsec:SymStoqMAk-in-StoqMAk}. 

\subsubsection{\texorpdfstring{$\StoqMAk \subseteq \SymStoqMAk$}{}}
\label{subsubsec:StoqMAk-in-SymStoqMAk}

To prove that $\StoqMAk\subseteq\SymStoqMAk$, as stated in \Cref{thm:StoqMAk-eq-SymStoqMAk}\ref{thmitem:StoqMA-in-SymStoqMA-ssbra}, the simple \emph{diagonal selection} trick in~\cite[Lemma 4.8]{ABDFS09} would suffice: each prover in $\SymStoqMAk$ sends the concatenation of the $k$ witness states $\ket{\psi_1}\ket{\psi_2}\cdots\ket{\psi_k}$ in the $\StoqMAk$ protocol. Then the $\SymStoqMAk$ verifier will simulate the $\StoqMAk$ verifier by taking only the $i$-th witness from the $i$-th prover of $\SymStoqMAk$. It follows that
\begin{equation}\label{eq:naive-SymStoqMAk-simulation}
    \StoqMA_\ell\ssbra*{k,c,s}\subseteq \SymStoqMA_{\ell k}\ssbra*{k,c,s}.
\end{equation}
The downside of the above trick is that the proof length blows up by a factor of $k$. 

\paragraph{A length-efficient symmetrization.} We provide a new simulation that only blows up the proof length by a multiplicative factor $O(\log k)$ and a new additive $O(\log^2 k)$ factor. 
The new idea is that each symmetric prover can send a bundle of $r = O(\log k)$ copies of the coherent superposition of the $k$ proofs from the $\StoqMAk$ provers, $\frac{1}{\sqrt{k}}\sum_{j\in[k]} \ket{j}\ket{\psi_j}$, instead of the wasteful concatenation of the $k$ proofs. 
Then by a \emph{coupon collector} principle, a typical branch of the $k$ symmetric proofs, now consisting of $O( k \log k )$ random proofs from the set of $k$ proofs of $\StoqMAk$, contains all $k$ different proofs with high probability. Therefore the $\SymStoqMAk$ verifier will be able to simulate the old $\StoqMAk$ verifier in a typical branch.
We start by formalizing our specific coupon collector lemma in the language of perfect matching. 
\begin{theorem}[Hall's theorem~\cite{Hall35}]
\label{thm:hall}
A bipartite graph with left side $L$ and right side $R$ has a matching of size $|L|$ if and only if
\[
    \abs*{N(S)}\ge \abs*{S}
    \qquad\text{for every } S\subseteq L,
\]
where $N(S)\subseteq R$ denotes the neighborhood of $S$ in $R$.
\end{theorem}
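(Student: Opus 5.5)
The statement to be proved is Hall's theorem. The plan is the standard induction on $|L|$, with a case split according to whether Hall's condition has slack.

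\emph{Necessity.} This direction is immediate. If a matching $\mu$ saturates $L$, then for every $S\subseteq L$ the images $\mu(S)$ are $|S|$ distinct vertices contained in $N(S)$. Hence $|N(S)|\ge|S|$.

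\emph{Sufficiency.} We induct on $|L|$; the case $|L|\le 1$ is trivial. Assume Hall's condition holds and $|L|\ge 2$. There are two cases.

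\textbf{Case 1: strict surplus.} Suppose $|N(S)|\ge|S|+1$ for every nonempty proper subset $S\subsetneq L$.
\begin{itemize}
\item Pick any $u\in L$. Since $|N(\{u\})|\ge 1$, $u$ has some neighbor $v$; match $u$ to $v$.
\item Delete $u$ and $v$. In the remaining graph, every nonempty $S\subseteq L\setminus\{u\}$ loses at most one neighbor (namely $v$). So it still satisfies $|N'(S)|\ge|S|+1-1=|S|$.
\item By induction, $L\setminus\{u\}$ has a saturating matching avoiding $v$. Adding the edge $uv$ completes the matching.
\end{itemize}

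\textbf{Case 2: a tight set.} Suppose some nonempty proper subset $T\subsetneq L$ has $|N(T)|=|T|$.
\begin{itemize}
\item Hall's condition holds inside the subgraph induced on $T\cup N(T)$, because neighborhoods of subsets of $T$ lie in $N(T)$. Since $|T|<|L|$, induction gives a matching of $T$ onto $N(T)$.
\item Next consider the graph on $(L\setminus T)\cup(R\setminus N(T))$. For $S\subseteq L\setminus T$, we have
\[
|N(S)\setminus N(T)| = |N(S\cup T)|-|N(T)| \ge |S|+|T|-|T| = |S|.
\]
So Hall's condition holds there as well, and since $|L\setminus T|<|L|$, induction matches $L\setminus T$ into $R\setminus N(T)$.
\item The two matchings use disjoint right-hand vertex sets, so their union saturates $L$.
\end{itemize}

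The one step requiring care is the counting identity in Case 2. It relies on $N(S\cup T)$ being the disjoint union of $N(T)$ and $N(S)\setminus N(T)$. The rest is routine.

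In this paper the theorem is used as the combinatorial core of the coupon-collector lemma. The relevant graph has the labels $[k]$ on the left, the bundles on the right, and an edge wherever a bundle contains a label. In the application, one checks Hall's condition on a typical branch via a union bound over subsets $S\subseteq[k]$.
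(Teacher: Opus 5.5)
Your proof is correct. It is the standard Halmos--Vaughan induction, and both cases check out. That includes the counting step $|N(S)\setminus N(T)| = |N(S\cup T)| - |N(T)| \ge |S|$ for $S\subseteq L\setminus T$, which uses $S\cap T=\emptyset$. The argument also applies verbatim to the bipartite multigraphs the paper actually uses, since matchings only see adjacency.

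The paper gives no proof of this statement; it simply cites \cite{Hall35}. So there is no competing argument to compare against, and your proof just fills in the classical result.

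One small inaccuracy in your closing remark about the application. In the paper, the left side $L=[k]$ is the witness copies (bundles) and the right side $R=[k]$ is the roles (labels). Accordingly, the union bound in \Cref{lemma:randomized-matching-bound} ranges over sets $S$ of copies whose $d|S|$ label choices all land in some set of $|S|-1$ labels. You have the orientation flipped. This is harmless only because $|L|=|R|$, so a matching saturating either side is a perfect matching.
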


\begin{lemma}[Random matching bound]\label{lemma:randomized-matching-bound}
Let $B_{N,d}$ be the random bipartite multi-graph with left side $L$ and right side $R$, where $L= R=[N]$ and each left vertex independently chooses $d$ right vertices uniformly at random, with replacement. If $d=\lceil 12\ln N\rceil$, then $B_{N,d}$ has a matching covering all left vertices with probability at least $3/4$.
\end{lemma}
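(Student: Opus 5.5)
The plan is to combine Hall's theorem (\Cref{thm:hall}) with a union bound over violating sets. By \Cref{thm:hall}, $B_{N,d}$ fails to have a matching covering $L$ only if some nonempty $S\subseteq L$ satisfies $\abs{N(S)}<\abs{S}$. Any such violation is witnessed by a pair $(S,T)$ with $\abs{S}=s$, $T\subseteq R$, $\abs{T}=s-1$ and $N(S)\subseteq T$. The case $s=1$ is impossible since $d\ge 1$, so we may take $2\le s\le N$.

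For a fixed pair $(S,T)$, each of the $sd$ independent uniform choices made by vertices in $S$ lands in $T$ with probability $(s-1)/N$. Hence
\[ \Pr[N(S)\subseteq T]=\rbra*{\tfrac{s-1}{N}}^{sd}. \]
The union bound then gives the failure probability bound
\[ \sum_{s=2}^{N}\binom{N}{s}\binom{N}{s-1}\rbra*{\tfrac{s-1}{N}}^{sd}. \]

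I would split the sum into two regimes.

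\emph{Small $s$, say $s\le N/2$.} Bound both binomials by $N^{s}$ and $N^{s-1}$, and use $(s-1)/N\le 1/2$. Taking logarithms, it suffices to show $(2s-1)\ln N\le sd\ln\frac{N}{s-1}-(\text{slack})$. Since $d\ge 12\ln N$ and $\ln\frac{N}{s-1}\ge\ln 2$, each term is at most $N^{2s}\cdot 2^{-12 s\ln N}=N^{2s-12s\ln 2}\le N^{-6s}$. Summing gives $O(N^{-12})$.

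\emph{Large $s$, i.e.\ $s>N/2$.} Substitute $t=N-s$ and rewrite the binomials as $\binom{N}{t}\binom{N}{t+1}\le N^{2t+1}$. The probability factor is $(1-\frac{t+1}{N})^{sd}\le\exp\rbra*{-\frac{(t+1)d}{2}}\le N^{-6(t+1)}$, using $s\ge N/2$ and $d\ge 12\ln N$. So each term is at most $N^{2t+1-6t-6}=N^{-4t-5}$, and these sum to $O(N^{-5})$.

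Adding the two regimes, the total failure probability is at most $1/4$ once $N$ exceeds a small absolute constant. For tiny $N$ I would check directly: $N=1$ is trivial, and for small $N\ge 2$ the crude bounds above should still give at most $1/4$ because $d\ge 12\ln 2>8$. If that check is too lossy, I would handle small $N$ by bounding $\Pr[\text{some right vertex of } T \text{ absorbs everything}]$ explicitly.

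The main obstacle is the bookkeeping in the middle range of $s$, where neither $(s-1)/N$ is small nor $t$ is small. The split at $N/2$ with the constant $12$ should absorb this comfortably. The delicate point is making sure the exponent $sd\ln\frac{N}{s-1}$ beats $\ln\rbra*{\binom{N}{s}\binom{N}{s-1}}$ uniformly, which the two-regime estimate handles with ample room.
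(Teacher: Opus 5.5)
Your proposal is correct and is the paper's argument: Hall's theorem plus a union bound over pairs $(S,T)$ with $\abs{T}=s-1$, yielding the same sum $\sum_{s}\binom{N}{s}\binom{N}{s-1}\rbra*{\frac{s-1}{N}}^{sd}$. Your two-regime estimate fills in the summation that the paper only asserts is below $1/4$, and your own bounds already give at most $\frac{N^{-5}}{1-N^{-4}}+\frac{N^{-12}}{1-N^{-6}}<\frac{1}{4}$ for every $N\ge 2$, so no separate small-$N$ check is needed; however, $N=1$ is not trivial, since there $d=\lceil 12\ln 1\rceil=0$ and the statement actually fails (harmlessly, as it is only applied with $N=k\ge 2$).
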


\begin{proof}
By Hall's theorem (\Cref{thm:hall}), it suffices to rule out a set $S\subseteq L$ with
$|N(S)|<|S|$.  Fix $s=|S|$.  If Hall fails for this $S$, then there is a set
$T\subseteq R$ of size $s-1$ such that all $ds$ choices made by vertices in
$S$ land inside $T$.  Therefore
\[
\Pr[\exists S: |S|=s,\ |N(S)|<s] \leq
\binom Ns\binom N{s-1}\left(\frac{s-1}{N}\right)^{ds}.
\]
Summing over $s=1,\ldots,N$ and using $d=\lceil12\ln N\rceil$, this total is
less than $1/4$.  Hence Hall's condition holds with probability at least $3/4$.
\end{proof}

With the above matching bound at our disposal, we prove our promised length-efficient upgrade to~\cref{eq:naive-SymStoqMAk-simulation}: 
\begin{theorem}[Length-efficient symmetrization]\label{thm:StoqMAk-in-SymStoqMAk}
Let $k(n)$ be a polynomially bounded integer-valued function satisfying that $k(n) \geq 2$. Let $\epsilon(n)$ and $\Delta(n)$ be functions such that $0 \leq \epsilon(n) < 1/2$ and $\Delta(n) > 0$, and set $c(n) \coloneqq 1-\epsilon(n)$. 
Let $r \coloneqq \ceil*{12\ln k}$ and $m \coloneqq r\rbra*{\ell+\ceil*{\log k}}$, which are polynomially bounded integer-valued functions. Then the following inclusion holds:
\[
    \StoqMA_{\ell}\ssbra*{k,c,\Delta}
    \subseteq
    \SymStoqMA_{m}\ssbra*{k,c-\Delta/2,\Delta/2} .
\]
Equivalently, the new verifier has completeness at least $c-\Delta/2$ and
soundness at most $c-\Delta$.  Moreover, the new witness length is 
\[
    m(n) =O\rbra*{(\ell(n)+\log k(n))\log k(n)}\leq \poly(n).
\]
\end{theorem}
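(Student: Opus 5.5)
The plan is to build the $\SymStoqMAk$ verifier $V'$ from the given $\StoqMAk$ verifier $V$ by the coupon-collector routing sketched before \Cref{lemma:randomized-matching-bound}. The honest symmetric proof is $\ket{\phi}\coloneqq\rbra[\big]{\frac{1}{\sqrt{k}}\sum_{j\in[k]}\ket{j}\ket{\psi_j}}^{\otimes r}$, with $r=\ceil{12\ln k}$ labeled slots per bundle. This proof is non-negative and has $m=r(\ell+\ceil{\log k})$ qubits. The verifier receives $\ket{\phi}^{\otimes k}$, i.e.\ $k$ bundles (left vertices $L=[k]$), each holding $r$ labeled slots. Reading the labels in the computational basis gives a bipartite multigraph between bundles and labels $R=[k]$; in the honest case its law is exactly $B_{k,r}$ of \Cref{lemma:randomized-matching-bound}.

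A classical reversible circuit, acting coherently on the label registers only, computes a canonical perfect matching (e.g.\ the lexicographically first via a deterministic matching algorithm) if one exists, and writes a flag bit $g$. Controlled on $g=1$, it permutes registers so that for each $j\in[k]$ the $\ell$-qubit register of a slot labeled $j$ in the bundle matched to $j$ is routed into position $j$ of $V$'s witness. It then runs $V$ and applies the branch-overlap test of \Cref{lemma:branch-overlap-test} to the pair $(\Gamma',I)$. Here $\Gamma'$ acts as $V^\dagger X_\sfO V$ on the routed registers when $g=1$ (conjugated by the routing), and as a fixed dummy, namely identity, when $g=0$. Garbage from the matching computation is uncomputed, so the branch contributes acceptance $1$.

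For the analysis I would condition on the classical label string $\mathbf{j}$, which $\Gamma'$ never changes; branches with different labels are therefore orthogonal on both sides of the overlap. Writing the witness as $\sum_{\mathbf{j}}\sqrt{p(\mathbf{j})}\ket{\mathbf{j}}\ket{\Xi_{\mathbf{j}}}$, the acceptance probability becomes
\[ \tfrac12+\tfrac12\sum_{\mathbf{j}}p(\mathbf{j})\,\bra{\Xi_{\mathbf{j}}}\Gamma'_{\mathbf{j}}\ket{\Xi_{\mathbf{j}}}. \]
On a good $\mathbf{j}$, the routed registers hold a product of $k$ non-negative states: the matched slots lie in distinct bundles, and within a bundle the slots are independent tensor factors, so no entanglement crosses them. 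The remaining slots are traced out as junk. Hence the inner term is $2\Pr[V\text{ accepts}]-1$ on some product non-negative witness. On a bad $\mathbf{j}$, the term is $1$.

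In the honest case, $p(\mathbf{j})$ is the law of $B_{k,r}$, and the routed state is exactly $\ket{\psi_1}\cdots\ket{\psi_k}$. Acceptance is therefore at least $(1-q)c+q$, where $q\le 1/4$ is the failure probability, which is at least $c$. For soundness, take any non-negative $\ket{\phi}=\sum_{j}\sqrt{\mu_j}\ket{j}\ket{\chi_j}$ per slot, where the label distribution $\mu$ may be arbitrary. Every good branch yields acceptance at most $s=c-\Delta$, and bad branches yield $1$. The total is at most $(1-q_\mu)s+q_\mu$, where $q_\mu$ is the matching-failure probability under $\mu$. This fails, because a cheating prover can pick $\mu$ concentrated on a single label, forcing $q_\mu=1$.

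So the dummy behavior must instead \emph{reject} with some calibrated weight. Let bad branches contribute value $2t-1$ with acceptance $t$, implemented by a dyadic mixture of accept and reject through an extra $\ket{+}$ ancilla and a reject gadget (e.g.\ flip into an orthogonal label, as in the footnote on $\MA\subseteq\StoqMA$). Soundness is then at most $\max(s,t)$, and completeness is at least $(3/4)c+(1/4)t$. Choosing $t=s=c-\Delta$ gives completeness at least $c-\Delta/4\ge c-\Delta/2$ and soundness at most $c-\Delta$, which yields the claimed gap of $\Delta/2$.

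The main obstacle is this soundness step. It needs two things: that the routed state on each good label branch is genuinely a product of non-negative states, so that $V$'s soundness applies branchwise, and that coherence between different label branches cannot help the prover, which relies on $\Gamma'$ preserving the labels. The other ingredient is verifying the union bound in \Cref{lemma:randomized-matching-bound}, which is assumed here, and noting that the fixed dyadic $t$ introduces only the negligible truncation of \Cref{footnote:dyadic-truncation-error}.
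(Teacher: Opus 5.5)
Your construction matches the paper's proof. It uses the same honest witness $\bigl(\frac{1}{\sqrt k}\sum_j\ket{j}\ket{\psi_j}\bigr)^{\otimes r}$ sent $k$ times, and the same canonical-matching routing conjugating $V^\dagger X_{\mathsf O}V$. After your self-correction it also uses the same dummy branch calibrated to accept with probability $c-\Delta$. This gives completeness at least $c-\Delta/4$ via the random matching bound and soundness at most $c-\Delta$.

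The one step that fails as written is the soundness justification. You parametrize the cheating one-copy state as a product over the $r$ slots, $\sum_j\sqrt{\mu_j}\ket{j}\ket{\chi_j}$ per slot, and argue that ``within a bundle the slots are independent tensor factors.'' A cheating prover's $m$-qubit state $\ket{\phi}$ is an arbitrary non-negative state, so labels and data can be correlated across the $r$ slots of a bundle. The selected slot's data register is then in a mixed state, not some pure $\ket{\chi_j}$, so your argument covers only a restricted class of cheating strategies.

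The conclusion still holds, but the reason must use only the product structure across bundles. Because the witness is $\ket{\phi}^{\otimes k}$, conditioning on the label string $\mathbf j=(\mathbf j_1,\dots,\mathbf j_k)$ leaves bundle $i$'s data registers in some non-negative pure state $\ket{\xi_i}$, so $\ket{\Xi_{\mathbf j}}=\bigotimes_i\ket{\xi_i}$. A perfect matching takes exactly one slot from each bundle. Hence the routed input to $V$ is $\rho_1\otimes\cdots\otimes\rho_k$, where $\rho_i$ is the reduced state of the chosen slot of $\ket{\xi_i}$. Write $\rho_i=\sum_y\ket{\xi_{i,y}}\bra{\xi_{i,y}}$ with $\ket{\xi_{i,y}}\coloneqq(\bra{y}\otimes I)\ket{\xi_i}$, where $y$ ranges over the discarded data registers; each $\ket{\xi_{i,y}}$ is entrywise non-negative. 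So $\rho_1\otimes\cdots\otimes\rho_k$ is a convex combination of non-negative product pure states, and by linearity each good branch has value at most $2s-1$. This partial-trace-plus-convexity argument is how the paper closes the step.

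With it in place, the joint label law within a bundle is irrelevant, since every branch is bounded by $\max(s,t)$, and your proof goes through.
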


\begin{proof}
Fix an input $x$, and let $V$ be the original $k$-witness verifier. For \emph{yes} instances, there are non-negative $\ell$-qubit states $\ket{\psi_1},\ldots,\ket{\psi_k}$ such that
\begin{equation}\label{eq:old-completeness}
    \Pr\sbra*{V \text{ accepts }\ket{\psi_1}\totimes\cdots\totimes\ket{\psi_k}}
    \ge c .
\end{equation}
For \emph{no} instances, every product of $k$ non-negative $\ell$-qubit states is
accepted with probability at most $c-\Delta$.

\parheading{Honest symmetric proof.}
One slot consists of a label register and an $\ell$-qubit data register. The honest one-copy symmetric proof is
\[
    \ket{\Phi} \coloneqq 
    \rbra[\bigg]{
        \frac{1}{\sqrt{k}}\sum_{j\in[k]}\ket{j}\ket{\psi_j}
    }^{\otimes r} .
\]
Merlin sends $\ket{\Phi}^{\otimes k}$.  This is a valid non-negative symmetric witness of length $m$.

\parheading{New verifier.} For each copy $i\in[k]$, let $\sfL_{i,h}$ and $\sfD_{i,h}$ be the $h$-th label register and data register, respectively, of this copy. We expand $\ket{\Phi}^{\otimes k}$ with respect to the registers $\sfL\coloneqq (\sfL_{1,1},\cdots,\sfL_{k,r})$:
\begin{equation}
    \ket{\Phi}^{\otimes k} = \frac{1}{k^{kr/2}} \sum_{J\in[k]^{k\times r}}\bigotimes_{i\in[k]} \bigotimes_{h\in[r]} \ket{j_{i,h}}_{\sfL_{i,h}}\ket{\psi_{j_{i,h}}}_{\sfD_{i,h}}.
\end{equation}
Since different index-label branches $J = (j_{i,h})_{i\in[k],h\in[r]} \in [k]^{k \times r}$ \emph{do not} interfere before the verifier's final measurement, it suffices to analyze each index-label branch $J$ separately. For a fixed index-label branch, the verifier deterministically constructs,  whenever possible, a mapping scheme from the available labeled data registers to the roles of the original verifier. 

We now describe the construction for a fixed index-label $J$. Define a bipartite graph $G_J = (L\sqcup R, E_J)$, where the left side $L = [k]$ corresponds to the witness copies, and the right side $R = [k]$ corresponds to the roles of the original verifier. Place an edge between $i\in L$ and $a\in R$ if and only if the $i$-th witness copy, namely $\rbra*{j_{i,h}}_{h\in[r]}$, contains at least one slot labeled by $a$, that is, $j_{i,h_{\star}}=a$ for some $h_\star\in[r]$. If $G_J$ has a perfect matching, then the verifier chooses the lexicographically first perfect matching as the mapping scheme and routes the corresponding data registers to $V$. Otherwise, it runs a fixed dummy verifier that accepts with probability $c-\Delta$. The resulting stoquastic verification circuit for the length-efficient symmetrization is given in \Cref{protocol:stoq-length-efficient-symmetrization}.

\LinesNotNumbered
\begin{algorithm}[ht!]
    \UseProtocolCounter
    \caption{Stoquastic length-efficient symmetrization.}
    \label{protocol:stoq-length-efficient-symmetrization}
    \SetEndCharOfAlgoLine{.}
    \setlength{\parskip}{5pt}
    \SetKwInOut{Registers}{Registers}
    \SetKwInOut{Parameter}{Parameters}
    \SetKwIF{If}{ElseIf}{Else}{If}{:}{elif}{Else:}{}%

    \Parameter{$r\coloneqq \ceil*{12\ln k}$ and $m\coloneqq r\rbra*{\ell+\ceil*{\log k}}$.}
    \Registers{
    \begin{minipage}[t]{0.75\linewidth}
        \hangindent=3.2em
        \hangafter=1
        $\sfW_1,\ldots,\sfW_k$: the $k$ symmetric witness registers, where each $\sfW_i$ consists of $r$ label-data register pairs $(\sfL_{i,h},\sfD_{i,h})_{h\in[r]}$\;
        $\sfA_0$: the ancillary register initialized to $\ket{0}^{\otimes m_0}$\;
        $\sfA_+$: the ancillary register initialized to $\ket{+}^{\otimes m_+}$\;
        $\sfF$: a single-qubit register initialized to $\ket{0}.$
    \end{minipage}
    }

    \textbf{1.} Consider a fixed index-label branch $J=(j_{i,h})_{i\in[k],h\in[r]}$. 
    Define the bipartite graph $G_J=(L\sqcup R,E_J)$ as above. Let $P_\match$ be the predicate such that $P_\match(J)=1$ if and only if $G_J$ admits a perfect matching. There is a reversible circuit $\Gamma_\match$ such that 
    \[\Gamma_\match \ket{J}_\sfL\ket{0}_\sfF\ket{\bar{0}}_
    {\sfA_0} = \ket{J}_\sfL\ket{P_\match(J)}_\sfF\ket{\bar{0}}_
    {\sfA_0}.\]
    \textbf{2.} \If{ \textnormal{$G_J$ has a perfect matching, or equivalently, $P_\match(J)=1$}}{
        \smallskip
        \textbf{2.1} Write the lexicographically first perfect matching as $\mu_J:R\to L$. For each role $a\in[k]$, let 
        $h_J(a)\coloneqq \min\set{h\in[r]}{j_{\mu_J(a),h}=a}$.
        Route the ordered data registers $\sfD_{\mu_J(1),h_J(1)},\ldots,\sfD_{\mu_J(k),h_J(k)}$ to the $k$ input registers of $V$, and let $\Gamma_\route$ denote the corresponding reversible routing circuit\;
        \textbf{2.2} Apply the reversible circuit 
        $\Gamma_{V,J} \coloneqq \Gamma_\route^\dagger V^\dagger X_{\sfO}V\Gamma_\route$, with $\sfO$ denoting the output qubit register of $V$\;
    }\Else{
        Apply a fixed dummy reversible circuit $\Gamma_\dum$ such that the branch-overlap test (\Cref{lemma:branch-overlap-test}) applied to the circuit pair $(\Gamma_\dum,I)$ accepts with probability $c-\Delta$\;
    }
    \textbf{3.} Combine the branch operations into one reversible circuit $\Gamma_{\rm sim}$, and apply the branch-overlap test (\Cref{lemma:branch-overlap-test}) to the circuit pair $(\Gamma_{\rm sim},I)$
    on the initialized state of the witness registers together with $\sfA_0$, $\sfA_+$ and $\sfF$.
\end{algorithm}

\parheading{Completeness.}
Under the honest proof, each of the $k$ proof copies independently chooses $r=\ceil*{12\ln k}$ uniform labels from $[k]$. Therefore the matching bound (\Cref{lemma:randomized-matching-bound}) gives over a random index-label branch $J$,
\begin{equation}\label{eq:matching-prob}
    \Pr\sbra*{G_J\text{ has a perfect matching}}
    \ge \frac34 .
\end{equation}
On the matching branch, the routed tuple is exactly $\ket{\psi_1}\totimes\cdots\totimes\ket{\psi_k}$, so \Cref{eq:old-completeness} implies acceptance probability at least $c$.  On the no-matching branch, the dummy branch accepts with probability $c-\Delta$. Hence, it follows that
\begin{align*}
    \Pr\sbra*{\text{new verifier accepts}}
    &\ge
    \Pr\sbra*{\text{matching}}\,c
    +
    \Pr\sbra*{\text{no matching}}\,(c-\Delta) \\
    &= c-\Pr\sbra*{\text{no matching}}\Delta \\
    &\ge c-\frac{\Delta}{4}
    \ge c-\frac{\Delta}{2} .
\end{align*}
This proves the claimed completeness threshold.

\parheading{Soundness.}
Assume $x$ is a \emph{no} instance, and let $\ket{\Phi}$ be any non-negative one-copy state used by the symmetric prover.  The full witness is $\ket{\Phi}^{\otimes k}$. 
Over any index-label branch $J$, if $G_J$ has no perfect matching, the verifier accepts with probability $c-\Delta$ in~\Cref{protocol:stoq-length-efficient-symmetrization}.  If it has a perfect matching, the verifier selects one data block from each of $k$ distinct proof copies.  
After discarding unused registers, the old verifier receives a product state
\[
    \rho_1\otimes\cdots\otimes\rho_k .
\]
Here, each $\rho_i$ is obtained by
slicing a non-negative pure state
$\bigotimes_{h\in[r]} \ket{\psi_{j_{i,h}}}_{\sfD_{i,h}}$
over the discarded registers. Thus
$\rho_1\otimes\cdots\otimes\rho_k$ is a convex combination of non-negative product pure
states.
Consequently, by convexity and \Cref{prop:opt-stoqSepVal-nonNeg}, the original soundness bound also holds for arbitrary non-negative product mixed states. Thus the matching branch accepts with probability at most $c-\Delta$.
Every fixed index-label branch therefore contributes acceptance probability at most $c-\Delta$.  Averaging over label tables preserves the same bound. The new soundness is at most $c-\Delta$.

Combining completeness and soundness, the new promise gap is
\[
    \rbra*{c-\Delta/2}-\rbra*{c-\Delta}=\Delta/2,
\]
as claimed.
\end{proof}

\subsubsection{The stoquastic equality test}
\label{subsubsec:stoq-equality-test}
Let $\calH$ be the single-proof Hilbert space and let $U_\pi$ denote the unitary that permutes the $k$ witness registers according to $\pi\in S_k$. The projector onto the fully symmetric subspace satisfies
\[
    \Pi_{\sym}=\frac{1}{k!}\sum_{\pi\in S_k} U_\pi.
\]

The only minor subtlety is that the ideal construction would use a uniform superposition branch over the $k!$ permutations, but such a state cannot generally be prepared exactly from the allowed $\ket{0}$ and $\ket{+}$ ancillary qubits, since $k!$ need not be \emph{dyadic}. We avoid this issue by introducing a $q$-qubit dyadic branch register, initialized to $\ket{+}^{\otimes q}$, and mapping its $2^q$ basis branches to permutations as evenly as possible. 

\begin{lemma}[Stoquastic projector onto the symmetric subspace]
\label{lem:stoq-sym-branch-dyadic}
Let $b\geq 0$ be an integer, let $K\coloneqq k!$, set
$q\coloneqq \ceil*{\log K}+b$ and $N\coloneqq 2^q$, and write
$r\coloneqq N \bmod K$. For every pure state $\ket{\Phi}\in\calH^{\otimes k}$, the stoquastic verification circuit $V_q$ in \Cref{protocol:stoq-symmetric-space-projector-dyadic} satisfies
\[ \abs*{\Pr\sbra*{V_q \text{ accepts } \ket{\Phi}}
    - \rbra*{\frac{1}{2}+
    \frac12\bra{\Phi}\Pi_{\sym}\ket{\Phi}}}
    \leq \zeta_{\dyad}(q). \]
Here, the dyadic truncation error $\zeta_{\dyad}(q)$ is bounded by
\begin{equation*}
    \zeta_{\dyad}(q)
    \coloneqq \frac{r(K-r)}{KN}
    \leq \frac{K}{4N}
    \leq 2^{-b-2}.
\end{equation*}
Moreover, every tensor power $\ket{\psi}^{\otimes k}$ is accepted by $V_q$ with probability $1$.
\end{lemma}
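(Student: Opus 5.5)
The plan is to build $V_q$ from the branch-overlap test (\Cref{lemma:branch-overlap-test}) applied to a circuit pair $(\Gamma,I)$, in the same way as \Cref{protocol:stoq-product-test}.

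\textbf{Step 1: the branch map.} Fix a deterministic surjection $\tau\colon\binset^q\to S_k$ in which each permutation has either $\floor{N/K}$ or $\ceil{N/K}$ preimages. Exactly $r$ permutations receive $\ceil{N/K}$ preimages. Concretely, one can take $\tau(z)\coloneqq$ the permutation indexed by $z\bmod K$ under a fixed efficiently computable ranking of $S_k$.

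\textbf{Step 2: the controlled permutation.} Let $\Gamma$ be the reversible circuit that maps $\ket{z}\ket{\Phi}\ket{\bar{0}}$ to $\ket{z}\,U_{\tau(z)}\ket{\Phi}\ket{\bar{0}}$. It computes $\tau(z)$ into scratch space, applies the permutation through controlled-SWAPs (Fredkin gates, see \Cref{footnote:Fredkin}) between witness registers, and then uncomputes the scratch. This uses only Toffoli, CNOT, and X gates.

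\textbf{Step 3: the acceptance probability.} By \Cref{lemma:branch-overlap-test} applied to $(\Gamma,I)$ on $\ket{+}^{\otimes q}\ket{\Phi}\ket{\bar{0}}$, the verifier accepts with probability
\[
\frac12+\frac12\cdot\frac1N\sum_{z}\bra{\Phi}U_{\tau(z)}\ket{\Phi}=\frac12+\frac12\bra{\Phi}A\ket{\Phi},
\qquad A\coloneqq\sum_{\pi\in S_k}w_\pi U_\pi .
\]
Here $w_\pi=\abs{\tau^{-1}(\pi)}/N$. Two points need checking in this step. First, the branches with distinct $z$ remain orthogonal because the register $\sfJ$ is left untouched. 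Second, the overlap $\bra{\Phi}U_\pi\ket{\Phi}$ may be complex, but summing over $\pi$ and $\pi^{-1}$ only matters for the real part. The paper's formula already takes the inner product, and we compare real parts; since $\Pi_{\sym}$ is Hermitian, this causes no loss.

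\textbf{Step 4: the error bound (main obstacle).} The main obstacle is bounding $\abs{\bra{\Phi}(A-\Pi_{\sym})\ket{\Phi}}$ sharply by $r(K-r)/(KN)$ rather than the naive total variation bound. Write $w_\pi=1/K+\delta_\pi$. For the $r$ heavy permutations, $\delta_\pi=(K-r)/(KN)$. For the $K-r$ light ones, $\delta_\pi=-r/(KN)$. These deviations sum to zero. We then have
\[
\bra{\Phi}(A-\Pi_{\sym})\ket{\Phi}=\sum_\pi\delta_\pi\bra{\Phi}U_\pi\ket{\Phi},
\]
and each term satisfies $\abs{\bra{\Phi}U_\pi\ket{\Phi}}\le1$. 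Using $\sum_\pi\delta_\pi=0$ and $\abs{\bra{\Phi}U_\pi\ket{\Phi}}\le 1$, the sum is at most $\frac12\sum_\pi\abs{\delta_\pi}=r(K-r)/(KN)$ in absolute value. One way to see this is to subtract a suitable constant $\alpha\le1$ times $\sum_\pi\delta_\pi$ from the real parts. If that step fails, I would accept a factor of $2$ and note that it suffices downstream. The stated error is then half of this, $\zeta_{\dyad}(q)$, because of the $\frac12$ prefactor in the acceptance probability. The remaining inequalities are routine: $r(K-r)\le K^2/4$ gives the bound $K/(4N)$, and $N\ge K2^b$ gives $2^{-b-2}$.

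\textbf{Step 5: perfect completeness.} For $\ket{\Phi}=\ket{\psi}^{\otimes k}$, every $U_{\tau(z)}$ fixes the state. Hence $\Gamma$ acts as the identity on the input, the overlap equals $1$, and the acceptance probability is exactly $1$, with no truncation error.
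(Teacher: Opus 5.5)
Your Steps 1--3 and 5 are the paper's proof: a balanced dyadic branch map, a controlled register permutation $\Gamma$, the branch-overlap test on $(\Gamma,I)$, and the observation that every $U_\pi$ fixes $\ket{\psi}^{\otimes k}$. The problem is Step 4. There the factor-of-$2$ bookkeeping is off, and this sends you after an inequality that is unnecessary and, for general pure states, false.

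Write $c_\pi \coloneqq \Real\bra{\Phi}U_\pi\ket{\Phi}\in[-1,1]$. The acceptance-probability error is $\frac12\abs{\sum_\pi\delta_\pi c_\pi}$. The plain triangle inequality bounds this by $\frac12\sum_\pi\abs{\delta_\pi}=\frac12\cdot\frac{2r(K-r)}{KN}=\zeta_{\dyad}(q)$, which is exactly the statement and exactly the paper's argument. So the fallback you expect to lose a factor of $2$ in fact loses nothing. Your sentence ``the stated error is then half of this, $\zeta_{\dyad}(q)$'' is an arithmetic slip: half of $r(K-r)/(KN)$ is $\zeta_{\dyad}(q)/2$.

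Your ``sharp'' claim $\abs{\sum_\pi\delta_\pi c_\pi}\le\frac12\sum_\pi\abs{\delta_\pi}$ would require all $c_\pi$ to lie in an interval of length $1$. Only then does the shift by $\alpha$ (allowed since $\sum_\pi\delta_\pi=0$) give $\abs{c_\pi-\alpha}\le\frac12$.
\begin{itemize}
\item For non-negative $\ket{\Phi}$ this holds, since $c_\pi=\bra{\Phi}U_\pi\ket{\Phi}\in[0,1]$. You would then get the stronger bound $\zeta_{\dyad}(q)/2$.
\item For arbitrary pure states, which the lemma covers, it fails. Take $k=3$ and $b=0$, so $K=6$, $N=8$, $r=2$. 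Then $\delta_\pi=\frac1{12}$ on the two heavy permutations and $-\frac1{24}$ on the other four. Choose a ranking whose two heavy permutations are transpositions, and a totally antisymmetric $\ket{\Phi}$ (possible when $\dim\calH\ge 3$), so that $c_\pi=\mathrm{sgn}(\pi)$. This gives $\abs{\sum_\pi\delta_\pi c_\pi}=\frac16+\frac1{12}=\frac14>\frac16=\frac12\sum_\pi\abs{\delta_\pi}$.
\end{itemize}
So drop the sharp claim and use the triangle inequality directly; with that change your proposal proves the lemma as stated.
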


\begin{proof}
Fix an ordering $S_k=\cbra*{\pi_0,\ldots,\pi_{K-1}}$.  Let
$a\coloneqq \floor*{N/K}$, so $N=aK+r$ with $0\leq r<K$. Define the balanced
map $f_q \colon \cbra*{0,\ldots,N-1}\to\cbra*{0,\ldots,K-1}$ by
\begin{equation}
    \label{eq:balanced-map}
    f_q(j)\coloneqq
    \begin{cases}
        \floor*{j/(a+1)},
        & 0\leq j<r(a+1),\\[1mm]
        r+\floor*{\rbra*{j-r(a+1)}/a},
        & r(a+1)\leq j<N .
    \end{cases}
\end{equation}
Equivalently, the balanced map $f_q$ assigns exactly $a+1$ dyadic branches to each of the first $r$ permutations and exactly $a$ dyadic branches to each of the remaining $K-r$ permutations. 
We now implement this balanced map by the stoquastic verification circuit provided in \Cref{protocol:stoq-symmetric-space-projector-dyadic}. 

\LinesNotNumbered
\begin{algorithm}[ht!]
    \UseProtocolCounter
    \caption{Stoquastic implementation of the symmetric-subspace projector.}
    \label{protocol:stoq-symmetric-space-projector-dyadic}
    \SetEndCharOfAlgoLine{.}
    \setlength{\parskip}{5pt}
    \SetKwInOut{Parameters}{Parameters}
    \SetKwInOut{Registers}{Registers}

    \Parameters{Extra precision bits $b$, $q=\ceil*{\log(k!)}+b$, and $N=2^q$.}
    \Registers{
    \begin{minipage}[t]{0.76\linewidth}
        $\sfA=\sfA_1\cdots\sfA_k$: witness register, storing $\ket{\Phi}_{\sfA}$\;
        $\sfJ$: $q$-qubit branch register, initialized to $\ket{+}_{\sfJ}^{\otimes q}$\;
        $\sfZ$: workspace for $\Gamma_q$, initialized to $\ket{\bar{0}}_{\sfZ}$.
    \end{minipage}}

    \textbf{1.} Define a reversible circuit $\Gamma_q$ that, on dyadic branch $j\in\cbra{0,\ldots,2^q-1}$, computes $t=f_q(j)$ and applies the permutation $\pi_t$ to the registers $\sfA_1,\ldots,\sfA_k$, uncomputing all workspace afterward. Equivalently, it holds that: 
    \[ \Gamma_q\ket{j}_{\sfJ}\ket{x_1}_{\sfA_1}\cdots\ket{x_k}_{\sfA_k}\ket{\bar{0}}_{\sfZ} =
        \ket{j}_{\sfJ}
        \ket{x_{\pi_{f_q(j)}^{-1}(1)}}_{\sfA_1}\cdots
        \ket{x_{\pi_{f_q(j)}^{-1}(k)}}_{\sfA_k}
        \ket{\bar{0}}_{\sfZ}.
    \]
    \textbf{2.} Apply the branch-overlap test
    (\Cref{lemma:branch-overlap-test}) to the circuit pair $(\Gamma_q,I)$ on the initialized state $\ket{+}_{\sfJ}^{\otimes q}\ket{\Phi}_{\sfA}\ket{\bar{0}}_{\sfZ}$.
\end{algorithm}

Let $p_t\coloneqq \abs*{f_q^{-1}(t)}/N$. Following \Cref{prop:opt-stoqSepVal-nonNeg}, without loss of generality, we can assume that $\ket{\Phi}$ is a non-negative state. As a consequence of \Cref{lemma:branch-overlap-test} and \Cref{protocol:stoq-symmetric-space-projector-dyadic}, the acceptance probability of $V_q$ is
\begin{equation}
    \label{eq:stoq-equality-test-dyadic}
    \Pr\sbra*{V_q \text{ accepts } \ket{\Phi}}
    = \frac12+\frac{1}{2N}\sum_{j=0}^{N-1} \bra{\Phi}U_{\pi_{f_q(j)}}\ket{\Phi} 
    = \frac12+\frac12\sum_{t=0}^{K-1}p_t \bra{\Phi}U_{\pi_t}\ket{\Phi}.
\end{equation}

On the other hand, the ideal uniform average over permutations would give
\begin{equation}
    \label{eq:stoq-equality-test-errorless}
    \frac{1}{2}+\frac{1}{2}\bra{\Phi}\Pi_{\sym}\ket{\Phi}
    = \frac{1}{2}+\frac{1}{2K}\sum_{t=0}^{K-1} \bra{\Phi}U_{\pi_t}\ket{\Phi} .
\end{equation}

The balanced map minimizes the total-variation distance from the uniform
measure on $S_k$ among all maps from $N$ dyadic branches to the $K$ permutations. Consequently, combining \Cref{eq:stoq-equality-test-dyadic,eq:stoq-equality-test-errorless} yields the following dyadic truncation-error bound: 
\begin{align*}
    \abs*{\Pr\sbra*{V_q \text{ accepts } \ket{\Phi}}
    - \rbra*{\frac{1}{2}+\frac{1}{2}\bra{\Phi}\Pi_{\sym}\ket{\Phi}}} &= \abs*{ \frac{1}{2} \sum_{t=0}^{K-1} \rbra*{p_t - \frac{1}{K}} \bra{\Phi}U_{\pi_t}\ket{\Phi} }\\
    &\leq \frac{1}{2} \sum_{t=0}^{K-1} \abs*{p_t - \frac{1}{K}} \cdot 1\\
    &=\frac{1}{2}\rbra*{
        r\cdot\frac{K-r}{KN}+(K-r)\cdot\frac{r}{KN}}\\
    &= \frac{r(K-r)}{KN} \eqqcolon \zeta_{\dyad}(q).
\end{align*}
Here, the second line uses the triangle inequality and the bound $\bra{\Phi}U_{\pi_t}\ket{\Phi} \leq 1$, and the third line follows from \Cref{eq:balanced-map}. 
Since $r(K-r)\leq K^2/4$ and $N\geq 2^bK$, we have
\[\zeta_{\dyad}(q)\leq K/(4N)\leq 2^{-b-2}.\] 

Finally, if $\ket{\Phi}=\ket{\psi}^{\otimes k}$, then
$U_{\pi_t}\ket{\Phi}=\ket{\Phi}$ for every $t$, so the acceptance probability is exactly $1$, as desired.
\end{proof}

\begin{lemma}[Stoquastic equality test]
\label{lem:sym-close}
Let $\ket{\Phi}=\ket{\phi_1}\otimes\cdots\otimes\ket{\phi_k}$ be a pure product state in $\mathcal{H}^{\otimes k}$.
Then there exists a one-register pure state $\ket{\psi}\in\mathcal{H}$ such that
\[
    \td \rbra[\big]{ \ket{\Phi},\ket{\psi}^{\otimes k} }
    \le 2\sqrt{1-\bra\Phi \Pi_{\sym}\ket\Phi}.
\]
\end{lemma}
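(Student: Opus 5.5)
The plan is to pass through the normalized symmetrization of $\ket{\Phi}$ and use the triangle inequality for $\td$, so that the two factors of $\sqrt{1-\bra\Phi\Pi_{\sym}\ket\Phi}$ come from two separate distance bounds. Set $\varepsilon\coloneqq 1-\bra\Phi\Pi_{\sym}\ket\Phi$; if $\varepsilon=0$ there is nothing to prove, since then $\ket{\Phi}$ is a product state lying in the symmetric subspace. Otherwise let $\ket{S}\coloneqq \Pi_{\sym}\ket{\Phi}/\norm{\Pi_{\sym}\ket{\Phi}}_2$. Since $\Pi_{\sym}$ is a projector, $\abs{\innerprod{S}{\Phi}}^2=\bra\Phi\Pi_{\sym}\ket\Phi=1-\varepsilon$. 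By \Cref{eq:pure-pure-Fuchs-vanGraaf} this gives
\[
\td\rbra[\big]{\ket{\Phi},\ket{S}}=\sqrt{\varepsilon}.
\]
It therefore suffices to find $\ket{\psi}$ with $\td\rbra[\big]{\ket{S},\ket{\psi}^{\otimes k}}\le\sqrt{\varepsilon}$, or more directly $\abs{\innerprod{\psi^{\otimes k}}{\Phi}}^2\ge 1-4\varepsilon$. Either bound yields the claimed $2\sqrt{\varepsilon}$.

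To find $\ket{\psi}$, I would use the explicit formula
\[
\bra\Phi\Pi_{\sym}\ket\Phi=\frac1{k!}\sum_{\pi\in S_k}\prod_{i\in[k]}\innerprod{\phi_i}{\phi_{\pi(i)}}.
\]
Each term factors over the cycles of $\pi$, and every factor has modulus at most $1$. The natural candidate is $\ket{\psi}\coloneqq\ket{\phi_{j^\star}}$ for a suitably chosen index $j^\star$, or alternatively the top eigenvector of the common one-register marginal of $\ket{S}$, which is the same on every register because $\ket{S}$ is symmetric. The target is then a lower bound of the form
\[
\prod_{i}\abs{\innerprod{\phi_i}{\psi}}^2\ \ge\ 1-O(\varepsilon).
\]
A first, easy consequence is obtained from the operator inequality $\Pi_{\sym}\preceq\frac12\rbra*{I+\mathrm{SWAP}_{ij}}$ for each pair $i\ne j$. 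This gives $\abs{\innerprod{\phi_i}{\phi_j}}^2\ge 1-2\varepsilon$ for all pairs, so all factors are pairwise close.

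The main obstacle is making the bound independent of $k$. Multiplying the pairwise estimates only gives $1-2k\varepsilon$, which is far too weak. To overcome this, I would use the whole permutation average rather than single transpositions. Writing $t_i\coloneqq 1-\abs{\innerprod{\phi_i}{\psi}}^2$, I would try to show that a uniformly random $\pi$ moves a constant fraction of indices. Combined with the cycle factorization and $\abs{\innerprod{\phi_i}{\phi_j}}\le$ something like $1-\Omega(t_i+t_j)$, this should give
\[
1-\varepsilon\ \le\ \E_\pi\prod_i\abs{\innerprod{\phi_i}{\phi_{\pi(i)}}}\ \le\ 1-\Omega\rbra*{\min\cbra*{1,\textstyle\sum_i t_i}}.
\]
From this one gets $\sum_i t_i=O(\varepsilon)$, and hence $\abs{\innerprod{\psi^{\otimes k}}{\Phi}}^2\ge\prod_i(1-t_i)\ge 1-O(\varepsilon)$. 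I expect the delicate part to be pinning the constant down to exactly $4$. If the direct estimate loses constants, the fallback is to bound $\td(\ket{S},\ket{\psi}^{\otimes k})$ via the marginal-eigenvector choice of $\ket{\psi}$ and then apply the triangle inequality above.
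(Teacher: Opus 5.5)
\textbf{Verdict: genuine gap.} Your reduction is the same as the paper's. You normalize $\Pi_{\sym}\ket{\Phi}$ to $\ket{S}$, note $\td(\ket{\Phi},\ket{S})=\sqrt{\varepsilon}$, and reduce to finding $\ket{\psi}$ with $\abs{\innerprod{S}{\psi^{\otimes k}}}^2\ge 1-\varepsilon$. But you never close this second leg. The missing idea is a structural fact, not a permutation-averaging estimate: for a \emph{symmetric} pure state, the maximal overlap with pure product states is attained by a symmetric product state $\ket{\psi}^{\otimes k}$. The paper invokes \cite[Lemma~1]{HKW+09} for this; it is a Banach-type statement about symmetric tensors. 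Since $\ket{\Phi}$ is itself a product state with $\abs{\innerprod{S}{\Phi}}^2=1-\varepsilon$, this immediately yields $\ket{\psi}$ with $\abs{\innerprod{S}{\psi^{\otimes k}}}^2\ge 1-\varepsilon$. Hence $\td(\ket{S},\ket{\psi}^{\otimes k})\le\sqrt{\varepsilon}$, and the triangle inequality gives exactly $2\sqrt{\varepsilon}$, with no $k$-dependence to fight.

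The direct route you propose instead does not go through as written.
\begin{itemize}
\item The ingredient $\abs{\innerprod{\phi_i}{\phi_j}}\le 1-\Omega(t_i+t_j)$ cannot hold as a general lemma. Pairwise closeness does not control distance to a centre: it fails whenever two factors coincide but differ from $\ket{\psi}$. For example, with $\ket{\phi_1}=\ket{\phi_2}\neq\ket{\phi_3}=\ket{\phi_4}$, no choice of $\ket{\psi}$ satisfies it. So the chain leading to $\E_\pi\prod_i\abs{\innerprod{\phi_i}{\phi_{\pi(i)}}}\le 1-\Omega(\min\{1,\sum_i t_i\})$ is unsupported.
\item Even a repaired, averaged version would only give $1-O(\varepsilon)$ with an unspecified constant, not the stated factor $2$.
\item The fallback does not rescue this. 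For the top eigenvector $\ket{\psi}$ of the one-register marginal $\rho_1$ of $\ket{S}$, $\lambda_{\max}(\rho_1)$ is an \emph{upper} bound on $\max_{\psi}\abs{\innerprod{S}{\psi^{\otimes k}}}^2$, not a lower bound. The generic lower bound, a union bound over the $k$ registers, is only $1-k(1-\lambda_{\max}(\rho_1))$. That reintroduces exactly the factor of $k$ you were trying to remove.
\end{itemize}
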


\begin{proof}
Let $s_{\sym} \coloneqq \bra\Phi \Pi_{\sym}\ket\Phi.$ 
If $s_{\sym}=0$, the conclusion is trivial.  Assume henceforth that $s_{\sym}>0$ and
set
\[
    \ket{S} \coloneqq \frac{\Pi_{\sym}\ket{\Phi}}{\sqrt{s_{\sym}}}.
\]
Then $\ket{S}$ is a normalized symmetric state and
$\abs{\langle S\ket\Phi}^2=s_{\sym}.$
Since $\ket{\Phi}$ is itself a product state, the maximum product overlap of $\ket{S}$ is at least
$\sqrt{s_{\sym}}$.  For symmetric pure states, a closest product state may be chosen
symmetric~\cite[Lemma~1]{HKW+09}; equivalently, there exists a one-register state $\ket{\psi}$ such that
$\abs*{\innerprod*{S}{\psi^{\otimes k}}}^2\ge s_{\sym}.$
Therefore, the identity in \Cref{eq:pure-pure-Fuchs-vanGraaf} yields 
\[
    \td(\ket{S},\ket{\psi}^{\otimes k})
    \le \sqrt{1-s_{\sym}}.
\]
Also,
\[
    \td(\ket{\Phi},\ket{S})
    =\sqrt{1-s_{\sym}}.
\]
The claim now follows from the triangle inequality:
\[
    \td \rbra[\big]{ \ket{\Phi},\ket{\psi}^{\otimes k} }
    \le \td (\ket{\Phi},\ket{S})+\td(\ket{S},\ket{\psi}^{\otimes k})
    \le 2\sqrt{1-s_{\sym}}.\qedhere
\]
\end{proof}

\subsubsection{\texorpdfstring{$\SymStoqMAk \subseteq \StoqMAk$}{}}
\label{subsubsec:SymStoqMAk-in-StoqMAk}

A direct consequence of the stoquastic equality test (\Cref{lem:stoq-sym-branch-dyadic}) is the other direction of witness symmetrization, which establishes \Cref{thm:StoqMAk-eq-SymStoqMAk}\ref{thmitem:SymStoqMA-in-StoqMA-ssbra}:

\begin{theorem}
\label{thm:sym-to-stoq-k}
Let $k(n)$ be a polynomially bounded integer-valued function satisfying $k(n) \geq 2$. Let the completeness error $\epsilon(n)$ and the promise gap $\Delta(n)$ be functions such that $0 \leq \epsilon(n) < 1/2$ and $\Delta(n) > 0$. Then the following inclusion holds:
\[
    \SymStoqMA_\ell\ssbra*{k,1-\epsilon,\Delta} \subseteq \StoqMA_\ell\ssbra*{k, 1-\lambda\epsilon,\frac{\Delta^2}{16}}.
\]
Here, $\lambda \coloneqq \Delta/8$. 
In particular, the resulting gap shows that the construction preserves both constant and inverse-polynomial gaps. 
\end{theorem}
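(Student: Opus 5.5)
We mirror the proof of \Cref{thm:stoq-k-to-2}, with the stoquastic equality test of \Cref{lem:stoq-sym-branch-dyadic} in place of the product test. Let $V$ be the \SymStoqMAk{} verifier with completeness $c=1-\epsilon$ and soundness $s=c-\Delta$. The new \StoqMAk{} verifier $W$ receives $k$ unentangled non-negative proofs $\ket{\phi_1}\otimes\cdots\otimes\ket{\phi_k}$. With probability $1-\lambda$ it runs the symmetric-subspace projector circuit $V_q$ of \Cref{protocol:stoq-symmetric-space-projector-dyadic}, and with probability $\lambda=\Delta/8$ it runs $V$ directly on the $k$ registers. Here $\lambda$ and the mixing are dyadic, and we choose the precision $b=O(\log(1/\Delta))+\poly(n)$ so that $\zeta_{\dyad}$ is exponentially small. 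As in \Cref{footnote:dyadic-truncation-error}, we ignore this truncation error.

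\textbf{Completeness.} The honest witness is $\ket{\psi}^{\otimes k}$. By \Cref{lem:stoq-sym-branch-dyadic}, $V_q$ accepts it with probability exactly $1$, and $V$ accepts it with probability at least $c$. Hence $W$ accepts with probability at least $1-\lambda\epsilon$.

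\textbf{Soundness.} By \Cref{prop:opt-stoqSepVal-nonNeg}, we may restrict to non-negative product proofs $\ket{\Phi}$. Set $s_{\sym}=\bra{\Phi}\Pi_{\sym}\ket{\Phi}$ and $x\coloneqq 2\sqrt{1-s_{\sym}}$. The equality-test branch then accepts with probability at most $\frac12+\frac12 s_{\sym}=1-x^2/8$. By \Cref{lem:sym-close}, there is a state $\ket{\psi}$ with $\td(\ket{\Phi},\ket{\psi}^{\otimes k})\le x$. We need $\ket{\psi}$ to be non-negative, so that the soundness of $V$ applies. Replacing $\ket{\psi}$ by $\ket{\psi^+}$ only decreases the trace distance, by \Cref{prop:nonnegative-product-approx} applied with the non-negative state $\ket{\Phi}$ and the product $\ket{\psi}^{\otimes k}$. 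By \Cref{lemma:meas-bound-td}, the $V$-branch therefore accepts with probability at most $\min\{1,s+x\}$. This yields
\[
c'-\Pr[W\text{ accepts}]\;\ge\;\lambda\bigl(c-\min\{1,s+x\}\bigr)+(1-\lambda)\tfrac{x^2}{8}.
\]

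\textbf{Optimization.} We now run the same two-case analysis as for \Cref{eq:prove-compression-bound}, with $\gamma=1/8$ and $\lambda=\Delta/8=\gamma\Delta$. In the regime $s+x\le 1$, the bound is the convex quadratic $f(x)=\lambda\Delta-\lambda x+(1-\lambda)\gamma x^2$. Its minimizer is $x_*=\Delta/(2(1-\lambda))\le\Delta\le 1-s$, and its minimum value is $\gamma\Delta^2\bigl(1-\tfrac{1}{4(1-\lambda)}\bigr)\ge\gamma\Delta^2/2=\Delta^2/16$. In the regime $s+x\ge1$, the bound is increasing in $x$ and agrees with $f$ at the boundary, so the same lower bound holds. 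This gives gap $\Delta^2/16$ and completeness $1-\lambda\epsilon$, as claimed.

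\textbf{Main obstacle.} The delicate step is ensuring that the symmetric state supplied by \Cref{lem:sym-close} can be taken non-negative, so that the soundness of \SymStoqMAk{} (which only quantifies over non-negative $\ket{\psi}$) applies. This is handled by the absolute-value argument of \Cref{prop:nonnegative-product-approx}, using that $\ket{\Phi}$ is non-negative. The remaining bookkeeping concerns the dyadic error $\zeta_{\dyad}$, which must be kept far below $\Delta^2$.
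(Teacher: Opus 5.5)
Your proposal is correct and follows the paper's proof essentially verbatim. It uses the same $(1-\lambda,\lambda)$ mixture of the symmetric-subspace projector circuit with $V$, the same non-negativity fix via \Cref{prop:nonnegative-product-approx}, and the same two-regime convex-quadratic analysis; your $x=2\sqrt{\eta}$ is just a rescaling of the paper's $x=\sqrt{\eta}$, which is why you recover the prover-compression form with $\gamma=1/8$.

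The only difference is how the dyadic error is handled. The paper fixes $b=\lceil\log(16/\Delta^2)\rceil$ and subtracts $\zeta\le\Delta^2/64$ explicitly. This works because the minimum is actually $\gamma\Delta^2\bigl(1-\tfrac{1}{4(1-\lambda)}\bigr)\ge 5\Delta^2/56>\Delta^2/16$ for $\lambda\le 1/8$. That same slack is what you implicitly rely on to discard an exponentially small $\zeta$, and you should state it explicitly.
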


\begin{proof}
Let $V$ be the given $\SymStoqMA_\ell\ssbra*{k,1-\epsilon,\Delta}$ verifier acting on witness registers $\sfA_1,\cdots,\sfA_k$. Let $c(n) \coloneqq 1-\epsilon(n)$ and $s(n) \coloneqq c(n)-\Delta(n)$ be the completeness and soundness parameters of $V$, respectively.
We now construct a new stoquastic verification circuit $W$, using the same witness registers $\sfA_1,\cdots,\sfA_k$ as follows:
\begin{itemize}
    \item With probability $1-\lambda$, run the symmetry branch from \Cref{protocol:stoq-symmetric-space-projector-dyadic} with parameters $b \coloneqq \ceil*{\log\rbra*{16/\Delta^2}}$, $q \coloneqq \ceil*{\log(k!)+b}$, and $\xi \coloneqq \xi_\dyad(q)$. 
    \item With probability $\lambda$, run the original verification circuit $V$.
\end{itemize}
Since our protocol is a convex combination $(1-\lambda,\lambda)$ of two stoquastic verification circuits, the resulting verification circuit remains stoquastic.\footnote{For truncation errors caused by choosing $\lambda$, see \Cref{footnote:dyadic-truncation-error}.}

\parheading{Completeness.} For \emph{yes} instances, the honest witness is of the form $\ket{\psi}^{\otimes k}$.  The
symmetry branch accepts with probability $1$, and the original branch accepts with probability at least $c=1-\epsilon$.  Hence the completeness is $c'=1-\lambda(1-c).$

\parheading{Soundness.} Fix an arbitrary product witness
\[
    \ket{\Phi}=\ket{\phi_1}\otimes\cdots\otimes\ket{\phi_k}.
\]
Let $\eta \coloneqq 1-\bra\Phi\Pi_{\sym}\ket\Phi$. 
By \Cref{lem:stoq-sym-branch-dyadic}, the symmetry branch accepts with probability at most $1-\eta/2 + \zeta$, the chosen parameters ensure that 
\begin{equation}
    \label{eq:dyadic-symmetric-projector-bound}
    \xi \leq 2^{-b-2} \leq \Delta^2/64.
\end{equation}

By \Cref{lem:sym-close}, there exists a one-register state $\ket{\psi}$ such that
\begin{equation}
    \label{eq:witness-sym-T-bound}
     \td \rbra[\big]{ \ket{\Phi},\ket{\psi}^{\otimes k} } \le 2\sqrt{\eta}.
\end{equation}
By \Cref{prop:opt-stoqSepVal-nonNeg} and \Cref{prop:nonnegative-product-approx}, it suffices to consider a non-negative product witness $\ket{\Phi}$, and the state $\ket{\psi}$ in \Cref{eq:witness-sym-T-bound} may also be chosen non-negative; hence the soundness of $V$ applies to $\ket{\psi}^{\otimes k}$, and the measurement bound for trace distance (\Cref{lemma:meas-bound-td}) implies that $V$ accepts $\ket{\Phi}$ with probability at most $s+2\sqrt{\eta}$.
Combining this reasoning with the trivial upper bound $1$, its acceptance probability becomes at most $\min\cbra{1,s+2\sqrt{\eta}}$. 
Therefore, 
\[
    \Pr[W \text{ accepts } \ket{\Phi}]
    \le (1-\lambda)\left(1-\frac{\eta}{2}+\zeta\right)+\lambda\min\set{1,\,s+2\sqrt{\eta}}.
\]
Subtracting this from $c'$ gives
\begin{equation}
    \label{eq:witness-sym-bound}
    c'-\Pr[W \text{ accepts } \ket{\Phi}]
    \ge \lambda \rbra*{ c-\min\set{1,\,s+2\sqrt{\eta}} } +\frac{1-\lambda}{2}\eta -(1-\lambda)\zeta.
\end{equation}
Set $x \coloneqq \sqrt{\eta}$. Since $x\in[0,1]$, it remains to simplify the lower bound in \Cref{eq:witness-sym-bound}:
\begin{itemize}
    \item If $s+2x\le 1$, then \Cref{eq:witness-sym-bound} yields
\[
    c'-\Pr[W \text{ accepts } \ket{\Phi}]
    \ge \underbrace{\lambda\Delta-2\lambda x+\frac{1-\lambda}{2}x^2}_{\coloneqq f(x)} -(1-\lambda)\zeta.
\]
This is a convex quadratic whose unique minimizer is
\[
    x_* \coloneqq \frac{2\lambda}{1-\lambda} \le 4\lambda=\frac{\Delta}{2}\le \frac{1-s}{2}. 
\]
Here, the first inequality holds because $\lambda=\Delta/8\le 1/8$.
Consequently, the minimizer lies within the regime $s+2x\le 1$, and it follows that
\begin{equation}
    \label{eq:sym-dyadic-optimizer-bound}
    f(x)\ge f(x_*) =\lambda\Delta-\frac{2\lambda^2}{1-\lambda}.
\end{equation}

Plugging $1/(1-\lambda) \leq 8/7$ and \Cref{eq:dyadic-symmetric-projector-bound} into \Cref{eq:sym-dyadic-optimizer-bound}, we obtain
\[ f(x)-(1-\lambda)\zeta \geq f(x)-\zeta \geq \frac{\Delta^2}{8} - \frac{\Delta^2}{28} - \frac{\Delta^2}{64} = \frac{33\Delta^2}{448} \geq \frac{\Delta^2}{16}. \]
Therefore, the first case gives $c'-\Pr[W \text{ accepts } \ket{\Phi}] \geq \Delta^2/16$. 

    \item If $s+2x\ge 1$, then \Cref{eq:witness-sym-bound} implies
\[
    c'-\Pr[W \text{ accepts } \ket{\Phi}]
    \ge \underbrace{-\lambda(1-c)+\frac{1-\lambda}{2}x^2}_{g(x)} - (1-\lambda)\zeta.
\]
The function $g$ is increasing for $x\ge 0$, and the boundary between the two regimes is
$x_0=(1-s)/2$.  Since $x_0\ge x_*$, $g(x_0)=f(x_0)$, and $(1-\lambda)\zeta \leq \zeta$, we obtain
\[
    g(x)-(1-\lambda)\zeta \ge g(x_0)-\zeta=f(x_0)-\zeta\ge f(x_*)-\zeta\ge \frac{\Delta^2}{16}. 
\]
\end{itemize}
Hence, in all cases the acceptance probability is at most $c'-\frac{\Delta^2}{16}$, finishing the proof.
\end{proof}

%%%%%%%%%%%%%%%%%%%%%%%%%%%%%%%%%%%%%%%%%%%%%%%%%%%%%%%%%%%
%%%%%%%%%%%%%%%%%%%%%%%%%%%%%%%%%%%%%%%%%%%%%%%%%%%%%%%%%%%
%%%%%%%%%%%%%%%%%%%%%%%%%%%%%%%%%%%%%%%%%%%%%%%%%%%%%%%%%%%

\section{The power of \texorpdfstring{$\StoqMAtwo$}{} via distribution testing}
\label{sec:stoqma-power}

In this section, we study the power of \StoqMAtwo{} with short proofs. We first establish that, with $O(\sqrt n)$ unentangled proofs of length $O(\log{n})$, a stoquastic verifier can certify \NP{} languages, with a \emph{constant} promise gap:
\begin{theorem} 
\label{thm:power-stoqmak-np}
    For any constant $\epsilon>0$, the following inclusion holds:
    \[ \NP\subseteq \StoqMA_{O(\log n)} \ssbra*{O(\sqrt n), 1-\epsilon,\frac12-2\epsilon}. \]
\end{theorem}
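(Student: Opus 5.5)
The plan is to reduce from \GapCG{}. First I apply Dinur's PCP theorem~\cite{Dinur07} to a $3$-\SAT{} instance of size $n$, composed with a quasi-linear-size PCP. This yields a constraint graph $G=(V,E)$ with $|V|=\widetilde{O}(n)$ (which I rename $n$), constant degree $d$, constant alphabet $\Sigma$, perfect satisfiability in the \emph{yes} case, and at most a $(1-\delta_0)$ fraction of satisfiable edges in the \emph{no} case, for a constant $\delta_0$.

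\textbf{Verifier.} I use $K=k=C\sqrt{n}$ provers, where $C=C(\epsilon)$ is a large constant. Each prover sends an $O(\log n)$-qubit state over $V\times\Sigma$, and the honest state is $\ket{\phi_\rmL}$ with $\iota$ a satisfying labeling. For a $K$-vertex-label branch $((v_1,a_1),\ldots,(v_K,a_K))$, the verifier evaluates a deterministic predicate $\widehat{A}_\unif\wedge A_\cons$:
\begin{itemize}
    \item $\widehat{A}_\unif$ accepts iff the number of colliding pairs $i<j$ with $v_i=v_j$ is at most a threshold $\tau\approx \binom{K}{2}/n+\theta K^2/n$ (Paninski's coincidence statistic);
    \item $A_\cons$ rejects iff some pair $(i,j)$ has either $v_i=v_j$ and $a_i\neq a_j$, or $(v_i,v_j)\in E$ and $(a_i,a_j)\notin R_{v_iv_j}$.
\end{itemize}
Both predicates are computable by a polynomial-size classical circuit, so there is a reversible circuit $\Gamma$ that fixes accepting branches and flips a fresh $\ket{0}$ flag on rejecting ones. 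Applying the branch-overlap test (\Cref{lemma:branch-overlap-test}) to $(\Gamma,I)$ gives acceptance probability exactly
\[
\tfrac12+\tfrac12\Pr_{(v_i,a_i)\sim p_1\otimes\cdots\otimes p_K}\sbra*{\widehat{A}_\unif\wedge A_\cons},
\]
where $p_i$ is the squared-amplitude distribution of prover $i$. Rejecting branches become orthogonal to the originals because the flag differs, and accepting branches overlap perfectly. By \Cref{prop:opt-stoqSepVal-nonNeg} it suffices to analyze non-negative product witnesses. It therefore remains to show two bounds on the classical product-distribution acceptance probability: at least $1-2\epsilon$ in the \emph{yes} case, and at most $2\epsilon$ in the \emph{no} case.

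\textbf{Completeness.} In the \emph{yes} case all $p_i$ are uniform on $\{(v,\iota(v))\}$. Hence $A_\cons$ always accepts. The collision count has mean $\binom{K}{2}/n=\Theta(C^2)$ and variance $O(C^2)$, so by Chebyshev it exceeds $\tau$ with probability at most $\epsilon$ once $C$ is large relative to $\theta^{-1}$.

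\textbf{Soundness.} I split on the averaged vertex marginal $\bar q\coloneqq\frac1K\sum_i q_i$, where $q_i$ is the vertex marginal of $p_i$. If $\bar q$ is $\theta$-far from uniform in $\ell_2$ scale, then the expected number of collisions $\sum_{i<j}\langle q_i,q_j\rangle$ exceeds $\tau$ by a margin of order $\theta C^2$. I would show this via $\sum_{i\ne j}\langle q_i,q_j\rangle = K^2\|\bar q\|^2-\sum_i\|q_i\|^2$, treating separately the case where single provers are highly concentrated, in which case their self-collisions with others are also large. A second-moment bound on the collision count over the independent, non-identical samples then makes $\widehat{A}_\unif$ reject with probability at least $1-\epsilon$. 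If instead $\bar q$ is close to uniform, I fix for each vertex the plurality label under the aggregated conditional label distribution. This labeling violates at least a $\delta_0$ fraction of edges. The expected number of pairs $(i,j)$ that either land on a violated edge with plurality labels, or collide on a vertex with differing labels, is $\Omega(\delta_0 C^2 d)$: non-plurality label mass is itself penalized by same-vertex collisions, or else it behaves like a relabeling whose violations are counted. A Janson-type or second-moment inequality then gives $\Pr[A_\cons\text{ accepts}]\le\epsilon$ for large $C$. Note that $C$ depends only on $\epsilon$, so $K=O(\sqrt n)$.

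\textbf{Main obstacle.} I expect the hardest step to be the soundness analysis under \emph{non-identical} provers combined with label mixing. Paninski's and the birthday-paradox analyses are usually stated for i.i.d.\ samples from a single distribution. I would first try to reduce to the averaged distribution $\bar p$ by convexity/symmetrization arguments, e.g.\ by pre-composing with the symmetrization of \Cref{thm:StoqMAk-eq-SymStoqMAk}, which costs only constant gap. If that loss is unacceptable for the $\frac12-2\epsilon$ gap, I would instead handle the independent non-identical case directly with U-statistic variance bounds, which depend only on pairwise inner products.
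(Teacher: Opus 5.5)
There is a genuine gap: the verifier you describe, with $K$ independent and possibly \emph{non-identical} provers and no symmetry test, is unsound. No second-moment or U-statistic analysis can rescue it. Both $\widehat{A}_\unif$ and $A_\cons$ only inspect pairs of samples coming from \emph{different} provers, and each prover contributes a single sample per branch. A cheating coalition can therefore let prover $i$ send the basis state $\ket{v_i,a_i}$, where $v_1,\dots,v_K$ are distinct and pairwise non-adjacent. Such an independent set exists because $G$ is $d$-regular and $K=O(\sqrt n)\le n/(d+1)$ for large $n$. Every branch then has no collision and no edge between sampled vertices, so $\widehat{A}_\unif\wedge A_\cons$ accepts with probability $1$ on a \emph{no} instance. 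In your identity this witness gives $\sum_{i\neq j}\langle q_i,q_j\rangle = K-K=0$, even though $\bar q$ is supported on only $K$ vertices. The ``self-collisions'' of a concentrated prover do not exist, because a prover never collides with itself. Making the collision test two-sided does not help either. The provers can make about $\binom{K}{2}/n$ prescribed pairs of provers land deterministically on a common vertex with a common label, which matches the honest expected collision count. Collision statistics certify uniformity only for i.i.d.\ samples, i.e., only when all $K$ factors are the same state.

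The missing ingredient is an equality test, and it is mandatory rather than a fallback. The paper first proves the symmetric statement \Cref{thm:power-SymStoqMAk-np}. There the witness is $\ket{\psi}^{\otimes K}$, the branch weights are exactly $p^{\otimes K}$, and Paninski's tester and the generalized birthday paradox (\Cref{lem:birthday}) apply verbatim. In that setting your consistency sketch matches the paper's case split on label ambiguity, once you restrict to vertices of mass $\Theta(1/n)$ to control the second moment. The paper then runs, with probability $1-\lambda$, the stoquastic symmetric-subspace projector (\Cref{lem:stoq-sym-branch-dyadic}). By \Cref{lem:sym-close}, any product witness that passes it is close in trace distance to some $\ket{\psi}^{\otimes K}$. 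This is \Cref{thm:sym-to-stoq-k}; it is a closeness argument rather than a convexity reduction to $\bar p$, and it costs a quadratic gap loss. That loss is undone by parallel repetition (\Cref{thm:StoqMAk-w-parallel-repetition-loss}) with a constant number $r=O(\log(1/\epsilon))$ of repetitions, which:
\begin{itemize}
\item preserves the number of provers;
\item multiplies the proof length only by $r$, so it stays $O(\log n)$;
\item pushes the soundness down to $\frac12+\epsilon$;
\item keeps completeness at least $1-\epsilon$, provided the symmetric protocol's completeness error is taken small enough relative to $r$, which only changes the constant in $K=O(\sqrt n)$.
\end{itemize}
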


In view of~\cref{thm:sym-to-stoq-k,thm:StoqMAk-w-parallel-repetition-loss}, it suffices to prove the following symmetric version:
\begin{restatable}{theorem}{SymStoqABDFS}
\label{thm:power-SymStoqMAk-np}
    For any constant $\epsilon>0$, there exists a constant $\Delta=\Omega(1)$ for which the following inclusion holds:
    \[ \NP\subseteq \SymStoqMA_{O(\log n)} \ssbra*{O(\sqrt n), 1-\epsilon,\Delta}. \]
\end{restatable}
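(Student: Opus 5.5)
We reduce from \GapCG{} as in the introduction. By Dinur's PCP theorem, \SAT{} on $n$ variables reduces in polynomial time to \GapCG{} on a constraint graph $G=(V,E)$ with $|V|=\widetilde{O}(n)$ (after the quasi-linear PCP of Dinur, which we absorb into $O(\cdot)$ up to polylog factors in the proof count), constant alphabet $\Sigma$, and constant degree $d$. In the \emph{yes} case some labeling $\iota$ satisfies every edge. In the \emph{no} case every labeling violates at least a constant fraction $\delta$ of edges. The honest one-copy witness is $\ket{\phi_\rmL}=|V|^{-1/2}\sum_{v}\ket{v,\iota(v)}$, on $O(\log n)$ qubits, and Merlin sends $\ket{\phi_\rmL}^{\otimes K}$ with $K=C\sqrt{|V|}$ for a large constant $C=C(\epsilon)$.

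The verifier follows the branch-local template $\widehat{\calA}_\branch$. On a computational branch $((v_1,a_1),\ldots,(v_K,a_K))$ it evaluates two predicates. The first is a Paninski-type coincidence predicate $\widehat{A}_\unif$: it accepts iff the number of vertex collisions $\#\{i<j: v_i=v_j\}$ is at most a threshold $\tau=(1+\theta)\binom{K}{2}/|V|$, where $\theta$ is a suitable constant. The second is the consistency predicate $A_\cons$: it rejects iff some pair $i\neq j$ has $v_i=v_j$ with $a_i\neq a_j$, or has $(v_i,v_j)\in E$ with $(a_i,a_j)\notin R_{v_iv_j}$. We compute $\widehat{A}_\unif\wedge A_\cons$ by a classical reversible circuit $\widehat{\Gamma}$ that writes the bit into an output qubit. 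We then apply the branch-overlap test (\Cref{lemma:branch-overlap-test}) to the pair $(\widehat{\Gamma}', I)$, where $\widehat{\Gamma}'$ computes the bit, applies $X$ controlled on rejection to a fresh $\ket{0}$ flag, and uncomputes. Branches are orthogonal, so the acceptance probability is $\frac12+\frac12(1-2p_{\rm rej})$, where $p_{\rm rej}$ is the $p^{\otimes K}$-weight of rejecting branches and $p(v,a)=\alpha_{v,a}^2$.

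\textbf{Completeness.} Under the honest witness, $A_\cons$ always holds. The collision count has mean $\binom K2/|V|=\Theta(C^2)$ and variance $O(C^2)$, so Chebyshev gives $\Pr[\text{count}>\tau]\le O(1/(\theta^2C^2))\le\epsilon/2$ once $C$ is large. Hence the completeness is at least $1-\epsilon$.

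\textbf{Soundness.} This is the main obstacle. Fix a non-negative $\ket{\psi}$, which is enough by \Cref{prop:opt-stoqSepVal-nonNeg}, and let $q$ be its vertex marginal.

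\emph{Case 1: $q$ is $\theta'$-far from uniform in collision norm,} i.e.\ $\|q\|_2^2\ge(1+3\theta)/|V|$. Then the expected collision count is at least $(1+3\theta)\binom K2/|V|$. Paninski's variance bound, $\Var\le O(K^2\|q\|_2^2+K^3\|q\|_3^3)$, controlled in the standard way, shows the count exceeds $\tau$ with constant probability. This is where $K=\Theta(\sqrt{|V|})$ is needed.

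\emph{Case 2: $q$ is nearly uniform,} so $\|q\|_2^2\le(1+3\theta)/|V|$. Then $q$ is $O(\sqrt\theta)$-close to uniform in total variation. For each $v$, define the conditional label distribution $p(\cdot\mid v)$ and a randomized labeling $L(v)\sim p(\cdot\mid v)$. A birthday-paradox computation shows the following.
\begin{itemize}
\item If, on a $\Omega(1)$-mass set of vertices, $p(\cdot\mid v)$ has collision probability bounded away from $1$, then a same-vertex pair with different labels appears with constant probability.
\item Otherwise each $v$ has a dominant label $\iota'(v)$. By the \emph{no} promise, $\iota'$ violates $\delta|E|$ edges. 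Near-uniformity of $q$ then implies that the expected number of pairs $(i,j)$ landing on a violated edge with the dominant labels is $\Omega(K^2 d\delta/|V|)=\Omega(C^2\delta)$. A second-moment argument, using bounded degree and $\|q\|_2^2=O(1/|V|)$, shows at least one such pair occurs with constant probability.
\end{itemize}
The dominant-label loss is absorbed into constants. Together, $p_{\rm rej}\ge\Omega(1)$ in Case 2.

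Choosing $\theta$ small, then $C$ large, makes the soundness at most $1-\Omega(1)$, giving a constant gap $\Delta$. All circuits are polynomial-size reversible and use only $\ket{0},\ket{+}$ ancillas, and each proof has $O(\log n)$ qubits.
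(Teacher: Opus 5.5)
Your proposal is correct and follows the paper's proof in all essentials. It uses the same \GapCG{} starting point and the same honest witness $\ket{\phi_\rmL}^{\otimes K}$ with $K=\Theta(\sqrt{|V|})$. It evaluates the same predicate $\widehat{A}_\unif\wedge A_\cons$ coherently through the branch-overlap test. It also makes the same three-way soundness split: far from uniform; same vertex with different labels; and violated edges under the plurality labeling. In that last case, your $\Omega(K^2 d\delta/|V|)$ claim needs the paper's step of using bounded degree to discard the edges touching the few low-mass or ambiguous vertices. Your explicit collision-count threshold, $\ell_2$ case split, and bound $\|q\|_3^3\le\|q\|_2^3$ are a self-contained substitute for the paper's black-box Paninski tester and the pointwise mass bounds in its generalized birthday lemma.

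One small slip: flipping a fresh $\ket{0}$ flag on rejecting branches makes the overlap $1-p_{\mathrm{rej}}$. The acceptance probability is therefore $1-p_{\mathrm{rej}}/2$, not $\frac12+\frac12(1-2p_{\mathrm{rej}})$, which only rescales the constants.
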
 

The symmetric protocol in \Cref{thm:power-SymStoqMAk-np} is, in spirit, analogous to the famous $\QMA_{\log} (\sqrt n)$ protocol of Aaronson, Beigi, Drucker, Fefferman, and Shor~\cite{ABDFS09}, as well as Chen--Drucker's improved protocol~\cite{CD10}. Those protocols, however, use non-stoquastic operations and therefore are not directly applicable here. 
We design a stoquastic protocol for Dinur's PCP~\cite{Dinur07} by leveraging Paninski's uniformity test for probability distributions. One extra technical benefit is that Paninski's uniformity test is not intertwined with other tests, making our soundness analysis more modular.

\begin{remark}[Stoquastizing the uniformity test loses perfect completeness]
    \label{remark:stoq-uniformity-test-loss}
    Notably, there is a tradeoff between the number of provers $K$ and the completeness $1-\epsilon$ in \cref{thm:power-SymStoqMAk-np}: the required value of $K$ contains a multiplicative factor of $\log\frac1\epsilon$, which comes from the completeness loss of the \emph{distributional} uniformity test and is known to be necessary~\cite{DGPP18}. Consequently, our stoquastic version of the uniformity test cannot achieve perfect completeness, but it can achieve completeness as high as $1-2^{-\polylog(n)}$, at the cost of \emph{slightly} increasing the number of provers from $O(\sqrt n)$ to $\widetilde O(\sqrt n)$.
\end{remark}

Following \Cref{thm:stoq-k-to-2} and \Cref{remark:stoq-uniformity-test-loss}, we obtain another corollary: for a stoquastic verifier, two proofs of length $\widetilde O (\sqrt n)$ suffice to certify $\NP$ languages. Applying parallel repetition for \StoqMAtwo{} from~\cref{thm:StoqMAk-w-parallel-repetition-loss}, which gives error reduction when completeness error $\epsilon$ is negligible, then yields
\begin{restatable}{theorem}{StoqABDFS} 
\label{thm:power-stoqma-np}
    For every $\epsilon(n)=\Omega\rbra*{2^{-\polylog(n)}}$, the following inclusion holds:
    \[ \NP\subseteq \StoqMA_{\widetilde{O}(\sqrt n)}\ssbra*{2,1-\epsilon,\frac{1}{2}-2\epsilon}. \]
\end{restatable}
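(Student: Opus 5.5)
The plan is a composition of earlier results, following the roadmap sketched just before the statement. We start from the symmetric protocol of \Cref{thm:power-SymStoqMAk-np}. Using the tradeoff recorded in \Cref{remark:stoq-uniformity-test-loss}, we instantiate it with completeness error $\epsilon_0 = \epsilon^{c}$ for a suitable constant $c$, or simply a small inverse-quasipolynomial target. This costs $K = O(\sqrt n \log(1/\epsilon_0)) = \widetilde O(\sqrt n)$ identical copies of an $O(\log n)$-qubit state, while keeping a constant gap $\Delta_0 = \Omega(1)$; the bound $\log(1/\epsilon_0) = \polylog(n)$ holds because $\epsilon = \Omega(2^{-\polylog(n)})$. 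We then apply \Cref{thm:sym-to-stoq-k} to pass to a $\StoqMA_{O(\log n)}\ssbra{K, 1-\lambda\epsilon_0, \Delta_0^2/16}$ protocol. Since $\Delta_0$ is constant, the gap stays constant and the completeness error stays $O(\epsilon_0)$.

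Next, prover compression (\Cref{thm:stoq-k-to-2}) gives a two-prover protocol in which each proof has length $K\cdot O(\log n) = \widetilde O(\sqrt n)$. The completeness error becomes $O(\epsilon_0)$ and the gap becomes $\Omega(\Delta_0^4) = \Omega(1)$. At this point we have
\[ \NP \subseteq \StoqMA_{\widetilde O(\sqrt n)}\ssbra*{2, 1-\epsilon_1, \Delta_1}, \]
with $\epsilon_1 = O(\epsilon_0)$ and $\Delta_1 = \Omega(1)$.

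Finally, we amplify using parallel repetition (\Cref{thm:StoqMAk-w-parallel-repetition-loss}) with a constant number $r$ of repetitions. In the $(a,b)$ parametrization we have $a = 1-2\epsilon_1$ and $b = a - 2\Delta_1$, bounded away from $1$. After $r$ repetitions the completeness is $\frac12 + \frac{(2-2\epsilon_1)^r}{2^{r+1}} \ge 1 - \frac{r\epsilon_1}{2}$ and the soundness is $\frac12 + \frac12\rbra*{\frac{1+b}{2}}^r$. Choosing $r$ a large enough constant, depending on $\epsilon$ only through the requirement that the soundness excess be at most $\epsilon$, makes the soundness at most $\frac12 + \epsilon$. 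Picking $\epsilon_0$ so that $r\epsilon_1/2 \le \epsilon$ then gives gap at least $1 - \epsilon - (\frac12+\epsilon) = \frac12 - 2\epsilon$. Constant $r$ keeps the proof length at $\widetilde O(\sqrt n)$.

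Two issues need care. The first is circularity in the parameters: $r$ depends on $\Delta_1$, which is an absolute constant independent of $\epsilon_0$, so we fix $r$ first and then set $\epsilon_0 = \Theta(\epsilon/r)$. The second is checking that the quasipolynomially small completeness error survives all three transformations with only constant-factor loss. This holds because each transformation multiplies the completeness error by at most a constant.

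The main obstacle is really the input step: confirming that the symmetric protocol of \Cref{thm:power-SymStoqMAk-np} achieves completeness $1-\epsilon_0$ with $K = O(\sqrt n \log(1/\epsilon_0))$ while the soundness constant does not degrade as $\epsilon_0 \to 0$. That is, the Paninski-type birthday analysis must give a soundness bound uniform in $K$ once $K \gtrsim \sqrt n$, so that enlarging $K$ only helps. I would verify this by rechecking that the soundness argument needs only a lower bound on $K$. The dyadic truncation of $\lambda$ (\Cref{footnote:dyadic-truncation-error}) contributes only exponentially small error, which is absorbed.
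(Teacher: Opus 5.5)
Your route is the paper's: run the symmetric protocol with $K=O(\sqrt n\log(1/\epsilon_0))=\widetilde O(\sqrt n)$ copies, then apply \Cref{thm:sym-to-stoq-k}, then \Cref{thm:stoq-k-to-2}, then parallel repetition (\Cref{thm:StoqMAk-w-parallel-repetition-loss}). The input step you flag as the main obstacle is supplied by \Cref{remark:stoq-uniformity-test-loss} together with the $O(\log(1/\epsilon))$-block amplification at the end of the proof of \Cref{lem:birthday}. Enlarging $K$ there only helps: the soundness improves to $\frac12+\frac{\epsilon_0}{2}$.

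The last step, however, is wrong as written. The number of repetitions $r$ cannot be a constant in the regime the statement targets. After repetition the soundness is $\frac12+\frac12 s_1^r$, where $s_1\le 1-\Delta_1$ is a constant. Reaching soundness $\frac12+\epsilon$ therefore forces $r\ge \log(1/(2\epsilon))/\log(1/s_1)$. For, say, $\epsilon=2^{-\log^2 n}$, this is $\Theta(\log^2 n)$, not $O(1)$. For the same reason, parallel repetition multiplies the completeness error by roughly $r/2$, not by a constant. So your claim that every transformation costs only a constant factor also fails at this step.

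The repair keeps your structure. Replace $\Delta_1$ by its absolute-constant lower bound, which is valid once $\epsilon_0\le 1/4$. Then take $r=\lceil \ln(1/(2\epsilon))/\Delta_1\rceil=O(\log(1/\epsilon))=\polylog(n)$. This $r$ is still fixed before $\epsilon_0$, so your circularity resolution survives.

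Next set $\epsilon_0=\Theta(\epsilon/r)$. This is still $2^{-\polylog(n)}$, so $K=O(\sqrt n\log(1/\epsilon_0))$ stays $\widetilde O(\sqrt n)$. The final proof length $r\cdot K\cdot O(\log n)$ is still $\widetilde O(\sqrt n)$, because the polylogarithmic factor $r$ is absorbed. The completeness is $\frac12+\frac12(1-\epsilon_1)^r\ge 1-\frac{r\epsilon_1}{2}\ge 1-\epsilon$, and the soundness is at most $\frac12+\epsilon$.
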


\vspace{1em}
The same intuition from distribution testing also suggests a two-prover \StoqMAtwo{} protocol for $\NP$ languages with logarithmic-size witnesses and an inverse-polynomial gap. This result can be viewed as a $\StoqMAtwo$ counterpart of the Blier--Tapp protocol~\cite{BT07}: 
\begin{restatable}{theorem}{NPinStoqMAlogTwo} 
\label{thm:power-stoqma-log-np}
$\displaystyle \NP\subseteq \StoqMA_{O(\log n)}\ssbra*{2,1-\Theta\left(\frac{1}{n^2}\right), \,\Theta\left(\frac{1}{n^2}\right)}.$
\end{restatable}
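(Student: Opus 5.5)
The plan is to reduce from \GapCG{} with a constant-size alphabet $\Sigma$ and $n$ vertices, obtained by a polynomial-time reduction from \textsc{3SAT}. Such instances either admit a satisfying labeling $\iota$ or have every labeling violate a constant fraction $\delta$ of the edges. I would moreover take the graph to be $d$-regular with $d=O(1)$ and $|E|=\Theta(n)$. Each proof is the $O(\log n)$-qubit non-negative state $\ket{\phi_\rmL}=|V|^{-1/2}\sum_v\ket{v,\iota(v)}$, and the witness is $\ket{\phi_\rmL}\otimes\ket{\phi_\rmL}$. The verifier runs the branch-local protocol $\calA_\branch$ sketched in the introduction: it uses one $\ket{+}$ bit to choose between the uniformity predicate $A_\unif$ and the consistency predicate $A_\cons$. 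Both predicates are computed reversibly into a flag qubit by circuits $\Gamma_\unif$ and $\Gamma_\cons$, which are combined into $\Gamma_\branch$.

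The flag must not simply be recorded, because that would leave the witness branch unchanged and make the overlap trivially $1$. Instead, on rejecting branches $\Gamma_\branch$ flips an extra $\ket{0}$ ancilla, written $\bar{0}\mapsto\bar{0}\oplus e$, which makes that branch orthogonal. Applying the branch-overlap test (\Cref{lemma:branch-overlap-test}) to the pair $(\Gamma_\branch,I)$ then gives acceptance probability $\tfrac12+\tfrac12\,\Pr_{p\otimes q}[\text{accept}]$. Here $p,q$ are the squared-amplitude distributions of the two proofs over $V\times\Sigma$, and both are genuine probability distributions because the proofs are non-negative. In other words, the circuit is a coherent version of drawing one sample from each of $p$ and $q$. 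I would balance the two tests by making the uniformity branch rare, with weight $\Theta(1)$ versus a rescaled weight.

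\textbf{Completeness.} On the honest witness, the uniformity test rejects only on a collision $u=v$, which has probability $1/n$. The consistency test never rejects. The overall rejection probability is therefore $O(1/n)$. To reach the stated $1-\Theta(n^{-2})$, I would replace the plain collision rejection by rejecting with probability $1/n$ via an extra $\lceil\log n\rceil$ $\ket{+}$ register, using the dyadic padding of \Cref{lem:stoq-sym-branch-dyadic}. That brings the honest error to $\Theta(n^{-2})$, and the soundness threshold is tuned against it below.

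\textbf{Soundness} is the main obstacle. Fix non-negative $\ket{\psi_1},\ket{\psi_2}$ with induced distributions $p,q$ on $V\times\Sigma$ and vertex marginals $p_V,q_V$.
\begin{enumerate}
    \item The uniformity branch rejects with weight $\tfrac1n\sum_v p_V(v)q_V(v)$. This must be compared with $1/n^2$, the value for uniform marginals.
    \item The consistency branch catches same-vertex label disagreement with probability $\sum_v p_V(v)q_V(v)\Pr[a\neq b\mid v]$, and edge violations with probability $\sum_{(u,v)\in E}\Pr[\text{violated}]$.
    \item Case split. If $\sum_v p_V q_V\ge (1+\kappa)/n$, the uniformity term already exceeds the honest error by $\Omega(n^{-2})$. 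Otherwise, by Cauchy--Schwarz both marginals are close to uniform in a collision (hence $\ell_2$, then TV) sense. In that case the most likely labels of $p$ and $q$ must mostly agree, or the same-vertex test fires. They then define a single labeling $\iota'$, which violates $\delta|E|$ edges. Each violated edge is hit with probability about $1/n^2$, so the edge term contributes $\Omega(\delta|E|/n^2)=\Omega(1/n)$ before the consistency branch is rescaled.
\end{enumerate}
The delicate part is controlling non-uniform mass together with label spread, where plurality decoding loses mass. I would handle it with Cauchy--Schwarz on $\sum_v\sqrt{p_V(v)q_V(v)}$ and a per-vertex argument: mass on a wrong or ambiguous label is caught either at the same vertex or across an edge.

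Finally, I would choose the branch weights so that honest rejection is $c_1/n^2$ and cheating rejection is at least $c_2/n^2$ with $c_2>c_1$. This gives completeness error $\Theta(n^{-2})$ and gap $\Theta(n^{-2})$.
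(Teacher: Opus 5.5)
\textbf{The soundness argument fails at step 3, and the protocol as designed is unsound.} You infer from $\sum_v p_V(v)q_V(v)<(1+\kappa)/n$ that ``both marginals are close to uniform.'' This inference is false once the two proofs may differ. Cauchy--Schwarz only gives the \emph{upper} bound $\sum_v p_Vq_V\le(\sum_v p_V^2)^{1/2}(\sum_v q_V^2)^{1/2}$. The cross-collision weight can be $0$ when $p_V$ and $q_V$ have disjoint supports.

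Concretely, take $\ket{\psi_1}=\ket{u_0,a}$ and $\ket{\psi_2}=\ket{v_0,b}$ for two distinct non-adjacent vertices $u_0,v_0$; these exist because $G$ is $d$-regular with $d=O(1)$. On the only branch, $u\neq v$ and $(u,v)\notin E$. So neither your (rescaled) uniformity test nor your consistency test ever rejects, and the verifier accepts with probability $1$ on every \emph{no} instance. The identity $\sum_v q(v)^2=1/n+\norm{q-\mu_V}_2^2$ that makes a collision test detect non-uniformity holds only when both samples come from the \emph{same} distribution. With one copy from each of two independent provers, no collision-based test can certify uniformity.

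\textbf{The missing ingredient is a mechanism forcing the two unentangled proofs to be (close to) identical.} The paper obtains this in two steps.
\begin{itemize}
\item It first proves the symmetric statement $\NP\subseteq\SymStoqMA_{O(\log n)}\ssbra{2,1-\Theta(1/n),\Theta(1/n)}$ with the plain branch-local protocol. Honest rejection is $1/(2n)$ and cheating rejection is at least $1/(2n)+\gamma/n$; here $p=q$, so the collision identity applies.
\item It then invokes \Cref{thm:sym-to-stoq-k}. That result mixes in the stoquastic symmetric-subspace test with weight $1-\lambda$, where $\lambda=\Delta/8$, and controls asymmetric product witnesses via \Cref{lem:sym-close}.
\end{itemize}
The $n^{-2}$ parameters in the statement come from this conversion: completeness error $\lambda\epsilon=\Theta(n^{-2})$ and gap $\Delta^2/16=\Theta(n^{-2})$.

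Your device of rejecting collisions only with probability $1/n$ is therefore unnecessary. It is also counterproductive if you then symmetrize: starting from completeness error and gap both $\Theta(n^{-2})$, \Cref{thm:sym-to-stoq-k} would yield only $\Theta(n^{-4})$.
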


Scaling this up to polynomial-size proofs using a sufficiently structured PCP for $\NEXP$, say the one from~\cite[Theorem 7.4]{JW23}, one obtains that 
\[\NEXP\subseteq\StoqMA\ssbra*{2,1-2^{-\poly(n)},2^{-\poly(n)}}.\]

The reverse inclusion is straightforward. Following the standard convention, we call $\StoqMAtwo$ with an exponentially small gap $\PreciseStoqMAtwo$. Therefore, together with the easy inclusion $\PreciseQMA(2)\subseteq \NEXP$ (see, e.g.,~\cite{Pereszlenyi12}), we obtain:
\begin{corollary}
    $\PreciseStoqMAtwo = \NEXP.$
\end{corollary}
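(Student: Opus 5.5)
The equality follows from two inclusions. The containment $\PreciseStoqMAtwo \subseteq \NEXP$ comes from viewing stoquastic verifiers as special quantum verifiers. The containment $\NEXP \subseteq \PreciseStoqMAtwo$ comes from scaling up the protocol of \Cref{thm:power-stoqma-log-np}.

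\emph{Upper bound.} A stoquastic verification circuit in the sense of \Cref{def:stoq-verifier} is an ordinary quantum verifier, up to two compilation steps:
\begin{itemize}
\item each $\ket{+}$ ancilla is prepared as $H\ket{0}$;
\item the final Hadamard-basis measurement is implemented as $H$ followed by a computational-basis measurement.
\end{itemize}
The only remaining difference from $\PreciseQMAtwo$ is that \Cref{def:StoqMAk} quantifies over non-negative product witnesses only. The verifier's maximum acceptance probability is $\frac12+\frac12 h_{\Sep}(M_x)$, where $M_x=\bra{\bar 0}\bra{\bar +}V_x^\dagger X_{\sfO}V_x\ket{\bar 0}\ket{\bar +}$. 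This matrix is entrywise non-negative, since $V_x$ is a permutation and $X_{\sfO}$ is entrywise non-negative. Hence \Cref{prop:opt-stoqSepVal-nonNeg} shows that the optimum over \emph{all} product states equals the optimum over non-negative ones. Soundness against arbitrary (possibly complex) unentangled proofs is therefore inherited, and completeness is immediate. This gives $\PreciseStoqMAtwo\subseteq\PreciseQMAtwo$, and then $\PreciseQMAtwo\subseteq\NEXP$ by~\cite{Pereszlenyi12}.

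\emph{Lower bound.} I would rerun the protocol behind \Cref{thm:power-stoqma-log-np} on an exponentially large constraint graph. Start from a $\NEXP$-complete language and apply a sufficiently structured PCP, e.g.~\cite[Theorem 7.4]{JW23}. This yields a constraint graph $G=(V,E)$ with the following properties:
\begin{itemize}
\item $|V|=N=2^{\poly(n)}$ and the alphabet $\Sigma$ has constant size;
\item there is a constant soundness gap;
\item the graph is regular (or otherwise well structured), so that adjacency $(u,v)\in E$ and membership $(a,b)\in R_{uv}$ are decidable in $\poly(n)$ time from the binary names of $u,v$.
\end{itemize}
The honest proof is $\ket{\phi_\rmL}=N^{-1/2}\sum_v\ket{v,\iota(v)}$ on $\poly(n)$ qubits. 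The verifier is the branch-local protocol $\calA_\branch$ (uniformity test and consistency test, each with probability $1/2$), implemented through $\Gamma_\branch$ and the branch-overlap test of \Cref{lemma:branch-overlap-test}. Each of $\Gamma_\unif$ and $\Gamma_\cons$ is a classical reversible circuit evaluating polynomial-time predicates on the two vertex-label branches, so the verifier stays polynomial-size. The completeness and soundness analysis of \Cref{thm:power-stoqma-log-np} is stated in terms of $|V|$ (collision and edge-detection probabilities of order $1/\poly(|V|)$). Substituting $|V|=N$ gives completeness $1-\Theta(N^{-2})$ and gap $\Theta(N^{-2})=2^{-\poly(n)}$. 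This yields $\NEXP\subseteq\StoqMA[2,1-2^{-\poly},2^{-\poly}]\subseteq\PreciseStoqMAtwo$.

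\emph{Main obstacle.} The step I expect to be hardest is checking that the soundness argument of \Cref{thm:power-stoqma-log-np} uses only structural properties of the graph that the succinct PCP actually provides. Examples are regularity, bounded degree, and an expansion-type bound relating unsatisfied-edge fraction to detection probability. It must not rely on anything that needs polynomial-size explicit access to $G$.

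I would first verify that the $\Theta(1/|V|^2)$ bounds depend only on $|V|$, the degree, and the constant PCP soundness. Then I would match these against the guarantees of~\cite[Theorem 7.4]{JW23}, padding the vertex set to a power of two so that the uniform branch register is exactly dyadic.

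A secondary check concerns the dyadic mixing constants in $\Gamma_\branch$ (as in \Cref{footnote:dyadic-truncation-error}). For an exponentially small gap, these must be exactly dyadic rather than merely approximated. They are $1/2$, so this holds.
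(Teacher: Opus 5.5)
Your overall route matches the paper's. The upper bound is $\PreciseStoqMAtwo\subseteq\PreciseQMAtwo\subseteq\NEXP$, and your appeal to \Cref{prop:opt-stoqSepVal-nonNeg} correctly handles soundness against arbitrary product proofs. The lower bound is a scale-up of \Cref{thm:power-stoqma-log-np} via the structured PCP of~\cite[Theorem 7.4]{JW23}.

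The gap is that the verifier you actually write down is not sound for two unentangled proofs. The branch-local protocol $\calA_{\branch}$, implemented through $\Gamma_{\branch}$ and the branch-overlap test, is only a $\SymStoqMA(2)$ verifier. Its analysis assumes both factors are the \emph{same} state $\ket{\psi}$, so that branch weights are $p^{\otimes 2}$. Against independent proofs, the Merlins can send $\ket{v_0,a}\otimes\ket{v_1,b}$ with $v_0\neq v_1$ and $(v_0,v_1)\notin E$. On this single vertex-label branch both the uniformity test and the consistency test accept, so the verifier accepts with probability $1$, more than the honest witness. In the paper, the proof of \Cref{thm:power-stoqma-log-np} first shows a $\SymStoqMA$ bound with completeness error and gap $\Theta(1/|V|)$. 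Only then does it apply the stoquastic equality test of \Cref{thm:sym-to-stoq-k}. That squaring step is exactly where the $\Theta(N^{-2})$ parameters you quote come from, so your scaled-up verifier must include it.

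This also moves your dyadic check. The weight $1/2$ in $\Gamma_{\branch}$ is harmless, but the mixing weight $\lambda=\Delta/8$ in \Cref{thm:sym-to-stoq-k} is now $\Theta(1/N)$. Choosing $\lambda$ as a power of two in $[\Delta/16,\Delta/8]$ keeps the gap $\Omega(\Delta^2)$; the calculation in that proof goes through with a smaller constant. For $k=2$ the symmetric-subspace projector uses $2!=2$ branches, so $\zeta_{\dyad}=0$.

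Two smaller points. Padding $V$ to a power of two is unnecessary, since the uniform superposition is supplied by the prover rather than prepared by the verifier. The soundness analysis uses only $d$-regularity, constant $|\Sigma|$ and $\eta$, and polynomial-time decidability of adjacency and of $R_{uv}$; no expansion property is needed.
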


\vspace{1em}
To establish the above theorems, we record Dinur's PCP as our starting point.
\begin{definition}[Dinur's Gap Constraint Graph Problem, \GapCG{}, adapted from~{\cite[Definition 1.5]{Dinur07}}]
Fix constants $d \in \bbN$, $Q \in \bbN$, and $\eta > 0$.  An instance of $(1,\eta)\text{-}\GapCG_{d,Q}$ consists of a $d$-regular graph $G=(V,E)$, an alphabet $\Sigma$ of size $Q$, and for every undirected edge $(u,v)\in E$ a binary relation $R_{uv} \subseteq \Sigma \times \Sigma$.
The promise is that exactly one of the following holds:
\begin{itemize}
    \item \emph{\textbf{Yes}:} There exists a labeling $\iota:V\to\Sigma$ satisfying every edge constraint, i.e.
    \[
        (\iota(u),\iota(v)) \in R_{uv}, \qquad \forall (u,v)\in E;
    \]
    \item \emph{\textbf{No}:} Every labeling $\iota:V\to\Sigma$ violates at least an $\eta$-fraction of the undirected edges.
\end{itemize}
\end{definition}

\begin{theorem}[Dinur's PCP Theorem, {adapted from~\cite[Theorem 1.7]{Dinur07}}]
\label{thm:dinur-gapcg-form}
There exist absolute constants $d,Q\in\mathbb N$ and $\eta>0$, together with a $\widetilde{O}(m)$-time 
reduction from $\SAT$ instances of size $m$ to $(1,\eta)\text{-}\GapCG_{d,Q}$ instances whose underlying constraint graph $G=(V,E)$ has size $\widetilde{O}(m)$, meaning that $|V|=\widetilde{O}(m)$ and $|E|=\widetilde{O}(m)$. 
\end{theorem}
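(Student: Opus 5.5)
The plan is to obtain \Cref{thm:dinur-gapcg-form} by combining the quasi-linear-size version of Dinur's gap amplification with routine preprocessing of the constraint graph. I would not re-prove any PCP machinery; the work is in tracking sizes and checking that the output fits the format of $(1,\eta)\text{-}\GapCG_{d,Q}$, namely $d$-regular, constant alphabet, and perfect completeness.

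\textbf{Step 1 (short PCP with weak gap).} Starting from a \SAT{} instance of size $m$, I would apply the Ben-Sasson--Sudan short PCP, in the form Dinur uses for her quasi-linear corollary. It produces, in $\widetilde{O}(m)$ time, a constraint graph over a constant alphabet with $\widetilde{O}(m)$ vertices and edges. It has perfect completeness, and in the \emph{no} case its unsatisfiable fraction is at least $1/\polylog(m)$. \textbf{Step 2 (preprocessing).} Next I would make the graph $d_0$-regular and expanding for a constant $d_0$. To do this, replace each vertex by an expander cloud with equality constraints, then superimpose a constant-degree expander and self-loops carrying trivial constraints. Each of these operations multiplies the size by a constant, decreases the gap by at most a constant factor, and preserves satisfiability.

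\textbf{Step 3 (amplification).} I would then apply Dinur's main step, graph powering followed by alphabet reduction via an assignment tester, for $t=O(\log\log m)$ rounds. Each round is a constant-factor blowup in size that at least doubles the gap, until the gap reaches a fixed constant $\eta$. Each round also keeps the degree $d$ constant, keeps the alphabet size $Q$ constant, and preserves perfect completeness. The total size blowup is $C^{t}=\polylog(m)$, so the final graph has $|V|,|E|=\widetilde{O}(m)$, and the running time is likewise $\widetilde{O}(m)$. Every operation is local and explicit, so no rounding or randomness is involved.

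The main obstacle is precisely the quasi-linear size bound. The textbook presentation starts from a gap of $1/|E|$ and uses $O(\log m)$ rounds, which costs only polynomial size. The fix is to start from the $1/\polylog(m)$ gap of Step 1, so that only $O(\log\log m)$ rounds are needed. I would cite Dinur's quasi-linear PCP statement for this and verify two things: that its output is indeed a regular constraint graph with binary relations $R_{uv}\subseteq\Sigma\times\Sigma$, and that re-regularizing at the end costs only a constant factor.
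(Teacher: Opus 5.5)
Your plan is correct and is essentially how the cited result is obtained. The paper gives no proof of its own and imports Dinur's quasi-linear corollary, which combines the Ben-Sasson--Sudan constraint graph (size $m\cdot\polylog(m)$, constant alphabet, gap $1/\polylog(m)$) with $O(\log\log m)$ rounds of preprocessing, powering and composition, plus the final re-regularization you mention. The one point you assert rather than check is that the Ben-Sasson--Sudan stage runs in $\widetilde{O}(m)$ time rather than merely polynomial time; for the paper's ETH application, a polynomial-time reduction with quasi-linear output size already suffices.
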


\vspace{1em}
In the remainder of this section, we first give a $\SymStoqMA_{O(\log n)}(\sqrt n)$ protocol for $(1,\eta)$-\GapCG{} in \Cref{subsec:NP-in-SymStoqMAsqrtN}, which proves \Cref{thm:power-SymStoqMAk-np}. We then apply the same intuition to obtain a more direct $\StoqMA_{O(\log n)}(2)$ protocol with an inverse-polynomial promise gap in \Cref{subsec:stoqma-log-two-prover}, which proves \Cref{thm:power-stoqma-log-np}.

\subsection{A \texorpdfstring{$\SymStoqMA(\sqrt n)$}{} protocol for \texorpdfstring{$(1,\eta)$}{}-\GapCG{}}
\label{subsec:NP-in-SymStoqMAsqrtN}
We begin by designing a $\SymStoqMA(O(\sqrt n))$ protocol that certifies $\NP$ languages with a constant promise gap:
\SymStoqABDFS*

We will need the following classical theorem of Paninski~\cite{Paninski08}, in the clean formulation recorded in~\cite[Theorem 5.1]{Clement20}, to analyze the product weights induced by the witnesses:
\begin{theorem}[Uniformity test for distributions, adapted from~{\cite[Theorem 5.1]{Clement20}}]
\label{thm:paninski}
Given sample access to an unknown distribution $q$ over a finite set $\Omega$ of size $n$. For every fixed $\delta>0$ and $\epsilon>0$, there is an efficiently computable predicate $\calT$ on
\[
    K= O\bigg(\frac{\sqrt n}{\delta^2}\log\frac{1}{\epsilon}\bigg)
\]
i.i.d. samples from $q$, such that
\begin{itemize}[itemsep=0em]
    \item If $q$ is the uniform distribution, $\calT$ accepts w.p. at least $1-\epsilon$;
    \item If $q$ is $\delta$-far from uniform in total variation distance, $\calT$ accepts w.p. at most $\epsilon$.
\end{itemize}
Furthermore, $\calT$ depends only on the collision pattern among the samples.
\end{theorem}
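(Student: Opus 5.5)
We would prove this along the standard collision-statistic route (Goldreich--Ron, refined by Diakonikolas et al.), in two stages. First we build a base tester with constant error $1/3$. Then we amplify it to error $\epsilon$ by a majority vote over independent blocks.

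\textbf{Base tester.} Take $K_0 = C\sqrt{n}/\delta^2$ samples $x_1,\ldots,x_{K_0}$ for a large constant $C$, and count collisions:
\[
  Z \coloneqq \sum_{i<j}\bbI[x_i=x_j], \qquad \E[Z]=\binom{K_0}{2}\norm{q}_2^2 .
\]
For the uniform distribution, $\norm{q}_2^2 = 1/n$. If $q$ is $\delta$-far from uniform in total variation, Cauchy--Schwarz gives
\[
  \norm{q-u}_2^2 \ge \frac{(2\delta)^2}{n}, \qquad\text{hence}\qquad \norm{q}_2^2=\frac1n+\norm{q-u}_2^2\ge \frac{1+4\delta^2}{n}.
\]
The tester accepts if and only if $Z\le \binom{K_0}{2}(1+2\delta^2)/n$.

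To control errors we use the standard variance bound
\[
  \Var(Z)\le \binom{K_0}{2}\norm{q}_2^2 + K_0^3\rbra*{\norm{q}_3^3-\norm{q}_2^4}.
\]
In the uniform case the second term vanishes. Chebyshev then gives error at most $1/3$ once $K_0^2\delta^4/n \gtrsim 1$, i.e.\ $K_0=\Omega(\sqrt n/\delta^2)$.

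\textbf{Main obstacle: the far case.} Here $\norm{q}_2^2$ may be much larger than $1/n$, so we measure the deviation $\E[Z]-\text{threshold}$ on the scale $\binom{K_0}{2}\norm{q}_2^2$ rather than $\binom{K_0}{2}/n$. Writing $\norm{q}_2^2 = (1+\alpha)/n$ with $\alpha\ge 4\delta^2$, the gap is $\Omega\rbra*{\binom{K_0}{2}\alpha/n}$.

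For the cubic term, we first try to bound $\norm{q}_3^3-\norm{q}_2^4$ by expanding $q=u+e$. This gives
\[
  \norm{q}_3^3-\norm{q}_2^4 = O\rbra*{\norm{e}_2^2/n+\norm{e}_3^3},
\]
and then $\norm{e}_3^3\le\norm{e}_2^3$. If this works, the Chebyshev ratio is
\[
  O\rbra*{\frac{n}{K_0^2\alpha}+\frac{1}{K_0\sqrt{n}\,\alpha}+\frac{1}{K_0\,\alpha^{1/2}}}\cdot\text{const},
\]
and each term is small for $K_0=C\sqrt n/\delta^2$ using $\alpha\ge 4\delta^2$. This step is the delicate calculation, and we would follow Diakonikolas--Kane--Nikishkin's analysis if the crude bound falls short.

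\textbf{Amplification.} Split $K = t K_0$ samples into $t=O(\log(1/\epsilon))$ disjoint blocks. Run the base tester on each block and accept iff a majority of blocks accept. Hoeffding's inequality bounds each error probability by $e^{-\Omega(t)}\le\epsilon$, which gives $K=O\rbra*{\frac{\sqrt n}{\delta^2}\log\frac1\epsilon}$.

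Finally, the resulting predicate $\calT$ depends only on the equality relation among the samples. Indeed, each block statistic $Z$ is a function of which pairs $(i,j)$ collide, so $\calT$ is a function of the collision pattern alone. It is computable in time $\poly(K)$.
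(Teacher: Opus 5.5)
Your argument is correct, and it differs from the paper, which gives no proof of this statement. The paper imports it from Canonne's survey and attributes it to Paninski's coincidence-based tester, which thresholds the number of sample values seen exactly once and is designed for the sparse regime $K=o(n)$. You instead give a self-contained argument:
\begin{itemize}
\item threshold the pairwise collision count $Z$ at $\binom{K_0}{2}(1+2\delta^2)/n$ and control both errors by Chebyshev;
\item amplify by a majority vote over $O(\log(1/\epsilon))$ independent blocks.
\end{itemize}
This works for every $\delta$ with no sparsity assumption, and the predicate still depends only on the equality pattern of the samples. It also produces exactly the multiplicative $\log(1/\epsilon)$ overhead in the statement. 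The sharper $\sqrt{n\log(1/\epsilon)}/\delta^2$ dependence would need a finer tail analysis, which the paper does not use.

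One slip needs fixing: your displayed Chebyshev ratio is miscomputed. With gap at least $\binom{K_0}{2}\alpha/(2n)$ and
\[
\Var(Z)\le\binom{K_0}{2}\frac{1+\alpha}{n}+K_0^3\Bigl(\frac{\alpha}{n^2}+\frac{\alpha^{3/2}}{n^{3/2}}\Bigr),
\]
the three contributions are of order
\[
\frac{n(1+\alpha)}{K_0^2\alpha^2},\qquad \frac{1}{K_0\alpha},\qquad \frac{\sqrt{n}}{K_0\sqrt{\alpha}}.
\]
Your first term is missing its dominant piece $n/(K_0^2\alpha^2)$; that piece is precisely what forces $K_0=\Omega(\sqrt{n}/\delta^2)$. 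Your second term has a spurious $1/\sqrt{n}$, and your third is missing a factor $\sqrt{n}$. The conclusion still holds. With $K_0=C\sqrt{n}/\delta^2$, $\alpha\ge 4\delta^2$ and $\delta\le 1$, the three corrected terms are $O(1/C^2)$, $O(1/(C\sqrt{n}))$ and $O(\delta/C)$ respectively. So the crude bound $\norm{e}_3^3\le\norm{e}_2^3$ already suffices, and you do not need to fall back on any refined analysis from the literature.
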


Given a \GapCG{} instance $\frakI$, we identify $V\times\Sigma$ with a subset of the computational basis on $\ceil*{ \log{|V|} } + \ceil*{ \log{|\Sigma|} } = O(\log |V|)$ qubits.
Suppose the input is a \emph{yes} instance with a satisfying labeling $\iota:V\to\Sigma$, and let $|V|=n$.  The honest one-copy witness is the subset state
\[ \ket{\phi_\rmL} = \frac{1}{\sqrt{|V|}}\sum_{v\in V} \ket{v,\iota(v)}.\]
Fix a constant $C>0$, to be specified later, and let $K \coloneqq \lceil C\sqrt n\rceil.$ We ask for $K$ copies of this state, namely $\ket{\Phi} \coloneqq \ket{\phi_\rmL}^{\otimes K}$, so that the witnesses encode the satisfying labeling. 

\paragraph{Branch-local protocol for vertex-label branches.} Let $\sfW_i$ be the register containing the $i$-th witness factor, and let $\sfW \coloneqq (\sfW_1, \cdots, \sfW_K)$. An arbitrary non-negative one-copy state $\ket{\psi}$ can be expressed as
\[ \ket{\psi} = \sum_{(v,a)\in V\times \Sigma} \alpha_{v,a} \ket{v,a}, \quad\text{where } \;\; \alpha_{v,a}\geq 0, \forall (v,a)\in V\times \Sigma\;\; \text{and}\;\; \sum_{(v,a)\in V\times \Sigma} \alpha^2_{v,a} = 1. \]
Thus, a computational-basis branch of the symmetric witness $\ket{\Psi} \coloneqq \ket{\psi}^{\otimes K}$ is labeled by 
\[ x = ((v_1,a_1),\ldots,(v_K,a_K))\in (V\times\Sigma)^K. \]

We refer to $(v_i,a_i)$ as the \emph{vertex-label factor branch} in the $i$-th tensor factor, or simply as the \emph{vertex-label branch} when the tensor-product structure is clear. The full tuple $x$ is a \emph{$K$-vertex-label branch}. These branch labels are not observed outcomes: the actual stoquastic verifier \emph{never measures} $\sfW_1,\cdots,\sfW_K$ before the final Hadamard-basis measurement. The terminology serves only to identify the computational-basis branches on which the classical reversible circuit is coherently evaluated. 

One can verify directly that the branch weight of the $K$-vertex-label branch $x$ is
\begin{equation}
    \label{eq:branch-weight-product}
    \abs*{ \prod\nolimits_{i\in[K]} \alpha_{v_i,a_i} }^2 = \prod\nolimits_{i\in[K]} \alpha_{v_i,a_i}^2 = \prod\nolimits_{i\in[K]} p(v_i,a_i), \quad\text{where}\;\; p(v,a)\coloneqq \alpha_{v,a}^2.
\end{equation}
Here, $p(v,a)$ denotes the one-copy squared-amplitude distribution. Following \Cref{eq:branch-weight-product}, the squared-amplitude weights on $K$-vertex-label branches are exactly the product weights $p^{\otimes K}$. This product-weight identity is the only sense in which distribution testing enters the proof. We also define the probability distribution $q$ over $V$ as the vertex marginal of the one-copy distribution $p$, equivalently as the distribution that would be obtained by measuring $\ket{\psi}$ in the computational basis and retaining only the vertex register: 
\begin{equation}
    \label{eq:one-copy-vertex-marginal}
    \forall v\in V, \quad  q(v) \coloneqq \sum_{a\in\Sigma} \alpha_{v,a}^2 . 
\end{equation}

We are now ready to state the \emph{branch-local} protocol, presented in \Cref{protocol:branchLocal-NP-StoqMAkLog}, which describes the behavior of the $\SymStoqMA(K)$ verifier within a fixed $K$-vertex-label branch, including branch-local versions of the uniformity and consistency tests. 
\LinesNotNumbered
\begin{algorithm}[ht!]
    \UseProtocolCounter
    \caption{Branch-local protocol for $\NP$ on a fixed $K$-vertex-label branch}
	\label{protocol:branchLocal-NP-StoqMAkLog}
    \SetEndCharOfAlgoLine{.}
    \setlength{\parskip}{5pt}
    \SetKwFor{While}{}{:}{}
    \SetKwInOut{Parameter}{Parameters}

    \textbf{1.} Consider a $K$-vertex-label branch of $\ket{\Psi} = \ket{\psi}^{\otimes K}$, denoted by 
    \[ \rbra*{ (v_1,a_1), \dots, (v_{K},a_{K}) } \in \rbra*{V\times\Sigma}^K. \] 
        
    \textbf{2.} \While{\textnormal{Apply both the uniformity and consistency test}}{

    \medskip
    
    \textbf{2.1} \textbf{Uniformity test:} 
    Let $\calT$ be the predicate from Paninski's uniformity tester in~\cref{thm:paninski}, distinguishing the uniform distribution $\mu_V$ over $V$ from distributions $q$ such that $\TV(q,\mu_V)\ge \delta$, for some fixed distance parameter $\delta>0$ with error parameter $\epsilon$.
    Apply the deterministic collision-based predicate $\calT$ to the vertex components $v_1,\dots,v_{K}$. 

    \medskip
    \textbf{2.2} \textbf{Consistency test:} \emph{Accept} if neither of the following occurs
        \begin{enumerate}[label=(\roman*),itemsep=-2pt]
            \item Exist  $i\neq j$ such that $v_i=v_j$ but $a_i\neq a_j$;
            \item Exist $i\neq j$ such that $(v_i,v_j)\in E$ and
            $(a_i,a_j)\notin R_{v_iv_j}$.
        \end{enumerate}
    }
    \textbf{3.} Accept if and only if both tests accept.
\end{algorithm}

\paragraph{Implementing the branch-local protocol.} 
To implement \Cref{protocol:branchLocal-NP-StoqMAkLog} as a stoquastic verification circuit, we consider the predicate $A_\unif$ underlying Paninski's uniformity tester (\Cref{thm:paninski}) and write the acceptance condition of the consistency test as the predicate $A_\cons$. Since both predicates are deterministic, we obtain a deterministic predicate 
\[ A_\branch \coloneqq A_\unif \wedge A_\cons,\]
which corresponds to the acceptance condition of \Cref{protocol:branchLocal-NP-StoqMAkLog}. 

Let $\sfA_0$ and $\sfF$ denote the register containing $\ket{0}$-ancillary qubits and a single-qubit register initialized to $\ket{0}$. Using the standard reversible-computation convention, we obtain a classical reversible circuit $\Gamma_\branch$ satisfying that
\[ \forall x\in\rbra*{V\times \Sigma}^K, \quad 
\Gamma_\branch \ket{x}_\sfW \ket{\bar{0}}_{\sfA_0}\ket{0}_\sfF = \ket{x}_\sfW \ket{\bar{0}}_{\sfA_0}\ket{A_\branch(x)}_\sfF. \]

Finally, we obtain the resulting stoquastic verification circuit by applying the branch-overlap test (\Cref{lemma:branch-overlap-test}) to the circuit pair $(\Gamma_\branch,I)$ on the input state $\ket{\Psi}_{\sfW}\ket{\bar{0}}_{\sfA_0}\ket{0}_\sfF$, where $\ket{\Psi}$ equals $\ket{\Phi}$ for \emph{yes} instances. 

\subsubsection{Analysis of \texorpdfstring{\Cref{protocol:branchLocal-NP-StoqMAkLog}}{}}
Now we analyze our branch-local verifier. Only the one-copy squared-amplitude distribution $p$ over $V \times \Sigma$, as defined in \Cref{eq:branch-weight-product}, matters, because the $K$-factor branch weights factor as $p^{\otimes K}$, which corresponds, from the distribution testing perspective, to $K$ i.i.d. samples drawn from $p$. The vertex marginal $q$ over $V$ also has been defined in \Cref{eq:one-copy-vertex-marginal}.

\paragraph{Parameter choices.}
There are several running parameters in our analysis,
\begin{itemize}[itemsep=0em]
    \item[]  $\epsilon$, is the target error parameter in \Cref{protocol:branchLocal-NP-StoqMAkLog}, which is positive and can be arbitrarily small; 
    \item[] $\eta$, is the fixed constant from Dinur's PCP~\cref{thm:dinur-gapcg-form}, representing the minimum fraction of violating edges for any labeling in a \emph{no} instance; 
    \item[] $\delta$, is the distance parameter used in \Cref{protocol:branchLocal-NP-StoqMAkLog}, which we set to 
    $\delta = \eta/48;$
    \item[] $\kappa$, an auxiliary parameter, which we set to $\kappa= \eta/64.$
\end{itemize}

\paragraph{Completeness.}
For \emph{yes} instances, the induced vertex distribution $q$ is uniform, and the consistency test is passed trivially with probability $1$. Overall, the verifier accepts with probability at least $1-\epsilon$ for large enough $K=O(\sqrt n)$ by~\cref{thm:paninski}, where $\epsilon>0$ can be an arbitrarily small constant, or even as small as $2^{-\polylog(n)}$ if we allow $K=\widetilde O(\sqrt n)$. Hence,
\begin{lemma}[Completeness of \Cref{protocol:branchLocal-NP-StoqMAkLog}]
\label{lem:completeness}
    The completeness of the classical branch predicate is at least $1-\epsilon$, and consequently, the completeness of the branch-local verifier $\calV$ is at least $1-\epsilon/2$.
\end{lemma}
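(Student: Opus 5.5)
The plan is to analyze the honest witness $\ket{\Phi}=\ket{\phi_\rmL}^{\otimes K}$ branch by branch, and then transfer the resulting branch-predicate acceptance probability to the stoquastic verifier using \Cref{lemma:branch-overlap-test}.

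\emph{Branch weights.} By \Cref{eq:branch-weight-product}, the honest one-copy distribution is $p(v,a)=\frac{1}{n}\,\bbI[a=\iota(v)]$. Hence the $K$-vertex-label branches carry weights $p^{\otimes K}$: the vertices $v_1,\dots,v_K$ are i.i.d.\ uniform on $V$, and each label is deterministic, $a_i=\iota(v_i)$. In particular, the vertex marginal $q$ from \Cref{eq:one-copy-vertex-marginal} is exactly $\mu_V$.

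\emph{The two predicates.} On every branch in the support, $A_\cons$ holds. Condition (i) fails because $v_i=v_j$ forces $a_i=\iota(v_i)=a_j$. Condition (ii) fails because $\iota$ satisfies every edge relation $R_{uv}$. Next, $A_\unif$ depends only on the collision pattern of $v_1,\dots,v_K$, which are i.i.d.\ samples from the uniform distribution. So \Cref{thm:paninski}, applied with $K=O(\sqrt n\,\delta^{-2}\log(1/\epsilon))$, gives that the total weight of branches on which $A_\unif$ holds is at least $1-\epsilon$. Since $\delta=\eta/48$ is a constant and $\epsilon$ is a constant, this $K$ is $O(\sqrt n)$; choosing $C$ large enough makes $K=\lceil C\sqrt n\rceil$ suffice. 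Therefore $A_\branch=A_\unif\wedge A_\cons$ has total branch weight $\Pr_{x\sim p^{\otimes K}}[A_\branch(x)=1]\ge 1-\epsilon$.

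\emph{Transfer to the verifier.} The branch-overlap test on the pair $(\Gamma_\branch,I)$ accepts with probability $\frac12+\frac12\innerprod{\Gamma_\branch(\Phi)}{I(\Phi)}$. Since $\Gamma_\branch$ acts as identity on $\sfW,\sfA_0$ and writes $A_\branch(x)$ into $\sfF$, basis branches with $A_\branch(x)=0$ are mapped to themselves. Branches with $A_\branch(x)=1$ are mapped to vectors orthogonal to the original ones, because $\sfF$ flips from $\ket0$ to $\ket1$.

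Here the convention must be handled with care. As literally written, this gives $\innerprod{\Gamma_\branch(\Phi)}{\Phi}=\Pr[A_\branch=0]$, which rewards rejection. I would therefore flip the output convention, writing $\neg A_\branch$ into $\sfF$ (one extra X gate). The overlap then equals the weight of accepting branches, $\Pr_{p^{\otimes K}}[A_\branch=1]\ge1-\epsilon$.

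This yields acceptance probability at least $\frac12+\frac12(1-\epsilon)=1-\epsilon/2$.

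The only delicate points are this sign/convention bookkeeping for $\sfF$ and the parameter choice for $K$. The latter needs $\epsilon$ constant for $K=O(\sqrt n)$, or $\epsilon=2^{-\polylog(n)}$ with $K=\widetilde O(\sqrt n)$, as noted in \Cref{remark:stoq-uniformity-test-loss}.
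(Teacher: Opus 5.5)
Your proof is correct and follows the paper's argument. Under the honest witness $\ket{\phi_\rmL}^{\otimes K}$, the vertex marginal is exactly uniform, so Paninski's predicate holds on branch weight at least $1-\epsilon$. The consistency predicate holds on every branch, and the branch-overlap test turns this into acceptance probability at least $\frac12+\frac12(1-\epsilon)=1-\epsilon/2$. Your remark on the flag convention is correct and is something the paper glosses over: writing $A_{\branch}(x)$ into $\sfF$ as literally stated makes the overlap equal the rejecting weight, so one should write $\neg A_{\branch}(x)$ into $\sfF$ instead.
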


\paragraph{Soundness.}
We now prove the soundness:
\begin{lemma}[Soundness of \Cref{protocol:branchLocal-NP-StoqMAkLog}]
    The soundness of the classical branch predicate is at most $\epsilon$, and consequently, the completeness of the branch-local verifier $\calV$ is at most $1/2+\epsilon/2$.
\end{lemma}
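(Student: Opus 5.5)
We must show: for any non-negative one-copy state $\ket{\psi}$ on a \emph{no} instance, the probability under $p^{\otimes K}$ that $A_\branch$ accepts is at most $\epsilon$. The verifier's acceptance probability is then $\frac12+\frac12\Pr_{x\sim p^{\otimes K}}[A_\branch(x)]\le \frac12+\frac\epsilon2$. This follows from \Cref{lemma:branch-overlap-test} applied to $(\Gamma_\branch,I)$: on each branch $x$ the overlap is $\bbI[A_\branch(x)]$, and distinct branches do not interfere.

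The argument splits on the vertex marginal $q$ of \Cref{eq:one-copy-vertex-marginal}.

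\textbf{Case 1: $\TV(q,\mu_V)\ge\delta$.} Then \Cref{thm:paninski} gives $\Pr[A_\unif]\le\epsilon$, and hence $\Pr[A_\branch]\le\epsilon$.

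\textbf{Case 2: $\TV(q,\mu_V)<\delta=\eta/48$.} Here I show that $A_\cons$ rejects with probability at least $1-\epsilon$.

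First, decode a labeling. For each $v$, let $\iota(v)$ maximize $p(v,a)$ over $a\in\Sigma$, and let $\beta_v \coloneqq 1-p(v,\iota(v))/q(v)$ be the conditional "label-ambiguity" mass at $v$.

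By the \emph{no} promise, $\iota$ violates at least $\eta|E|$ edges. Since $G$ is $d$-regular and $q$ is $\delta$-close to uniform, the probability that two independent samples $u,v\sim q$ form a violated edge is at least roughly $\eta d/n-O(\delta d/n)$. Restricting to $q(v)\ge(1-\kappa')/n$ only costs $O(\delta)$ of edges, which is where $\delta=\eta/48$ and $\kappa=\eta/64$ enter.

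Now fix a violated edge $(u,v)$ and two sample indices $i\ne j$ with $v_i=u$, $v_j=v$. If both samples carry the decoded labels $\iota(u),\iota(v)$, then test (ii) rejects. Otherwise some vertex with large $\beta$ is present.

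I then split into two subcases.

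\textbf{Subcase 2a: the $q$-weighted ambiguity $\sum_v q(v)\beta_v\ge\kappa$.} Among $K=C\sqrt n$ samples, the expected number of same-vertex pairs carrying differing labels is about $\binom K2\sum_v q(v)^2\cdot 2\beta_v(1-\beta_v)\gtrsim C^2\kappa$, using $q(v)\approx1/n$ on most vertices. These collisions trigger test (i).

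\textbf{Subcase 2b: otherwise.} Most mass sits on decoded labels. The expected number of pairs forming a violated edge with decoded labels is about $\binom K2\cdot\Omega(\eta d/n)\approx C^2\Omega(\eta)$. These trigger test (ii).

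In both subcases the expected count of rejecting pairs is $\Omega(C^2\cdot\min\{\kappa,\eta\})$. To convert expectation into a high-probability event I would use a second-moment (Chebyshev/Janson-type) bound on the pair-count statistic $Z=\sum_{i<j}\bbI[\text{pair } (i,j) \text{ witnesses rejection}]$. Pairs sharing an index contribute to the variance, but bounded degree $d$ and $q\lesssim 1/n$ on the relevant vertices keep $\Var Z=O(\E Z+\E Z^2/K)$. This gives $\Pr[Z=0]\le O(1/\E Z)\le\epsilon$ once $C=C(\epsilon,\eta)$ is large enough. Boosting to $\epsilon=2^{-\polylog n}$ requires grouping the samples into $\log(1/\epsilon)$ disjoint blocks, giving independent trials, as in \Cref{remark:stoq-uniformity-test-loss}.

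\textbf{Main obstacle.} The hardest step is the variance control and the handling of heavy vertices. A cheating $q$ that is $\delta$-close to uniform can still place mass up to $\delta$ on a few vertices, which inflates same-index correlations. I would first truncate to the set $\{v: q(v)\le 2/n\}$, which retains $1-O(\delta)$ of the mass, and run the second-moment argument only on pairs inside it. The choice $\delta,\kappa=\Theta(\eta)$ is made so that the losses in the violated-edge count stay below $\eta/2$.
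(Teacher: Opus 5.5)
\textbf{Gap: Subcase~2a is not covered by your truncated second-moment argument.} Your outline follows the paper's proof: Paninski for the far-from-uniform case, plurality decoding, an ambiguous/unambiguous dichotomy, a pair-count second-moment argument on atoms of mass $O(1/n)$, and block repetition for small $\epsilon$. The problem is how you set up the dichotomy.

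You define Subcase~2a by $\sum_{v\in V} q(v)\beta_v\ge\kappa$ over \emph{all} vertices. Yet you run the second-moment argument only on $\{v: q(v)\le 2/n\}$.

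Under $\TV(q,\mu_V)<\delta$, the heavy set $H=\{v: q(v)>2/n\}$ has $|H|<\delta n$, but $q(H)$ can approach $2\delta$. Since $\kappa=\eta/64<\delta=\eta/48$, all of the $\kappa$ ambiguity mass can sit on $H$ while every other vertex is unambiguous. For example, take nearly $\delta n$ vertices of mass just above $2/n$, each split evenly between two labels; they carry ambiguity mass close to $\delta$.

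Your truncation discards exactly these vertices, so nothing is left for Subcase~2a to detect. You also cannot simply drop the truncation. The bound $\mu(x)\le\alpha/n$ is what makes the overlapping-pair terms of $\operatorname{Var} Z$ small. A very heavy, slightly ambiguous vertex inflates those terms enough that Chebyshev (or Paley--Zygmund) no longer forces $\Pr[Z=0]$ to be small. Your closing remark about the choice of $\delta,\kappa$ only budgets the truncation loss against the violated-edge count (Subcase~2b), not against the ambiguity mass.

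\textbf{The repair is the paper's choice.} State the dichotomy over the good set $G=\{v:\tfrac{1}{2n}\le q(v)\le \tfrac{2}{n}\}$, so that Subcase~2a reads $\sum_{v\in G} q(v)\beta_v\ge\kappa$.

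Then $q(v)^2\ge q(v)/(2n)$ on $G$ gives $\lambda_{\ver}\ge \kappa/(|\Sigma| n)$, with every atom of mass at most $2/n$. This also replaces your heuristic ``$q(v)\approx 1/n$ on most vertices.''

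In Subcase~2b, delete every edge touching $V\setminus G$ or touching $\{v\in G:\beta_v>1/2\}$. Here $|V\setminus G|<3\delta n$, which covers the heavy vertices as well as the light ones you already budget for, and the second set has fewer than $4\kappa n$ vertices. This loses fewer than $d(3\delta+4\kappa)n=\eta dn/8$ edges. Hence at least $\eta dn/4$ violated edges remain with $p(u,\iota(u))\,p(v,\iota(v))\ge 1/(16n^2)$.

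With this change your argument coincides with the paper's. Your Chebyshev step is in fact slightly sharper than the paper's Paley--Zygmund-plus-repetition route. Disjoint index pairs contribute non-positively to $\operatorname{Var} Z$, so one gets $\Pr[Z=0]\le O(1/\E Z+1/K)$ directly, and repetition is needed only for sub-constant $\epsilon$.
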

If $q$ is $\delta$-far from the uniform distribution $\mu_V$, the uniformity test rejects with probability $1-\epsilon$. 
Hence, assume from now on that $\TV(q,\mu_V)<\delta.$
Define the following sets
\[
    L \coloneqq \set{v\in V}{q(v)<\tfrac{1}{2n}},
    \qquad
    H \coloneqq \set{v\in V}{q(v)>\tfrac{2}{n}},
    \qquad
    G \coloneqq V\setminus(L\cup H).
\]
Then because every $v\in L$ contributes more than $\frac{1}{2n}$ to the total deficit from $\mu_V$, one has
        $\abs{L}<2\delta n;$
and analogously because every $v\in H$ contributes more than $\frac{1}{n}$ to the total excess over $\mu_V$, one has
        $\abs{H}<\delta n.$
Therefore,
    $\abs{V\setminus G}<3\delta n.$

For each vertex $v$ with $q(v)>0$, define the conditional label distribution $r_v:\Sigma\to \bbR$
and the \emph{plurality} label $\iota^*:V\to\Sigma$, that
\[
    r_v(a) \coloneqq \frac{p(v,a)}{q(v)},\qquad \iota^*(v) \coloneqq \arg\max_{a\in\Sigma} r_v(a).
\]
For vertices with $q(v)=0$, choose $\iota^*(v)$ arbitrarily.  Define the local \emph{ambiguity}
\[
    b_v \coloneqq 1-r_v(\iota^*(v)).
\]
Thus $b_v=0$ iff the label at $v$ is deterministic. 
We split the soundness analysis into two cases.
\begin{enumerate}[label=(\alph*)]
    \item\label{enu:ambig-case} There is substantial ambiguity on good vertices. Then the consistency test is likely to see a same vertex sampled twice with different labels by the birthday paradox, and thus rejects. 
    \item\label{enu:determ-case} The label is almost deterministic on a good fraction of vertices. Then the consistency test is likely to see adjacent vertices with unsatisfying labels by the birthday paradox, and thus rejects.
\end{enumerate}

The following generalized birthday paradox, whose proof is deferred to \Cref{subsec:omitted-birthday}, captures the birthday-paradox estimate required by the two cases:
\begin{restatable}[A generalized birthday paradox]{lemma}{birthdayParadox}
\label{lem:birthday}
Let $\Omega$ be a finite set of size $n$,  $\mu$ be a probability distribution on $\Omega$, and $B\subseteq \Omega\times\Omega$ be a symmetric relation representing the ``birthday collisions.''
For some subset $\Omega_0\subseteq \Omega$, if for some parameters $\alpha,\beta>0$ and some integer 
$D>0$, the following are true
\begin{enumerate}[label=\upshape(\arabic*), itemsep=0em]
    \item $\mu(x)\le \alpha/n$ for every $x\in\Omega_0$;
    \item for every $x\in\Omega_0$, the number of $y\in\Omega_0$ with $(x,y)\in B$ is at most
    $D$;
    \item the bad-pair weight on $\Omega_0$ satisfies
    \[
        \lambda \coloneqq \sum_{(x,y)\in B\cap(\Omega_0\times\Omega_0)} \mu(x)\mu(y) \ge \frac{\beta}{n}.
    \]
\end{enumerate}
Then for any constant $\epsilon>0$, there is a large enough $C$ such that the
following holds for all sufficiently large $n$:
Let $X_1,\dots,X_K$ be i.i.d. samples from $\mu$ with $K=C\sqrt n$,
\[
    \Pr\Bigl[\exists i<j \text{ such that } (X_i,X_j)\in B\cap(\Omega_0\times\Omega_0)\Bigr]
    \ge 1-\epsilon.
\]
\end{restatable}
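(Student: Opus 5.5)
We use the second-moment method on the number of bad pairs among the samples. Let $B_0 \coloneqq B\cap(\Omega_0\times\Omega_0)$, and for $i<j$ let $Z_{ij}\coloneqq \bbI[(X_i,X_j)\in B_0]$. Set $Z\coloneqq \sum_{i<j} Z_{ij}$. It suffices to show $\Pr[Z=0]\le\epsilon$, and by Chebyshev (or Paley--Zygmund) $\Pr[Z=0]\le \Var(Z)/\E[Z]^2$. So the plan has two parts: first lower-bound $\E[Z]$, then upper-bound $\Var(Z)$ by $O(\E[Z])$ plus a term that is small relative to $\E[Z]^2$.

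\textbf{The mean.} By symmetry of $B$, $\Pr[(X_i,X_j)\in B_0]=\lambda$ (if $\lambda$ counts ordered pairs this is exact; otherwise we lose only a constant factor). Hence
\[
    \E[Z]=\binom{K}{2}\lambda \ge \frac{K^2}{4}\cdot\frac{\beta}{n}=\frac{C^2\beta}{4},
\]
which is a large constant once $C$ is large.

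\textbf{The variance.} Expand $\Var(Z)=\sum_{\{i,j\},\{i',j'\}}\mathrm{Cov}(Z_{ij},Z_{i'j'})$ and sort the terms by overlap.
\begin{itemize}
    \item \emph{Disjoint index pairs} are independent, so their covariance is $0$.
    \item \emph{Identical pairs} contribute at most $\sum\E[Z_{ij}]=\E[Z]$.
    \item \emph{Pairs sharing exactly one index} $\{i,j\},\{i,j'\}$ form the main term. There are at most $K^3$ such ordered triples, and each contributes at most $\Pr[(X_i,X_j)\in B_0,(X_i,X_{j'})\in B_0]=\sum_{x\in\Omega_0}\mu(x)\,w(x)^2$, where $w(x)\coloneqq\sum_{y\in\Omega_0:(x,y)\in B}\mu(y)$.
\end{itemize}

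To bound the shared-index term, hypotheses (1) and (2) give $w(x)\le D\alpha/n$. Therefore
\[
    \sum_x \mu(x)\,w(x)^2\le \frac{D\alpha}{n}\sum_x\mu(x)\,w(x)=\frac{D\alpha}{n}\lambda,
\]
and the total shared-index contribution is at most
\[
    K^3\cdot\frac{D\alpha\lambda}{n}=O\!\left(\frac{K D\alpha}{n}\right)\E[Z]=O\!\left(\frac{C D\alpha}{\sqrt n}\right)\E[Z].
\]
This is $o(\E[Z])$ for fixed constants $C,D,\alpha$ as $n\to\infty$.

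\textbf{Conclusion.} Combining, $\Var(Z)\le(1+o(1))\E[Z]$, so
\[
    \Pr[Z=0]\le\frac{1+o(1)}{\E[Z]}\le\frac{4(1+o(1))}{C^2\beta}.
\]
Choosing $C$ large enough that $4/(C^2\beta)<\epsilon/2$ and taking $n$ sufficiently large yields the claim.

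The main obstacle is the shared-index covariance. Hypotheses (1) and (2) exist precisely to control the neighbor mass $w(x)$, and the step to get right is the bound $\sum_x \mu w^2 \le \max_x w(x)\cdot\lambda$. The remaining bookkeeping (ordered versus unordered pairs, and whether the diagonal $x=y$ lies in $B$) only affects constants, and we would absorb it into the choice of $C$.
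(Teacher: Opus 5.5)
Your proposal is correct and follows essentially the same route as the paper: a second-moment argument on the number $Z$ of bad pairs, with the overlapping-pair terms controlled through the bound $w(x)\le D\alpha/n$ coming from hypotheses (1)--(2). The one difference is that you work with $\Var(Z)$ via covariances, so disjoint index pairs contribute exactly zero, and then apply Chebyshev. This gives $\Pr[Z=0]=O(1/(C^2\beta))$ directly and makes the paper's Paley--Zygmund bound followed by block-repetition amplification unnecessary.
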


With this generalized birthday paradox at our disposal, we prove the promised soundness of our protocol by considering the aforementioned two cases:

\subparheading{Case~\ref{enu:ambig-case}:} There is substantial ambiguity among good vertices.
Suppose
\[
    \sum_{v\in G} q(v)b_v\ge \kappa
\]
for $\kappa > 0$, whose exact choice $\kappa$ is irrelevant in this case. 
To apply the birthday paradox, define the sample space
\[
    \Omega \coloneqq V\times\Sigma 
    \quad\text{and}\quad
    \mu(v,a) \coloneqq p(v,a),
\]
and define 
\[
    \Omega_0^{\ver} \coloneqq G\times\Sigma\subseteq \Omega.
\]

Let $B_{\ver}\subseteq \Omega\times\Omega$ be the symmetric bad-pair relation consisting of all ordered pairs
\[
    ((v,a),(v,a')) \quad \text{with } v\in G \text{ and } a\neq a'.
\]

For $x=(v,a)\in\Omega_0^{\ver}$, we have
\[
    \mu(x)=p(v,a)\le q(v)\le \frac{2}{n}.
\]

Moreover, every $x\in\Omega_0^{\ver}$ has at most $|\Sigma|-1$ neighbors inside $\Omega_0^{\ver}$ with respect to $B_{\ver}$. The bad-pair weight equals
\[
    \lambda_{\ver} \coloneqq \sum_{(x,y)\in B_{\ver}} \mu(x)\mu(y) 
    =\sum_{v\in G} q(v)^2 \sum_{a\neq a'} r_v(a)r_v(a') 
\]

Since $r_v(\iota^*(v))=1-b_v$ and $1-b_v\ge 1/|\Sigma|$, one has the elementary bound
\[
    \sum_{a\neq a'} r_v(a)r_v(a')  \ge 2b_v(1-b_v)\ge \frac{2}{|\Sigma|}b_v.
\]

Also, for $v\in G$ one has $q(v)^2\ge q(v)/(2n)$.  Therefore,
\begin{align*}
    \lambda_{\ver}
    \ge \sum_{v\in G} \frac{q(v)}{2n}\cdot \frac{2}{|\Sigma|}b_v 
    = \frac{1}{|\Sigma|n}\sum_{v\in G} q(v)b_v 
    \ge \frac{\kappa}{|\Sigma|n}.
\end{align*}

Applying the generalized birthday paradox (\Cref{lem:birthday}) with
\[
    \Omega_0=\Omega_0^{\ver},\qquad \alpha=2|\Sigma|,\qquad \beta=\kappa,\qquad D=|\Sigma|-1,
\]
we conclude that for $K=C\sqrt n$ with $C$ sufficiently large, the consistency test rejects with probability at least $1-\epsilon>0$.

\subparheading{Case~\ref{enu:determ-case}:} The label is unambiguous on most good vertices, meaning that
\[
    \sum_{v\in G} q(v)b_v<\kappa,
\]
Define $T \coloneqq \set{v\in G}{b_v\le 1/2}$, the set of unambiguous vertices. Then $G\setminus T=\set{v\in G}{b_v>1/2}$, and hence
\[
    q(G\setminus T)
    \le
    2\sum_{v\in G\setminus T}q(v)b_v
    \le
    2\sum_{v\in G}q(v)b_v
    <2\kappa.
\]
Because every vertex in $G$ has mass at least $1/(2n)$, this implies
$\abs{G\setminus T}<4\kappa n.$
Consequently,
\[
    \abs{V\setminus T}
    \le \abs{V\setminus G}+\abs{G\setminus T}
    < 3\delta n + 4\kappa n.
\]

Since we are in a \emph{no} instance, the plurality labeling $\iota^*$ violates at least an $\eta$-fraction of the
edges, i.e., at least $\eta\abs{E}=\eta dn/2$ edges.  Delete all edges incident to $V\setminus T$.
Since the graph has degree $d$, the number of deleted edges is at most
\[
    d\abs{V\setminus T}
    < 3\delta dn + 4\kappa dn.
\]

Now our choice of parameters $\delta = \eta/48$ and $\kappa=\eta/64$ guarantees that there remain at least $\eta dn/4$ violated edges $(u,v)\in E$ such that:
\begin{enumerate}[label=(\roman*),leftmargin=2em,itemsep=0em]
    \item both $u$ and $v$ lie in $T$;
    \item $(\iota^*(u),\iota^*(v))\notin R_{uv}$.
\end{enumerate}
Call these \emph{good violated edges}. Now define the set
\[
    \Omega_0^{\mathrm{edge}}
     \coloneqq \set{(v,\iota^*(v))}{ v \text{ is incident to some good violated edge}}
    \subseteq \Omega,
\]
and define a symmetric bad-pair relation $B_{\mathrm{edge}}\subseteq \Omega\times\Omega$ by including,
for every good violated edge $(u,v)$, the two ordered pairs
\[
    ((u,\iota^*(u)),(v,\iota^*(v)))
    \qquad \text{and} \qquad
    ((v,\iota^*(v)),(u,\iota^*(u))).
\]
Every atom in $\Omega_0^{\mathrm{edge}}$ lies in $T\subseteq G$, hence has mass at most $2/n$, and
every atom has degree at most $d$ inside this relation.

For a good violated edge, because $u,v\in T\subseteq G$, we have
\[
    q(u),q(v)\ge \frac{1}{2n},
    \qquad
    r_u(\iota^*(u)),\ r_v(\iota^*(v))\ge \frac12.
\]
Consequently, it holds that
\[
    p(u,\iota^*(u))\,p(v,\iota^*(v))
    \ge \frac{1}{16n^2}.
\]
Summing over the two orientations of each good violated edge yields
\[
    \lambda_{\mathrm{edge}}
     \coloneqq  \sum_{(x,y)\in B_{\mathrm{edge}}} \mu(x)\mu(y)
    \ge \frac{\eta d}{32n}.
\] 
Applying the generalized birthday paradox (\Cref{lem:birthday}) with
\[
    \Omega_0=\Omega_0^{\mathrm{edge}},\qquad \alpha=2|\Sigma|,\qquad
    \beta=\eta d|\Sigma|/32, \qquad D=d,
\]
we conclude again that for $K=C\sqrt n$ with $C$ sufficiently large, the consistency test rejects with probability at least $1-\epsilon$. 

\subsection{A \texorpdfstring{$\StoqMA_{\log}(2)$}{} protocol with inverse-polynomial gap}
\label{subsec:stoqma-log-two-prover}

We now present the promised inverse-polynomial-gap version. In the constant-gap protocol, corresponding to its branch-local version presented in \Cref{protocol:branchLocal-NP-StoqMAkLog}, the $K=\Theta(\sqrt n)$ vertex-label factor branches are used only to ensure that the relevant vertex- and edge-detection events have \emph{constant} total branch weight. With two vertex-label factor branches, the same events have branch weight $\Theta(1/n)$, which is sufficient to obtain an \emph{inverse-polynomial} promise gap.

\NPinStoqMAlogTwo*

\begin{proof}
In view of~\cref{thm:sym-to-stoq-k}, it suffices to prove
\begin{equation} 
\NP \subseteq \SymStoqMA_{O(\log n)}\ssbra*{2,\,1-\Theta\rbra*{\frac{1}{n}},\,\Theta\rbra*{\frac{1}{n}}}.
\end{equation}

Consider a $\GapCG$ instance as in~\Cref{thm:dinur-gapcg-form}, with $n=|V|$ vertices. Merlins send an $O(\log n)$-qubit state on the computational basis $V\times\Sigma$.  On a \emph{yes} instance, for the satisfying labeling $\iota:V\to\Sigma$, both Merlins send
\[
    \ket{\phi_\rmL}= \frac1{\sqrt n}\sum_{v\in V}\ket{v,\iota(v)}.
\]

\paragraph{The simple branch-local protocol and its stoquastic implementation.}
Similar to the analysis in the previous subsection, it suffices to first consider the simple branch-local protocol $W$, analogous to \Cref{protocol:branchLocal-NP-StoqMAkLog}, as presented in \Cref{protocol:simple-branchLocal-NP-StoqMAkLog}. 

\LinesNotNumbered
\begin{algorithm}[ht!]
    \UseProtocolCounter
    \caption{Simple branch-local protocol for $\NP$ on a fixed $2$-vertex-label branch.}
	\label{protocol:simple-branchLocal-NP-StoqMAkLog}
    \SetEndCharOfAlgoLine{.}
    \setlength{\parskip}{5pt}
    \SetKwFor{While}{}{:}{}
    \SetKwInOut{Parameter}{Parameters}

    \textbf{1.} Consider a $2$-vertex-label branch of $\ket{\Psi} = \ket{\psi} \otimes \ket{\psi}$, denoted by 
    \[ ((u,a), (v,b)) \in \rbra*{V\times\Sigma}^2. \] 
        
    \textbf{2.} \While{\textnormal{Apply either the uniformity or the consistency test, each with probability $1/2$}}{

    \medskip
    
    \textbf{2.1} \textbf{Uniformity test:} \emph{Reject} if $u=v$. 
    
    \medskip
    \textbf{2.2} \textbf{Consistency test:} \emph{Reject} if either of the following occurs:
        \begin{enumerate}[label=(\roman*),itemsep=-2pt]
            \item $u=v$ and $a\neq b$;
            \item $(u,v)\in E$ and $(a,b) \notin R_{uv}$. 
        \end{enumerate}
    }
    \textbf{3.} Accept if and only if the chosen test accepts.
\end{algorithm}

To implement \Cref{protocol:simple-branchLocal-NP-StoqMAkLog} as a stoquastic verification circuit, we write the two acceptance conditions of the uniformity and consistency tests, respectively, as deterministic predicates $A_\unif$ and $A_\cons$, and implement them by classical reversible circuits $\Gamma_\unif$ and $\Gamma_\cons$, respectively, using the standard reversible-computation convention. 

Finally, we implement the convex combination $(1/2,1/2)$ of the two tests by combining these circuits using an additional $\ket{+}$ ancillary qubit, obtaining a reversible circuit $\Gamma_\branch$ for the uniform mixture of $\Gamma_\unif$ and $\Gamma_\cons$. The resulting stoquastic verification circuit is then obtained by applying the branch-overlap test (\Cref{lemma:branch-overlap-test}) to the circuit pair $(\Gamma_\branch,I)$ with the input state $\ket{+}\ket{\Psi}\ket{\bar{0}}$, where $\ket{\Psi}$ coincides with $\ket{\phi_\rmL}\otimes\ket{\phi_\rmL}$ for \emph{yes} instances. 

\paragraph{Analysis of the branch-local verifier $W$.}
We analyze the branch-local protocol $W$ on two symmetric witnesses. From the distributional testing perspective, $(u,a), (v,b)$ are drawn from $p^{\otimes 2}$, where $p$ is a distribution on $V\times\Sigma$. 
We will use the same notation as in the proof of~\cref{thm:power-SymStoqMAk-np}.

\parheading{Completeness of $W$.} For every \emph{yes} instance, the consistency test sees no violation, the only rejection can occur in the uniformity test, when $u=v$, whose total weight is easily seen to be $\frac{1}{2n}$, thus 
\begin{equation}
\label{eq:weak-two-sample-gap-completeness}
    \Pr[W \text{ rejects}] = \frac{1}{2n}.
\end{equation}

\parheading{Soundness of $W$.} We claim that, for every \emph{no} instance, and for a constant $\gamma>0$,
\begin{equation}
\label{eq:weak-two-sample-gap}
    \Pr[W \text{ rejects}]\ge \frac{1}{2n}+\frac{\gamma}{n}
\end{equation}

Indeed, if the vertex marginal $q$ is $\delta$-far from uniform in total variation distance, the uniformity test rejects with probability
\[
    \frac12 \sum_v q(v)^2
    = \frac{1}{2n}+\frac12\|q-\mu_V\|_2^2
    \ge \frac{1}{2n}+\frac{2\delta^2}{n},
\]
where the last step is by Cauchy-Schwarz that $ 4\TV(q,\mu_V)^2=\|q-\mu_V\|_1^2\le n\|q-\mu_V\|_2^2$. Now assume $q$ is close to uniform, we split the remaining soundness analysis into two cases.

\subparheading{Ambiguous case.} If $q$ is close to uniform but there is substantial ambiguity.
Let  $B_{\ver}\subseteq \Omega\times\Omega$ be the symmetric bad-pair relation consisting of all
ordered pairs
\[
    ((v,a),(v,a')) \quad \text{with } v\in G \text{ and } a\neq a'.
\]

Then the bad-pair weight
\[\lambda_{\ver} \coloneqq \sum_{((u,a),(v,b))\in B_{\ver}} p(u,a)p(v,b)\] 
based on the calculation in Case~\ref{enu:ambig-case} from~\cref{thm:power-SymStoqMAk-np} is such that 
\[\lambda_{\ver}\ge \frac{\kappa}{|\Sigma|n},\]
contributing an additional $\lambda_{\ver}/2$ rejection weight within the consistency test, to the weight $1/(2n)$ from the uniformity test.  

\subparheading{Unambiguous case.} Otherwise, $q$ is close to uniform and the labels are unambiguous on a good fraction of vertices. Let the symmetric bad-pair relation $B_{\mathrm{edge}}\subseteq \Omega\times\Omega$ include the two ordered pairs
\[
    ((u,a),(v,b))
    \;\;\text{and}\;\;
    ((v,b),(u,a)), \quad \text{where}\;\; (a,b)\not\in R_{uv}.
\]

The bad-pair weight with respect to $B_{\mathrm{edge}}$,
\[\lambda_{\mathrm{edge}} \coloneqq \sum_{((u,a),(v,b))\in B_{\mathrm{edge}}} p(u,a)p(v,b),\] 
by the calculation in Case~\ref{enu:determ-case} from~\cref{thm:power-SymStoqMAk-np} gives 
\[\lambda_{\mathrm{edge}}\ge \frac{\eta d}{32n},\]
contributing an additional $\lambda_{\mathrm{edge}}/2$ rejection weight within the consistency test, to the rejection weight of $1/(2n)$ from the uniformity test.  

Then~\cref{eq:weak-two-sample-gap} holds by taking 
\[\gamma = \min \cbra*{2\delta^2,\frac{\kappa}{2|\Sigma|},\frac{\eta d}{64}}. \qedhere\]
\end{proof}

%%%%%%%%%%%%%%%%%%%%%%%%%%%%%%%%%%%%%%%%%%%%%%%%%%%%%%%%%%%
%%%%%%%%%%%%%%%%%%%%%%%%%%%%%%%%%%%%%%%%%%%%%%%%%%%%%%%%%%%
%%%%%%%%%%%%%%%%%%%%%%%%%%%%%%%%%%%%%%%%%%%%%%%%%%%%%%%%%%%

\section{Upper bounds for \StoqMAk{} with logarithmic-size proofs}
\label{sec:StoqMAk-log-upper-bound}

In this section, we provide nearly tight upper bounds for \StoqMAk{} with logarithmic-size proofs. We first prove that $\StoqMA_{\log}$ collapses to \BPP{}, which provides positive evidence for the long-standing open problem of whether $\StoqMA=\MA$, since $\MA_{\log}=\BPP$ (e.g.,~\cite[Theorem 3]{Regev06}), and aligns with the equivalence $\QMA_{\log}=\BQP$~\cite[Theorem 3.10]{MW05}:  

\begin{theorem}[Randomized singly exponential-time upper bound for \StoqMA{}]
    \label{thm:StoqMA-BPTIME-upper-bound}
    For all efficiently computable functions $c(n)$ and $\Delta(n)$ with $c(n) \in (1/2, 1]$, $\Delta(n) \in (0,1/2]$, and $c(n)-\Delta(n)\geq 1/2$, and for every efficiently computable positive integer-valued function $\ell(n) \leq \poly(n)$, the following inclusion holds:
    \[ \StoqMA_\ell\ssbra{1,c,\Delta} \subseteq \BPTIME\sbra*{\widetilde{O}\rbra[\bigg]{\frac{2^{3\ell}}{\Delta^2}} \cdot \poly(|V|)}. \]
    Here, $|V|$ denotes the description size of the underlying stoquastic verification circuit $V$. 

    \noindent Furthermore, up to lower-order factors, this upper bound is \emph{tight} with respect to the proof length $\ell$ under randomized ETH. 
\end{theorem}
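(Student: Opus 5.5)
Since $\StoqMA$ with a single prover is an eigenvalue problem, I will estimate the relevant matrix entries by sampling and then compute the top eigenvalue classically. By \Cref{lemma:SepRCD-StoqMAk-complete} with $k=1$, I may write the maximum acceptance probability as $\omega(V_x)=\frac12+\frac12\lambda_{\max}(M_x)$, where
\[M_x=\bra{\bar{0}}\bra{\bar{+}}\Gamma_x\ket{\bar{0}}\ket{\bar{+}},\qquad \Gamma_x=V_x^\dagger X_{\sfO}V_x,\]
is an entrywise non-negative, real symmetric $2^\ell\times 2^\ell$ matrix. Because $\Gamma_x$ is a permutation of basis strings, each entry is
\[\bra{w}M_x\ket{w'}=2^{-m_+}\cdot\#\cbra*{(u,u')\in\binset^{m_+}\times\binset^{m_+} : \Gamma_x(w',\bar{0},u')=(w,\bar{0},u)}.\]
Equivalently, it is the probability over a uniform $u'\in\binset^{m_+}$ that $\Gamma_x(w',\bar 0,u')$ has first register $w$ and zero ancilla. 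So I do not need to track $u$ separately. Each entry is therefore the mean of a Bernoulli variable that can be sampled in $\poly(|V|)$ time. Note that a single sample of $u'$ determines a whole column of $M_x$ at once. I will draw $T$ samples per column $w'$ and form an estimate $\widehat M$ with entrywise or columnwise error control.

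To set $T$, I need $|\lambda_{\max}(\widehat M)-\lambda_{\max}(M_x)|\le\Delta/2$ with probability at least $2/3$, after symmetrizing via $(\widehat M+\widehat M^{T})/2$. Weyl's inequality gives $|\lambda_{\max}(\widehat M)-\lambda_{\max}(M_x)|\le\norm{\widehat M-M_x}\le\norm{\widehat M-M_x}_2$. I will use the crude Frobenius bound: with $T=\widetilde O(2^{2\ell}/\Delta^2)$ samples per column, Hoeffding plus a union bound over the $2^{2\ell}$ entries puts every entry within $\Delta/2^{\ell+2}$ of its mean. The Frobenius error is then at most $\Delta/4$, and the total sampling time is $2^\ell\cdot T\cdot\poly(|V|)=\widetilde O(2^{3\ell}/\Delta^2)\poly(|V|)$, matching the claimed bound. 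The eigenvalue computation on the estimated $2^\ell\times2^\ell$ matrix costs $O(2^{3\ell})$ arithmetic operations; an approximation to precision $\Delta/8$ suffices. Finally, I accept iff $\frac12+\frac12\lambda_{\max}(\widehat M)\ge c-\Delta/2$. Completeness and soundness follow from the promise $c-\Delta\ge 1/2$ together with $\lambda_{\max}(M_x)=2\omega(V_x)-1$. The main care point is this concentration step: the $\ell$ in the logarithm and the union bound only contribute polylog factors, and that is where the $\widetilde O$ absorbs them.

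For tightness under randomized ETH, I will use $\SAT\in\StoqMA_\ell\ssbra{1,1,1/2}$ with $\ell=n$, the number of variables, which follows from $\MA\subseteq\StoqMA$ with a classical assignment as witness, as cited in \Cref{subsec:omitted-rETH-opt-BPTIME-bound}. Any improvement of the form $2^{o(\ell)}\poly$ would then decide $\SAT$ in randomized time $2^{o(n)}$, contradicting rETH. So the exponential dependence on $\ell$ cannot be improved beyond constants in the exponent.
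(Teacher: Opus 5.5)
Your proposal is correct and follows essentially the same route as the paper: reduce via \Cref{lemma:SepRCD-StoqMAk-complete} to estimating $\lambda_{\max}$ of the $2^\ell\times 2^\ell$ compressed matrix, and reconstruct that matrix column by column by sampling the $\ket{+}$ ancillas (Hoeffding plus a union bound at entrywise accuracy $\Theta(\Delta/2^\ell)$). You then pass to the top eigenvalue via Weyl and the Frobenius norm, and obtain rETH tightness from the reduction $\SAT\in\StoqMA_n\ssbra{1,1,1/2}$; your explicit symmetrization $(\widehat M+\widehat M^{T})/2$ before invoking Weyl is a small extra care the paper omits (its estimated matrix need not be symmetric), and it costs nothing since $M_x$ is symmetric.
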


By taking $\ell(n) = O(\log{n})$ in \Cref{thm:StoqMA-BPTIME-upper-bound}, we immediately obtain:

\begin{corollary}
    $\StoqMA_{\log} = \BPP$. 
\end{corollary}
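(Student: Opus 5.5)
Since $\ell = O(\log n)$ makes $2^{3\ell}/\Delta^2 = \poly(n)$, this is immediate from \Cref{thm:StoqMA-BPTIME-upper-bound}: $\StoqMA_{\log}$ is contained in $\BPTIME[\poly(n)] = \BPP$. Here $\Delta \ge 1/\poly(n)$ by \Cref{def:StoqMAk} (with $k=1$), and $|V| = \poly(n)$. For the reverse inclusion $\BPP \subseteq \StoqMA_{\log}$, I would use the observation from the introduction (footnote on $\MA \subseteq \StoqMA$). Take a \BPP{} machine as a classical reversible circuit on $\ket{0}$ and $\ket{+}$ ancillas, with no witness (or a trivial one-qubit witness). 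Append one $\ket{+}\ket{0}$ pair and one Toffoli gate. Acceptance probability $p$ then becomes $(1+p)/2$, so completeness $2/3$ and soundness $1/3$ map to $5/6$ and $2/3$. These satisfy $1/2 \le s < c \le 1$ with constant gap, as the definition requires.

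The main thing to verify is therefore \Cref{thm:StoqMA-BPTIME-upper-bound} itself, for which I would sketch the following plan. By \Cref{lemma:SepRCD-StoqMAk-complete} with $k=1$, the maximum acceptance probability is $(1+\lambda_{\max}(M))/2$. Here $M = \bra{\bar 0}\bra{\bar +}\Gamma_x\ket{\bar 0}\ket{\bar +}$ is a $2^\ell\times 2^\ell$ entrywise non-negative Hermitian matrix, and by Perron--Frobenius the optimum is attained by a non-negative vector. Each entry $M_{ij}$ is $2^{-m_+}$ times the number of pairs $(u,u')\in\{0,1\}^{m_+}$ with $\Gamma_x(j,\bar 0,u) = (i,\bar 0,u')$. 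Equivalently, $M_{ij}$ is the probability over uniform $u$ that $\Gamma_x$ maps $(j,\bar0,u)$ to a configuration with first register $i$ and zero ancilla. So each entry can be estimated to additive accuracy $\tau$ by $O(\log(2^{2\ell}/\delta)/\tau^2)$ random samples.

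Concretely, I would estimate all $2^{2\ell}$ entries with $\tau = \Delta/(4\cdot 2^\ell)$, which gives $\widetilde O(2^{2\ell}\cdot 2^{2\ell}/\Delta^2)$ samples in the crude version. I would then bound $\|\widehat M - M\|_\infty \le \|\widehat M - M\|_2 \le 2^\ell\tau \le \Delta/4$, compute $\lambda_{\max}(\widehat M)$ deterministically in $\poly(2^\ell)$ time, and threshold at the midpoint between $2c-1$ and $2s-1$. Weyl's inequality together with a union bound finishes the argument. This already gives a $2^{O(\ell)}\poly(|V|)/\Delta^2$ bound, which suffices for $\StoqMA_{\log}=\BPP$.

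The main obstacle is sharpening the exponent to exactly $3\ell$. For that I would estimate each row sum (or a column-wise $\ell_1$ error) more cleverly. One option is to sample $u$ once per column $j$ and record the image, which estimates the entire column distribution at once. Then $\tau$ only needs to be $\Delta/2^{\ell/2}$-type in $\ell_2$, saving a factor of $2^\ell$. Tightness under rETH follows from $\SAT\in\StoqMA_n[1,1,1/2]$, as stated in \Cref{subsec:omitted-rETH-opt-BPTIME-bound}.
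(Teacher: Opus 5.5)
Your proposal is correct and follows the paper's route. The paper obtains the containment by instantiating \Cref{thm:StoqMA-BPTIME-upper-bound} at $\ell=O(\log n)$; that theorem's proof is exactly your column-sampling reconstruction of $M_\Gamma$ (the same $T=\widetilde{O}(2^{2\ell}/\Delta^2)$ samples of $u$ estimate an entire column, for $\widetilde{O}(2^{3\ell}/\Delta^2)$ samples in total) followed by Weyl's inequality, and your trivial-witness version of the $\MA\subseteq\StoqMA$ embedding correctly supplies the reverse inclusion $\BPP\subseteq\StoqMA_{\log}$, which the paper leaves implicit. One small point, shared with the paper: the sampled estimate $\widehat M$ need not be symmetric, so replace it by $(\widehat M+\widehat M^{T})/2$, which is no farther from $M$ in operator norm, before computing $\lambda_{\max}$ and invoking Weyl.
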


For multiple-prover scenarios, although the best known upper bound for $\QMAtwo_{\log}$ is the same as that for general \QMAtwo{}, namely \NEXP{}, we establish an upper bound $\StoqMAtwo_{\log} \subseteq \MA$ that is better than the bound due to $\StoqMAtwo\subseteq \StoqMA$~\cite{GJ26}: 

\begin{restatable}[\StoqMAk{} with logarithmic-size proofs is in \MA{}]{theorem}{StoqMAkLogInMA}
    \label{thm:StoqMAk_log-in-MA}
    For all efficiently computable functions $c(n)$ and $\Delta(n)$ with $1/2 < c(n) \leq 1$, $\Delta(n)\geq 1/\poly(n)$, and $c(n)-\Delta(n)\geq 1/2$, and for every efficiently computable positive integer-valued function $\ell(n)$ and $k(n)$ such that $\ell(n) \leq O(\log{n})$ and $k(n) \leq \poly(n)$, the following inclusion holds:
    \[ \StoqMA_\ell\ssbra{k,c,\Delta} \subseteq \MA. \]    
\end{restatable}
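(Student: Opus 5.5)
Merlin sends a classical description of the optimal non-negative product witness, and Arthur uses randomness to estimate its acceptance probability. Since $\ell=O(\log n)$, each proof $\ket{\psi_j}$ lives in dimension $d=2^\ell=\poly(n)$. By \Cref{prop:opt-stoqSepVal-nonNeg}, we may restrict to non-negative states. The description therefore consists of $k\cdot 2^\ell=\poly(n)$ non-negative amplitudes, each truncated to $O(\log n)$ bits of precision, up to an additive error $\Delta/10$ in acceptance probability.

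The acceptance probability has the form
\[ \frac12+\frac12\,\bra{\Psi_\full}\Gamma_x\ket{\Psi_\full}, \qquad \Gamma_x=V_x^\dagger X_\sfO V_x, \]
via \Cref{lemma:SepRCD-StoqMAk-complete}. Here $\Gamma_x$ is a permutation of basis strings and $\ket{\Psi_\full}=\ket{\psi_1}\cdots\ket{\psi_k}\ket{\bar 0}\ket{\bar +}$. Expanding, the overlap equals
\[ \E_{u\in\binset^{m_+}}\sum_{w} \Psi(w)\,\Psi(w') \,\bbI[\text{ancilla returns to }(\bar 0,u')], \]
where $(w',\bar 0,u')=\Gamma_x(w,\bar 0,u)$ and $\Psi(w)=\prod_j\psi_j(w_j)$. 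Rewrite it as $\E_{w\sim p,\,u}\bigl[\Psi(w')/\Psi(w)\cdot\bbI[\cdot]\bigr]$ with $p(w)=\Psi(w)^2$.

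Arthur can sample $w$ exactly from $p$, since $p$ is a product of explicit distributions over $\poly(n)$-size domains. He picks $u$ uniformly and evaluates the reversible circuit classically. The main obstacle is that the ratio $\Psi(w')/\Psi(w)$ is unbounded, so naive Monte Carlo has large variance. I would handle this in one of two ways.

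The first option is importance sampling that avoids ratios. Write the sum as $\sum_{w,u}\Psi(w)\Psi(w')\,\bbI/2^{m_+}$. Cauchy--Schwarz together with $\Gamma_x$ being an involutive bijection bounds the estimator $2\Psi(w)\Psi(w')/(\Psi(w)^2+\Psi(w')^2)\in[0,1]$ under the symmetrized sampling distribution $(p(w)+p(w'))/2$. Arthur samples this distribution by drawing $w\sim p$ and $u$ uniformly, then outputting either $(w,u)$ or its image under $\Gamma_x$ with probability $1/2$ each. Because $\Gamma_x$ is an involution, this yields exactly the symmetric mixture, and the estimator is bounded in $[0,1]$.

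The second option, if that mixture argument fails when the ancilla is not returned, is to truncate small amplitudes. One can zero out entries below $\Delta/\poly$, which is justified by the trace-distance bound \Cref{lemma:meas-bound-td}. With entries bounded below, every ratio is at most $\poly(n)$, which again gives polynomial variance.

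With a bounded estimator in hand, $O(1/\Delta^2)$ samples and a Hoeffding bound let Arthur decide whether the overlap is at least $2c-1-\Delta/5$ or at most $2s-1+\Delta/5$.

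For completeness, the honest rounded witness is accepted with probability at least $c-\Delta/10$. For soundness, every claimed witness (after normalization, which Arthur checks or performs himself) is a genuine non-negative product state, so its acceptance probability is at most $s$. This gives an $\MA$ protocol with bounded error. The step I expect to need the most care is verifying that the symmetrized estimator is unbiased for the overlap and bounded when $\Gamma_x$ maps a configuration outside the zero-ancilla sector; in that case the indicator is $0$ in both directions, since $\Gamma_x$ is an involution.
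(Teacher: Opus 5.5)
Your proposal is correct, but the estimator you end up using differs from the paper's.

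\textbf{How the paper estimates the overlap.} The paper uses exactly the ``naive'' ratio estimator you set aside. Arthur samples $w\sim p$ and $u$ uniformly, and outputs $R=\Psi(w')/\Psi(w)$ if the $\ket{0}$-ancilla returns to $\bar 0$, and $R=0$ otherwise. Although $R$ is unbounded, its second moment satisfies
\[ \E[R^2]=\sum_{(w,u):\,z=\bar 0} 2^{-m_+}\Psi(w')^2\le 1. \]
This holds because $\Gamma_x$ is a permutation, so $(w,u)\mapsto(w',u')$ is injective on the zero-ancilla sector. Hence $\Var[R]\le 1$, and Chebyshev with $\lceil 12/\Delta^2\rceil$ samples suffices. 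The obstacle you anticipated therefore does not arise.

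\textbf{Your symmetrized estimator is also valid.} Write $\pi(y)=\Psi(w)^2 2^{-m_+}$ for $y=(w,\bar 0,u)$, and $\pi(y)=0$ off the zero-ancilla sector. Your sampler outputs $y$ with probability $\tfrac12\pi(y)+\tfrac12\pi(\Gamma_x^{-1}y)$. Since $\Gamma_x^2=I$ (as $\Gamma_x=V_x^\dagger X_\sfO V_x$), this equals $(\pi(y)+\pi(\Gamma_x y))/2$. Assign the value $2\sqrt{\pi(y)\pi(\Gamma_x y)}/(\pi(y)+\pi(\Gamma_x y))$ when both $y$ and $\Gamma_x y$ lie in the zero-ancilla sector, and $0$ otherwise. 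Its expectation is then exactly $\sum_y\sqrt{\pi(y)\pi(\Gamma_x y)}$, which is the overlap, and AM--GM bounds it in $[0,1]$.

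\textbf{Trade-off.} The paper's estimator is the Canonne--Rubinfeld dual-access estimator. It needs only injectivity of the reversible circuit, so it works for any pair $(R_0,R_1)$, but it gives only second-moment (Chebyshev) control. Yours gives a bounded estimator with Hoeffding tails, at the cost of using the involution property. Alternatively you could evaluate $\Gamma_x^{-1}$, which is also efficiently computable for a reversible circuit. The rounding and soundness parts of your plan match the paper's dyadic-rounding argument, with $O(\log(k2^\ell/\Delta))=O(\log n)$ bits per amplitude.

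\textbf{Drop the second fallback.} Truncating small amplitudes does bound each single-proof ratio $\psi_j(w'_j)/\psi_j(w_j)$ by $\poly(n)$. However, $\Psi(w')/\Psi(w)$ is a product of up to $k$ such ratios, which can be as large as $n^{\Theta(k)}$ when $k=\poly(n)$. So that route only works for constant $k$. Fortunately it is not needed.
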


In the remainder of this section, we first investigate the single-prover case in \Cref{subsec:log-size-witenss-single-prover}, and then explore the multi-prover case in \Cref{subsec:log-size-witenss-multi-prover}. Throughout this section, for convenience, we slightly abuse notation and also use $\Gamma$ to denote the exponential-size matrix induced by the reversible circuit in an \RCD{} or \SepRCD{} instance.

\subsection{\texorpdfstring{$\StoqMA_{\log} \subseteq \BPP$}{} and general rETH-optimal \texorpdfstring{\BPTIME{}}{BPTIME} upper bound}
\label{subsec:log-size-witenss-single-prover}

The optimality, with respect to the proof length $\ell$, of the \BPTIME{} upper bound underlying \Cref{thm:StoqMA-BPTIME-upper-bound} follows from \Cref{lemma:rETH-optimal}, and the proof is deferred to \Cref{subsec:omitted-rETH-opt-BPTIME-bound}.

\begin{restatable}[Proof-length optimality under randomized ETH]{lemma}{rETHoptimal}
    \label{lemma:rETH-optimal}
    Let $\ell(n)$ be an efficiently computable positive integer-valued function satisfying $\ell(n) \leq \poly(n)$. Assume that 
    \[\StoqMA_\ell\ssbra{1,1,1/2} \subseteq \BPTIME\rbra*{ 2^{o(\ell)} \cdot \poly(|V|) },\] 
    where $|V|$ denotes the description size of the underlying stoquastic verification circuit $V$. Then this randomized upper bound refutes randomized ETH.  
\end{restatable}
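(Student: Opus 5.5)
We reduce \SAT{} to a $\StoqMA_\ell\ssbra{1,1,1/2}$ problem with proof length $\ell=n$, the number of variables, and verification circuit size $\poly(n+m)$, where $m$ is the formula size. Combining such a reduction with the assumed $2^{o(\ell)}\cdot\poly(|V|)$ randomized algorithm gives a $2^{o(n)}\poly(m)$-time bounded-error algorithm for $3$-\SAT{}. Via the sparsification lemma we may assume $m=O(n)$, and this refutes randomized ETH.

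The key step is the embedding $\SAT\in\StoqMA_n\ssbra{1,1,1/2}$. Use the \MA{}-to-\StoqMA{} embedding of the footnote in the introduction, or equivalently \Cref{lemma:SepRCD-StoqMAk-complete} with $k=1$ via \RCD{}.
\begin{itemize}
    \item Let $C_\varphi$ be a reversible circuit computing $\varphi(x)$ into a fresh flag qubit $\sfF$ and uncomputing its garbage.
    \item Set $R_0$ to the circuit that applies $C_\varphi$ and then an $X$ on the flag; set $R_1$ to $C_\varphi$ alone.
    \item With the pair $(R_0,R_1)$, the overlap becomes a signed-free count. The choice I commit to: let $\Gamma\coloneqq C_\varphi^\dagger\,\mathrm{CNOT}_{\sfF\to\sfP}\,C_\varphi$, where $\sfP$ is a $\ket{+}$ ancilla. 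Then $\Gamma$ acts as $X_{\sfP}$ exactly on basis strings $x$ with $\varphi(x)=0$, and as the identity otherwise.
\end{itemize}
Apply the branch-overlap test (\Cref{lemma:branch-overlap-test}) to $(\Gamma,I)$. Since $X\ket{+}=\ket{+}$, this choice does not discriminate anything, so it is replaced as follows. Let $\Gamma$ flip a $\ket{0}$ ancilla $\sfZ$ whenever $\varphi(x)=0$. Then
\[\innerprod{\Gamma(\Psi)}{\Psi}=\sum_{x:\varphi(x)=1}|\psi_x|^2,\]
and the acceptance probability is $\frac12+\frac12\Pr_{\psi}[\varphi(x)=1]$.

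This gives the two cases directly:
\begin{itemize}
    \item \emph{Completeness.} If $\varphi$ is satisfiable, the witness $\ket{x^\star}$ for a satisfying assignment $x^\star$ gives acceptance $1$.
    \item \emph{Soundness.} If $\varphi$ is unsatisfiable, every non-negative witness is accepted with probability exactly $1/2$.
\end{itemize}
So completeness is $1$ and the promise gap is $1/2$, with $\ell=n$ and $|V|=\poly(m)$.

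The remaining work is parameter bookkeeping. The hypothesis gives a $\BPTIME$ bound of $2^{o(n)}\poly(m)$. Sparsification turns a $2^{o(n)}$ algorithm for sparse $3$-\SAT{} into a $2^{o(n)}$ algorithm for general $3$-\SAT{} (randomized ETH is stated for $3$-\SAT{} with $n$ variables). The main subtlety I expect is checking that the promise gap $c-s=1/2$ meets \Cref{def:StoqMAk}'s requirement $s\geq 1/2$. It does, with equality, since $s=1/2$. I will also confirm that the $o(\ell)$ in the hypothesis is uniform, so that the composed running time is genuinely $2^{o(n)}$.
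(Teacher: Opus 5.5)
Your final construction is correct and is essentially the paper's proof. Applying the branch-overlap test to $(\Gamma,I)$, where $\Gamma$ flips a $\ket{0}$ flag exactly on unsatisfying assignments, is the paper's circuit written out: the branch-overlap control is its $\ket{+}$ qubit $\sfQ$, and the controlled-$\Gamma$ is its Toffoli into $\sfF$. This gives acceptance probability $\frac12+\frac12\sum_{x:\varphi(x)=1}\lvert\psi_x\rvert^2$ with $\ell=n$, followed by the same appeal to randomized ETH.

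In a write-up, delete the two discarded attempts: the $X_{\sfF}$ pair has overlap identically $0$, and the $\ket{+}$-ancilla CNOT has overlap identically $1$. Sparsification is harmless but unnecessary, since an $n$-variable $3$-CNF has only $O(n^3)$ distinct clauses, so $\poly(m)=\poly(n)$.
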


\subsubsection{Proof of \texorpdfstring{\Cref{thm:StoqMA-BPTIME-upper-bound}}{Theorem 6.1}}
To establish the \BPTIME{} upper bound for \StoqMA{}, the key observation is that, once a computational-basis witness state $\ket{a}$ is fixed, with $a\in\binset^\ell$, the only randomness remaining in the compressed matrix $M_{\Gamma}$ comes from the $\ket{+}$ ancillary qubits. Consequently, one random simulation of $\Gamma$ produces one sample from a column of $M_\Gamma$. Estimating all $2^\ell$ columns takes running time \emph{singly exponential} in $\ell$, which is polynomial in $n$ when $\ell(n) = O(\log{n})$. We now proceed with the detailed randomized algorithm and the corresponding analysis: 

\begin{proof}[Proof of \Cref{thm:StoqMA-BPTIME-upper-bound}]
    Following \Cref{lemma:SepRCD-StoqMAk-complete}, we consider a \RCD{} instance $(\Gamma,I)$ such that 
    \[ \innerprod{R_0(\psi)}{R_1(\psi)} = \bra{\psi}\bra{\bar{0}}\bra{\bar{+}}\Gamma\ket{\psi}\ket{\bar{0}}\ket{\bar{+}} \] 
    for every non-negative state $\ket{\psi}$. Therefore, it suffices to distinguish 
    \[ \lambda_{\max}\rbra[\Big]{\frac{I+M_\Gamma}{2}} \geq c(n) \quad\text{from}\quad \lambda_{\max}\rbra[\Big]{\frac{I+M_\Gamma}{2}} \leq s(n) \coloneqq c(n)-\Delta(n).\] 
    Here, $M_\Gamma \coloneqq \bra{0}^{\otimes m_0} \bra{+}^{\otimes r} \Gamma \ket{0}^{\otimes m_0} \ket{+}^{\otimes r}$ is a matrix of size $2^\ell \times 2^\ell$, which is polynomial in $n$ when $\ell(n) \leq O(\log{n})$. The rETH-based optimality statement follows from \Cref{lemma:rETH-optimal}.

\LinesNumbered
\begin{algorithm}[ht!]
    \SetAlgorithmName{Algorithm}{algorithm}{List of Algorithms}
    \caption{Column-sampling matrix reconstruction of $M_\Gamma$.}
    \label{algo:column-sampling-matrix-reconstruction}
    \SetEndCharOfAlgoLine{.}
    \setlength{\parskip}{5pt}
    \SetKwFor{While}{}{:}{}
    \SetKwFor{For}{For}{:}{}
    \SetKwIF{If}{ElseIf}{Else}{If}{:}{elif}{Else:}{}%
    \SetKwComment{Comment}{// }{}
    \SetKwInOut{Input}{Input}
    \SetKwInOut{Output}{Output}
    \SetKwInOut{Parameter}{Parameters}

    \Input{A reversible circuit $\Gamma$ and an entrywise accuracy parameter $\eta>0$.}
    \Output{An explicit matrix $\widetilde{M}_\Gamma$ approximating $M_\Gamma$.}
    \Parameter{The witness length $\ell$, the number of $\ket{0}$ ancillary qubits $m_0$, and the number of $\ket{+}$ ancillary qubits $r$.}

    Set $L\coloneqq \binset^{\ell}$ and $U\coloneqq \binset^r$\;
    Set $T\coloneqq \ceil*{C\log(20\abs{L}^2)/\eta^2}$ for a sufficiently large absolute constant $C$\;
    Initialize an $\abs{L}\times \abs{L}$ zero matrix $\widetilde M_\Gamma$\;
    \For{each column $a\in L$}{
        Initialize counters $N_{b,a}\coloneqq 0$ for all $b\in L$\;
        \For{each sample $j\in\cbra{1,\dots,T}$}{
            Sample $u \sim \Uniform(U)$\;
            Compute $\Gamma(a,0^{m_0},u)=(b,z,u')\in L\times \binset^{m_0}\times U$\;
            \If{$z=0^{m_0}$}{
                Update $N_{b,a}\leftarrow N_{b,a}+1$\;
            }
        }
        \For{each row $b\in L$}{
            Set $(\widetilde{M}_\Gamma)_{b,a}\coloneqq N_{b,a}/T$\;
        }
    }
    Return $\widetilde{M}_\Gamma$\;
\end{algorithm}

    To this end, we provide a column-sampling matrix reconstruction procedure for $M_\Gamma$, as presented in \algoref{algo:column-sampling-matrix-reconstruction}. By invoking \algoref{algo:column-sampling-matrix-reconstruction} with the circuit $\Gamma$ and the accuracy parameter $\eta = \Delta/(8|L|)$, we obtain $\widetilde{M}_{\Gamma}$ by estimating $M_{\Gamma}$ column by column. For every fixed pair $(a,b) \in L\times L$, define the Bernoulli random variable 
    \[ Y_{a,b}(u) \coloneqq \bbI\sbra*{ \Gamma(a,0^{m_0},u) \in \cbra{b} \times \cbra{0^{m_0}} \times U }, \quad\text{where } u \in U. \]
    It is easy to verify that the expectation of $Y_{a,b}$ corresponds to the matrix entry $\bra{b}M_{\Gamma}\ket{a}$:
    \begin{equation}
         \mathop{\bbE}_{u\sim\Uniform(U)}[Y_{a,b}(u)] = 2^{-r} \abs*{\set{u\in U}{\Gamma(a,0^{m_0},u) \in \cbra{b}\times\cbra{0^{m_0}}\times U}} =  \bra{b}M_{\Gamma}\ket{a}.
    \end{equation}

    Since the number of samples used by \algoref{algo:column-sampling-matrix-reconstruction} is $T = \ceil*{C\log(20\abs{L}^2)/\eta^2}$, for every fixed $(a,b)$, the Hoeffding bound (see, e.g.,~\cite[Theorem 4.12]{MU17}) gives
    \begin{equation}
        \label{eq:StoqMA-BPTIMEbound-column}
        \Pr\sbra*{\abs*{\bra{b}\widetilde{M}_{\Gamma}\ket{a} - \bra{b}M_{\Gamma}\ket{a}} > \eta}
        \leq 2e^{-2T\eta^2} 
        \leq \frac{1}{10\abs{L}^2}.
    \end{equation}
    By taking a union bound over all \(\abs{L}^2\) entries, each of which satisfies \Cref{eq:StoqMA-BPTIMEbound-column}, we obtain
    \begin{subequations}
    \label{eq:StoqMA-BPTIMEbound-union}
    \begin{align}
        \Pr\sbra*{\max_{(a,b)\in L\times L} \abs*{\bra{b}\widetilde{M}_{\Gamma}\ket{a} - \bra{b}M_{\Gamma}\ket{a}}\leq \eta}
        &\geq 1-  \sum_{(a,b)\in L\times L} \Pr\sbra*{\abs*{\bra{b}\widetilde{M}_{\Gamma}\ket{a} - \bra{b}M_{\Gamma}\ket{a}} > \eta}\\ 
        &\geq 1 - \abs{L}^2 \cdot \frac{1}{10\abs{L}^2} = \frac{9}{10}.
    \end{align}
    \end{subequations}

    By Weyl's perturbation theorem (see, e.g.,~\cite[Corollary III.2.6]{Bhatia96}), on the good event underlying \Cref{eq:StoqMA-BPTIMEbound-union}, we obtain:
    \begin{subequations}
    \label{eq:StoqMA-BPTIMEbound-error}
    \begin{align}
        \abs*{\lambda_{\max}\rbra[\bigg]{\frac{I+\widetilde{M}_{\Gamma}}{2}} - \lambda_{\max}\rbra*{\frac{I+M_{\Gamma}}{2}}} 
        &= \frac{1}{2} \cdot \abs*{\lambda_{\max}\rbra*{\widetilde{M}_{\Gamma}} - \lambda_{\max}\rbra*{M_{\Gamma}}}\\ 
        &\leq \frac{1}{2} \cdot \norm*{\widetilde{M}_{\Gamma} - M_{\Gamma}}\\
        &\leq \frac{1}{2} \cdot \norm*{\widetilde{M}_{\Gamma} - M_{\Gamma}}_2\\
        &= \frac{1}{2} \cdot \sqrt{\sum_{(a,b)\in L\times L} \abs*{\bra{b}\widetilde{M}_{\Gamma}\ket{a} - \bra{b}M_{\Gamma}\ket{a}}^2}\\
        &\leq \frac{1}{2} \cdot \abs{L} \eta = \frac{\Delta}{16}.
    \end{align}
    \end{subequations}
    
    Consequently, conditioned on the good event underlying \Cref{eq:StoqMA-BPTIMEbound-union}, comparing $\rbra[\big]{1+\lambda_{\max}\rbra[\big]{\widetilde{M}_{\Gamma}}}/2$ with the midpoint $(c+s)/2=c-\Delta/2$ correctly decides the promise problem. 

    \parheading{Time complexity.}
    It remains to bound the running time. \algoref{algo:column-sampling-matrix-reconstruction} estimates $\abs{L}$ columns, each using $T$ samples, and every sample corresponds to a polynomial-time classical evaluation of the reversible circuit $\Gamma$. Therefore, using the facts that $\abs{L}=2^{\ell}$ and that estimating the maximum eigenvalue of an explicit $\abs{L} \times \abs{L}$ symmetric matrix takes running time $\abs{L}^3$ (see, e.g.,~\cite[Section 5.3.3]{Demmel97}), the time complexity of \algoref{algo:column-sampling-matrix-reconstruction} is
    \begin{align*}
        \max\cbra*{ \abs{L}\cdot T\cdot \poly(\abs{V}), O\rbra*{\abs{L}^3} }
        &= \max\cbra*{ \widetilde{O}\rbra*{\abs{L}\cdot \frac{1}{\eta^2}}\cdot \poly(\abs{V}), O\rbra*{\abs{L}^3} }\\
        &= \widetilde{O}\rbra*{\frac{\abs{L}^3}{\Delta^2}}\cdot \poly(\abs{V})
        =\widetilde{O}\rbra*{\frac{2^{3\ell}}{\Delta^2}}\cdot \poly(\abs{V}). \qedhere
    \end{align*}
\end{proof}

\subsection{\texorpdfstring{$\StoqMAk_{\log} \subseteq \MA$}{}}
\label{subsec:log-size-witenss-multi-prover}

We start with the two-prover case as a warm-up, and then move on to the multiple-prover case with at most polynomially many provers.

\subsubsection{A warm-up: \texorpdfstring{$\StoqMAtwo_{\log} \subseteq \MA$}{}}
Since the full witness space has dimension $2^{2\ell}=\poly(n)$, after flattening the pair of witness registers into one alphabet of size $2^{2\ell}$, Arthur can invoke \algoref{algo:column-sampling-matrix-reconstruction} exactly as in the one-prover scenario. Nevertheless, the stoquastic separable value does not coincide with the largest eigenvalue in general, the prover thus must provide a classical dyadic description of a \emph{rounded} separable witness, and Arthur evaluates that candidate product state against the reconstructed matrix. We then continue with the actual proof:

\begin{theorem}[\StoqMAtwo{} with logarithmic-size proofs is in \MA{}]
    \label{thm:StoqMA2_log-in-MA}
    For all efficiently computable functions $c(n)$ and $\Delta(n)$ with $1/2 < c(n) \leq 1$, $\Delta(n)\geq 1/\poly(n)$, and $c(n)-\Delta(n)\geq 1/2$, and for every efficiently computable positive integer-valued function $\ell(n)$ such that $\ell(n) = O(\log{n})$, the following inclusion holds:
    \[ \StoqMA_\ell\ssbra{2,c,\Delta} \subseteq \MA. \]    
\end{theorem}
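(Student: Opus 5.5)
Via \Cref{lemma:SepRCD-StoqMAk-complete}, we reduce to a $\SepRCD$ instance $(\Gamma,I)$: the maximum acceptance probability equals $\rbra{1+\hsepPlusM{2}{2^\ell}(M_\Gamma)}/2$, where
\[ M_\Gamma \coloneqq \bra{0}^{\otimes m_0}\bra{+}^{\otimes r}\,\Gamma\,\ket{0}^{\otimes m_0}\ket{+}^{\otimes r} \]
is an entrywise non-negative $2^{2\ell}\times 2^{2\ell}$ matrix. Since $\ell=O(\log n)$, its dimension $D\coloneqq 2^{2\ell}$ is $\poly(n)$.

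The MA protocol has Merlin send a classical description of two non-negative vectors $\tilde\psi_1,\tilde\psi_2\in\bbR_{\geq 0}^{2^\ell}$. Each entry is a dyadic rational with $O(\log(n/\Delta))$ bits, so the whole message has polynomial length. Arthur normalizes the vectors and runs \algoref{algo:column-sampling-matrix-reconstruction} on the flattened alphabet $L=\binset^{2\ell}$ with accuracy $\eta=\Delta/(8D)$. With probability at least $9/10$ this returns $\widetilde M_\Gamma$ with every entry within $\eta$ of $M_\Gamma$, and hence $\norm{\widetilde M_\Gamma-M_\Gamma}\leq D\eta=\Delta/8$ as in \Cref{eq:StoqMA-BPTIMEbound-error}. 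Arthur then computes $v=\bra{\tilde\psi_1}\bra{\tilde\psi_2}\widetilde M_\Gamma\ket{\tilde\psi_1}\ket{\tilde\psi_2}$ exactly in polynomial time. He accepts iff $(1+v)/2\geq c-\Delta/2$.

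For \textbf{completeness}, take optimal non-negative $\ket{\psi_1},\ket{\psi_2}$; they exist by \Cref{prop:opt-stoqSepVal-nonNeg}. Merlin rounds each amplitude to a dyadic grid of precision $\Delta/\poly(D)$, so the rounded unit vectors are within Euclidean distance $O(\Delta)$ of the originals, say at most $\Delta/32$ each. Since $\norm{M_\Gamma}\leq 1$, the product-state value moves by at most $4\cdot\Delta/32=\Delta/8$. Adding the $\Delta/8$ reconstruction error gives $(1+v)/2\geq c-\Delta/8$, so Arthur accepts.

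For \textbf{soundness}, any normalized non-negative product vector Merlin sends is a legitimate separable witness. Its true value is therefore at most $2s-1$, and the reconstructed value satisfies $(1+v)/2\leq s+\Delta/16<c-\Delta/2$. So Arthur rejects whenever the reconstruction event holds, and the overall error is at most $1/10$.

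\textbf{Main obstacle.} The delicate step is the rounding in completeness, since renormalization after truncation must not move the product state by more than $O(\Delta)$. The plan is to truncate each amplitude to precision $\Delta/(64\sqrt{2^\ell})$, which keeps the unnormalized vector within $\Delta/64$ of the original and hence the normalized one within $\Delta/32$. Arthur then normalizes exactly, or equivalently divides $v$ by the exactly computable norms. Everything else is a direct reuse of the $\BPP$ reconstruction argument. The reason we need Merlin at all is that computing $h_{\Sep^+}$ from $\widetilde M_\Gamma$ directly is not known to be efficient.
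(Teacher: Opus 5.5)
Your proposal is correct and follows essentially the same route as the paper: reduce to \SepRCD{} with the pair $(\Gamma,I)$, have Merlin send a dyadically rounded non-negative product witness, let Arthur reconstruct $M_\Gamma$ entrywise via \algoref{algo:column-sampling-matrix-reconstruction} on the flattened $2^{2\ell}$-dimensional alphabet, and threshold the rounded product state's value at $c-\Delta/2$, with soundness for free because any normalized non-negative product vector is a legitimate witness. The differences are cosmetic: you use truncation instead of round-to-nearest, a direct $2\norm{M_\Gamma}\norm{\Psi'-\Psi}$ perturbation bound instead of the paper's trace-distance bound for $Q_\Gamma=(I+M_\Gamma)/2$, and different constants; you also state explicitly that Arthur avoids square roots by dividing by the exactly computable rational squared norms, which the paper leaves implicit.
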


\begin{proof}
    Using \Cref{lemma:SepRCD-StoqMAk-complete}, we consider a \SepRCD{} instance $(\Gamma,I)$ such that 
    \[\innerprod{R_0(\Psi)}{R_1(\Psi)} = \bra{\Psi}\bra{\bar{0}}\bra{\bar{+}} \Gamma\ket{\Psi} \ket{\bar{0}}\ket{\bar{+}}\] 
    for every non-negative state $\ket{\Psi} \coloneqq \ket{\psi_1}\otimes\ket{\psi_2}$. Write $A = B = \binset^\ell$ and $U\coloneqq \binset^r$. We define the matrix $M_\Gamma$, indexed by $(a,b),(a',b') \in A\times B$, by
    \begin{equation}
    \label{eq:two-prover-compressed-matrix}
        \bra{a',b'} M_\Gamma \ket{a,b}
        =2^{-r}\abs*{\set{u\in U}{\Gamma(a,b,0^{m_0},u)\in \cbra{a'}\times \cbra{b'}\times \cbra{0^{m_0}}\times U}}.
    \end{equation}
    Let $Q_\Gamma \coloneqq (I+M_\Gamma)/2$. Therefore, it suffices to distinguish the stoquastic separable value
    \[ \hsepPlus{2^\ell}{2^\ell}\rbra*{Q_\Gamma} \geq c(n) \quad\text{from}\quad \hsepPlus{2^\ell}{2^\ell}\rbra*{Q_\Gamma} \leq s(n) \coloneqq c(n) - \Delta(n).\] 
    Here, the entry-wise non-negative matrix $Q_\Gamma$ is of size $2^{2\ell} \times 2^{2\ell}$, which is polynomial in $n$ when $\ell(n) = O(\log{n})$, and satisfies that $0 \preceq Q_\Gamma \preceq I$. 

    \paragraph{Rounding non-negative product-state witnesses to dyadic numbers.}  
    Fix $\delta \coloneqq \Delta/10$, and choose an integer $t$ such that $2^{-t} \leq \delta/2^{\frac{\ell}{2}}$. For every coordinate of $\ket{\psi_1}$ and $\ket{\psi_2}$, set non-negative integer-valued vectors $\bfm_1$ and $\bfm_2$ by 
    \[ m_1(a) \coloneqq \floor*{ 2^t \innerprod{a}{\psi_1} + \frac{1}{2} } \text{ for } a\in A \quad\text{and}\quad m_2(b) \coloneqq \floor*{ 2^t \innerprod{b}{\psi_2} + \frac{1}{2} } \text{ for } b\in B. \]
    
    Arthur interprets the classical witness $(\bfm_1,\bfm_2)$ as the normalized vectors
    \[ \ket{\psi'_1} \coloneqq \frac{1}{\norm{\bfm_1}_2} \sum_{a\in A} m_1(a) \ket{a} \quad\text{and}\quad \ket{\psi'_2} \coloneqq \frac{1}{\norm{\bfm_2}_2} \sum_{b\in B} m_2(b) \ket{b}. \]
    Consequently, the witness $(\bfm_1,\bfm_2)$ is a pair of integer lists describing a rounded non-negative product-state witness, all of whose amplitudes are dyadic before normalization. With the chosen parameters, define $\ket{\psi_1^\star}\coloneqq 2^{-t} \sum_{a\in A} m_1(a) \ket{a}$ and $\ket{\psi_2^\star}\coloneqq 2^{-t} \sum_{b\in B} m_2(b) \ket{b}$. Then a direct calculation shows the following rounding-error bound:
    \begin{subequations}
    \label{eq:two-prover-product-state-rounding}
    \begin{align}
        &\norm*{ \ket{\psi'_1}\totimes\ket{\psi'_2} - \ket{\psi_1}\totimes\ket{\psi_2} }_2\\
        =~& \norm*{ \rbra*{\ket{\psi'_1}\totimes\ket{\psi'_2} - \ket{\psi'_1}\totimes\ket{\psi_2}} + \rbra*{\ket{\psi'_1}\totimes\ket{\psi_2} - \ket{\psi_1}\totimes\ket{\psi_2}}}_2\\
        \leq~& \norm*{\ket{\psi'_2}-\ket{\psi_2}}_2 + \norm*{\ket{\psi'_1}-\ket{\psi_1}}_2\\
        \leq~& \sum_{j\in\cbra{1,2}} \rbra[\big]{ \norm*{\ket{\psi'_j}-\ket{\psi^\star_j}}_2 + \norm*{\ket{\psi^\star_j}-\ket{\psi_j}}_2 } \\
        =~& \sum_{j\in\cbra{1,2}} \rbra[\big]{ \abs*{\norm*{\ket{\psi^\star_j}}_2 - 1} + \norm*{\ket{\psi^\star_j}-\ket{\psi_j}}_2 } \\ 
        \leq~& 2 \sum_{j\in\cbra{1,2}} \norm*{\ket{\psi^\star_j} - \ket{\psi_j}}_2 \\ 
        \leq~& 4 \cdot 2^{\ell} \cdot 2^{-\ell/2} \cdot 2^{-t-1}\\
        \leq~& 2\delta.
    \end{align}
    \end{subequations}
    Here, the third and fourth lines follow from the triangle inequality, the fifth line uses the identity 
    \[ \norm*{\ket{\psi^\star_j}-\ket{\psi'_j}}_2 = \norm*{\ket{\psi^\star_j}-\ket{\psi^\star_j}/\norm*{\ket{\psi^\star_j}}_2}_2 = \abs*{1 - 1/\norm*{\ket{\psi^\star_j}}_2} \cdot \norm*{\ket{\psi^\star_j}}_2 = \abs*{\norm*{\ket{\psi^\star_j}}_2 - 1}, \]
    and the sixth line uses the reverse triangle inequality and the fact that $\norm*{\ket{\psi_j}}_2=1$.

    As a consequence, \Cref{eq:two-prover-product-state-rounding} yields that
    \begin{subequations}
    \label{eq:two-prover-rounding-perturbation}
    \begin{align}
        &\abs*{ \bra{\psi'_1}\bra{\psi'_2} Q_\Gamma \ket{\psi'_1}\ket{\psi'_2} - \bra{\psi_1}\bra{\psi_2} Q_\Gamma \ket{\psi_1}\ket{\psi_2} } \\
        \leq~& \td\rbra*{\ketbra{\psi'_1}{\psi'_1}\totimes\ketbra{\psi'_2}{\psi'_2}, \ketbra{\psi_1}{\psi_1}\totimes\ketbra{\psi_2}{\psi_2}}\\
        =~& \sqrt{1-\innerprod{\psi'_1}{\psi_1}^2\innerprod{\psi'_2}{\psi_2}^2}\\
        \leq~& \norm*{ \ket{\psi'_1}\totimes\ket{\psi'_2} - \ket{\psi_1}\totimes\ket{\psi_2} }_2\\
        \leq~& 2\delta = \frac{\Delta}{5}.
    \end{align}
    \end{subequations}
    Here, the second line uses $0 \preceq Q_\Gamma \preceq I$ and the variational characterization of trace distance (see, e.g.,~\cite[Lemma 9.1.7]{Wilde13}), the third line follows from \Cref{eq:pure-pure-Fuchs-vanGraaf}, and the fourth line uses the fact that $\ket{\psi_j}$ and $\ket{\psi'_j}$ are non-negative states for each $j\in\cbra{1,2}$.
    
    \paragraph{Estimating the stoquastic separable value with the prover's help.} Upon receiving the classical witness $(\bfm_1,\bfm_2)$, the verifier invokes \algoref{algo:column-sampling-matrix-reconstruction} with the circuit $\Gamma$ and the accuracy parameter $\eta\coloneqq \Delta/(2 \abs{A\times B})$, and obtains the matrix $\widetilde{M}_\Gamma$ with probability at least $9/10$ satisfying the following bound (i.e., the good event) via the same analysis as in \Cref{eq:StoqMA-BPTIMEbound-error}:
    \begin{equation}
        \label{eq:two-prover-matrix-reconstruction}
        \norm*{\widetilde{Q}_\Gamma - Q_\Gamma} = \frac{1}{2} \norm*{\widetilde{M}_\Gamma - M_\Gamma} \leq \frac{\abs{A\times B} \eta}{2} = \frac{\Delta}{4}, \quad\text{where } \widetilde{Q}_\Gamma \coloneqq \frac{I+\widetilde{M}_\Gamma}{2}. 
    \end{equation}

    Arthur then computes the quantity $\bra{\psi'_1}\bra{\psi'_2} \widetilde{Q}_\Gamma \ket{\psi'_1}\ket{\psi'_2}$ and accepts if and only if 
    \[ \bra{\psi'_1}\bra{\psi'_2} \widetilde{Q}_\Gamma \ket{\psi'_1}\ket{\psi'_2} \geq \frac{c+s}{2} = c-\frac{\Delta}{2}. \]
    Consequently, we complete the analysis using the following bounds:
    \begin{itemize}
        \item For \emph{yes} instances, on the good event, we obtain:
        \begin{align*}
            &\bra{\psi'_1}\bra{\psi'_2} \widetilde{Q}_\Gamma \ket{\psi'_1}\ket{\psi'_2} \\
            \geq~& \bra{\psi_1}\bra{\psi_2} Q_\Gamma \ket{\psi_1}\ket{\psi_2} - \abs*{ \bra{\psi'_1}\bra{\psi'_2} Q_\Gamma \ket{\psi'_1}\ket{\psi'_2} - \bra{\psi_1}\bra{\psi_2} Q_\Gamma \ket{\psi_1}\ket{\psi_2} } \\
            & \qquad - \abs*{ \bra{\psi'_1}\bra{\psi'_2} \rbra[\big]{ \widetilde{Q}_\Gamma - Q_\Gamma } \ket{\psi'_1}\ket{\psi'_2} } \\
            \geq~& \bra{\psi_1}\bra{\psi_2} Q_\Gamma \ket{\psi_1}\ket{\psi_2} - \abs*{ \bra{\psi'_1}\bra{\psi'_2} Q_\Gamma \ket{\psi'_1}\ket{\psi'_2} - \bra{\psi_1}\bra{\psi_2} Q_\Gamma \ket{\psi_1}\ket{\psi_2} } - \norm*{\widetilde{Q}_\Gamma - Q_\Gamma} \\
            \geq~& c - \frac{\Delta}{5} - \frac{\Delta}{4} > c - \frac{\Delta}{2}.
        \end{align*}
        Here, the first inequality uses the reverse triangle inequality, the second inequality follows from the definition of the operator norm, and last line follows immediately from \Cref{eq:two-prover-rounding-perturbation,eq:two-prover-matrix-reconstruction}. 
        \item For \emph{no} instances, the soundness condition guarantees that $\bra{\psi_1}\bra{\psi_2} Q_\Gamma \ket{\psi_1}\ket{\psi_2} \leq s$ for every non-negative product-state witness $\ket{\psi_1}\otimes\ket{\psi_2}$, and on the good event, it holds that:
        \begin{align*}
            \bra{\psi'_1}\bra{\psi'_2} \widetilde{Q}_\Gamma \ket{\psi'_1}\ket{\psi'_2}
            &\leq \bra{\psi'_1}\bra{\psi'_2} Q_\Gamma \ket{\psi'_1}\ket{\psi'_2} + \abs*{\bra{\psi'_1}\bra{\psi'_2} \rbra*{ \widetilde{Q}_\Gamma - Q_\Gamma } \ket{\psi'_1}\ket{\psi'_2}}\\
            &\leq \bra{\psi'_1}\bra{\psi'_2} Q_\Gamma \ket{\psi'_1}\ket{\psi'_2} + \norm*{\widetilde{Q}_\Gamma - Q_\Gamma}\\
            &\leq s + \frac{\Delta}{4} < c - \frac{\Delta}{2}. 
        \end{align*}
        Here, the first line uses the triangle inequality, the second line follows from the definition of the operator norm, and the last line follows immediately from \Cref{eq:two-prover-matrix-reconstruction}.  \qedhere
    \end{itemize}
\end{proof}

\subsubsection{Proof of \texorpdfstring{\Cref{thm:StoqMAk_log-in-MA}}{Theorem 6.3}}

The general $k$-prover proof follows the same dyadic rounding procedure used in the two-prover case (\Cref{thm:StoqMA2_log-in-MA}), but avoids reconstructing the full matrix $M_\Gamma$, whose dimension may already be \emph{exponential}. Instead, Arthur directly estimates a single overlap using the \emph{dual-access viewpoint} from distribution testing~\cite[Theorem 6]{CR14}, following an approach similar to that used to establish $\mathsf{eStoqMA} \subseteq \MA$~\cite[Section 3]{Liu21}.

\StoqMAkLogInMA*

\begin{proof}
    Following \Cref{lemma:SepRCD-StoqMAk-complete}, we consider a $\SepRCD_k$ instance $(\Gamma,I)$ such that 
    \[ \innerprod{R_0(\Psi)}{R_1(\Psi)} = \bra{\Psi}\bra{\bar{0}}\bra{\bar{+}}\Gamma\ket{\Psi}\ket{\bar{0}}\ket{\bar{+}} \] 
    for every non-negative state $\ket{\Psi} \coloneqq \ket{\psi_1}\otimes\cdots\otimes\ket{\psi_k}$. 
    Write $L \coloneqq \binset^\ell$ and $U \coloneqq \binset^r$. We can define the matrix $M_\Gamma$, indexed by $(a_1,\dots,a_k), (b_1,\dots,b_k) \in L^k$, which directly generalizes \Cref{eq:two-prover-compressed-matrix}.  
    Let $Q_\Gamma \coloneqq (I+M_\Gamma)/2$. It thus suffices to distinguish the stoquastic value 
    \[ \hsepPlusM{k}{2^\ell}(Q_\Gamma) \geq c(n) \quad\text{from}\quad \hsepPlusM{k}{2^\ell}(Q_\Gamma) \leq s(n)\coloneqq c(n) - \Delta(n). \]
    Here, the entry-wise non-negative matrix $Q_\Gamma$ is of size $2^{k\ell}\times 2^{k\ell}$, which is no longer polynomial in $n$ when $\ell(n) = O(\log{n})$, and satisfies $0 \preceq Q_\Gamma \preceq I$. 

    \paragraph{Rounding non-negative product-state witnesses to dyadic numbers.} We now extend the dyadic rounding procedure in the proof of \Cref{thm:StoqMA2_log-in-MA} for two-prover cases to $k$-prover cases. 
    Fix $\delta \coloneqq \Delta/(5k)$, and choose an integer $t$ such that $2^{-t} \leq \delta/2^{\ell/2}$. For each proof $i\in[k]$, set non-negative integer-valued vectors $\bfm_i$ coordinate-wise by 
    \[ \forall a\in L, \quad m_i(a) \coloneqq \floor[\Big]{ 2^t \innerprod{a}{\psi_i} + \frac{1}{2} }.\] 
    
    Let $\ket{\psi'_i} \coloneqq \frac{1}{\norm{\bfm_i}_2} \sum_{a\in L} m_i(a)\ket{a}$, and define $\ket{\Psi'} \coloneqq \ket{\psi'_1}\otimes\cdots\otimes\ket{\psi'_k}$. 
    Following the reasoning in \Cref{eq:two-prover-product-state-rounding}, we can similarly obtain the rounding-error bound via the triangle inequality:
    \begin{equation}
    \label{eq:general-k-tensor-rounding}
        \norm{\ket{\Psi'}-\ket{\Psi}}_2 \leq \sum_{i\in[k]} \norm{ \ket{\psi'_i} - \ket{\psi_i} }_2 \leq k\delta = \frac{\Delta}{5}. 
    \end{equation}
    As a consequence, as in \Cref{eq:two-prover-rounding-perturbation}, we similarly achieve the bound from \Cref{eq:general-k-tensor-rounding}:
    \begin{equation}
    \label{eq:general-k-perturbation}
        \abs*{ \bra{\Psi'}Q_\Gamma\ket{\Psi'} - \bra{\Psi} Q_\Gamma \ket{\Psi} } \leq \norm*{ \ket{\Psi'} - \ket{\Psi} }_2 \leq \frac{\Delta}{5}. 
    \end{equation}

    The underlying classical witness $(\bfm_1,\dots,\bfm_k)$ is of polynomial-size because $k(n) \leq \poly(n)$, $\ell(n) = O(\log{n})$, each integer $m_i(a)$ uses only $O\rbra*{ \log\rbra[\big]{2^\ell k/\Delta} }$ bits.

    \paragraph{Estimating the stoquastic separable value via dual access to the witness state.}
    Noting that $\bra{\Psi'}Q_\Gamma\ket{\Psi'} = (1+\bra{\Psi'}M_\Gamma\ket{\Psi'})/2$, it suffices to estimate the overlap $\bra{\Psi'}M_\Gamma\ket{\Psi'}$. Our approach is inspired by the framework underlying the proof of $\mathsf{eStoqMA}\subseteq \MA$~\cite[Section~3]{Liu21}, which boils down to the dual-access viewpoint from distribution testing~\cite{CR14}. 
    
    The starting point is that both query and sample access to $\ket{\Omega'}$ can be implemented efficiently from the classical witness $(\bfm_1,\cdots,\bfm_k)$, where the initialized rounded witness $\ket{\Omega'} \coloneqq \ket{\Psi'}\ket{\bar{0}}\ket{\bar{+}}$. Implementing the query access to $\ket{\Omega'}$ follows immediately from the tensor-product structure of $\ket{\Psi'}$ and the fact that $\bfm_i$ serves as a classical description of $\ket{\psi'_i}$ whose amplitudes are \emph{dyadic} before normalization.  
    To implement the sample access to $\ket{\Omega'}$, for each proof $i\in[k]$, let $D_i$ be the squared-amplitude distribution of $\ket{\psi'_i}$, with the associated random variable $X_i$. Then, 
    \begin{equation}
        \label{eq:StoqMAk-log-distribution}
         D_i(a) \coloneqq \frac{m_i(a)^2}{Z_i} = \Pr[X_i=a], \quad\text{where } Z_i \coloneqq \sum_{a\in L} m_i(a)^2. 
    \end{equation}
    Consequently, one can sample from the distribution $D_i$ exactly: 
    \begin{enumerate}[label={\upshape(\arabic*)}]
        \item Draw an integer $W_i \sim \Uniform\rbra{\sbra{Z_i}}$ using $O(\log Z_i)$ random bits; 
        \item Identify the unique $a_i\in L$ such that $\sum_{b<a_i} m_i(b)^2 < W_i \leq \sum_{b\leq a_i} m_i(b)^2$.
    \end{enumerate}
    
    Since the lists $\bfm_i$ are explicit and $Z_i$ has polynomial bit length, this sampling procedure runs in probabilistic polynomial time. By sampling all $k$ proofs independently in this way and then sampling $u \sim \Uniform(U)$, one can implement the sample access to $\ket{\psi'_i}$. 

\LinesNumbered
\begin{algorithm}[!ht]
    \UseProtocolCounter
    \caption{An \MA{} protocol for estimating short-proof stoquastic separable values.}
    \label{protocol:StoqMAk-log-in-MA}
    \SetEndCharOfAlgoLine{.}
    \setlength{\parskip}{5pt}
    \SetKwFor{While}{}{:}{}
    \SetKwFor{For}{For}{:}{}
    \SetKwIF{If}{ElseIf}{Else}{If}{:}{elif}{Else:}{}%
    \SetKwComment{Comment}{// }{}
    \SetKwInOut{Input}{Input}
    \SetKwInOut{Output}{Output}
    \SetKwInOut{Parameter}{Parameters}

    \Input{A reversible circuit $\Gamma$.}
    \Output{ACCEPT or REJECT.}
    \Parameter{The completeness parameter $c$, the promise gap parameter $\Delta$, the number of proofs $k$, the witness length $\ell$, the number of $\ket{0}$ ancillary qubits $m_0$, and the number of $\ket{+}$ ancillary qubits $r$.}

    {\color{gray}\Comment{Receive and check the legality of the classical witness encoded $\ket{\Psi'}$.}}
    Receive the classical witness $(\bfm_1,\cdots,\bfm_k)$\;
    \For{each proof $i\in\sbra{k}$}{
        Set $Z_i \coloneqq \sum_{a\in L} m_i(a)^2$\;
        \If{$Z_i=0$}{
            Return REJECT\;
        }
    }
    {\color{gray}\Comment{Estimate the inner product $\bra{\Psi'} Q_\Gamma \ket{\Psi'}$ via dual accesses to $\ket{\Psi'}$.}}
    Set $L\coloneqq \binset^{\ell}$, $U\coloneqq \binset^r$, and $N\coloneqq \ceil*{12/\Delta^2}$\;
    \For{each sample $j \in [N]$}{
        \For{each proof $i\in\sbra{k}$}{
            Sample $W_i \sim \Uniform(\sbra{Z_i})$ using $O(\log{Z_i})$ random bits, and identify the unique $a_i\in L$ with $\sum_{b<a_i} m_i(b)^2 < W_i \leq \sum_{b\le a_i} m_i(b)^2$\;
        }
        Sample $u \sim \Uniform(U)$\;
        Compute $\Gamma(a_1,\dots,a_k,0^{m_0},u) = (a'_1,\dots,a'_k,z,u')$\;
        \If{$z \neq 0^{m_0}$}{
            Set $R_j\coloneqq 0$\;
        }\Else{
            Set $R_j\coloneqq \prod_{i\in\sbra{k}} \frac{m_i(a'_i)}{m_i(a_i)}$\;
        }
        Set $Y_j\coloneqq (1+R_j)/2$\;
    }
    {\color{gray}\Comment{Return an estimate of $\bra{\Psi'} Q_\Gamma \ket{\Psi'}$ as a decision.}}
    Set $\widehat{Y} \coloneqq \frac{1}{N}\sum_{j=1}^{N} Y_j$\;
    \If{$\widehat{Y} \geq c-\Delta/2$}{
        Return ACCEPT\;
    }\Else{
        Return REJECT\;
    }
\end{algorithm}

    \parheading{The \MA{} protocol.}
    We now formalize the \MA{} protocol in \Cref{protocol:StoqMAk-log-in-MA}. Let $(a'_1,\dots,a'_k,z,u') \coloneqq \Gamma (a_1,\dots,a_k,z,u)$, a direct calculation shows that 
    \begin{subequations}
    \label{eq:ratio-random-variable}
    \begin{align}
        \bra{\Psi'} M_\Gamma \ket{\Psi'} = \bra{\Omega'} \Gamma \ket{\Omega'} &= \sum_{\substack{(a_1,\dots,a_k)\in L^k\\u\in U, ~z=0^{m_0}}} \prod_{i\in[k]} \sqrt{D_i(a_i)D_i(a'_i)} \cdot 2^{-r}\\
        &= \sum_{\substack{(a_1,\dots,a_k)\in L^k\\z=0^{m_0}}} \prod_{i\in[k]} D_i(a_i) \cdot \frac{\sqrt{D_i(a'_i)}}{\sqrt{D_i(a_i)}} \\
        &= \E_{(A_1,\dots,A_k)\sim D_1\times\cdots\times D_k} R(A_1,\dots,A_k), \\
        \text{where} \quad R(A_1,\dots,A_k) &\coloneqq 
        \begin{cases}
            \prod_{i\in[k]} \frac{m_i(A'_i)}{m_i(A_i)}, & \text{ if } z=0^{m_0}\\
            0, & \text{otherwise}
        \end{cases}.
    \end{align}
    \end{subequations}
    Here, the second line uses the fact that $\abs{U}=2^r$, and the last line follows from \Cref{eq:StoqMAk-log-distribution}, and $R \coloneqq R(A_1,\dots,A_k)$ denotes the ratio random variable. 
    Consider the random variable $Y \coloneqq (1+R)/2$, and then \Cref{eq:ratio-random-variable} immediately implies that $\bbE[Y] = \bra{\Psi'}Q_\Gamma\ket{\Psi'}$. 

    Next, we bound the second moment of $R$. Since $\Gamma$ is a permutation, the map $(a_1,\dots,a_k,u) \mapsto (a'_1,\dots,a'_k,u')$ is injective on the subset of tuples satisfying $z=0^{m_0}$. Consequently, 
    \begin{equation}
    \label{eq:StoqMAk-log-2nd-moment}
        \E[R^2] = \sum_{\substack{(a_1,\dots,a_k)\in L^k\\z=0^{m_0}}} \prod_{i\in[k]} \frac{m_i(a_i)^2}{Z_i} \cdot \prod_{j\in[k]} \frac{m_j(a'_j)^2}{m_j(a_j)^2}
        = \sum_{\substack{(a_1,\dots,a_k)\in L^k\\z=0^{m_0}}} \prod_{i\in[k]} \frac{m_i(a'_i)^2}{Z_i}
        \leq 1.
    \end{equation}

    Therefore, we obtain the following variance bounds from \Cref{eq:StoqMAk-log-2nd-moment}:
    \begin{equation}
        \label{eq:StoqMAk-log-variance-bounds}
        \Var[R] = \E[R^2]-\E[R]^2 \leq 1 \quad\text{and}\quad \Var[Y] = \frac{\Var[R]}{4} \leq \frac{1}{4}.
    \end{equation}
    
    Let $\widehat{Y} \coloneqq \frac{1}{N}\sum_{j=1}^N Y_j$ be the empirical mean of the $N$ independent samples produced by \Cref{protocol:StoqMAk-log-in-MA}. By Chebyshev's inequality (e.g.,~\cite[Theorem 3.6]{MU17}), we obtain
    \begin{equation}
        \label{eq:StoqMAk-log-estimation-bound}
        \Pr\sbra*{ \abs*{\widehat{Y} - \E\sbra*{Y}} \geq \frac{\Delta}{4} } \leq \frac{16 \Var\sbra[\big]{\widehat{Y}}}{\Delta^2}
        = \frac{16 \Var\sbra{Y}}{N \Delta^2}
        \leq \frac{16}{48} = \frac{1}{3}.
    \end{equation}
    Here, the last inequality uses \Cref{eq:StoqMAk-log-variance-bounds} and the fact that $N=\ceil*{12/\Delta^2}$. 
    
    We now verify the completeness and soundness condition: 
    \begin{itemize}
        \item For \emph{yes} instances, there exists a non-negative product state $\ket{\Psi}$ such that $\bra{\Psi} Q_\Gamma \ket{\Psi} \geq c$. \Cref{eq:StoqMAk-log-estimation-bound} yields the following bound, which holds with probability at least $2/3$, 
        \begin{align*}
            \widehat{Y} \geq \E[Y] - \frac{\Delta}{4} 
            &= \bra{\Psi'} Q_\Gamma \ket{\Psi'} - \frac{\Delta}{4} \\
            &\geq \bra{\Psi} Q_\Gamma \ket{\Psi} - \abs*{ \bra{\Psi'} Q_\Gamma \ket{\Psi'} - \bra{\Psi} Q_\Gamma \ket{\Psi} } - \frac{\Delta}{4} \\
            &\geq c - \frac{\Delta}{5} - \frac{\Delta}{4} > c - \frac{\Delta}{2}.
        \end{align*}
        Here, the second line uses the reverse triangle inequality, and the last line follows from \Cref{eq:general-k-perturbation}. The verifier thus accepts with probability at least $2/3$. 
        \item For \emph{no} instances, every non-negative product state, including $\ket{\Psi'}$, satisfies $\bra{\Psi'} Q_\Gamma \ket{\Psi'} \leq s$. \Cref{eq:StoqMAk-log-estimation-bound} yields the following bound, which holds with probability at least $2/3$, 
        \begin{align*}
            \widehat{Y} \leq \E\sbra{Y} + \frac{\Delta}{4} 
            = \bra{\Psi'} Q_\Gamma \ket{\Psi'} +  \frac{\Delta}{4}
            \leq s + \frac{\Delta}{4} < s+\frac{\Delta}{2} = c-\frac{\Delta}{2}.
        \end{align*}
        Consequently, the verifier accepts with probability at most $1/3$. \qedhere
    \end{itemize}
\end{proof}

%%%%%%%%%%%%%%%%%%%%%%%%%%%%%%%%%%%%%%%%%%%%%%%%%%%%%%%%%%%
%%%%%%%%%%%%%%%%%%%%%%%%%%%%%%%%%%%%%%%%%%%%%%%%%%%%%%%%%%%
%%%%%%%%%%%%%%%%%%%%%%%%%%%%%%%%%%%%%%%%%%%%%%%%%%%%%%%%%%%

\section{Upper bounds for \StoqMAk{} with nearly perfect completeness}
\label{sec:StoqMAtwo-perfect-completeness-upper-bound}

In this section, we establish upper bounds for \StoqMAk{} with \emph{nearly perfect} completeness, which are stronger than the general deterministic exponential-time upper bound for \StoqMAk{} stated in \Cref{thm:StoqMAk-in-EXP} and proved in \Cref{sec:sos} via the Sum-of-Squares algorithm of Barak--Kelner--Steurer~\cite{BKS14}: 

\begin{theorem}[\StoqMAk{} with doubly exponentially small completeness error is in \PSPACE{}]
    \label{thm:StoqMAtwo-perfect-completeness}
    Let $\epsilon(n)$ and $s(n)$ be efficiently computable functions  satisfying $0 \leq \epsilon(n) \leq \exp(-\exp(\poly(n)))$ and $1/2 \leq s(n) \leq 1-1/\poly(n)$, and define $\Delta(n) \coloneqq 1-\epsilon(n)-s(n)$. Then, for all efficiently computable positive integer-valued functions $k(n)$, $\ell(n)$, $m_0(n)$, and $m_+(n)$, it holds that:
    \[ \StoqMA_\ell\ssbra{k,1-\epsilon,\Delta} \subseteq \DTISP\sbra[\bigg]{ \exp\rbra[\Big]{\widetilde{O}\rbra[\Big]{\frac{k\ell}{\Delta^2} \max\cbra{k\ell,m_+}}}, O\rbra[\Big]{\frac{k\ell}{\Delta^2} \max\cbra{k\ell,m_0,m_+}} }. \]
    In particular, for every efficiently computable integer-valued function $k$ that is polynomially bounded in $n$, and for functions $\epsilon$ and $s$ satisfying the above conditions, we have: 
    \[ \StoqMA(k,1-\epsilon,s) \subseteq \PSPACE. \]
\end{theorem}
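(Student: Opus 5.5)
We follow the rectangular closure testing framework sketched in the introduction. First, by \Cref{lemma:SepRCD-StoqMAk-complete} we pass to a $\SepRCD_k$ instance $(\Gamma,I)$ with $\Gamma=V_x^\dagger X_\sfO V_x$, a reversible permutation on $(\sfW_1,\dots,\sfW_k,\sfA_0,\sfA_+)$. Acceptance of a non-negative product witness is then $\rbra{1+\bra{\Omega}\Gamma\ket{\Omega}}/2$ with $\ket{\Omega}=\ket{\psi_1}\cdots\ket{\psi_k}\ket{\bar 0}\ket{\bar +}$. For \emph{yes} instances the overlap is at least $1-2\epsilon$. For \emph{no} instances it is at most $1-2\gamma$ with $\gamma\coloneqq 1-s$, which is comparable to $\Delta$.

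In the perfect case $\epsilon=0$ we get $\Gamma\ket{\Omega}=\ket{\Omega}$, so $\Gamma$ maps the support $S_1\times\cdots\times S_k\times\cbra{\bar 0}\times U$ onto itself. Rectangles closed under $\Gamma$ certify \emph{yes}, since the product subset state $\ket{S_1}\cdots\ket{S_k}$ is a fixed point. For $\epsilon$ doubly exponentially small, all nonzero amplitudes of an optimal witness can be pushed to be at least $\exp(-\exp(\poly))$ or negligible. Every transition leaving the zero-ancilla sector or the rectangle then costs at least one such squared amplitude times $2^{-m_+}$, which contradicts the $2\epsilon$ bound. So after discarding tiny coordinates, the thresholded support is still exactly $\Gamma$-closed. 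I expect this approximate-invariance step to need care but no new idea.

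The algorithm enumerates seeds $(s_1^0,\dots,s_k^0)$. For each seed it runs $L=\Theta(k\ell/\gamma)$ closure rounds. In each round it adds to every coordinate set $S_i$ all $s_i'$ produced by $\Gamma(s_1,\dots,s_k,\bar 0,u)$ with $s_j\in S_j$ and $u\in U$. It rejects the seed if any transition hits a nonzero $\sfA_0$ value. It accepts the instance iff some seed reaches a closed rectangle without ever producing a nonzero ancilla. Completeness follows from the previous paragraph.

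Soundness is the main obstacle. We must show that in a \emph{no} instance every non-closed rectangle $R=S_1\times\cdots\times S_k$ grows multiplicatively, so that after $L$ rounds its size would exceed $2^{k\ell}$ and force an escaping transition. The plan is to plug the product subset state $\ket{S_1}\cdots\ket{S_k}$ into the soundness bound. This shows that at least a $2\gamma$ fraction of the pairs $(x,u)$ with $x\in R$ is sent by $\Gamma$ either to a nonzero ancilla or to a tuple outside $R$. By injectivity of $\Gamma$, the escaping images are distinct and number at least $2\gamma|R|2^{m_+}$. Each image has the form $(s_1',\dots,s_k',u')$ with some $s_i'\notin S_i$, so the new rectangle satisfies $|R'|\ge|R|\cdot(1+\gamma/\poly)$ after a counting argument over coordinates. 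The counting projects the escaping images onto coordinates, and dividing by the $2^{m_+}$ multiplicity of $u'$ is the delicate part. I would bound the number of new $u'$-fibers per new coordinate value by $2^{m_+}$ and the number of old-coordinate completions by $|R|/\min_i|S_i|$. This yields a multiplicative growth of $(1+\Omega(\gamma/k))$ per round, so $L=O(k\ell/\gamma)$ rounds suffice. The argument only needs the soundness bound applied to rectangles, and these are non-negative product states, as the definition allows.

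For the complexity, instead of storing sets we implement membership ``$y\in S_i^{(t)}$'' recursively. A tuple is in the round-$t$ set iff it was in the round-$(t-1)$ set or is produced from a round-$(t-1)$ tuple. Testing this enumerates the $2^{k\ell+m_+}$ preimage choices and recurses to depth $L$. Each stack frame uses $O(\max\cbra{k\ell,m_0,m_+})$ space. Time per level is $2^{O(\max\cbra{k\ell,m_+})}$, giving time $\exp(\widetilde O(L\max\cbra{k\ell,m_+}))$ and space $O(L\max\cbra{k\ell,m_0,m_+})$. To match the stated $\Delta^{-2}$ dependence, we either take $\gamma$ of order $\Delta^2$ after an appropriate conversion (e.g., from \Cref{corr:StoqMAk-eq-StoqMAtwo-small-yesError}) or accept the looser rate. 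Polynomial $k,\ell,m_0,m_+$ and $\Delta\ge1/\poly$ then give \PSPACE{}.
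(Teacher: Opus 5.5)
Your soundness growth argument and the recursive space-efficient implementation are essentially the paper's. The completeness side, however, has a genuine gap once $\epsilon>0$.

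\textbf{The thresholded support need not be exactly closed.} Your claim that, after discarding tiny coordinates, the thresholded support is exactly $\Gamma$-closed does not follow. A transition leaving the zero-ancilla sector costs a full squared amplitude. A transition leaving the thresholded rectangle only costs $\abs{\innerprod{x}{\Omega}-\innerprod{\Gamma x}{\Omega}}^2$, because $\Gamma x$ may sit just below the threshold.

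Worse, the product structure makes weights degrade by powering. If every coordinate of $x$ has weight at least $\theta$, the coordinates of $\Gamma x$ are only guaranteed weight about $\theta^k$; for two provers this is the paper's $\tau_{t+1}=\tau_t^2/16$. An exactly closed level set therefore needs an empty window of coordinate weights of the form $[\theta^k/4,\theta)$. Pigeonhole over the at most $k2^\ell$ coordinate weights only produces such a window when $\epsilon$ is \emph{triply} exponentially small, which is the regime of \Cref{thm:PreciseStoqMAtwo-perfect-completeness}. Truncating small amplitudes to manufacture a window does not help either: it moves the overlap by about $\sqrt{k2^\ell\theta}$, far more than the $\theta^k2^{-m_+}$ cost of a single escaping transition.

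\textbf{Your acceptance rule fails even at $\epsilon=0$.} You require some seed to \emph{reach a closed rectangle} within $L=\poly(n)$ rounds. But the closure from a seed inside a closed rectangle can grow only additively. For example, let $\Gamma$ pair neighbouring values of the first coordinate along a cycle through all of $\binset^\ell$, with the pairing selected by $u$. This instance has perfect completeness, yet closing takes about $2^{\ell-1}$ rounds from every seed.

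\textbf{The missing idea: no closed rectangle is needed.} The paper accepts iff some seed survives $L$ rounds without a bad transition. It starts from a heavy seed with coordinate weights $p_1(a_0),p_2(b_0)\ge 2^{-\ell}$ and uses $\norm{\Gamma\ket{\Omega}-\ket{\Omega}}_2=O(\sqrt{\epsilon})$ coordinatewise. By induction, every coordinate reached by round $t$ has weight at least $\tau_t$. Hence no bad transition occurs before round $L$ provided $\epsilon\lesssim\tau_L^2 2^{-m_+}$. Since $L$ is polynomial, $\tau_L$ is only doubly exponentially small, which is exactly where the hypothesis on $\epsilon$ comes from.

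\textbf{Soundness: no projection onto coordinates is needed.} Your growth lemma then handles \emph{no} instances, but you can simplify the count. The at least $2\gamma\abs{R}2^{m_+}$ escaping images are distinct triples, so they contain at least $2\gamma\abs{R}$ distinct tuples, all lying in $R'\setminus R$. This gives $\abs{R'}\ge(1+2\gamma)\abs{R}$ and $L=O(k\ell/\gamma)$ directly. Your $(1+\Omega(\gamma/k))$ rate would instead need $O(k^2\ell/\gamma)$ rounds, and your completion bound ignores tuples with several new coordinates.

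With these fixes, running the closure directly on $k$ coordinates with $\tau_{t+1}\approx\tau_t^k/4$ is a valid alternative to the paper's preliminary compression to two provers. It even improves the $\Delta^{-2}$ dependence to $\Delta^{-1}$.
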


\begin{theorem}[\PreciseStoqMAk{} with triply exponentially small completeness error is in \EXP{}]
    \label{thm:PreciseStoqMAtwo-perfect-completeness}
    Let $\epsilon(n)$ and $s(n)$ be efficiently computable functions  such that 
    \[0 \leq \epsilon(n) \leq \exp(-\exp(\exp(\poly(n)))) \quad\text{and}\quad 1/2 \leq s(n) \leq 1-1/\exp(\poly(n)),\] 
    and define $\Delta(n) \coloneqq 1-\epsilon(n)-s(n)$. Then, for all efficiently computable positive integer-valued functions $k(n)$, $\ell(n)$, and $m_+(n)$, the following inclusion holds:
    \[ \StoqMA_\ell\ssbra{k,1-\epsilon,\Delta} \subseteq \DTISP\sbra*{ \exp\rbra[\big]{\widetilde{O}\rbra{\max\cbra{k\ell,m_+}}}, O\rbra[\big]{2^{k\ell}} }. \]
    In particular, for every efficiently computable integer-valued function $k$ that is polynomially bounded in $n$, and for functions $\epsilon$ and $s$ satisfying the above conditions, it holds that: 
    \[ \PreciseStoqMA(k,1-\epsilon,s) \subseteq \EXP. \]
\end{theorem}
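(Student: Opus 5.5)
Our plan is the \emph{explicit-table} variant of the rectangular closure testing sketched in the introduction, run on the \SepRCD{} reformulation. By \Cref{lemma:SepRCD-StoqMAk-complete}, it suffices to decide, for the reversible circuit $\Gamma=V_x^\dagger X_\sfO V_x$, whether some non-negative product state $\ket{\psi_1}\otimes\cdots\otimes\ket{\psi_k}$ has $\bra{\Psi}\bra{\bar{0}}\bra{\bar{+}}\Gamma\ket{\Psi}\ket{\bar{0}}\ket{\bar{+}}\geq 1-2\epsilon$, or every such state has this quantity at most $2s-1 = 1-2\epsilon-2\Delta$. We work with the induced entrywise non-negative matrix $M_\Gamma$ on $(\binset^\ell)^k$. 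Write $\gamma\coloneqq 1-(2s-1)\geq 2/\exp(\poly(n))$ for the soundness rejection mass.

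\textbf{Yes case.} First we establish a robust version of support invariance (\Cref{eq:projection-intro}). If the overlap is $1-2\epsilon$ with $\epsilon$ triply exponentially small, then every witness amplitude on a support set $S_1\times\cdots\times S_k$ that is at least $\exp(-\exp(\poly))$ is preserved up to negligible error. We then round away tiny amplitudes so that there is a closed rectangle $S_1\times\cdots\times S_k$. By \emph{closed} we mean: for every $(s_1,\dots,s_k)\in\prod_i S_i$ and every $u\in U$, the image $\Gamma(s,\bar{0},u)$ has zero ancilla and lands in $\prod_i S_i$.

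The argument goes by contradiction. Any single escaping transition has branch weight at least $2^{-m_+}\prod_i \min_{s\in S_i}\abs{\innerprod{s}{\psi_i}}^2$. We restrict to a ``heavy'' rectangle by iteratively discarding coordinates whose amplitude is below a threshold. The thresholds are chosen on a doubly-exponential ladder, so that one level has a large gap. The lost mass then stays below $1/2$ while every kept amplitude is at least $\exp(-\exp(\poly))$, and with $\epsilon$ triply exponentially small this forces zero escaping weight. This ladder argument is where I expect the main obstacle: closedness must hold for the heavy rectangle itself, and the discarded light part could interact with it. I would handle this by a permutation argument. Since $\Gamma$ is a bijection, mass flowing into the heavy rectangle from the light part equals mass flowing out of it, so approximate invariance of the whole state transfers to the heavy block up to the threshold gap.

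\textbf{No case and the algorithm.} Conversely, if a closed rectangle exists, the product subset state $\ket{S_1}\otimes\cdots\otimes\ket{S_k}$ has overlap exactly $1$, since $\Gamma$ restricted to the closed sector is a permutation. This contradicts soundness, so in no instances no rectangle is closed. The algorithm therefore decides the existence of a closed rectangle. It starts from each seed $(s_1,\dots,s_k)$ with rectangle $\{s_1\}\times\cdots\times\{s_k\}$. It then repeatedly applies $\Gamma$ to every element of $\prod_i S_i\times U$: if some image has nonzero ancilla, the seed fails; otherwise it adds each coordinate of the image to the corresponding $S_i$. The process stops when the rectangle stabilizes, in which case we accept.

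The sets $S_i$ are stored as explicit bit tables of size $2^\ell$, for $O(2^{k\ell})$ space overall. There are at most $k2^\ell$ growth rounds, each costing $2^{k\ell}\cdot 2^{m_+}\cdot\poly$ time, and there are $2^{k\ell}$ seeds. This gives total time $\exp(\widetilde{O}(\max\{k\ell,m_+\}))$. Correctness needs only the exact dichotomy: yes instances have a closed rectangle, and then some seed inside it never escapes, because closure started inside a closed rectangle stays inside it; no instances have none, so every seed eventually hits a nonzero-ancilla transition. No gap-dependent round bound is needed, which is why the exponentially small gap is harmless here.

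Finally, $\PreciseStoqMA(k,1-\epsilon,s)\subseteq\EXP$ follows because $k\ell$ and $m_+$ are polynomial.
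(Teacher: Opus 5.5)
Your route is sound, but it differs genuinely from the paper's. For $\epsilon>0$ the paper never exhibits an exactly closed rectangle. It proceeds in four steps:
\begin{enumerate}
\item It first compresses to two provers via \Cref{corr:StoqMAk-eq-StoqMAtwo-small-yesError}.
\item It runs the closure for a fixed number $L=\lceil(2\ell\ln 2+1)/\ln(1+\gamma)\rceil$ of rounds.
\item It proves completeness dynamically from the heaviest seed: coordinates added in round $t$ have weight at least $\tau_t$, with $\tau_{t+1}=\tau_t^2/16$, which requires $\epsilon\le\tau_L^2/2^{r+3}$.
\item It proves soundness through the growth bound $\abs{S_{t+1}\times T_{t+1}}\ge(1+\gamma)\abs{S_t\times T_t}$.
\end{enumerate}
You instead prove an exact dichotomy and run the closure to stabilization, which takes at most $k2^\ell$ rounds. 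This makes soundness immediate: a closed box gives overlap exactly $1$, so any $s<1$ suffices, and you need neither a growth lemma nor a $\gamma$-dependent round bound. It also works directly on $k$-dimensional boxes, so you avoid prover compression and its $\Delta^2$ loss, and the tables take only $O(k2^\ell)$ space. What the paper's bounded-round analysis buys is that its completeness tolerance is tied to $L=O(\ell/\gamma)$. For inverse-polynomial gaps that tolerance is only doubly exponential, and the same algorithm, implemented recursively, yields the polynomial-space bound of \Cref{thm:StoqMAtwo-perfect-completeness}. Your tolerance is triply exponential whatever the gap, and a closure running for up to $k2^\ell$ rounds cannot be made polynomial-space.

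Two points in your \emph{yes} case need fixing.

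First, the pigeonhole that produces an empty band must range over all $k2^\ell$ pairs $(i,s)$. With $\theta_0\coloneqq 2^{-\ell}$ and $\theta_{j+1}\coloneqq\theta_j^k/4$, you therefore need the $k2^\ell+1$ bands $[\theta_{j+1},\theta_j)$ for $0\le j\le k2^\ell$. The kept weights are then bounded below only by $\theta_{k2^\ell+1}=\exp(-\exp(\exp(\poly(n))))$, not by $\exp(-\exp(\poly(n)))$. Polynomially many rungs do not guarantee an empty band, and this is exactly why the hypothesis on $\epsilon$ is triply exponential.

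Second, the obstacle you anticipate needs no mass-balance argument. Bijectivity balances the \emph{number} of transitions entering and leaving a set, not their amplitude-weighted mass, and such an argument would give only approximate invariance, not exact closure. The pointwise bound suffices:
\begin{itemize}
\item At an empty band $j$, set $H_i\coloneqq\{s: p_i(s)\ge\theta_j\}$.
\item Take $x=(s_1,\dots,s_k,0^{m_0},u)$ with every $s_i\in H_i$, and let $y\coloneqq\Gamma x$. Then $\innerprod{y}{\Omega}\ge\innerprod{x}{\Omega}-\norm{\Gamma\ket{\Omega}-\ket{\Omega}}_2\ge\theta_j^{k/2}2^{-m_+/2}-2\sqrt{\epsilon}$.
\item Once $2\sqrt{\epsilon}\le\theta_{k2^\ell+1}^{k/2}2^{-m_+/2}/2$, the point $y$ has zero ancilla, and every coordinate satisfies $p_i(y_i)\ge\theta_j^k/4=\theta_{j+1}$.
\item Since the band is empty, this gives $y_i\in H_i$.
\end{itemize}
Equivalently, you can rerun \Cref{lemma:rectangle-closure-testing-nearly-perfect-completeness}, adapted to $k$ provers with $\tau_{t+1}=\tau_t^k/4$, for $k2^\ell$ rounds instead of $L$, since a closure that stabilizes without a bad round ends in a closed box.
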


\vspace{1em}
In the remainder of this section, we first introduce the rectangular structure of non-negative product states in \Cref{subsec:rect-structure}, via the \StoqMAtwo{}-complete problem \SepRCD{}, and then explain how to test rectangular closure. 
Next, we provide two implementations of the formal algorithm (\algoref{algo:rectangle-closure-testing}) together with their performance guarantees (\Cref{thm:rectangular-closure-testing}) in \Cref{subsec:rect-closure-implementations}, and show how the complexity-theoretic consequences in \Cref{thm:StoqMAtwo-perfect-completeness,thm:PreciseStoqMAtwo-perfect-completeness} are derived. 
Lastly, we prove the correctness of the rectangular closure testing algorithm in \Cref{subsec:rect-closure-testing-analysis}. 

\subsection{Rectangular structure of non-negative product states}
\label{subsec:rect-structure}

We begin by describing the \emph{rectangular structure} of non-negative witness states underlying \emph{yes} instances of \StoqMAtwo{} with perfect completeness. In particular, we consider the \StoqMAtwo{}-complete problem \SepRCD{}, introduced in \Cref{subsubsec:SepRCD}.

\begin{definition}[Perfectly agreeing pair]
    For any \SepRCD{} instance $(R_0,R_1)$, we say that $(R_0, R_1)$ is a \emph{perfectly agreeing pair} if there exists a non-negative product state $\ket{\psi}$ such that 
    \begin{equation}
        \label{eq:perfectly-agreeing-pair-cond}
        \innerprod{R_0(\psi)}{R_1(\psi)}=1, \quad\text{where } \ket{R_z(\psi)} \coloneqq R_z \ket{\psi}\ket{0}^{\otimes m_0}\ket{+}^{\otimes r} \text{ for each } z\in\binset.
    \end{equation}  
    Furthermore, we say that $(R_0,R_1)$ is an \emph{$\epsilon$-near perfectly agreeing pair} if the condition in \Cref{eq:perfectly-agreeing-pair-cond} is relaxed to $\innerprod{R_0(\psi)}{R_1(\psi)}\geq 1-\epsilon$. 
\end{definition}

\begin{lemma}[Perfectly agreeing pairs admit product subset states]
    \label{lemma:rectangular-structure}
    For every perfectly agreeing pair $(R_0,R_1)$ corresponding to a \emph{yes} instance of $\SepRCD_{1,\beta}(\ell,m_0,r)$, let $\ket{\psi^\star} \coloneqq \ket{\psi^\star_1}\otimes\ket{\psi^\star_2}$ be a non-negative state that satisfies $\innerprod{R_0(\psi^\star)}{R_1(\psi^\star)}=1$. Then there exist non-empty subsets 
    \[S_1 \subseteq \supp(\ket{\psi^\star_1}) \quad\text{and}\quad S_2 \subseteq \supp(\ket{\psi^\star_2})\] 
    such that the product subset state $\ket{\phi} \coloneqq \ket{S_1}\otimes\ket{S_2}$ satisfies $\innerprod{R_0(\phi)}{R_1(\phi)}=1$. 
\end{lemma}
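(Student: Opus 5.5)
We reduce to the circuit pair $(\Gamma,I)$ with $\Gamma \coloneqq R_1^\dagger R_0$ (or $R_0^\dagger R_1$), which is again a classical reversible circuit, i.e.\ a permutation matrix on computational basis states. Then $\innerprod{R_0(\psi)}{R_1(\psi)} = \bra{\Omega}\Gamma\ket{\Omega}$ with $\ket{\Omega}\coloneqq\ket{\psi}\ket{\bar 0}\ket{\bar +}$. Since $\Gamma$ is unitary and $\ket{\Omega}$ is a unit vector, the value $1$ forces exact invariance: $\Gamma\ket{\Omega^\star}=\ket{\Omega^\star}$ for $\ket{\Omega^\star}\coloneqq\ket{\psi^\star_1}\ket{\psi^\star_2}\ket{\bar 0}\ket{\bar +}$, by the equality case of Cauchy--Schwarz. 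We then take the natural candidate $S_1\coloneqq\supp(\ket{\psi_1^\star})$ and $S_2\coloneqq\supp(\ket{\psi_2^\star})$, which are nonempty since the states are normalized.

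The key step is that a permutation $\Gamma$ fixing a non-negative vector $\ket{\Omega^\star}$ also fixes the indicator of its support. Writing $\ket{\Omega^\star}=\sum_x \omega_x\ket{x}$ with $\omega_x\geq 0$, invariance means $\omega_{\Gamma(x)}=\omega_x$ for every basis string $x$. Hence $\Gamma$ maps $\supp(\ket{\Omega^\star})$ bijectively onto itself, so it fixes the uniform superposition over this support. We will then identify the support: it equals $S_1\times S_2\times\{\bar 0\}\times\binset^r$, because the amplitude of $\ket{\Omega^\star}$ on a basis string $(s,t,\bar 0,u)$ is $\innerprod{s}{\psi^\star_1}\innerprod{t}{\psi^\star_2}2^{-r/2}$. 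This factorization is the only place the product structure enters, and it is what makes the support a rectangle.

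Finally, the uniform superposition over $S_1\times S_2\times\{\bar 0\}\times\binset^r$ is exactly $\ket{S_1}\ket{S_2}\ket{\bar 0}\ket{\bar +}$, so $\Gamma$ fixes it. This gives $\innerprod{R_0(\phi)}{R_1(\phi)}=1$ for $\ket{\phi}=\ket{S_1}\otimes\ket{S_2}$, which is a non-negative product state, as required.

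The argument is short. The only points needing care are getting the direction of the reduction to $\Gamma$ right and justifying that "overlap $=1$" implies exact equality. That step uses the fact that $\ket{R_0(\psi)}$ and $\ket{R_1(\psi)}$ are unit vectors, and their inner product is real and non-negative because the circuits are classical and the input is non-negative. I expect no real obstacle; the main conceptual step is the level-set observation that a permutation preserving a vector preserves its support.
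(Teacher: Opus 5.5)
Your proposal is correct and follows the paper's proof essentially step for step. Overlap $1$ forces $\Gamma\ket{\Omega^\star}=\ket{\Omega^\star}$; the permutation $\Gamma$ then maps the support $S_1\times S_2\times\{0^{m_0}\}\times\binset^r$ bijectively onto itself, so it fixes the uniform superposition $\ket{S_1}\ket{S_2}\ket{\bar 0}\ket{\bar +}$. The only cosmetic difference is that you set $\Gamma=R_0^\dagger R_1$ directly (as the paper's introduction does) instead of passing to $(\Gamma,I)$ through the completeness lemma for $\SepRCD_k$, which is if anything slightly cleaner.
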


\begin{proof}
    Following \Cref{lemma:SepRCD-StoqMAk-complete}, it suffices to consider a \SepRCD{} instance $(\Gamma,I)$ such that $\innerprod{R_0(\psi)}{R_1(\psi)} = \bra{\psi}\bra{\bar{0}}\bra{\bar{+}} \Gamma \ket{\psi}\ket{\bar{0}}\ket{\bar{+}}$ for every non-negative state $\ket{\psi}$. We now write
    \[\ket{\psi^\star_1}=\sum_{x_1}\alpha^{(1)}_{x_1}\ket{x_1} \text{ and } \ket{\psi^\star_2}=\sum_{x_2}\alpha^{(2)}_{x_2}\ket{x_2},
    \quad\text{where } \alpha^{(1)}_{x_1}\geq 0 \ \forall x_1 \text{ and } \alpha^{(2)}_{x_2}\geq 0 \ \forall x_2.
    \]
    
    For the non-negative states $\ket{\psi^\star_1}$ and $\ket{\psi^\star_2}$, we define their supports by
    \[ S_1 \coloneqq \set{x_1}{\alpha^{(1)}_{x_1}>0} \quad\text{and}\quad 
    S_2 \coloneqq \set{x_2}{\alpha^{(2)}_{x_2}>0}.\]
    Thus, the support of the initialized state is the product set $K \coloneqq S_1\times S_2\times\cbra{0^{m_0}}\times U$. Since $\bra{\Omega^\star} \Gamma \ket{\Omega^\star}=1$, where $\ket{\Omega^\star} \coloneqq \ket{\psi^\star}\ket{\bar{0}}\ket{\bar{+}}$, a direct calculation shows that
    \[ \norm*{ \ket{\Omega^\star} - \Gamma \ket{\Omega^\star} }^2_2 = \innerprod{\Omega^\star}{\Omega^\star} + \bra{\Omega^\star} \Gamma^\dagger\Gamma \ket{\Omega^\star} - 2 \bra{\Omega^\star} \Gamma \ket{\Omega^\star} = 2 - 2 \bra{\Omega^\star} \Gamma \ket{\Omega^\star} = 0.\]
    Consequently, $\Gamma \ket{\Omega^\star} = \ket{\Omega^\star}$. 
    Since $\Gamma$ is a permutation matrix, any basis vector with positive amplitude cannot be sent outside the support of positive amplitudes. Thus, we obtain
    \begin{equation}
        \label{eq:perfectly-agree-part-permutation}
        \Gamma(K) \subseteq K \quad\text{and}\quad \abs{\Gamma(K)}=\abs{K},
    \end{equation}
    where the second identity holds because $\Gamma$ is bijective. Therefore, \Cref{eq:perfectly-agree-part-permutation} implies that 
    \begin{equation}
        \label{eq:perfectly-agree-part-fixedPoint}
        \Gamma(K)=K \quad\text{and}\quad \Gamma\ket{K}=\ket{K}.
    \end{equation}
    Here, $\ket{K}$ is the initialized full state of $\ket{\phi}$, corresponding to the set $K$, where the product subset state $\ket{\phi}\coloneqq\ket{S_1}\otimes\ket{S_2}$ corresponds to the set $S_1\times S_2$. Finally, using \Cref{eq:perfectly-agree-part-fixedPoint}, we complete the proof by noting that $\bra{\Omega_\phi} \Gamma \ket{\Omega_\phi} = \innerprod{\Omega_\phi}{\Omega_\phi}=1$, with $\ket{\Omega_\phi} \coloneqq \ket{\phi}\ket{\bar{0}}\ket{\bar{+}}$. as desired. 
\end{proof}

\paragraph{Testing rectangular closure.} 
With this rectangular structure of non-negative product states established in \Cref{lemma:rectangular-structure}, we are now ready to understand how a reversible circuit, or equivalently a permutation, acts on such states in combinatorial terms. 

To formalize this perspective, we need to introduce some additional notation. Specifically, we refer to a Cartesian product $A\times B$ of two sets $A,B\subseteq \binset^\ell$ as a \emph{rectangle}. As stated in \Cref{lemma:SepRCD-StoqMAk-complete}, without loss of generality, for any \SepRCD{} instance $(R_0,R_1)$, it suffices to consider the reversible circuit pair $(\Gamma,I)$, which achieves the same parameters and thresholds. Here, $\Gamma$ can be viewed as a permutation matrix. We need the following notions: 
\begin{itemize}
    \item \textbf{Initialized sector}: We denote the $(m_0,r)$-\emph{initialized sector} over the rectangle $S\times T$, where $S,T\subseteq \binset^\ell$ and $U\coloneqq \binset^r$, by 
    \[K(S,T) \coloneqq S \times T \times \cbra{0^{m_0}} \times U.\] 
    \item \textbf{Good and bad transitions}: For any $(a,b,u) \in A\times B\times U$, where $A,B\subseteq \binset^\ell$, we define the transition under $\Gamma$ by 
    \[\Gamma(a,b,0^{m_0},u) = (a',b',z',u').\] 
    If $z' \neq 0^{m_0}$, we call this a \emph{bad transition}. Otherwise, we call it a \emph{good transition} and write 
    \[\GammaZero(a,b,u)=(a',b').\]
\end{itemize}

Next, we define the notions of \emph{closed rectangle} and \emph{rectangular closure}. The former notion is a static certificate: once the rectangle $S\times T$ is chosen, \Cref{def:closed-rectangle} asks whether $\Gamma$ maps the initialized sector over that rectangle back into the same sector. By contrast, the latter notion is a dynamic construction introduced in \Cref{def:rectangular-closure}: the underlying procedure starts with a pair $(a_0,b_0)$, repeatedly updates the sets $S^+$ and $T^+$ for the current round, and either reaches a closed rectangle or encounters a bad transition. 

\begin{definition}[Closed rectangle]
    \label{def:closed-rectangle}
    For any given permutation $\Gamma$, a rectangle $S\times T\subseteq A\times B$ is \emph{closed} with respect to $\Gamma$ if the following inclusion holds. 
    \[\Gamma(K(S,T)) \subseteq K(S,T)\] 
    Equivalently, this condition means that every transition that starts in the $(m_0,r)$-initialized sector over the rectangle $S\times T$ stays in the initialized sector over the same rectangle.   
\end{definition}

\begin{definition}[Rectangular closure from a starting pair]
    \label{def:rectangular-closure}
    Fix a starting pair $(a_0,b_0)\in A\times B$.  Let $S_0 \coloneqq \{a_0\}$ and $T_0 \coloneqq \{b_0\}$. For sets $S_t$ and $T_t$ that correspond to round $t$, we say that round $t$ is \emph{bad} if there exist $a\in S_t$, $b\in T_t$, and $u\in U$ such that
    \[ \Gamma(a,b,0^{m_0},u)=(a',b',z',u') \quad\text{with}\quad z'\ne 0^{m_0}. \]
    If round \(t\) is not bad, we say that this round is \emph{good} and define
    \begin{align*}
        S_{t+1} &\coloneqq S_t\cup \set{a'}{\exists a\in S_t,\ b\in T_t,\ u\in U, \ \GammaZero(a,b,u)=(a',b')},\\
        T_{t+1} &\coloneqq T_t\cup \set{b'}{\exists a\in S_t,\ b\in T_t,\ u\in U, \ \GammaZero(a,b,u)=(a',b')}.
    \end{align*}
\end{definition}

\subsection{Rectangular closure testing and its consequences}
\label{subsec:rect-closure-implementations}

We now formalize our intuition for testing rectangular closure and present the actual algorithm in \algoref{algo:rectangle-closure-testing}. In particular, the procedure described in \Cref{def:rectangular-closure} is implemented in Lines 5--16 of \algoref{algo:rectangle-closure-testing}. 

\LinesNumbered
\begin{algorithm}[ht!]
    \SetAlgorithmName{Algorithm}{algorithm}{List of Algorithms}
    \caption{Rectangular Closure Testing Algorithm.}
	\label{algo:rectangle-closure-testing}
    \SetEndCharOfAlgoLine{.}
    \setlength{\parskip}{5pt}
    \SetKwFor{While}{}{:}{}
    \SetKwFor{For}{For}{:}{}
    \SetKwIF{If}{ElseIf}{Else}{If}{:}{elif}{Else:}{}%
    \SetKwComment{Comment}{// }{}
    \SetKwInOut{Input}{Input}
    \SetKwInOut{Output}{Output}
    \SetKwInOut{Parameter}{Parameters}

    \Input{A reversible circuit $\Gamma$.}
    \Output{ACCEPT or REJECT.}
    \Parameter{The soundness parameter $\gamma\in(0,1)$, the witness length $2\ell$, the number of $\ket{0}$ ancillary qubits $m_0$, and the number of $\ket{+}$ ancillary qubits $r$.}
    Set $L \coloneqq \ceil*{ (2\ell\ln{2} + 1)/\ln(1+\gamma)}$\;
    Set $A\coloneqq\binset^{\ell}$, $B\coloneqq\binset^{\ell}$, and $U\coloneqq\binset^r$\;
    \For{\textnormal{each starting pair $(a_0,b_0) \in A\times B$}}{
        Initialize $S_0 \coloneqq \cbra{a_0}$ and $T_0 \coloneqq \cbra{b_0}$\;
        {\color{gray}\Comment{Rectangular closure from $(a_0,b_0)$.}}
        \For{each round $t \in \cbra{0,1,\cdots,L-1}$}{
            Set temporary sets \(S^+ \coloneqq S_t\) and \(T^+ \coloneqq T_t\)\;
            \For{\textnormal{every triple $(a,b,u) \in S_t\times T_t\times U$}}{
                Compute $\Gamma(a,b,0^{m_0},u)=(a',b',z',u')$\;
                \If{$z'\neq 0^{m_0}$}{
                   Declare that round $t$ is \emph{bad}\;
                   Break\;
                }\Else{
                   Update $S^+\leftarrow S^+\cup\{a'\}$ and $T^+\leftarrow T^+\cup\{b'\}$\;
                }
            } 
            \If{\textnormal{round $t$ is \emph{bad}}}{
                Break\;
            }
            Set $S_{t+1} \coloneqq S^+$ and $T_{t+1} \coloneqq T^+$\;
        }
        \If{\textnormal{all rounds $t\in\cbra{0,1,\cdots,L-1}$ are \emph{good} for $(a_0,b_0)$}}{
            Return ACCEPT\;
        }
    }
    Return REJECT\; 
\end{algorithm}

Next, we present the performance guarantees and detailed complexity bounds of \algoref{algo:rectangle-closure-testing} under two different implementations, and the proof can be found in \Cref{subsec:rect-closure-testing-analysis}:

\begin{theorem}[Explicit rectangular closure testing algorithms for \SepRCD{}]
    \label{thm:rectangular-closure-testing}
    For every efficiently computable function $\gamma(n)$ taking values in $(0,1)$, there is an $L$-round rectangular closure testing algorithm for $\SepRCD_{1-\epsilon,1-\gamma}(\ell,m_0,r)$, with the parameters 
    \[L=\ceil*{ \frac{2\ell \ln{2} +1}{\ln(1+\gamma)}} = O\rbra*{\frac{\ell}{\gamma}} \quad\text{and}\quad \epsilon \leq 2^{5-r-2^{L+1}(\ell+4)}.\]
    In particular, let $T_\Gamma$ and $S_\Gamma$ denote the time and workspace needed to evaluate $\Gamma$ on the input $(a,b,0^{m_0},u)\in A\times B\times \cbra{0^{m_0}} \times U$, and let 
    \[\lambda \coloneqq 2\ell+m_0+r+\lceil\log(L+2)\rceil.\] 
    Then, \algoref{algo:rectangle-closure-testing} can be implemented deterministically in either of the following ways:
    \begin{enumerate}[label={\upshape(\arabic*)}]
        \item \label{thmitem:recursive} The recursive implementation takes time $(2^{2\ell+r})^{O(L)}\cdot \poly(L,2^\ell,2^r)\cdot T_\Gamma$ and space $O(L\lambda+S_\Gamma)$. 
        \item \label{thmitem:explicit-table} The explicit-table implementation takes time $O\rbra*{  2^{4\ell+r}T_\Gamma }$ and space $O(2^\ell+\lambda+S_\Gamma)$. 
    \end{enumerate}
\end{theorem}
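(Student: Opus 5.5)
The proof has three parts: soundness, completeness (the delicate case $\epsilon>0$), and the two implementations. Throughout, Lemma~\ref{lemma:SepRCD-StoqMAk-complete} lets me work with the pair $(\Gamma,I)$. The overlap for a non-negative product state $\ket{\psi_1}\otimes\ket{\psi_2}$ is then $\bra{\Omega}\Gamma\ket{\Omega}$ with $\ket{\Omega}\coloneqq\ket{\psi_1}\ket{\psi_2}\ket{\bar 0}\ket{\bar +}$.

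\emph{Soundness ($\beta=1-\gamma$).} Fix a seed $(a_0,b_0)$ and suppose rounds $0,\dots,t$ are all good. Take the product subset state $\ket{S_t}\otimes\ket{T_t}$. Its overlap equals the fraction of triples $(a,b,u)\in S_t\times T_t\times U$ with $\Gamma(a,b,0^{m_0},u)\in K(S_t,T_t)$. The no-promise bounds this fraction by $1-\gamma$. Since round $t$ is good, the remaining $\ge\gamma$ fraction consists of good transitions landing in $K(S_{t+1},T_{t+1})\setminus K(S_t,T_t)$.

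Because $\Gamma$ is injective, these images are distinct, which gives
\[
\abs{S_{t+1}}\abs{T_{t+1}}\,2^r \;\ge\; \abs{S_t}\abs{T_t}\,2^r + \gamma\abs{S_t}\abs{T_t}\,2^r,
\]
so $\abs{S_{t+1}}\abs{T_{t+1}}\ge(1+\gamma)\abs{S_t}\abs{T_t}$. After $L$ good rounds we would have $\abs{S_L}\abs{T_L}\ge(1+\gamma)^L>e\cdot 2^{2\ell}$ by the choice of $L$. This is impossible, so every seed meets a bad round and the algorithm rejects.

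\emph{Completeness.} For $\epsilon=0$, Lemma~\ref{lemma:rectangular-structure} yields a closed rectangle $S_1\times S_2$. Any seed inside it has all closure sets contained in $S_1\times S_2$, so no bad transition ever occurs. For $\epsilon>0$, write the optimal non-negative witness as $\alpha^{(1)}\otimes\alpha^{(2)}$. From $\bra{\Omega}\Gamma\ket{\Omega}\ge 1-\epsilon$ we get $\norm{\ket{\Omega}-\Gamma\ket{\Omega}}_2^2\le 2\epsilon$. Hence for every basis configuration $x$, the amplitude of $\Gamma(x)$ is at least the amplitude of $x$ minus $\sqrt{2\epsilon}$. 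In particular, a bad transition must start from a configuration of amplitude at most $\sqrt{2\epsilon}$, because its image lies outside the initialized sector and therefore has amplitude zero.

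I will seed the algorithm at $a_0=\arg\max\alpha^{(1)}$ and $b_0=\arg\max\alpha^{(2)}$. Each factor is at least $2^{-\ell/2}$, so the seed configuration has amplitude at least $2^{-\ell-r/2}$. I then prove by induction an invariant: every $a\in S_t$ has $\alpha^{(1)}_a\ge\theta_t$ and every $b\in T_t$ has $\alpha^{(2)}_b\ge\theta_t$. For a good transition $(a,b,u)\mapsto(a',b')$, the product $\alpha^{(1)}_{a'}\alpha^{(2)}_{b'}2^{-r/2}$ is at least $\theta_t^2 2^{-r/2}-\sqrt{2\epsilon}$. Dividing by $\alpha^{(2)}_{b'}\le1$ bounds $\alpha^{(1)}_{a'}$ from below, and similarly for $b'$. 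Crossed pairs $(a',b'')$ coming from different transitions are then covered automatically, since the invariant is stated per coordinate.

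The recursion is roughly $\theta_{t+1}\approx\theta_t^2/16$, which gives $\theta_t\ge 2^{-2^{t}(\ell+4)}$. The hypothesis $\epsilon\le 2^{5-r-2^{L+1}(\ell+4)}$ makes $\sqrt{2\epsilon}$ smaller than $\theta_t^2 2^{-r/2}$ for all $t\le L$. So no bad transition is ever reachable, and the algorithm accepts. Getting these constants to match the stated bound on $\epsilon$ is the step I expect to be the main obstacle. The danger is that the additive $\sqrt{2\epsilon}$ error compounds across rounds in a way that forces a stronger hypothesis than the one stated.

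\emph{Implementations.} For the recursive version, I decide ``$a\in S_t$'' and ``$b\in T_t$'' by a recursive predicate. At level $t$ it enumerates all $(a,b,u)\in A\times B\times U$ and recursively checks $a\in S_{t-1}$, $b\in T_{t-1}$ before evaluating $\Gamma$; badness of round $t$ is tested the same way. Each level stores only a counter of $O(\lambda)$ bits, so the space is $O(L\lambda+S_\Gamma)$ and the time is $(2^{2\ell+r})^{O(L)}\cdot\poly(L,2^\ell,2^r)\cdot T_\Gamma$.

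For the explicit-table version, I store $S_t$ and $T_t$ as $2^\ell$-bit indicator vectors and update them round by round. The sets only grow, so each seed changes them at most $2^{\ell+1}$ times, and one can stop as soon as a round adds nothing, since the rectangle is then closed. The delicate bookkeeping here is amortizing the work over seeds and rounds so that the total stays within $O(2^{4\ell+r}T_\Gamma)$. Space is dominated by the two bitmaps plus the counters, giving $O(2^\ell+\lambda+S_\Gamma)$.
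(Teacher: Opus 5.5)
Your soundness argument and recursive implementation are the paper's; counting whole configurations in $K(S_{t+1},T_{t+1})\setminus K(S_t,T_t)$ is equivalent to its count of new output pairs. Your completeness strategy is also the paper's: seed at a heavy pair, keep a per-coordinate threshold invariant, and use the coordinatewise bound $\sqrt{2\epsilon}$. Nothing compounds. That bound comes from the single global estimate $\|\Gamma\ket{\Omega}-\ket{\Omega}\|_2^2\le 2\epsilon$ and is reapplied from scratch in every round, so the only loss is the squaring of thresholds. However, your arithmetic does not close, because it mixes scales. Your $\theta_t$ is an amplitude threshold, but $\theta_{t+1}\approx\theta_t^2/16$ and $\theta_t\ge 2^{-2^t(\ell+4)}$ are the right numbers for squared amplitudes. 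From $\theta_t\ge 2^{-2^t(\ell+4)}$ you only get $\theta_t^2 2^{-r/2}\ge 2^{-2^{t+1}(\ell+4)-r/2}$. The hypothesis only gives $\sqrt{2\epsilon}\le 2^{3-r/2-2^L(\ell+4)}$, so your claimed inequality does not follow, already at $t=L-1$. The clean bookkeeping, which is the paper's, tracks $\tau_t\coloneqq\theta_t^2$: the invariant is $p_1(a)=(\alpha^{(1)}_a)^2\ge\tau_t$ on $S_t$, likewise on $T_t$, with $\tau_0=2^{-\ell}$. Suppose $\sqrt{2\epsilon}\le\tau_t/(2\cdot 2^{r/2})$. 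Then every configuration $(a,b,0^{m_0},u)$ with $a\in S_t$, $b\in T_t$ has amplitude at least $\tau_t2^{-r/2}>\sqrt{2\epsilon}$, so it makes no bad transition. Its image has amplitude at least $\tau_t2^{-r/2}/2$, whence $p_1(a'),p_2(b')\ge\tau_t^2/4$. Taking $\tau_{t+1}=\tau_t^2/16$ gives $\tau_t=2^{4-(\ell+4)2^t}$. The requirement for all $t\le L$ is then exactly $\epsilon\le\tau_L^2/2^{r+3}=2^{5-r-2^{L+1}(\ell+4)}$, the stated bound.

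The second gap is the explicit-table time bound: you name the difficulty (``amortizing'') but not the mechanism. Suppose every round rescans all of $S_t\times T_t\times U$. Even with early stopping, a seed may go through up to $\min\{L,2^{\ell+1}\}$ rounds, each costing up to $2^{2\ell+r}T_\Gamma$. That yields only $O(\min\{L,2^{\ell}\}\cdot 2^{4\ell+r}T_\Gamma)$ overall, not $O(2^{4\ell+r}T_\Gamma)$. The missing idea is to keep bit tables for $\Delta S_t\coloneqq S_t\setminus S_{t-1}$ and $\Delta T_t\coloneqq T_t\setminus T_{t-1}$. In round $t$ you then scan only the triples in $((\Delta S_t\times T_t)\cup(S_t\times\Delta T_t))\times U$. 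Every other pair lies in $S_{t-1}\times T_{t-1}$ and was scanned in round $t-1$. No bad transition was found there, and its images are already in the tables, so rescanning it can neither reject nor add anything. Hence each of the at most $2^{2\ell}$ pairs is scanned at most once per seed. This gives $O(2^{2\ell+r}T_\Gamma)$ per seed and $O(2^{4\ell+r}T_\Gamma)$ over all $2^{2\ell}$ seeds. Your early-stopping observation complements this: a round that adds nothing certifies a closed rectangle, so you may accept. It also removes the per-round overhead when $L$ greatly exceeds $2^{\ell+1}$, a point the paper leaves implicit.
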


\subsubsection{Complexity-theoretic consequences of \texorpdfstring{\Cref{thm:rectangular-closure-testing}}{}}
\label{subsubsec:rect-closure-testing-consequences}
Combined with \Cref{lemma:SepRCD-StoqMAk-complete}, \Cref{thm:rectangular-closure-testing} implies the complexity-theoretic consequences stated in \Cref{thm:StoqMAtwo-perfect-completeness,thm:PreciseStoqMAtwo-perfect-completeness}, upon identifying the parameter $r$ used below with the parameter $m_+$ used in those theorems.
Let $g_\Gamma(n)$ be the number of gates in $\Gamma$. 
Since all parameters $\ell(n)$, $r(n)$, $m_0(n)$, and $g_\Gamma(n)$ are polynomially bounded, it holds that 
\[ T_\Gamma = O(g_{\Gamma}\cdot\rbra{2\ell+m_0+r}) = \poly(n) \quad\text{and}\quad S_\Gamma = O(2\ell+m_0+r+\log g_\Gamma) = \poly(n). \]
Consequently, we obtain the following bounds for \StoqMAk{} with nearly perfect completeness: 
\begin{enumerate}[label={\upshape(\arabic*)}]
    \item For the recursive implementation in \Cref{thm:rectangular-closure-testing}\ref{thmitem:recursive}, the underlying algorithm runs in deterministic $\exp\rbra[\big]{\widetilde{O}\rbra[\big]{\frac{\ell}{\gamma} \max\cbra{\ell,r}}}$ time and $O\rbra[\big]{\frac{\ell}{\gamma} \max\cbra{\ell,r,m_0}}$ space. By combining with prover compression (\Cref{corr:StoqMAk-eq-StoqMAtwo-small-yesError}\ref{thmitem:prover-compression-negl-lengthPreserved}), we obtain an algorithm for \StoqMAk{} that runs in deterministic $\exp\rbra[\big]{\widetilde{O}\rbra[\big]{\frac{k\ell}{\gamma^2} \max\cbra{k\ell,r}}}$ time and $O\rbra[\big]{\frac{k\ell}{\gamma^2} \max\cbra{k\ell,r,m_0}}$ space.

    When $\gamma(n)\geq 1/\poly(n)$, the allowed completeness error $\epsilon(n)$ is inverse doubly exponential in $n$, and the space complexity is polynomial in $n$, yielding the \PSPACE{} upper bound. 

    \item For the explicit-table implementation in \Cref{thm:rectangular-closure-testing}\ref{thmitem:explicit-table}, the underlying algorithm runs in deterministic $O\rbra[\big]{2^{4\ell+r}} \cdot \poly(n) = \exp\rbra[\big]{\widetilde{O}\rbra{\max\cbra{\ell,r}}}$ time and $O(2^{\ell})$ space. By combining with prover compression (\Cref{corr:StoqMAk-eq-StoqMAtwo-small-yesError}\ref{thmitem:prover-compression-negl-lengthPreserved}), we obtain an algorithm for \StoqMAk{} that runs in deterministic $\exp\rbra[\big]{\widetilde{O}\rbra{\max\cbra{k\ell,r}}}$ time and $O(2^{k\ell})$ space.

    When $\gamma(n)\geq 1/\exp(\poly(n))$, the allowed completeness error $\epsilon(n)$ is inverse triply exponential in $n$, and the deterministic runtime is $\exp\rbra[\big]{\widetilde{O}\rbra{\max\cbra{k\ell,r}}} \leq 2^{\poly(n)}$, which leads to the \EXP{} upper bound. 
\end{enumerate}

\subsection{Analysis of \texorpdfstring{\algoref{algo:rectangle-closure-testing}}{}}
\label{subsec:rect-closure-testing-analysis}

We will analyze the correctness of \algoref{algo:rectangle-closure-testing} separately for \emph{yes} and \emph{no} instances, namely the completeness and soundness, and then put everything together. 

\paragraph{Completeness.} We begin with the perfect completeness case as a warm-up:
\begin{lemma}[Perfectly agreeing pairs imply rectangular closure]
    \label{lemma:rectangle-closure-testing-perfect-completeness}
    Let $(\Gamma,I)$ be a perfectly agreeing pair, viewed as a \emph{yes} instance of $\SepRCD_{1,\beta}(\ell,m_0,r)$. Then, there exists a starting pair whose rectangular closure never encounters a bad round. 
\end{lemma}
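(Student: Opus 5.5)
We derive the statement from \Cref{lemma:rectangular-structure} by an invariant argument on the rounds of \Cref{def:rectangular-closure}. Since $(\Gamma,I)$ is a perfectly agreeing pair, \Cref{lemma:rectangular-structure} gives non-empty sets $S_1,S_2\subseteq\binset^\ell$ such that $\Gamma\ket{K}=\ket{K}$ for $K\coloneqq K(S_1,S_2)=S_1\times S_2\times\cbra{0^{m_0}}\times U$. Its proof shows more precisely that $\Gamma(K)=K$ as sets. In the language of \Cref{def:closed-rectangle}, this says that $S_1\times S_2$ is a closed rectangle.

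Choose any starting pair $(a_0,b_0)\in S_1\times S_2$; this is possible because both sets are non-empty. We prove by induction on $t$ that every round $t$ is good and that $S_t\subseteq S_1$ and $T_t\subseteq S_2$.

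The base case $t=0$ holds by the choice of $(a_0,b_0)$.

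For the inductive step, assume $S_t\subseteq S_1$ and $T_t\subseteq S_2$. Take any $(a,b,u)\in S_t\times T_t\times U$. Then $(a,b,0^{m_0},u)\in K$, so by closedness $\Gamma(a,b,0^{m_0},u)=(a',b',z',u')\in K$. This forces two things:
\begin{itemize}
    \item $z'=0^{m_0}$, so no bad transition occurs and round $t$ is good;
    \item $(a',b')\in S_1\times S_2$, so the sets $S_{t+1}$ and $T_{t+1}$ from \Cref{def:rectangular-closure} remain inside $S_1$ and $S_2$ respectively.
\end{itemize}
This closes the induction. Hence the rectangular closure from $(a_0,b_0)$ never encounters a bad round, for any number of rounds and in particular for all $L$ rounds of \algoref{algo:rectangle-closure-testing}.

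There is no substantial obstacle here. The one point needing care is that we use the set-level statement $\Gamma(K)\subseteq K$, and not only the vector identity $\Gamma\ket{K}=\ket{K}$. The set-level statement is what \Cref{eq:perfectly-agree-part-fixedPoint} provides. If one prefers to start from the vector identity, the inclusion still follows: $\Gamma$ is a permutation of basis states, so it maps the uniform superposition over $K$ to the uniform superposition over $\Gamma(K)$, and equality of these superpositions forces $\Gamma(K)=K$.
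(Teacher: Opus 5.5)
Your proof is correct and follows the paper's argument essentially verbatim. Like the paper, you invoke \Cref{lemma:rectangular-structure} to obtain non-empty $S,T$ with $\Gamma(K(S,T))=K(S,T)$, start from any pair in $S\times T$, and induct on $t$ to show $S_t\subseteq S$ and $T_t\subseteq T$, so that no transition leaves the zero-ancilla sector. Your remark on passing from $\Gamma\ket{K}=\ket{K}$ to the set identity $\Gamma(K)=K$ matches the paper's step ``since $\Gamma$ is a permutation.''
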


\begin{proof}
By utilizing \Cref{lemma:rectangular-structure}, there exist non-empty sets $S,T\subseteq\binset^\ell$ such that $\Gamma \ket{K(S,T)}=\ket{K(S,T)}$, where the initialized sector $K(S,T)=S\times T\times \cbra{0^{m_0}}\times U$ and $U=\binset^r$. 
Since $\Gamma$ is a permutation, it follows that
\begin{equation}
    \label{eq:perfect-completeness-fixedPoint}
    \Gamma(K(S,T))=K(S,T)
\end{equation}

Next, pick any starting pair $(a_0,b_0)\in S\times T$. We will prove by induction that
\[S_t\subseteq S \;\text{and}\; T_t\subseteq T \text{ hold for all rounds } t,\] 
and hence no round is bad. It is easy to see that the induction hypothesis holds at round $t=0$, because $S_0=\cbra{a_0}$ and $T_0=\cbra{b_0}$. 

Now assume that the induction hypothesis holds at round $t$, equivalently, it holds that:
\[\text{If } a\in S_t\subseteq S, \; b\in T_t\subseteq T, \;\text{and}\; u\in U, \quad\text{then } (a,b,0^{m_0},u)\in K(S,T).\]

According to \Cref{eq:perfect-completeness-fixedPoint}, the image of $\Gamma$ lies again in $S\times T\times\{0^{m_0}\}\times U$, and thus the transition in this round is not bad. Therefore, any newly added coordinates $(a',b')$ satisfy $a'\in S$ and $b'\in T$, which establishes the induction hypothesis at round $t+1$ via
\[S_{t+1}\subseteq S \quad\text{and}\quad T_{t+1}\subseteq T. \qedhere\] 
\end{proof}

Next, we relax the perfect completeness requirement in \Cref{lemma:rectangle-closure-testing-perfect-completeness} and extend the result to the setting in which the completeness error is sufficiently small relative to the number of rounds: 

\begin{lemma}[Near perfectly agreeing pairs imply rectangular closure in bounded rounds]
    \label{lemma:rectangle-closure-testing-nearly-perfect-completeness}
    Let $(\Gamma,I)$ be an $\epsilon$-near perfectly agreeing pair, viewed as a \emph{yes} instance of $\SepRCD_{1-\epsilon,\beta}(\ell,m_0,r)$. 
    If the completeness error satisfies the upper bound
    \[ \epsilon \leq \tau^2_L/2^{r+3}, \quad\text{where } \tau_0 \coloneqq 2^{-\ell} \;\text{and}\; \tau_{t+1} \coloneqq \tau^2_t/16,\] 
    then there exists a starting pair whose rectangular closure does not encounter a bad round up to round $L$. 
\end{lemma}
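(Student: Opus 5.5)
We will run a robust version of the proof of \Cref{lemma:rectangle-closure-testing-perfect-completeness}: keep a product witness whose amplitudes stay large on the rectangle built so far, and show that no bad transition can start from that rectangle. Let $\ket{\psi_1}\otimes\ket{\psi_2}$ be non-negative with $\bra{\Omega}\Gamma\ket{\Omega}\geq 1-\epsilon$, where $\ket{\Omega}\coloneqq\ket{\psi_1}\ket{\psi_2}\ket{\bar 0}\ket{\bar +}$. As in \Cref{lemma:rectangular-structure}, the identity $\norm{\ket{\Omega}-\Gamma\ket{\Omega}}_2^2 = 2-2\bra{\Omega}\Gamma\ket{\Omega}\leq 2\epsilon$ converts the near agreement into a pointwise bound. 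Write $\omega(x)$ for the amplitude of $\ket{\Omega}$ at the basis string $x$. Since $\Gamma$ is a permutation, the displacement is $\sum_x \rbra{\omega(\Gamma(x))-\omega(x)}^2 \leq 2\epsilon$, so for every basis string $x$ we get $\abs{\omega(\Gamma(x))-\omega(x)}\leq\sqrt{2\epsilon}$.

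Now specialize to $x=(a,b,0^{m_0},u)$, where $\omega(x)=\alpha^{(1)}_a\alpha^{(2)}_b 2^{-r/2}$.

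1. \emph{No bad transition.} If $\Gamma(x)$ has $z'\neq 0^{m_0}$, then $\omega(\Gamma(x))=0$. The pointwise bound then gives $\alpha^{(1)}_a\alpha^{(2)}_b\leq 2^{r/2}\sqrt{2\epsilon}$.

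2. \emph{Good transitions keep amplitudes large.} If the transition is good with $\GammaZero(a,b,u)=(a',b')$, then $\alpha^{(1)}_{a'}\alpha^{(2)}_{b'}\geq \alpha^{(1)}_a\alpha^{(2)}_b-2^{r/2}\sqrt{2\epsilon}$.

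3. \emph{Invariant.} We will maintain, by induction on $t$, that every $a\in S_t$ has $\alpha^{(1)}_a\geq\sqrt{\tau_t}$ and every $b\in T_t$ has $\alpha^{(2)}_b\geq\sqrt{\tau_t}$. Equivalently, every pair in $S_t\times T_t$ has product weight at least $\tau_t$. The hypothesis $\epsilon\leq\tau_L^2/2^{r+3}$ gives $2^{r/2}\sqrt{2\epsilon}\leq\tau_L/2$. Since $\tau_t$ is decreasing, step 1 then shows that no round $t\leq L$ is bad.

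4. \emph{Base case.} Choose $(a_0,b_0)$ as the largest coordinates of $\psi_1$ and $\psi_2$. Each is at least $2^{-\ell/2}$, so the product is at least $2^{-\ell}=\tau_0$.

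The main obstacle is the inductive step, because the invariant concerns a whole rectangle rather than single pairs. Step 2 applied to $(a,b)\in S_t\times T_t$ only yields $\alpha^{(1)}_{a'}\alpha^{(2)}_{b'}\geq\tau_t/2$ for the specific pair $(a',b')$ produced by $\Gamma$. The closure, however, also contains the crossed pairs $(a',\tilde b)$ with $a'$ and $\tilde b$ coming from different transitions.

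To bound each new coordinate separately, we use $\alpha^{(2)}_{b'}\leq 1$ and $\alpha^{(1)}_{a'}\leq 1$. From $\alpha^{(1)}_{a'}\alpha^{(2)}_{b'}\geq\tau_t/2$ this gives $\alpha^{(1)}_{a'}\geq \tau_t/2$ and $\alpha^{(2)}_{b'}\geq\tau_t/2$. Old elements of $S_t$ and $T_t$ already satisfy the stronger bound $\sqrt{\tau_t}\geq\tau_t/2$.

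Hence every crossed pair in $S_{t+1}\times T_{t+1}$ has product weight at least $(\tau_t/2)^2=\tau_t^2/4\geq\tau_t^2/16=\tau_{t+1}$. The extra slack from $4$ to $16$ absorbs the $\tau_t/2$ loss. So we restate the invariant as the per-coordinate lower bound $\sqrt{\tau_t}$ (and need only the product-weight form $\geq\tau_t$ in step 1), and check that the recursion $\tau_{t+1}=\tau_t^2/16$ is compatible with both the per-coordinate and the product forms.

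Applying the induction for all rounds up to $L$ then shows that the closure starting from $(a_0,b_0)$ encounters no bad round up to round $L$, which is the statement.
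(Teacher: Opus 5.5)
Your proposal is correct and follows the paper's proof. Both arguments use the pointwise bound $\abs{\bra{y}\Gamma\ket{\Omega}-\innerprod{y}{\Omega}}\le\sqrt{2\epsilon}$ obtained from $\norm{\Gamma\ket{\Omega}-\ket{\Omega}}_2^2\le 2\epsilon$, the per-coordinate invariant $p_1(a)\ge\tau_t$ on $S_t$ and $p_2(b)\ge\tau_t$ on $T_t$ (your $\alpha^{(1)}_a,\alpha^{(2)}_b\ge\sqrt{\tau_t}$), and the trivial bounds $p_1,p_2\le 1$ to split $p_1(a')p_2(b')\ge\tau_t^2/4$ into per-coordinate bounds, which is exactly how the crossed pairs are handled. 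One small wording point: the per-coordinate invariant implies the product-weight form but is not equivalent to it; this does not affect the proof, since your induction uses the per-coordinate version.
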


\begin{proof}
    Let $\ket{\psi_1}\ket{\psi_2}$ be a non-negative product state whose initialized full state $\ket{\Omega}$ satisfies 
    \begin{equation}
        \label{eq:near-perfect-overlap}
        \bra{\Omega}\Gamma\ket{\Omega} \geq 1-\epsilon. 
    \end{equation}
    Let $p_1$ and $p_2$ be probability distributions induced by the states $\ket{\psi_1}$ and $\ket{\psi_2}$, respectively, defined by
    \[ \forall  x\in\binset^\ell, \quad p_1(x) \coloneqq \abs*{\innerprod{x}{\psi_1}}^2 \quad\text{and}\quad p_2(x) \coloneqq \abs*{\innerprod{x}{\psi_2}}^2. \]
    Since both distributions $p_1$ and $p_2$ are defined over $[2^{\ell}]$, there exist $a_0$ and $b_0$ such that 
    \[ p_1(a_0) \geq 2^{-\ell} = \tau_0 \quad\text{and}\quad p_2(b_0) \geq 2^{-\ell} = \tau_0. \]
    We use this $(a_0,b_0)$ as the starting pair.

    \vspace{1em}
    We first bound the global near-invariance error. Since $\Gamma$ is a permutation and $\innerprod{\Omega}{\Omega}=1$, together with \Cref{eq:near-perfect-overlap}, we obtain
    \[ \norm*{\Gamma\ket{\Omega}-\ket{\Omega}}_2^2 
    = \bra{\Omega}\Gamma^\dagger\Gamma\ket{\Omega} + \innerprod{\Omega}{\Omega} - 2\bra{\Omega}\Gamma\ket{\Omega}
    = 2 - 2\bra{\Omega}\Gamma\ket{\Omega}
    \leq 2-2(1-\epsilon) = 2\epsilon.\]
    Consequently, for every coordinate $y\in\binset^{2\ell+m_0+r}$, it follows that
    \begin{equation}
        \label{eq:near-perfect-coordinate-bound}
        \abs*{ \bra{y}\Gamma\ket{\Omega} - \innerprod{y}{\Omega}} \leq \sqrt{2\epsilon}.
    \end{equation}

    Using the assumed threshold on $\epsilon$ and the fact that $\tau_t\ge \tau_L$ for every $t\leq L$, we obtain
    \begin{equation}
        \label{eq:near-perfect-error-threshold}
        \forall t\in\cbra{0,1,\dots,L}, \quad \sqrt{2\epsilon} \leq \sqrt{\frac{\tau^2_L}{4\cdot 2^r}} = \frac{\tau_L}{2\cdot 2^{r/2}} \leq \frac{\tau_t}{2\cdot 2^{r/2}}.
    \end{equation}

    \vspace{1em}
    Next, we prove by induction that
    \[ S_t\subseteq\set{a}{p_1(a) \geq \tau_t} \;\text{and}\; T_t\subseteq\set{b}{p_2(b) \geq \tau_t} \text{ for all rounds } t\leq L, \]
    and hence no round before round $L$ is bad. The induction hypothesis holds at round $t=0$, as $S_0=\cbra{a_0}$ and $T_0=\cbra{b_0}$. 

    Now assume that the induction hypothesis holds at round $t<L$. Then, if $a\in S_t$, $b\in T_t$, and $u\in U$, and we write $x\coloneqq (a,b,0^{m_0},u)$, we have
    \begin{equation}
        \label{eq:near-perfect-hypothesis-t}
        \innerprod{x}{\Omega} = \sqrt{\frac{p_1(a)p_2(b)}{2^r}} \geq \sqrt{\frac{\tau^2_t}{2^r}} = \frac{\tau_t}{2^{r/2}}. 
    \end{equation}
    
    Let $y \coloneqq \Gamma x$, which implies $\bra{y}\Gamma\ket{\Omega} = \innerprod{x}{\Omega}$. Consequently, combining the bounds in \Cref{eq:near-perfect-coordinate-bound,eq:near-perfect-error-threshold} yields that 
    \begin{equation}
        \label{eq:near-perfect-difference-bound}
        \abs*{\innerprod{x}{\Omega}-\innerprod{y}{\Omega}} = \abs*{\bra{y}\Gamma\ket{\Omega}-\innerprod{y}{\Omega}} \leq \sqrt{2\epsilon} \leq \frac{\tau_t}{2\cdot 2^{r/2}}.
    \end{equation}

    We then show the transition in the current round is not bad by contradiction. More precisely, if $y$ were outside the initialized sector, i.e.~if $z\neq 0^{m_0}$, then $\innerprod{y}{\Omega}=0$, and hence \Cref{eq:near-perfect-hypothesis-t} would imply
    \[ \abs*{ \innerprod{x}{\Omega} - \innerprod{y}{\Omega} } = \innerprod{x}{\Omega} \geq \frac{\tau_t}{2^{r/2}}, \]
    which contradicts \Cref{eq:near-perfect-difference-bound}, as desired. 

    To establish the induction hypothesis at round $t+1$, we write $y=(a',b',0^{m_0},u')$. Then, combining \Cref{eq:near-perfect-difference-bound,eq:near-perfect-hypothesis-t} and the triangle inequality, it follows that:
    \begin{equation}
        \label{eq:near-perfect-overlapNew}
        \sqrt{\frac{p_1(a')p_2(b')}{2^r}} = \innerprod{y}{\Omega} \geq \innerprod{x}{\Omega} - \abs*{\innerprod{x}{\Omega} - \innerprod{y}{\Omega}} \geq \frac{\tau_t}{2^{r/2}} - \frac{\tau_t}{2\cdot 2^{r/2}} = \frac{\tau_t}{2\cdot 2^{r/2}}. 
    \end{equation}
    Since $p_1(a') \leq 1$ and $p_2(b')\leq 1$ hold for all $a',b'\in\binset^\ell$, squaring \Cref{eq:near-perfect-overlapNew} gives:
    \[  p_1(a') \geq \frac{\tau_t^2}{4} \geq \frac{\tau_t^2}{16} = \tau_{t+1} \quad\text{and}\quad p_2(b') \geq \frac{\tau_t^2}{4} \geq \frac{\tau_t^2}{16} = \tau_{t+1}. \]
    Therefore, we establish the induction hypothesis by concluding that all newly inserted coordinates are heavy at threshold $\tau_{t+1}$, and no bad transition occurs in this round. 
\end{proof}

\paragraph{Soundness.} We now prove the soundness part:

\begin{lemma}[Rectangle size growth for \emph{no} instances before encountering a bad round]
    \label{lemma:rectangle-closure-testing-soundness}
    Let $(\Gamma,I)$ be a \emph{no} instance of $\SepRCD_{\alpha,1-\gamma}(\ell,m_0,r)$ for some $\gamma>0$. Then, for every starting pair and every round $t$, one of the following holds: either round $t$ is bad, or the rectangle size grows by a multiplicative factor of $1+\gamma$: 
    \begin{equation}
        \label{eq:rect-size-expansion}
        \abs*{S_{t+1}\times T_{t+1}} \geq (1+\gamma) \abs*{S_t \times T_t}.
    \end{equation}
    Consequently, no starting pair can avoid encountering a bad round before round 
    \[ L \coloneqq \ceil*{\frac{2\ell\ln{2}+1}{\ln(1+\gamma)}}. \]
\end{lemma}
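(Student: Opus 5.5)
Fix a starting pair $(a_0,b_0)$ and a round $t$ that is good, and write $S\coloneqq S_t$, $T\coloneqq T_t$. The plan is to test the no-instance promise on the product subset state $\ket{\phi}\coloneqq\ket{S}\otimes\ket{T}$. Its initialized full state is $\ket{K(S,T)}$, the uniform superposition over the sector $K\coloneqq K(S,T)=S\times T\times\cbra{0^{m_0}}\times U$. Since $\Gamma$ is a permutation, we have
\[\bra{K}\Gamma\ket{K}=\abs{\Gamma(K)\cap K}/\abs{K}.\]
The soundness promise of $\SepRCD_{\alpha,1-\gamma}$ then gives $\abs{\Gamma(K)\cap K}\le(1-\gamma)\abs{K}$. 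Equivalently, at least $\gamma\abs{K}$ elements $x\in K$ have $\Gamma(x)\notin K$.

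Because round $t$ is good, every $x\in K$ maps under $\Gamma$ to a configuration with zero ancilla, namely $(a',b',0^{m_0},u')$ with $(a',b')=\GammaZero(a,b,u)$. The construction in \Cref{def:rectangular-closure} puts $a'\in S_{t+1}$ and $b'\in T_{t+1}$. So $\Gamma(K)\subseteq K(S_{t+1},T_{t+1})$, and since $\Gamma$ is injective, $\abs{\Gamma(K)}=\abs{K}$. The $\ge\gamma\abs{K}$ escaping images are distinct elements of $K(S_{t+1},T_{t+1})\setminus K(S,T)$. Moreover $K(S,T)\subseteq K(S_{t+1},T_{t+1})$, because $S\subseteq S_{t+1}$ and $T\subseteq T_{t+1}$.

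Counting gives
\[\abs{K(S_{t+1},T_{t+1})}\ge\abs{K}+\gamma\abs{K}.\]
Divide both sides by the common factor $2^r$, which comes from the $U$-coordinate (the zero-ancilla coordinate contributes only a factor of $1$). This yields $\abs{S_{t+1}\times T_{t+1}}\ge(1+\gamma)\abs{S_t\times T_t}$, which is \Cref{eq:rect-size-expansion}.

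For the consequence, suppose rounds $0,\dots,L-1$ are all good. Iterating from $\abs{S_0\times T_0}=1$ gives $\abs{S_L\times T_L}\ge(1+\gamma)^L$. But the rectangle lives in $A\times B$, so its size is at most $2^{2\ell}$. With $L=\ceil{(2\ell\ln 2+1)/\ln(1+\gamma)}$ we get $(1+\gamma)^L\ge e\cdot 2^{2\ell}>2^{2\ell}$, a contradiction. Hence some round before $L$ must be bad.

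The only delicate step is justifying that the promise applies to $\ket{S}\otimes\ket{T}$. This needs $\ket{S}\otimes\ket{T}$ to be a legitimate non-negative product witness, which it is: $S,T$ are non-empty (they contain $a_0,b_0$) and the amplitudes are positive. It also needs the reduction of \Cref{lemma:SepRCD-StoqMAk-complete}, under which the overlap $\innerprod{R_0(\phi)}{R_1(\phi)}$ equals $\bra{K}\Gamma\ket{K}$ for the pair $(\Gamma,I)$. I expect this identification, together with the exact fixed-point counting of sector elements, to be the main point needing care. Everything else is elementary counting.
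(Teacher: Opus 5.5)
Your proposal is correct and follows the paper's own argument. You test the soundness promise on the product subset state $\ket{S_t}\otimes\ket{T_t}$ and get at least $\gamma\abs{K(S_t,T_t)}$ escaping configurations. You then use injectivity of $\Gamma$ and goodness of round $t$ to turn these into new elements of the next rectangle, and iterate against the $2^{2\ell}$ bound. The one difference is bookkeeping: you count the escaping images directly inside $K(S_{t+1},T_{t+1})$ and divide by $2^r$, whereas the paper projects them to distinct output pairs $(a',b')$ and divides by $\abs{U}$. Your version is, if anything, slightly cleaner than the paper's phrasing, and the two are equivalent.
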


\begin{proof}
Fix a starting pair and a round $t$. 
Assume that round $t$ is not bad. It therefore suffices to prove that \Cref{eq:rect-size-expansion}. To do so, we show that the rectangle size grows exponentially even when one restricts attention to product subset state witnesses. 

Consider a valid product subset witness state $\ket{S_t}\ket{T_t}$, whose initialized full state is uniform on $K(S_t,T_t) = S_t\times T_t\times \cbra{0^{m_0}}\times U$. A direct calculation shows that
\begin{subequations}
\label{eq:rect-size-growth-SepRCDval}
\begin{align}
    &\bra{K(S_t,T_t)} \Gamma \ket{K(S_t,T_t)}\\
    =~& \frac{1}{|K(S_t,T_t)|} \abs*{\set{y\in K(S_t,T_t)}{\Gamma y \in K(S_t,T_t)}}\\
    =~& \frac{1}{ 2^r|S_t\times T_t| } \abs*{\set{(a,b,u)\in S_t\times T_t\times U}{\GammaZero(a,b,u) \in S_t\times T_t}}. 
\end{align}
\end{subequations}
By combining the soundness promise $\bra{K(S_t,T_t)} \Gamma \ket{K(S_t,T_t)} \leq 1-\gamma$ with \Cref{eq:rect-size-growth-SepRCDval}, we obtain a lower bound on the number of triples escaping from $S_t\times T_t\times U$: 
\begin{equation}
    \label{eq:rect-size-escaping-bound}
    \mathrm{Esc}(S_t,T_t,U)\coloneqq \abs*{\set{(a,b,u)\in S_t\times T_t\times U}{\GammaZero(a,b,u) \notin S_t\times T_t}} \geq \gamma 2^r \abs*{S_t\times T_t}. 
\end{equation}

Because $\Gamma$ is injective, these leaving triples $(a,b,u)$ have distinct outputs $(a',b')=\GammaZero(a,b,u)$. For each output pair $(a',b')$, there are at most $|U|=2^r$ possible choices of output random strings $u'$.
Consequently, the number of distinct new output pairs outside $S_t\times T_t$ is at least $\mathrm{Esc}(S_t,T_t,U) / |U|$. Since all these new output pairs  are included in $S_{t+1}\times T_{t+1}$, it follows that
\begin{align*}
    \abs*{S_{t+1}\times T_{t+1}} &\geq \abs*{S_{t}\times T_{t}} + \frac{\mathrm{Esc}(S_t,T_t,U)}{|U|}\\
    &\geq \abs*{S_{t}\times T_{t}} + \frac{\gamma 2^r \abs*{S_t\times T_t}}{2^r} = (1+\gamma) \abs*{S_t\times T_t}.
\end{align*}
Here, the second line uses \Cref{eq:rect-size-escaping-bound}, thereby establishing \Cref{eq:rect-size-expansion}. 

If a starting pair were to reach round $L$ without encountering a bad round, then combining \Cref{eq:rect-size-expansion} with $\abs*{S_0\times T_0} = 1$ would yield
\begin{equation}
    \label{eq:rect-size-contradiction}
    \abs*{S_L \times T_L} \geq (1+\gamma)^L \abs*{S_0\times T_0}
    = (1+\gamma)^L > 2^{2\ell}.
\end{equation}
Here, the last inequality follows from the definition of $L$. Since $S_L\times T_L \subseteq \binset^\ell \times \binset^\ell$, \Cref{eq:rect-size-contradiction} leads to a contradiction. Therefore, no starting pair can reach round $L$ without encountering a bad round, as desired. 
\end{proof}

\paragraph{Putting everything together.} Finally, we show the correctness of \algoref{algo:rectangle-closure-testing}:

\begin{proof}[Proof of \Cref{thm:rectangular-closure-testing}.]
    We first prove correctness separately for \emph{yes} and \emph{no} instances:
    \begin{itemize}
        \item For \emph{yes} instances, $(\Gamma,I)$ is an $\epsilon$-near perfectly agreeing pair, so there exists a non-negative product witness state whose initialized full state $\ket{\Omega}$ satisfies $\bra{\Omega}\Gamma\ket{\Omega} \geq 1-\epsilon$. By \Cref{lemma:rectangle-closure-testing-nearly-perfect-completeness}, the error bound $\epsilon \leq \tau^2_L/2^{r+3}$ implies that the rectangular closure starting from some starting pair reaches round $L$ without encountering a bad round. Therefore, \algoref{algo:rectangle-closure-testing} accepts in this case. 
        \item For \emph{no} instances, because $(\Gamma,I)$ is a \emph{no} instance, it follows that $\bra{\Omega_\psi}\Gamma\ket{\Omega_\psi} \leq 1-\gamma$ for all non-negative product witness states $\ket{\psi}$. Using \Cref{lemma:rectangle-closure-testing-soundness}, no starting pair can reach round $L$ without encountering a bad round. Hence, \algoref{algo:rectangle-closure-testing} rejects. 
    \end{itemize}
    
    Next, we analyze two deterministic implementations of \algoref{algo:rectangle-closure-testing}: 
    
    \parheadingItem{Recursive implementation.} For every fixed starting pair $(a_0,b_0)$, define indicator predicates
    \[ \forall t\in\cbra{0,1,\cdots,L}, \qquad \chi_{S_t}(a)\coloneqq \bbI[a\in S_t] \quad\text{and}\quad \chi_{T_t}(b)\coloneqq \bbI[b\in T_t]. \]
    Initially, it holds that $\chi_{S_0}(a)=\bbI[a=a_0]$ and $\chi_{T_0}(b)=\bbI[b=b_0]$. 

    For each round $t< L$, we first test whether round $t$ is bad by evaluating the predicate
    \[ \mathrm{Bad}_t \coloneqq \exists a,b,u \ \bbI[\chi_{S_t}(a)\wedge \chi_{T_t}(b)\wedge \Gamma(a,b,0^{m_0},u)\notin A\times B\times\{0^{m_0}\}\times U]. \]

    If $\mathrm{Bad}_t$ is false, then the next-round predicates are simultaneously given by
    \begin{align*}
        \chi_{S_{t+1}}(a') &= \chi_{S_t}(a') \ \vee \ \exists a,b,b',u \ \bbI\sbra*{\chi_{S_t}(a)\wedge \chi_{T_t}(b)\wedge \GammaZero(a,b,u)=(a',b')},\\
        \chi_{T_{t+1}}(b') &= \chi_{T_t}(b') \ \vee \ \exists a,a',b,u \ \bbI\sbra*{\chi_{S_t}(a)\wedge \chi_{T_t}(b)\wedge \GammaZero(a,b,u)=(a',b')}.
    \end{align*}

    A depth-first deterministic evaluation stores only one recursion path of length at most $L$, together with a constant number of strings of total length $O(\lambda)$ per level. Hence, the space is $O(L\lambda+S_\Gamma)$. 
    There are $2^{2\ell}$ starting pairs, and at each level the deterministic search enumerates at most $2^{2\ell+r}$ triples. Thus, the running time is bounded by 
    \[(2^{2\ell+r})^{O(L)}\cdot \poly(L,2^\ell,2^r)\cdot T_\Gamma.\] 

    \parheadingItem{Explicit-table implementation.} For each starting pair, \algoref{algo:rectangle-closure-testing} stores \(S_t\) and \(T_t\), together with the newly added sets $\Delta S_t\coloneqq S_t\setminus S_{t-1}$ and $\Delta T_t\coloneqq T_t\setminus T_{t-1}$, as bit tables of length $2^\ell$, with the convention $S_{-1}=T_{-1}=\emptyset$.  This implementation is still \emph{deterministic}. 
    During round $t$, instead of scanning all triples in $S_t\times T_t\times U$, it suffices to scan, without duplication, only those triples in 
    \[ \rbra[\big]{ \rbra*{ \Delta S_t\times T_t } \cup \rbra*{ S_t\times \Delta T_t } } \times U.\] 
    Indeed, every pair outside $\rbra*{ \Delta S_t\times T_t } \cup \rbra*{ S_t\times \Delta T_t }$ already belonged to $S_{t-1}\times T_{t-1}$, and hence had already been scanned in an earlier round. If that earlier scan had found a bad transition, the algorithm would already have rejected; otherwise, the corresponding output coordinates had already been inserted into the current tables, so \emph{rescanning such a pair cannot create any new element}. Hence, for a fixed starting pair, each active pair in $A\times B$ is scanned \emph{at most once}.

    For one starting pair, there are at most $2^{2\ell}$ active pairs, and scanning one active pair requires enumerating all $u\in U$ and evaluating $\Gamma$, which costs $O(2^r T_\Gamma)$. Consequently, one starting pair takes time $O(2^{2\ell+r}T_\Gamma)$. Since there are $\abs{A\times B}=2^{2\ell}$ starting pairs, the total time is $O(2^{4\ell+r}T_\Gamma)$. The space consists of a constant number of bit tables for $S_t$, $T_t$, $\Delta S_t$, and $\Delta T_t$, the current basis strings and counters, and the workspace for evaluating $\Gamma$, namely $O(2^\ell+\lambda+S_\Gamma)$. 

    \vspace{1em}
    Finally, by the inequality $\gamma/2 \leq \ln(1+\gamma) \leq \gamma$ for $\gamma \in (0,1)$, we simplify the round bound to 
    \[ L = \ceil*{\frac{2\ell\ln{2}+1}{\ln(1+\gamma)}} = O\rbra*{\frac{\ell}{\gamma}}. \qedhere\]
\end{proof}

%%%%%%%%%%%%%%%%%%%%%%%%%%%%%%%%%%%%%%%%%%%%%%%%%%%%%%%%%%%
%%%%%%%%%%%%%%%%%%%%%%%%%%%%%%%%%%%%%%%%%%%%%%%%%%%%%%%%%%%
%%%%%%%%%%%%%%%%%%%%%%%%%%%%%%%%%%%%%%%%%%%%%%%%%%%%%%%%%%%

\section{A deterministic exponential-time upper bound for \StoqMAk{}}
\label{sec:sos}
In this section, we record an upper bound for \StoqMAk{}, which follows from the Sum-of-Squares algorithm of Barak--Kelner--Steurer~\cite{BKS14}. 

\begin{restatable}[{\cite[Theorem 3.1]{BKS14}} with improved $t$-dependence]{theorem}{BKSsSoS}
\label{thm:bks-sos}
For any entrywise non-negative, real-symmetric matrix $M$ acting on $(\mathbb C^d)^{\otimes t}$ with $\norm{M}\le 1$, there is a degree-$O(t^2 \log d/\epsilon^2)$ Sum-of-Squares algorithm that approximates
\begin{equation*}
  \max_{x\in \bbR^d:\|x\|=1}  \Tr M (x x^T)^{\otimes t}
\end{equation*}
to within additive error $\epsilon$.   
\end{restatable}

For completeness, we describe the BKS algorithm in the appendix. In addition, we slightly sharpen BKS's analysis, improving the dependence on the tensor order $t$ from $O(t^3 \log{d}/\epsilon^2)$, as stated in~\cite[Theorem 3.1]{BKS14}, to the \emph{optimal} dependence $O(t^2 \log d/\epsilon^2)$ under ETH.\footnotemark
\footnotetext{The symmetry of the tensors is not important for this algorithm. The same algorithm works for asymmetric tensors. But for our analysis to obtain $O(t^2)$ dependence, symmetry is crucial.}

Since the matrix $M$ arising from a stoquastic verification circuit is entrywise non-negative and symmetric, we have the following upper bound:
\begin{theorem}[$\StoqMAk\subseteq \EXP$]
\label{thm:StoqMAk-in-EXP}
For all efficiently computable functions $c(n)$ and $\Delta(n)$ with $1/2<c(n)\leq 1$ and $c(n)-\Delta(n) \geq 1/2$, and for every efficiently computable positive integer-valued function $k(n)$ and $\ell(n)$, the following inclusions hold:
    \begin{align}
        \SymStoqMA_\ell\ssbra*{k,c,\Delta}
        \subseteq\DTIME\sbra[\Big]{\exp\Bigl(O\Bigl(\frac{k^2\ell^2}{\Delta^2}\Bigr)\Bigr)};
        \label{eq:SymStoqMAk-DTIME-upper}\\ 
        \StoqMA_\ell\ssbra*{k,c,\Delta} \subseteq \DTIME\sbra[\Big]{\exp\Bigl(\widetilde O\Bigl(\frac{k^2\ell^2}{\Delta^2}\Bigr)\Bigr)}. \label{eq:StoqMAk-DTIME-upper}
    \end{align} 
In particular, for every efficiently computable integer-valued function $k$ that is polynomially bounded in $n$, the following inclusion holds: 
    \[
        \StoqMAk \subseteq \EXP. 
    \]
\end{theorem}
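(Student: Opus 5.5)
We reduce both inclusions to \Cref{thm:bks-sos}. For \Cref{eq:SymStoqMAk-DTIME-upper}, fix an input $x$ with \SymStoqMAk{} verifier $V_x$ on $k$ registers of $\ell$ qubits each, and set $d=2^\ell$, $t=k$. Let
\[M_x \coloneqq \bra{\bar{0}}\bra{\bar{+}} V_x^\dagger \ketbra{+}{+}_{\sfO} V_x \ket{\bar{0}}\ket{\bar{+}},\]
so the acceptance probability on $\ket{\psi}^{\otimes k}$ is $\Tr M_x(\psi\psi^T)^{\otimes k}$. Since $\ketbra{+}{+}=(I+X)/2$ and $V_x$ is a permutation, $M_x=(I+\bra{\bar0}\bra{\bar+}V_x^\dagger X_\sfO V_x\ket{\bar0}\ket{\bar+})/2$ is real symmetric and entrywise non-negative, with $0\preceq M_x\preceq I$, hence $\norm{M_x}\le1$. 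By \Cref{prop:opt-stoqSepVal-nonNeg} (applied to the symmetric product), the maximum over real unit $x$ coincides with the maximum over non-negative $\ket\psi$, which is exactly the \SymStoqMAk{} value. So the instance decides $\max_{\|x\|=1}\Tr M_x(xx^T)^{\otimes k}\ge c$ versus $\le c-\Delta$.

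Next I would run the BKS SoS with $\epsilon=\Delta/3$, giving degree $D=O(k^2\ell/\Delta^2)$. The SDP has $\binom{d+D}{D}\le d^{O(D)}=2^{O(\ell D)}=\exp(O(k^2\ell^2/\Delta^2))$ variables. It can be solved by the ellipsoid method to accuracy $\Delta/10$ in time polynomial in its size. Entries of $M_x$ are dyadic rationals computable by enumerating the $2^{m_+}$ branches, and $m_+\le\poly(n)$, which is absorbed into the polynomial-in-size factor (cf.\ \Cref{remark:m+-dependence-in-BKS}). We then threshold the estimate at $c-\Delta/2$. The point needing care is that BKS certifies a rounded value within $\epsilon$ of the true optimum; for a decision procedure we only need that the SoS value is an upper bound that exceeds the optimum by at most $\epsilon$, which is what \Cref{thm:bks-sos} provides. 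We also need to check that the SDP numerics do not need more than $\poly$ bits, which holds because $\norm{M_x}\le1$ and the SDP is over bounded pseudo-moments.

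For \Cref{eq:StoqMAk-DTIME-upper}, apply \Cref{thm:StoqMAk-eq-SymStoqMAk}\ref{thmitem:StoqMA-in-SymStoqMA-ssbra}. This maps $\StoqMA_\ell\ssbra{k,c,\Delta}$ into $\SymStoqMA_{m}\ssbra{k,c-\Delta/2,\Delta/2}$ with $m=O(\ell\log k+\log^2k)$, and the map is computable in polynomial time. Plugging into the first bound gives $\exp(O(k^2m^2/\Delta^2))=\exp(\widetilde O(k^2\ell^2/\Delta^2))$. The polylogarithmic factor from $\log k$ is why $\widetilde O$ appears. One must also check that the resulting completeness $c-\Delta/2$ still satisfies the hypothesis (soundness at least $1/2$); this follows since the soundness is unchanged at $c-\Delta$.

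Finally, for polynomially bounded $k$, polynomial $\ell$, and $\Delta\ge1/\poly(n)$, the exponent is polynomial, so $\StoqMAk\subseteq\EXP$. The main obstacle is not the reduction but \Cref{thm:bks-sos} itself, specifically the improved $t^2$ entropy-potential analysis, which is assumed from the appendix. Within this proof, the delicate point is justifying that the symmetric non-negative optimum equals the real-sphere optimum BKS targets; this is handled by the absolute-value argument of \Cref{prop:opt-stoqSepVal-nonNeg}.
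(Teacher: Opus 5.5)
Your proposal follows the paper's argument: apply \Cref{thm:bks-sos} with $t=k$, $d=2^\ell$, $\epsilon=\Delta/3$ to the entrywise non-negative symmetric acceptance operator, using the absolute-value argument to move between the real sphere and non-negative states. You then reduce the general class to the symmetric one via the length-efficient symmetrization of \Cref{thm:StoqMAk-in-SymStoqMAk}, which only costs polylogarithmic factors in the exponent.

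One correction: the $2^{m_+}$ cost of computing the entries of $M_x$ is \emph{not} absorbed into the SDP size. Otherwise, with $k=\ell=1$ and constant $\Delta$, you would derandomize \BPP{}, which embeds into \StoqMA{} with a trivial witness. As \Cref{remark:m+-dependence-in-BKS} says, it is a separate $2^{O(m_+)}$ preprocessing factor. This is harmless for the \EXP{} conclusion, but the displayed $\DTIME$ bounds need that factor unless efficient coefficient access is assumed.
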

Here, \cref{eq:SymStoqMAk-DTIME-upper} is an application of \cref{thm:bks-sos} with $t=k, d=2^\ell$, and $\epsilon=\Delta/3$, which gives rise to a $\exp(O(k^2\ell^2/\Delta^2))$-size SDP; \cref{eq:StoqMAk-DTIME-upper} is a straightforward corollary of \cref{eq:SymStoqMAk-DTIME-upper} and \cref{thm:StoqMAk-in-SymStoqMAk}. % 
Notably, the deterministic time bounds in \Cref{eq:SymStoqMAk-DTIME-upper,eq:StoqMAk-DTIME-upper} have an implicit \emph{exponential dependence} on the number of $\ket{+}$ ancillary qubits, arising from transforming a stoquastic verification circuit $V_x$ into the matrix $M_\Gamma$:
\begin{remark}[Implicit exponential dependence on the number of $\ket{+}$ ancillary qubits]
    \label{remark:m+-dependence-in-BKS}
    Let $m_0$ and $m_+$ be the numbers of $\ket{0}$ and $\ket{+}$ ancillary qubits. After contracting the initialized ancillary registers, the Sum-of-Squares algorithm in \Cref{thm:bks-sos} is applied to the matrix 
    \[ M_\Gamma = \bra{0}^{\otimes m_0}\bra{+}^{\otimes m_+} \Gamma \ket{0}^{\otimes m_0}\ket{+}^{\otimes m_+} \in \calD_+(2^{k\ell}), \quad\text{where } \Gamma \coloneqq V_x^\dagger X_{\sf O}V_x. \] 
    Consequently, the resulting tensor order and local dimension are $k$ and $2^\ell$, respectively, so the degree and size of the resulting SDP have no explicit dependence on $m_+$. Nevertheless, in the succinct-circuit representation, the coefficients of $M_\Gamma$ satisfy
    \[
        \bra{b}M_\Gamma\ket{a} =
        2^{-m_+} \abs*{ \set{u \in \binset^{m_+}}{\Gamma(a,0^{m_0},u) \in \cbra{b}\times\cbra{0^{m_0}}\times\binset^{m_+}} }.
    \]
    Therefore, the $m_+$-independence applies to the SDP-size bound once $M_\Gamma$ is available, whereas deterministically constructing its coefficients directly from $V_x$ incurs a $2^{O(m_+)}$ preprocessing cost unless an efficient coefficient-access procedure is assumed.
\end{remark}

\paragraph{ETH-based optimality.}
A remarkable consequence is the essential optimality under ETH of the above upper bound result, and both lower bounds~\cref{thm:power-stoqma-np,thm:power-stoqmak-np} obtained in~\cref{sec:stoqma-power}.
Since Dinur's PCP (\cref{thm:dinur-gapcg-form}) can be reduced from $\SAT$ instances in near-linear time, the two $\StoqMA$ protocols for Dinur's PCP (\cref{thm:power-stoqma-np,thm:power-stoqmak-np}) together with \cref{eq:SymStoqMAk-DTIME-upper,eq:StoqMAk-DTIME-upper} both imply a $\exp(\widetilde{O}(n))$ algorithm for $\SAT$. 

\begin{restatable}[ETH-based optimality of~\cref{thm:StoqMAk-in-EXP,thm:power-stoqma-np,thm:bks-sos}]{corollary}{ETHopt}
    Under ETH, up to lower-order factors, the following statements hold: 
    \begin{enumerate}[label=\upshape(\arabic*)]
        \item\label{thmitem:eth-opt-t} \cref{eq:SymStoqMAk-DTIME-upper} in~\cref{thm:StoqMAk-in-EXP} is tight with respect to the number of provers $k$ in view of~\cref{thm:power-stoqmak-np}; consequently, the BKS algorithm in~\cref{thm:bks-sos} is tight with respect to the order of tensor $t$.
        \item\label{thmitem:eht-opt-l} \cref{eq:SymStoqMAk-DTIME-upper} in~\cref{thm:StoqMAk-in-EXP} is tight with respect to the proof length $\ell$ in view of~\cref{thm:power-stoqma-np}; consequently, the BKS algorithm in~\cref{thm:bks-sos} is tight with respect to the log dimension $\log d$.
    \end{enumerate}
    The same holds for \cref{eq:StoqMAk-DTIME-upper}. 
\end{restatable}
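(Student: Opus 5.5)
The plan is to argue by contradiction from fine-grained reductions: assume an algorithm beating the stated bounds in one parameter and derive a $2^{o(n)}$-time algorithm for $\SAT$ (in fact for 3-$\SAT$ on $n$ variables, which by sparsification has $m=O(n)$ clauses). The common first step is to apply \Cref{thm:dinur-gapcg-form} to a 3-$\SAT$ instance of size $m=O(n)$, producing a $(1,\eta)$-$\GapCG_{d,Q}$ instance with $N=|V|=\widetilde O(n)$ vertices in $\widetilde O(n)$ time. Everything then reduces to bookkeeping of how the parameters $(k,\ell,\Delta)$ of the resulting protocol plug into \cref{eq:SymStoqMAk-DTIME-upper,eq:StoqMAk-DTIME-upper}, whose running time is $\exp(\widetilde O(k^2\ell^2/\Delta^2))$. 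The SDP size bound in \Cref{thm:bks-sos} is stated in terms of $M_\Gamma$; for the protocols at hand the $\ket{+}$ ancillas are few (only those used by the branch-overlap test and the consistency circuits, $m_+=\polylog(n)$), so by \Cref{remark:m+-dependence-in-BKS} the preprocessing cost is $2^{O(m_+)}$, which is negligible.

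For \ref{thmitem:eth-opt-t}, I would use \Cref{thm:power-SymStoqMAk-np} directly. Here $k=K=\Theta(\sqrt N)$, $\ell=O(\log N)$ and $\Delta=\Omega(1)$, so $k^2\ell^2/\Delta^2=\widetilde O(n)$ and the bound is exactly at the ETH threshold. If the dependence in \cref{eq:SymStoqMAk-DTIME-upper} on $k$ could be improved to $\exp(k^{2-\Omega(1)}\poly(\ell/\Delta))$, then $\SAT$ would be decided in $\exp(n^{1-\Omega(1)}\polylog n)=2^{o(n)}$ time, contradicting ETH. Translated to \Cref{thm:bks-sos}, the parameter identifications are $t=k$, $d=2^\ell=\poly(N)$ and $\epsilon=\Theta(\Delta)$. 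Any SoS (or other) algorithm of degree $t^{2-\Omega(1)}\polylog d$ therefore yields an SDP of size $d^{O(\deg)}=\exp(t^{2-\Omega(1)}\polylog d)$, which is subexponential. The entrywise non-negativity and symmetry hypotheses required there hold because the \SymStoqMA{} verifier induces $M_\Gamma$ of this form.

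For \ref{thmitem:eht-opt-l}, I would start from \Cref{thm:power-stoqma-np}, with $k=2$, $\ell=\widetilde O(\sqrt N)$ and constant gap. This gives $k^2\ell^2/\Delta^2=\widetilde O(n)$ for \cref{eq:StoqMAk-DTIME-upper}, so improving the $\ell$-dependence to $\ell^{2-\Omega(1)}$ contradicts ETH. To make the conclusion apply to \cref{eq:SymStoqMAk-DTIME-upper} and to $\log d$ in \Cref{thm:bks-sos}, I would pass to the symmetric form via \Cref{thm:StoqMAk-in-SymStoqMAk} with $k=2$. This blows up the length only by constant factors, $m=O(\ell)$, and keeps the gap $\Omega(1)$, so $t=2$ and $\log d=\widetilde O(\sqrt n)$. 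A degree of $(\log d)^{2-\Omega(1)}$ would then yield time $\exp((\log d)^{2-\Omega(1)}\cdot\log d)$.

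I expect the main obstacle to be exactly this last point. The SDP size $d^{O(\deg)}$ contributes an extra $\log d$ factor, so degree $O(t^2\log d)$ gives time $\exp(O(t^2\log^2 d))$, and "tight in $\log d$" must be stated for the total running time exponent $(t\log d)^2$ rather than for the degree alone. I would therefore phrase \ref{thmitem:eht-opt-l} as: any algorithm running in time $\exp((\log d)^{2-\Omega(1)})$ for constant $t$ refutes ETH. That is what the reduction gives. I would also check that the $\widetilde O$ and $\polylog$ losses (the $\log(1/\epsilon)$ factor from \Cref{remark:stoq-uniformity-test-loss}, and the $\widetilde O(m)$ blowup in Dinur's reduction) only cost polylogarithmic factors, which is why the statement says ``up to lower-order factors''. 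The final claim that the same holds for \cref{eq:StoqMAk-DTIME-upper} follows because \Cref{thm:StoqMAk-in-SymStoqMAk} relates the two bounds with only $O(\log k)$ overhead.
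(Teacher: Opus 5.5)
Your overall route matches the paper's. You reduce (sparsified) 3-\SAT{} through \Cref{thm:dinur-gapcg-form} to the two short-proof protocols, plug $(k,\ell,\Delta)$ into the $\exp(\widetilde O(k^2\ell^2/\Delta^2))$ bound to see that it lands at $\exp(\widetilde O(n))$, and conclude that any polynomial improvement in $k$ or $\ell$ yields a $2^{o(n)}$ algorithm. Your reading of ``tight in $\log d$'' as a statement about the running-time exponent also agrees with how the paper phrases it.

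The one step that fails as written is your justification of the final claim (``the same holds for \cref{eq:StoqMAk-DTIME-upper}'') in part (1). \Cref{thm:StoqMAk-in-SymStoqMAk} maps \StoqMAk{} into \SymStoqMAk{}. It can therefore only transfer hardness \emph{into} the symmetric class, which is how you correctly use it in part (2). It cannot show that \cref{eq:StoqMAk-DTIME-upper} is tight in $k$.

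For that you need a family of hard \StoqMAk{} instances with $k=\Theta(\sqrt n)$, $\ell=O(\log n)$ and constant gap. You get these by applying the reverse simulation \Cref{thm:sym-to-stoq-k} (same $k$ and $\ell$, gap $\Delta^2/16$) to \Cref{thm:power-SymStoqMAk-np}. This is exactly \Cref{thm:power-stoqmak-np}, the result the statement cites for part (1), and it is why the paper's proof invokes both simulations. With that substitution, the argument for \cref{eq:StoqMAk-DTIME-upper} is identical to the one you give for \cref{eq:SymStoqMAk-DTIME-upper}.

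A smaller inaccuracy concerns your claim that $m_+=\polylog(n)$. It holds for the protocol of \Cref{thm:power-SymStoqMAk-np}, where $m_+=O(1)$. It fails for the two-prover protocol of \Cref{thm:power-stoqma-np}. That protocol inherits the $(\lceil\log(K!)\rceil+b)$-qubit branch register of \Cref{protocol:stoq-symmetric-space-projector-dyadic} and the $K$-qubit branch register of \Cref{protocol:stoq-product-test}, with $K=\widetilde O(\sqrt n)$, so $m_+=\widetilde O(\sqrt n)$. The cost of building $M_\Gamma$ is then $2^{\widetilde O(\sqrt n)}$, which is still $2^{o(n)}$, so your conclusion is unaffected.
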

\begin{proof}
By Dinur's PCP~\cref{thm:dinur-gapcg-form} and our protocols~\cref{thm:power-stoqma-np,thm:power-stoqmak-np}, we have for some $\epsilon >0$ and $\Delta>0$,
\begin{align}
    \SAT&\in 
        \SymStoqMA_{O(\log n)} \ssbra*{O(\sqrt n), 1-\epsilon,\frac{1}{2}-2\epsilon};
    \label{eq:eth-opt-k}
    \\
    \SAT &\in 
    \StoqMA_{\widetilde{O}(\sqrt n)}\ssbra*{2,1-\epsilon,\frac12-2\epsilon}
    \subseteq \SymStoqMA_{\widetilde{O}(\sqrt n)} \ssbra*{2,1-\epsilon,\Delta}.
    \label{eq:eth-opt-l}
\end{align}

We first prove \cref{thmitem:eth-opt-t}. Applying \cref{thm:StoqMAk-in-EXP} to \cref{eq:eth-opt-k} with $k= O(\sqrt n)$ and $\ell=O(\log n)$, we obtain $\SAT\in\DTIME(\exp(\widetilde O(n)))$, matching the ETH barrier up to lower-order factors. Therefore for any constant $\eta>0$, an improvement of the $k^2$ dependence in \cref{eq:SymStoqMAk-DTIME-upper} to $k^{2-\eta}$, or an improvement in~\cref{thm:power-stoqmak-np} with $n^{1/2-\eta}$ provers, $O(\log n)$-length proofs and a constant gap would imply a subexponential-time algorithm for $\SAT$.

Now we show \cref{thmitem:eht-opt-l}. Similarly, applying \cref{thm:StoqMAk-in-EXP} to \cref{eq:eth-opt-l} with $k=2$ and $\ell=\widetilde O(\sqrt n)$ also saturates ETH. Therefore, for any $\eta>0$, an improvement in \cref{eq:SymStoqMAk-DTIME-upper} with order $\exp(\ell^{2-\eta})$, or an improvement in~\cref{thm:power-stoqma-np} with two provers, sending proofs of length $n^{1/2-\eta}$ and a constant gap would also imply a subexponential-time algorithm for $\SAT$. 

A completely analogous argument works for~\cref{eq:StoqMAk-DTIME-upper}, since the simulations between $\StoqMAk$ and $\SymStoqMAk$ (\Cref{thm:StoqMAk-in-SymStoqMAk,thm:sym-to-stoq-k}) are length-efficient, prover-preserving, and constant-gap-preserving.
\end{proof}

%%%%%%%%%%%%%%%%%%%%%%%%%%%%%%%%%%%%%%%%%%%%%%%%%%%%%%%%%%%
%%%%%%%%%%%%%%%%%%%%%%%%%%%%%%%%%%%%%%%%%%%%%%%%%%%%%%%%%%%
%%%%%%%%%%%%%%%%%%%%%%%%%%%%%%%%%%%%%%%%%%%%%%%%%%%%%%%%%%%

\section*{Acknowledgments}
The authors are grateful to Thomas Vidick for insightful discussions and helpful comments on an earlier version of our manuscript, which helped improve its readability and led to Question \ref{open:QCMA-lower-bound}. The authors also thank Aviv Taller for asking about the computational power of \StoqMAk{} with logarithmic-size witnesses, which eventually led to \Cref{sec:StoqMAk-log-upper-bound}.
YL was supported by funding from the Swiss State Secretariat for Education, Research and Innovation (SERI). 
ChatGPT was used interactively to explore possible approaches to the research problems addressed in the main technical sections, identify relevant references, and proofread the manuscript, while all writing, including mathematical statements and reasoning, was accomplished by the authors.

%%%%%%%%%%%%%%%%%%%%%%%%%%%%%%%%%%%%%%%%%%%%%%%%%%%%%%%%%%%
%%%%%%%%%%%%%%%%%%%%%%%%%%%%%%%%%%%%%%%%%%%%%%%%%%%%%%%%%%%
%%%%%%%%%%%%%%%%%%%%%%%%%%%%%%%%%%%%%%%%%%%%%%%%%%%%%%%%%%%

% Reference
\bibliographystyle{alphaurlQ}
\bibliography{StoqMA(2)}
\appendix
% Fix appendices label
\crefalias{section}{appendix}
\crefalias{subsection}{appendix}
\crefalias{subsubsection}{appendix}

%%%%%%%%%%%%%%%%%%%%%%%%%%%%%%%%%%%%%%%%%%%%%%%%%%%%%%%%%%%
%%%%%%%%%%%%%%%%%%%%%%%%%%%%%%%%%%%%%%%%%%%%%%%%%%%%%%%%%%%
%%%%%%%%%%%%%%%%%%%%%%%%%%%%%%%%%%%%%%%%%%%%%%%%%%%%%%%%%%%

\section{\texorpdfstring{$\PSPACE \subseteq \PreciseStoqMA_1$}{}}
\label{sec:cleanCC-in-PreciseStoqMA}

In this section, we establish the equivalence between $\PreciseStoqMA_1$, \PreciseStoqMA{}, and \PSPACE{}, where $\PreciseStoqMA_1$ denotes \PreciseStoqMA{} with perfect completeness: 

\begin{theorem}[\PreciseStoqMA{} with perfect completeness contains \PSPACE{}]
    \label{thm:PreciseStoqMA-perfect-completeness}
    There exists an explicit efficiently computable function $s_\star(n) = 1-2^{-\poly(n)}$ such that
    \[ \PSPACE \subseteq \PreciseStoqMA(1,s_\star). \]
\end{theorem}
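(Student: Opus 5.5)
We reduce from a \PSPACE{}-complete problem with a reversible, clean-computation structure, presumably the problem \CleanCC{} (clean reversible computation) suggested by the section label. Concretely, any \PSPACE{} language is decided by a polynomial-space reversible (Bennett-style) computation whose configuration graph, restricted to valid configurations, is a permutation $P$ on $\binset^{p(n)}$ computable by a polynomial-size reversible circuit. For such computations we may assume an initial configuration $c_0$ and an accepting configuration $c_{\mathrm{acc}}$. In a \emph{yes} instance, $c_0$ returns to itself after exactly $T=2^{\poly(n)}$ steps having passed through $c_{\mathrm{acc}}$ with a clean flag; in a \emph{no} instance, the orbit of $c_0$ contains a ``rejecting'' configuration (flag set) instead. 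The goal is to build a stoquastic verifier, via the branch-overlap test (\Cref{lemma:branch-overlap-test}) applied to a pair $(\Gamma,I)$, whose top eigenvalue of $M_\Gamma$ equals $1$ exactly for \emph{yes} instances and is at most $1-2^{-\poly(n)}$ otherwise.

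The first step is to define the witness space as configurations (plus possibly a clock/counter register), and $\Gamma$ as the reversible circuit applying one step of $P$, except that on configurations whose flag indicates rejection it instead moves the $\ket{0}$-ancilla out of the zero sector (a bad transition in the language of \Cref{sec:StoqMAtwo-perfect-completeness-upper-bound}). To make $M_\Gamma$ symmetric, we use a $\ket{+}$ ancilla to choose between applying $P$ and $P^{-1}$, so that $M_\Gamma=\frac12(P'+P'^{\dagger})$ restricted to the zero sector, with $P'$ the partial permutation that deletes edges out of rejecting configurations. Completeness then follows as in \Cref{lemma:rectangle-closure-testing-perfect-completeness} (single-prover case): the uniform superposition over the orbit of $c_0$ is a fixed vector of eigenvalue exactly $1$, since this orbit is a closed cycle avoiding rejecting configurations. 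Hence we obtain perfect completeness, which is the key point.

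Soundness: in a \emph{no} instance, every cycle of $P$ that meets an initial-type configuration contains a leak. We must ensure that cycles not containing $c_0$ cannot be used by a cheating prover; to do this we add a check that forces the orbit to pass through a legitimate initial configuration. Concretely, we modify $P$ so that any configuration not reachable from a valid start is itself treated as leaking (using standard reversible-simulation tricks, e.g.\ a step counter that wraps only at $c_0$). Then $M_\Gamma$ is $\frac12$ times the adjacency of a union of paths (cycles broken by leaks). The largest eigenvalue of a path of length at most $N=2^{\poly(n)}$ is $\cos(\pi/(N+1))\le 1-\Omega(N^{-2})$, giving soundness $s_\star=\frac12+\frac12\cos(\pi/(N+1))=1-2^{-\poly(n)}$. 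Since this bound is explicit and depends only on $N$, $s_\star$ is efficiently computable.

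The main obstacle is the soundness step of excluding spurious closed cycles of $P$ that contain no leak: a general reversible machine has many unrelated cycles through garbage configurations. I would handle this by choosing the \PSPACE{}-complete source problem so that the global permutation is a single structured cycle through all configurations (e.g.\ a clean reversible computation whose every configuration lies on the canonical orbit, with a leaking edge inserted exactly when the computation rejects). Failing that, I would fall back on a clock register that counts modulo $T$ and leaks whenever the clock and the configuration are inconsistent at time $0$.
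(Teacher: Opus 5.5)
Your route differs from the paper's and can be made to work, but only through the clock construction, not through your preferred fix.

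\textbf{What the paper does.} The paper's \CleanCC{} is the Aharonov--Grilo problem \emph{Clean Connected Component}. It is a succinct graph on $\binset^n$ of polynomial degree with marked vertices: \emph{yes} iff some connected component is entirely unmarked, \emph{no} iff every component contains a marked vertex. The paper uses it as a black-box \PSPACE{}-complete source. The verifier uses a branch register to superpose $d_G$ neighbor moves (each an involution via the return-index circuit $\wCG$), one marking branch, and identity padding. The rejection probability is then the Dirichlet form $\frac{1}{2J}\bigl(\sum_{(u,v)\in E}(\alpha_u-\alpha_v)^2+\sum_{v\,\text{marked}}\alpha_v^2\bigr)$. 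The subset state on a clean component is accepted with probability exactly $1$. Soundness comes from a Cauchy--Schwarz path (Poincar\'e-type) bound from a marked vertex to a heavy vertex in each component, giving rejection at least $1/(2J|V|^2)$.

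\textbf{How your route compares.} The spurious-clean-cycle obstacle you identify never arises in the paper, because the \emph{no} promise of \CleanCC{} already makes every component dirty. Your approach in effect proves hardness of the degree-$2$ special case (a permutation with leaks) inline. In exchange you get an exact spectral soundness bound: the quadratic form is that of half the adjacency matrix of a disjoint union of cycles and paths. Its top eigenvalue is $1$ on a clean cycle and at most $\cos(\pi/(N+1))\le 1-2/(N+1)^2$ otherwise. Both approaches give a gap inverse-quadratic in the size of the state space.

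\textbf{Two corrections.}

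First, your option of making the global permutation a single cycle through all configurations cannot work. Leaks are flagged by a predicate the verifier's circuit computes in polynomial time. So ``there is a leak-free cycle'' would reduce to ``no configuration is flagged'', a \textsf{coNP} condition, and the problem could not be \PSPACE{}-hard unless $\PSPACE=\textsf{coNP}$.

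Second, the clock fallback is the correct fix, and you should state why it works. Under $(c,t)\mapsto(P(c),t+1 \bmod T)$, every joint orbit passes through clock value $0$. So leaking at $(c,0)$ whenever $c\neq c_0$ makes every cycle other than that of $(c_0,0)$ dirty, regardless of how $P$ acts on garbage. Meanwhile, the cycle of $(c_0,0)$ is clean iff $P^T(c_0)=c_0$ and the orbit avoids flagged configurations.

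To make this airtight:
\begin{itemize}
\item Also leak on clock values $\ge T$, or take $T$ a power of two.
\item Build $P$ as compute, then turnaround with the reject check, then uncompute, using a direction bit. This makes the orbit of $c_0$ have a fixed, input-independent period dividing $T$.
\end{itemize}
Your phrase ``treat configurations not reachable from a valid start as leaking'' is not locally checkable as stated; the clock is what actually implements it.
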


Then, combining \Cref{thm:PreciseStoqMA-perfect-completeness} with $\PreciseQMA\subseteq\PSPACE$~\cite{FL16,FL18} immediately yields the following equivalence, thereby giving a positive answer to and fully resolving the open problem in~\cite[Section 1.2]{Liu21} asking whether \PreciseStoqMA{} captures the full \PSPACE{} power:
\begin{corollary}
    $\PreciseStoqMA_1 = \PreciseStoqMA = \PSPACE$. 
\end{corollary}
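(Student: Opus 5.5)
We plan to start from a \PSPACE{}-complete problem with clean structure. The natural choice is the \CleanCC{} problem suggested by the section label: it is a clean-computation/configuration-connectivity formulation of \PSPACE{}. Concretely, take a polynomial-space deterministic machine $M$ on input $x$. We make it \emph{reversible} via Bennett/Lange--McKenzie--Tapp, so its configuration graph on $\binset^{p(n)}$ is given by a permutation $\sigma_x$ computable by a polynomial-size Toffoli/CNOT/X circuit. We also arrange that accepting computations return to a clean initial configuration $c_0$. The goal is a \StoqMA{} verifier of the branch-overlap form (\Cref{lemma:branch-overlap-test}) with reversible pair $(\Gamma,I)$. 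The honest witness is the uniform superposition over the orbit of $c_0$ (the cycle of $\sigma_x$ through $c_0$), and $\Gamma$ is a controlled application of $\sigma_x$ or $\sigma_x^{-1}$ selected by a $\ket{+}$ ancilla.

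The key steps are as follows. (i) Design $\Gamma$ so that on a configuration register $\sfW$ plus a flag in $\sfA_0$, one $\ket{+}$ bit chooses between $\sigma_x$ and $\sigma_x^{-1}$. The flag is flipped (a bad transition) whenever the step touches a configuration whose ``checkpoint'' predicate is violated. In the yes case this means the orbit passes through a rejecting halting configuration; in the no case it means the orbit of $c_0$ must contain a rejecting configuration, and we add a penalty so that $c_0$ rejects. The $M_\Gamma$ of \Cref{lemma:SepRCD-StoqMAk-complete} (with $k=1$) is then $(P+P^{-1})/2$ restricted to the unflagged support, which is a symmetric random walk on cycles. (ii) Completeness: if $M$ accepts, the orbit through $c_0$ is a cycle avoiding every flagged configuration. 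The subset state on that cycle is a fixed point of $\Gamma$, so acceptance is exactly $1$, as in \Cref{lemma:rectangular-structure}. (iii) Soundness: the largest eigenvalue of $M_\Gamma$ equals the maximum over cycles of the top eigenvalue of the walk on that cycle with absorbing (killed) flagged vertices. A cycle containing a killed vertex becomes a path of length at most $2^{p(n)}$, whose top eigenvalue is $\cos(\pi/(N+1))\le 1-2^{-O(p(n))}$. We therefore need every cycle to contain a killed vertex in the no case. This forces the construction to make \emph{every} cycle of $\sigma_x$ either pass through a flagged configuration or be provably ``clean-accepting.''

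The main obstacle is exactly step (iii): spurious cycles of the reversible configuration graph that never visit $c_0$ and contain no flagged configuration. Such a cycle would give value $1$ even for no instances. I would first try the standard fix: augment each configuration with a time-step counter modulo a power of two, together with a consistency flag. A configuration at time $0$ is required to be $c_0$ (otherwise it is flagged), and time increments on each $\sigma_x$ step. Then every cycle passes through time $0$, hence through $c_0$, hence it is the genuine computation, whose rejecting outcome is flagged. Making the cycle length a power of two requires padding the computation to exactly $2^{q}$ steps with a clean rewinding phase. This is feasible by running $M$ forward, copying the answer bit, and running it backward. With that in place, the spectral gap bound $1-2^{-\poly(n)}$ gives $s_\star$ as claimed, and the dyadic issues disappear since only one $\ket{+}$ ancilla is used.
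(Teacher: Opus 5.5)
Your route is viable and differs from the paper's. As written, however, it proves only one inclusion, and the padding step needs repair.

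\textbf{Comparison with the paper.} The paper never constructs the hard instance itself. It imports \PSPACE{}-completeness of \CleanCC{} from \cite{AG20} (\Cref{lemma:CleanCC-PSPACE-hard}) and gives a verifier for an arbitrary bounded-degree graph. There, a $\lceil\log(d_G+1)\rceil$-qubit $\ket{+}$ branch register selects one of the $d_G$ neighbor moves (made reversible by the return-index circuit $\wCG$ of \Cref{prop:return-index-circuit}), a separate marking branch, or the identity. This gives the exact identity $1-\Pr[\text{accept}]=\frac{1}{2J}\bigl(\sum_{(u,v)\in E}(\alpha_u-\alpha_v)^2+\sum_{v\colon C_M(v)=1}\alpha_v^2\bigr)$. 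Soundness then follows from a Cauchy--Schwarz path (Poincar\'e-type) lower bound of $1/|V|^2$ on the bracket, with no spectral computation.

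Your verifier is the special case of a $2$-regular graph, namely the cycles of one permutation. You therefore need neither $\wCG$ nor a multi-qubit branch register, and the Dirichlet path spectrum $\cos(\pi/(N+1))$ gives the gap directly. The price is that you must prove the hardness side yourself. Your time counter, which forces every cycle through a time-$0$ checkpoint, is exactly what disposes of spurious cycles. To make $M_\Gamma$ exactly equal to $D\frac{P+P^{-1}}{2}D$, flag a step when either endpoint is bad. If only the source is checked, boundary edges are only half-killed, and you should use a path argument like the paper's instead of the exact cosine.

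\textbf{Two repairs are needed.}

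First, the corollary is an equality, and you never address the upper bound. Note that $\PreciseStoqMA_1\subseteq\PreciseStoqMA$ trivially, and that a stoquastic verifier is a quantum verifier, so $\PreciseStoqMA\subseteq\PreciseQMA\subseteq\PSPACE$ by \cite{FL16,FL18}.

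Second, ``forward, copy the answer bit, backward'' does not close the orbit. If the answer is copied into the configuration register, then in the yes case the configuration at time $2^q$ is $c_0$ with the answer bit set. That is a time-$0$ configuration different from $c_0$, so it is flagged, which destroys perfect completeness. Do not copy: flag rejecting outputs at a midpoint checkpoint and rewind, so the composite over one period is $U^{-1}U=I$.

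Moreover, ``exactly $2^q$ steps'' is not automatic. A time-independent permutation cannot pause mid-orbit (a fixed point is its own cycle), and Lange--McKenzie--Tapp simulations have input-dependent running times. The clean fix is a counter-dependent step $(c,\tau)\mapsto(R_\tau(c),\tau+1)$, where $R_\tau$ is a forward step, the identity, or a backward step according to $\tau$. One source of such a schedule is a Bennett-style pebbling simulation of a machine clocked to run exactly $2^{p(n)}$ steps, since that schedule is input-independent. This map is a permutation for any family of permutations $R_\tau$. Because the composite over one period is the identity, every orbit has length exactly $2^q$ and meets time $0$ exactly once. The time-$0$ check therefore leaves only the orbit of $(c_0,0)$ unflagged, and the midpoint flag catches rejection.
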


\vspace{1em}
To establish \Cref{thm:PreciseStoqMA-perfect-completeness}, we begin with the \PSPACE{}-complete problem \textsc{Clean Connected Component}, introduced in~\cite{AG20}: 

\begin{definition}[Clean Connected Component, \CleanCC{}, adapted from~{\cite[Definition 3.1]{AG20}}]
\label{def:cleanCC}
Let $G=(V,E)$ be a graph with vertex set $V = \binset^n$, and suppose that every vertex $v\in V$ has degree $d(v) \leq d_G\leq \poly(n)$. We say that a pair of polynomial-size classical circuits $(C_G,C_M)$, where  $C_G \colon V\times\ssbra{d_G} \to V$ and $C_M \colon V \to \binset$, is an instance of $\CleanCC(n,d_G)$ if the following conditions hold for every $v\in V$:
\begin{enumerate}[label={\upshape(\arabic*)}]
    \item For each $j\in\ssbra{d_G}$, $C_G(v,j)$ outputs the $j$-th neighbor of $v$. Furthermore, every neighbor of $v$ appears \emph{exactly once} in the list $C_G(v,0),\cdots,C_G\rbra*{v,d_G-1}$, and every remaining slot is padded with the self-loop value $v$. 
    \item $C_M(v)$ outputs $1$ if the vertex $v$ is \emph{marked}; otherwise, it outputs $0$, and $v$ is \emph{unmarked}. 
\end{enumerate}
The problem is to decide which of the following holds: 
\begin{itemize}
    \item \emph{\textbf{Yes.}} There exists a non-empty connected component $H_\star\subseteq V$ such that all vertices in $H_\star$ are \emph{unmarked}. 
    \item \emph{\textbf{No.}} Every connected component $H \subseteq V$ contains at least one \emph{marked} vertex. 
\end{itemize}
\end{definition}

\begin{lemma}[Adapted from~{\cite[Appendix D]{AG20}}]
    \label{lemma:CleanCC-PSPACE-hard}
    \CleanCC{} is \PSPACE{}-complete. 
\end{lemma}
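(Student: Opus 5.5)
The statement has two halves, membership in \PSPACE{} and \PSPACE{}-hardness, and I would prove them separately, following the outline of~\cite[Appendix D]{AG20}.

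\emph{Membership.} A vertex $v\in\binset^n$ witnesses a \emph{yes} instance exactly when every $u$ reachable from $v$ satisfies $C_M(u)=0$. Reachability in $G$ is decidable in nondeterministic $O(n)$ space. One guesses a walk one vertex at a time, using $C_G$ to generate neighbors, which is fine because $d_G\le\poly(n)$. By Savitch's theorem this becomes deterministic $O(n^2)$ space. The deterministic algorithm loops over all $v$, and for each $v$ loops over all $u$. If $u$ is reachable from $v$ and $C_M(u)=1$, it discards $v$. It accepts if some $v$ survives. The total space is polynomial, so $\CleanCC\in\PSPACE$.

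\emph{Hardness.} Fix $\calL\in\PSPACE$ and an input $x$. I would take a \emph{reversible} polynomial-space machine $M$ for $\calL$ (Bennett, or Lange--McKenzie--Tapp) that runs for exactly $T=2^{\poly(|x|)}$ steps. Its step map $\sigma$ on configurations is injective, and acceptance is read off from the final configuration.

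The vertices are pairs $(c,t)$, where $c$ is a configuration and $t\in\ssbra{T+1}$ is a clock, padded into $\binset^{n}$ with $n=\poly(|x|)$. The edges are $\{(c,t),(\sigma(c),t+1)\}$ for $t<T$. Because $\sigma$ is injective and the clock strictly increases, every vertex has degree at most $2$ and there are no cycles. Hence every connected component is a path $(c_{t_0},t_0),\dots,(c_{t_1},t_1)$ with $c_{t+1}=\sigma(c_t)$. The neighbor circuit $C_G$ computes $\sigma$ and $\sigma^{-1}$ (when a preimage exists) and pads the remaining slots with self-loops.

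The marking circuit $C_M$ marks the following vertices:
\begin{enumerate}[label=(\roman*)]
\item strings that do not encode a legal pair $(c,t)$; each of these is an isolated, marked vertex;
\item vertices $(c,0)$ with $c$ different from the initial configuration on $x$;
\item vertices $(c,t)$ with $t>0$ such that $c$ has no $\sigma$-preimage;
\item vertices $(c,T)$ with $c$ non-accepting.
\end{enumerate}
Consider any component other than the one through $(c_{\mathrm{init}}(x),0)$. Its minimum-clock endpoint is either at $t_0=0$ with a non-initial configuration, or at $t_0>0$ with no preimage. In both cases that endpoint is marked, so the component is not clean.

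The unique remaining component is the computation path of $M$ on $x$, running from time $0$ to time $T$. It contains no vertex of types (i)--(iii), so it is clean iff its final configuration is accepting, i.e., iff $x\in\calL$. The reduction is polynomial time.

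\emph{Main obstacle.} The key step is making sure that every garbage component has a marked vertex. Without the clock, garbage cycles of a reversible machine would be clean components. This is why I add the clock, which rules out cycles, and then mark all non-initial left endpoints through conditions (ii) and (iii). One technical point must hold: deciding whether $c$ has a $\sigma$-preimage, and computing that preimage, has to be done by a polynomial-size circuit. Local reversibility of the machine (each transition is a local bijection on a constant-size window) gives this, and I would verify it directly.
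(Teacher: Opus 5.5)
The paper does not prove this lemma itself; it imports it from~\cite{AG20}, so there is no in-paper argument to compare yours against. Your self-contained proof is correct.

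\emph{Membership.} The Savitch-based argument is fine. The workspace is $\poly(n)$ rather than $O(n)$ once you account for evaluating $C_G$, but that does not affect the conclusion.

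\emph{Hardness.} The key combination is a globally injective step map together with the clock. Together they make every component a clock-monotone path. Marking non-initial left endpoints via (ii)--(iii) then taints every garbage component and leaves the computation path alone. Relative to a bare citation, your route makes explicit why reversibility is indispensable, and it shows that hardness already holds with $d_G=2$.

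The one ingredient you posit rather than justify is a reversible machine that has a step map injective on \emph{all} configurations and runs for exactly $T$ steps. The standard reversible simulations (Bennett, Lange--McKenzie--Tapp) halt.

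The naive padding, letting a halting configuration $h$ idle via $\sigma(h)=h$, destroys injectivity. The vertex $(h,t_h)$ then gains the extra neighbor $(h,t_h-1)$, so the chain $(h,0),\dots,(h,t_h)$ merges into the computation path. Since $(h,0)$ is marked by (ii), completeness fails.

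There are two clean fixes:
\begin{enumerate}
\item After halting, run a reversible counter that cannot wrap around within $T$ steps, leaving $\sigma$ undefined at overflow.
\item Keep the halting machine, take $T$ at least its running time, and replace (iv) by marking every vertex $(c,t)$ whose $c$ is a rejecting halting configuration.
\end{enumerate}
With either fix, the rest of your argument, including the predecessor-circuit point you flagged, goes through unchanged.
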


Now we are ready to state the main technical result in this section, which, together with  \Cref{lemma:CleanCC-PSPACE-hard}, establishes \Cref{thm:PreciseStoqMA-perfect-completeness}:
\begin{theorem}[$\CleanCC\in\PreciseStoqMA_1$] 
    \label{thm:cleanCC-in-StoqMA}
    For any efficiently computable function $d_G(n)$ bounded by a \emph{polynomial} in $n$, the following holds:
    \[\CleanCC\rbra*{n,d_G} \in \StoqMA\rbra*{1,1-\frac{1}{2^{2n+2}(d_G+1)}}.\] 
\end{theorem}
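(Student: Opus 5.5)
We build a single-prover stoquastic verifier via the branch-overlap test (\Cref{lemma:branch-overlap-test}) applied to a pair $(\Gamma,I)$, where $\Gamma$ is a reversible random-walk step on the graph $G$. The witness register holds a vertex $v\in\binset^n$, and there is a $\ket{+}$-initialized register of $\lceil\log(d_G+1)\rceil$ qubits indexing a move $j$. To keep things exactly dyadic we pad $d_G+1$ up to a power of two $D$, with the extra indices acting as self-loops.

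On branch $(v,j)$, the circuit $\Gamma$ first computes $C_M(v)$ into a fresh $\ket{0}$ ancilla. If $v$ is marked, $\Gamma$ leaves that ancilla flipped to $1$, so the branch is orthogonal to the initial state and contributes $0$ to the overlap. If $v$ is unmarked, $\Gamma$ uncomputes the mark bit and replaces $v$ by $C_G(v,j)$. This replacement must be reversible, so we pair $v\mapsto w=C_G(v,j)$ with a matching index $j'$ satisfying $C_G(w,j')=v$. We find $j'$ by scanning the list of $w$, which costs $d_G$ evaluations of $C_G$ and is polynomial. Because each neighbor appears exactly once in every list, the map $(v,j)\mapsto(w,j')$ is an involution on the unmarked sector; self-loop slots map to themselves. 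Marked vertices are guarded by the control, so $\Gamma$ stays a permutation. By \Cref{lemma:branch-overlap-test}, the acceptance probability is $\frac12+\frac12\bra{\psi}P\ket{\psi}$. Here $P=\Pi_U W\Pi_U$, $\Pi_U$ is the projector onto unmarked vertices, and $W$ is the symmetric lazy random-walk matrix with $W_{vw}=1/D$ for each edge and the remaining mass on the diagonal.

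\emph{Completeness.} Take $\ket{\psi}=\ket{H_\star}$, the subset state on a clean component. Every move stays inside $H_\star$ and never hits a marked vertex. Hence $\Gamma$ fixes $\ket{H_\star}\ket{\bar{+}}\ket{\bar 0}$, the overlap is $1$, and the verifier accepts with probability exactly $1$.

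\emph{Soundness.} By \Cref{prop:opt-stoqSepVal-nonNeg} (the case $k=1$, i.e.\ Perron--Frobenius), we need to bound $\lambda_{\max}(P)$. The matrix $P$ is block diagonal over connected components $H$, with blocks $\Pi_U W_H\Pi_U$, where $W_H$ is stochastic and symmetric. We show $\lambda_{\max}\le 1-\frac{1}{2^{2n+1}(d_G+1)}$, which gives soundness $1-\frac{1}{2^{2n+2}(d_G+1)}$. Write $1-\bra{\psi}P\ket{\psi}=\norm{(I-\Pi_U)\psi}^2+\bra{\phi}(I-W)\ket{\phi}$ with $\phi=\Pi_U\psi$. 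Then use the Dirichlet form $\bra{\phi}(I-W)\ket{\phi}=\frac{1}{2D}\sum_{(u,w)\in E}(\phi_u-\phi_w)^2$ together with $\phi=0$ on marked vertices.

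For $\psi$ supported on a component $H$ containing a marked vertex $m$, pick a vertex $v$ with $\psi_v^2\ge 2^{-n}$. A path from $v$ to $m$ has length at most $2^n$. By Cauchy--Schwarz along this path, $\psi_v^2\le 2^n\sum(\text{edge differences})^2$, where the final edge into $m$ is counted through the term $\norm{(I-\Pi_U)\psi}^2$ or through $\phi_m=0$. This gives $1-\bra{\psi}P\ket{\psi}\ge \frac{1}{2D}\cdot 2^{-2n}$. Since $D\le 2(d_G+1)$, this is $\Omega(2^{-2n}/(d_G+1))$.

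The \textbf{main obstacle} is matching the exact constant $2^{2n+2}(d_G+1)$. The power-of-two padding of $D$ costs a factor of $2$, so we may instead need a non-uniform dyadic branch distribution with exact weight $1/(d_G+1)$ per move, which would force careful bookkeeping. The fallback is to accept a constant-factor looser $s_\star$, which still equals $1-2^{-\poly(n)}$ and suffices for \Cref{thm:PreciseStoqMA-perfect-completeness}.
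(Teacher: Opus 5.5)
Your construction (mark guard, then the involutive move with return index, padded self-loops) is a valid permutation, and your completeness argument is correct. The soundness step, however, analyzes the wrong operator.

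\textbf{The gap.} As you describe $\Gamma$, only the \emph{source} vertex is tested: an unmarked $v$ is moved to $w=C_G(v,j)$ and remains in the initialized sector even when $w$ is marked. Hence, for non-negative $\psi$ and with $C_G(v,j)\coloneqq v$ on your padded slots,
\[
\langle\Omega|\Gamma|\Omega\rangle=\frac{1}{D}\sum_{v\ \mathrm{unmarked}}\ \sum_{j<D}\psi_v\psi_{C_G(v,j)}=\langle\psi|W\Pi_U|\psi\rangle=\langle\psi|\Pi_UW\Pi_U|\psi\rangle+\langle\chi|W|\phi\rangle,
\]
where $\phi=\Pi_U\psi$ and $\chi=(I-\Pi_U)\psi$. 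The cross term $\langle\chi|W|\phi\rangle$ couples marked vertices to their unmarked neighbours; it is non-negative and generally nonzero. Your identity $1-\langle\psi|P|\psi\rangle=\|\chi\|^2+\langle\phi|(I-W)|\phi\rangle$ drops it. So a bound on $\lambda_{\max}(\Pi_UW\Pi_U)$ does not bound the acceptance probability. For example, take two adjacent vertices, one of them marked, with $d_G=1$ and $D=2$. Then $\Pi_UW\Pi_U=\mathrm{diag}(1/2,0)$, yet $\max_\psi\langle\psi|W\Pi_U|\psi\rangle=(1+\sqrt2)/4>1/2$.

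\textbf{The repair.} There are two short fixes. First, you can also test the mark of the destination vertex, using a second mark ancilla after the move. This makes the compressed operator genuinely $\Pi_UW\Pi_U$, and your decomposition then holds verbatim. Second, you can compute the actual form. Since $(v,j)\mapsto(C_G(v,j),j')$ permutes $V\times[D]$,
\[
1-\langle\psi|W\Pi_U|\psi\rangle=\frac{1}{D}\sum_{\{u,w\}\in E}(\psi_u-\psi_w)^2+\frac{1}{D}\sum_{m\ \mathrm{marked}}\ \sum_{j<D}\psi_m\psi_{C_G(m,j)}\ \ge\ \frac{1}{D}\Bigl(\sum_{\{u,w\}\in E}(\psi_u-\psi_w)^2+\sum_{m\ \mathrm{marked}}\psi_m^2\Bigr).
\]
The inequality uses non-negativity and the fact that every vertex has a padded self-loop slot, because $D\ge d_G+1$. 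This is exactly the expression the paper obtains, as an identity with $J$ in place of $D$. The paper gets it from a different layout: one dedicated branch $j=d_G$ that XORs $C_M(v)$ into a flag, plus identity branches, instead of a mark guard on every branch. The per-component path and Cauchy--Schwarz argument then yields $2^{-2n}/D$.

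\textbf{The constant.} Your worry here is unfounded. Over unordered edges, $\langle\phi|(I-W)|\phi\rangle=\frac{1}{D}\sum_{\{u,w\}\in E}(\phi_u-\phi_w)^2$, so your path bound gives $2^{-2n}/D$, not $2^{-2n}/(2D)$. Since $D=2^{\lceil\log(d_G+1)\rceil}<2(d_G+1)$, the rejection probability is greater than $2^{-2n}/(4(d_G+1))=1/(2^{2n+2}(d_G+1))$. The stated constant already absorbs the power-of-two padding, so you do not need a non-uniform dyadic branch distribution.
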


\subsection{Constructing the stoquastic verification circuit}
\label{subsec:cleancc-in-stoqma}

We first make each move from a vertex $u \in V$ to its $j$-th neighbor $C_G(u,j)$ reversible by introducing the return-index circuit $\wCG$, whose efficient construction is given below:

\begin{proposition}[Return-index circuit can be efficiently constructed]
    \label{prop:return-index-circuit}
    Given the circuit $C_G$ that describes the graph $G=(V,E)$ with $|V|=2^n$ and the maximum degree $d_G \leq \poly(n)$, one can efficiently construct a classical circuit $\wCG \colon V \times \ssbra{d_G} \to \ssbra{d_G}$ satisfying the following:
    \begin{itemize}
        \item For every edge $(u,v)\in E$ and every neighbor index $j$ with $C_G(u,j)=v$, define $\wCG(u,j)$ to be the unique \emph{return index} at $v$, namely the unique index satisfying $C_G\rbra[\big]{v, \wCG(u,j) }=u$. 
        \item For every padded self-loop port with $C_G(u,j)=u$, define $\wCG(u,j)=j$. 
    \end{itemize}
\end{proposition}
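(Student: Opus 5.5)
The plan is to build $\wCG$ directly out of $C_G$, using the two structural guarantees in \Cref{def:cleanCC}: every neighbor of a vertex appears \emph{exactly once} in its port list, and all remaining ports are padded with the self-loop value. On input $(u,j)$, the circuit $\wCG$ first evaluates $v \coloneqq C_G(u,j)$. If $v=u$, then $(u,j)$ is a padded self-loop port, and the circuit outputs $j$. Otherwise $v\neq u$, and we know $(u,v)\in E$. In that case the circuit loops over $i\in\ssbra{d_G}$, evaluates $C_G(v,i)$ for each $i$, and outputs the unique $i$ satisfying $C_G(v,i)=u$.

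Correctness comes in three steps.
\begin{enumerate}
    \item \emph{Existence and uniqueness of the return index.} The graph is undirected, so $(u,v)\in E$ means $u$ is a neighbor of $v$. By condition (1) of \Cref{def:cleanCC}, $u$ then appears exactly once among $C_G(v,0),\dots,C_G(v,d_G-1)$. Since $u\neq v$, none of the padded self-loop slots at $v$ can equal $u$, so no spurious match arises from padding.
    \item \emph{Self-loops.} The self-loop case is fixed by definition: the output is $j$, which indeed satisfies $C_G(u,j)=u$.
    \item \emph{Efficiency.} The circuit makes $d_G+1\le\poly(n)$ calls to the polynomial-size circuit $C_G$, together with $n$-bit equality comparisons and a $\lceil\log d_G\rceil$-bit index selector. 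So $\wCG$ has size $\poly(n)$, and it can be written down in time polynomial in $|C_G|$ and $d_G$.
\end{enumerate}

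The main obstacle is one subtle point: whether a genuine loop edge $(u,u)$ could appear among the listed neighbors, which would make the two cases ambiguous. I would handle it by convention. Any port with $C_G(u,j)=u$ is treated as a padded self-loop, and a genuine loop $(u,u)$ is no different, since its "return index" $j$ also satisfies $C_G(u,j)=u$. Either way, the map $(u,j)\mapsto(C_G(u,j),\wCG(u,j))$ sends each port to a valid port at the other endpoint.

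As a sanity check for later use, this map should be an involution on $V\times\ssbra{d_G}$. For non-loop ports this follows from the uniqueness argument applied in both directions, and for self-loop ports it holds trivially. Being an involution is exactly what makes it a bijection, so the subsequent walk step is reversible.
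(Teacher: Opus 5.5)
Your proposal is correct and follows the same approach as the paper: evaluate $C_G(u,j)$, and in the non-loop case scan all $d_G \le \poly(n)$ ports of the neighbor for the unique index pointing back to $u$. You also make explicit several points the paper leaves implicit: the separate self-loop branch, the uniqueness argument from the ``exactly once'' condition, and the involution check. The involution check is what the paper later relies on when it asserts that $\Gamma_G$ squares to the identity on the initialized sector and that $(v,j)\mapsto(\widetilde{C}_G(v,j),C_G(v,j))$ is a permutation.
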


\begin{proof}
    For every edge $(u,v)\in E$ such that $C_G(u,j)=v$, $\wCG(u,j)$ can be implemented efficiently by simply checking all $j'\in\ssbra{d_G}$ and finding the unique index $j'_\star$ satisfying $C_G(v,j'_\star)=u$. This construction is efficient because the maximum degree $d_G \leq \poly(n)$.
\end{proof}

Next, we introduce the classical reversible circuit $\Gamma_G$, built from the classical circuits $C_G$ and $\wCG$ and made reversible by using additional $\ket{0}$ ancillary qubits, as defined by
\begin{equation}
    \label{eq:CG-reversible}
    \forall j\in\ssbra{d_G}, \; \forall v\in V, \quad \Gamma_G \ket{j}\ket{v}\ket{\bar{0}} = \ket{\wCG(v,j)}\ket{C_G(v,j)}\ket{\bar{0}}.
\end{equation}
By construction, it follows that $\Gamma_G^2\ket{j}\ket{v}\ket{\bar{0}} = \ket{j}\ket{v}\ket{\bar{0}}$. Similarly, we define the classical reversible circuit $\Gamma_M$, built from the classical circuit $C_M$, by
\begin{equation}
    \label{eq:CM-reversible}
    \forall v\in V,\; \forall c\in\binset, \quad \Gamma_M \ket{v}\ket{c}\ket{\bar{0}} = \ket{v}\ket{c \oplus C_M(v)}\ket{\bar{0}}. 
\end{equation}
We also note that the reversible circuits $\Gamma_G$ and $\Gamma_M$ are both efficiently implementable.\footnote{Since the classical circuits $C_G$, $\wCG$, and $C_M$ are all efficiently implementable, it follows that $\Gamma_G$ and $\Gamma_M$ are also efficiently implementable using the standard compute-swap-uncompute trick (cf.~\cite[Theorem 7.3]{KSV02}).}

With these reversible circuits in place, we present the stoquastic verification circuit for \CleanCC{} used in the proof of \Cref{thm:cleanCC-in-StoqMA}, as given in \Cref{protocol:CleanCC-in-StoqMA}. 

\LinesNotNumbered
\begin{algorithm}[ht!]
    \UseProtocolCounter
    \caption{A stoquastic verification circuit $V_{C_G,C_M}$ for \CleanCC{}.}
    \label{protocol:CleanCC-in-StoqMA}
    \SetEndCharOfAlgoLine{.}
    \setlength{\parskip}{5pt}
    \SetKwFor{For}{For}{:}{}
    \SetKwIF{If}{ElseIf}{Else}{If}{:}{elif}{Else:}{}%
    \SetKwInOut{Input}{Input}
    \SetKwInOut{Output}{Output}
    \SetKwInOut{Registers}{Registers}
    \SetKwInOut{Parameter}{Parameters}

    \Input{A $\CleanCC(n,d_G)$ instance $(C_G,C_M)$.}
    \Parameter{$q\coloneqq \ceil*{\log(d_G+1)}$ and $J\coloneqq 2^q$.}
    \Registers{
    \begin{minipage}[t]{0.7\linewidth}
        \hangindent=3.2em
        \hangafter=1
        $\sfB$: the $q$-qubit branch-index register\; 
        $\sfV$: the $n$-qubit vertex-label register\;
        $\sfM$: the single-qubit marking register\;
        $\sfA$: the register containing $\ket{0}$ ancillary qubits.
    \end{minipage}
    }
    
    \textbf{1.} Construct the classical reversible circuits $\Gamma_G$ and $\Gamma_M$, as expressed in \Cref{eq:CG-reversible,eq:CM-reversible}, using \Cref{prop:return-index-circuit}\;

    \medskip
    \textbf{2.} Define a classical reversible circuit $\Gamma$ on the quantum registers $(\sfB,\sfV,\sfM,\sfA)$ as follows:
    \begin{enumerate}[label={\upshape(\arabic*)}, itemsep=0.2em, topsep=0.2em, parsep=0.2em]
        \item Apply $\Gamma_G$ to the registers $(\sfB,\sfV,\sfA)$ when $0\leq j < d_G$; 
        \item Apply $\Gamma_M$ to the registers $(\sfV,\sfM,\sfA)$ when $j=d_G$;
        \item Do nothing otherwise.
    \end{enumerate}
    Equivalently, $\Gamma$ can be expressed as
    \begin{equation}
        \label{eq:Gamma-CleanCC}
        \Gamma \ket{j}_\sfB \ket{v}_\sfV \ket{c}_\sfM \ket{\bar{0}}_\sfA
        =
        \begin{cases}
            \ket{\wCG(v,j)}_\sfB \ket{C_G(v,j)}_\sfV \ket{c}_\sfM \ket{\bar{0}}_\sfA , & 0\leq j<d_G,\\
            \ket{j}_\sfB \ket{v}_\sfV \ket{c\oplus C_M(v)}_\sfM \ket{\bar{0}}_\sfA, & j=d_G,\\
            \ket{j}_\sfB \ket{v}_\sfV \ket{c}_\sfM \ket{\bar{0}}_\sfA, & d_G< j < J.
        \end{cases}
    \end{equation}
    \textbf{3.} Apply the branch-overlap test (\Cref{lemma:branch-overlap-test}) to the reversible circuit pair $(\Gamma,I)$ with the input state stored in the registers $(\sfB,\sfV,\sfM)$.
\end{algorithm}

\subsection{Analysis of \texorpdfstring{\Cref{protocol:CleanCC-in-StoqMA}}{}}

In this subsection, we prove \Cref{thm:cleanCC-in-StoqMA}. 
Before presenting the proof, we first specify the acceptance probability of the stoquastic verification circuit in \Cref{protocol:CleanCC-in-StoqMA}. 
For a non-negative witness $\ket\psi=\sum_{v\in V}\alpha_v\ket v$, where $\alpha_v\geq 0$ for each $v\in V$ and $\sum_{v\in V}\alpha_v^2=1$, let $\ket{\Omega_\psi}$ denote the initialized state corresponding to $\ket{\psi}$. Then \Cref{lemma:SepRCD-StoqMAk-complete} guarantees that
\begin{equation}
    \label{eq:cleanCC-StoqMA-pacc}
    \Pr\sbra*{V_{C_G,C_M} \text{ accepts } \ket{\psi}} = \frac{1}{2} + \frac{1}{2} \bra{\Omega_\psi} \Gamma \ket{\Omega_\psi}, \quad\text{where } \ket{\Omega_\psi} \coloneqq \frac{1}{\sqrt J}\sum_{j=0}^{J-1}\ket{j}_\sfB\ket{\psi}_\sfV\ket{0}_\sfM\ket{\bar{0}}_\sfA.
\end{equation}

We now proceed with the proof of \Cref{thm:cleanCC-in-StoqMA}. At a high level, for \emph{yes} instances, the subset state $\ket{H_\star}$ serves as the witness state. For \emph{no} instances, however, the analysis is more delicate, since it must apply to \emph{all} non-negative witness states. The key observation is that, in any connected component containing a marked vertex, large witness mass must create either noticeable mass on a marked vertex or noticeable amplitude differences across edges. This argument is a path form of the Poincar\'e principle (cf.~\cite[Section 5.2]{Jerrum03}): local edge differences, together with marked-vertex mass, control the total mass in the component.

\begin{proof}[Proof of \Cref{thm:cleanCC-in-StoqMA}]

    We begin by expanding $\bra{\Omega_\psi} \Gamma \ket{\Omega_\psi}$ according to the branch-index register $\sfB$. For the first $d_G$ neighbor-index branches, we obtain the overlap $\sum_{v} \alpha_v\alpha_{C_G(v,j)}$. For the marking branch $j=d_G$, the initialized bit $c=0$ remains in the initialized sector exactly on unmarked vertices, contributing $\sum_{v:C_M(v)=0}\alpha_v^2$. The remaining $J-d_G-1$ branches are identities and each contributes $\sum_{v}\alpha_v^2=1$. Consequently, we conclude the following identity:
    \begin{equation}
        \label{eq:cleanCC-StoqMA-pacc-decomp}
        \bra{\Omega_\psi} \Gamma \ket{\Omega_\psi} = \frac{1}{J} \rbra*{ \sum_{j\in \ssbra{d_G}}\sum_{v\in V}\alpha_v\alpha_{C_G(v,j)} +  \sum_{v:C_M(v)=0}\alpha_v^2 + (J-d_G-1) }.
    \end{equation}

    \paragraph{Completeness.} For \emph{yes} instances, let the witness state be the subset state 
    \[\ket{H_\star} = \frac{1}{\sqrt{\abs{H_\star}}} \sum_{v\in H_\star} \ket{v},\] 
    where $H_\star$ is an unmarked connected component. Then $\alpha_v = 1/\sqrt{\abs{H_\star}}$ for each $v\in H_\star$ and $\alpha_v=0$ otherwise. Consequently, every neighbor-index branch maps vertices of $H_\star$ back into $H_\star$ and the marking branch also leaves the initialized state unchanged. Thus, a direct calculation from \Cref{eq:cleanCC-StoqMA-pacc-decomp} shows that 
    \begin{subequations}
    \label{eq:cleanCC-yes}
    \begin{align}
        \bra{\Omega_{H_\star}} \Gamma \ket{\Omega_{H_\star}} &= \frac{1}{J} \rbra*{ \sum_{j\in\ssbra{d_G}}\sum_{v\in H_\star} \frac{1}{\abs{H_\star}} +  \sum_{v:C_M(v)=0} \frac{1}{\abs{H_\star}} + (J-d_G-1) }\\ 
        &= \frac{d_{G}+1+(J-d_{G}-1)}{J} = 1.
    \end{align}
    \end{subequations}
    
    Substituting \Cref{eq:cleanCC-yes} into \Cref{eq:cleanCC-StoqMA-pacc}, we obtain the desired equality
    \[\Pr\sbra*{V_{C_{G},C_M} \text{ accepts } \ket{H_\star}}=\frac{1}{2}+\frac{1}{2} \cdot 1 =1.\]  
    
    \paragraph{Soundness.} For \emph{no} instances, we first derive an expression for the rejection probability: 
    \begin{equation}
    \label{eq:cleanCC-prej}
        1-\Pr\sbra*{V_{C_{G},C_M} \text{ accepts } \ket{\psi}} = \frac{1}{2J} \rbra*{ \sum_{(u,v)\in E}(\alpha_u-\alpha_v)^2 + \sum_{v:C_M(v)=1}\alpha_v^2 }.
    \end{equation}
    
    To see \Cref{eq:cleanCC-prej}, we use \Cref{eq:cleanCC-StoqMA-pacc-decomp} and compute
    \begin{align*}
        J\rbra*{ 1-\bra{\Omega_\psi} \Gamma \ket{\Omega_\psi} } &= d_G  - \sum_{j\in\ssbra{d_G}}\sum_{v\in V}\alpha_v\alpha_{C_G(v,j)} + \sum_{v:C_M(v)=1} \alpha_v^2\\
        &= \frac{1}{2}\sum_{j\in\ssbra{d_G}} \sum_{v\in V}\rbra*{\alpha_v-\alpha_{C_G(v,j)}}^2 + \sum_{v:C_M(v)=1} \alpha_v^2\\
        &= \sum_{(u,v)\in E}(\alpha_u-\alpha_v)^2 + \sum_{v:C_M(v)=1} \alpha_v^2.
    \end{align*}
    Here, the second line uses $\sum_v\alpha_v^2=1$ and 
    \[ \sum_{j\in\ssbra{d_G}}\sum_{v\in V}\alpha_{C_G(v,j)}^2=d_G, \] 
    which follows because the map $(v,j) \mapsto \rbra[\big]{\wCG(v,j),C_G(v,j)}$ is a permutation of $\ssbra{d_G}\times V$. The third line uses the fact that the graph $G$ is undirected: every non-loop edge $(u,v)\in E$ is counted twice in the directed-edge sum, while padded self-loop terms contribute zero.  

    \vspace{1em}
    Equipped with \Cref{eq:cleanCC-prej} and $|V|=2^n$, it now suffices to prove the following inequality, which follows from a path-based Poincar\'e-type estimate of Jerrum's canonical-path inequality~\cite[Theorem~5.2 and Remark~5.3(a)]{Jerrum03}:  
    \begin{equation}
        \label{eq:cleanCC-loss-bound}
        \sum_{(u,v)\in E}(\alpha_u-\alpha_v)^2 + \sum_{v:C_M(v)=1}\alpha_v^2 \geq \frac{1}{|V|^2}.
    \end{equation}

    Fix one connected component $H\subseteq V$ and define its weight $\mu_H \coloneqq \sum_{v\in H}\alpha_v^2$. Since $(C_G,C_M)$ is a \emph{no} instance, the connected component $H$ contains at least one marked vertex $s$. Also, by an averaging argument, there exists a vertex $t\in H$ such that $\alpha_t^2 \geq \mu_H/\abs{H}$. Now we consider a simple path in $H$ from $s$ to $t$, denoted by $u_0 \sim u_1 \sim  \cdots \sim u_m$, with $u_0 \coloneqq s$ and $u_m \coloneqq t$. Since $m+1\leq \abs{H} \leq \abs{V}$, we obtain
    \begin{equation}
        \label{eq:cleanCC-no-path-identity}
        \alpha_t = \alpha_s + \sum_{i\in\ssbra{m}} \rbra*{\alpha_{u_{i+1}}-\alpha_{u_i}}.
    \end{equation}
    
    Applying the Cauchy--Schwarz inequality to \Cref{eq:cleanCC-no-path-identity} gives
    \begin{equation}
        \label{eq:cleanCC-no-path-bound}
        \alpha_t^2 
        \leq (m+1)\rbra*{ \alpha_s^2 + \sum_{i\in\ssbra{m}}\rbra*{\alpha_{u_{i+1}}-\alpha_{u_i}}^2 }
        \leq \abs{H}\rbra*{ \alpha_s^2 + \sum_{i\in\ssbra{m}}\rbra*{\alpha_{u_{i+1}}-\alpha_{u_i}}^2 }
    \end{equation}

    Since the chosen path from $s$ to $t$ in $H$ lies inside $H$ and $s$ is marked, the contribution of the component $H$ to the left-hand side of \Cref{eq:cleanCC-loss-bound} satisfies 
    \begin{subequations}
    \label{eq:cleanCC-one-component}
    \begin{align}
        \sum_{(u,v)\in E(H)}(\alpha_u-\alpha_v)^2 + \sum_{v\in H:C_M(v)=1}\alpha_v^2 
        &\geq \alpha_s^2 + \sum_{i\in\ssbra{m}} \rbra*{\alpha_{u_{i+1}}-\alpha_{u_i}}^2\\
        &\geq \frac{\alpha_t^2}{\abs{H}}\\
        &\geq \frac{\mu_H}{\abs{H}^2} \geq \frac{\mu_H}{\abs{V}^2}. 
    \end{align}
    \end{subequations}
    Here, the second line is obtained via rearranging \Cref{eq:cleanCC-no-path-bound}, and the last line follows from the facts that $\alpha_t^2 \geq \mu_H/\abs{H}$ and $H$ is a subgraph of $G$.  

    Finally, summing \Cref{eq:cleanCC-one-component} over all connected components of $G$ and using $\sum_v \alpha_v^2=1$, we establish \Cref{eq:cleanCC-loss-bound} and conclude the desired lower bound:
    \[ \sum_{(u,v)\in E}(\alpha_u-\alpha_v)^2 + \sum_{v:C_M(v)=1}\alpha_v^2
    \geq \sum_{H} \frac{\mu_H}{\abs{V}^2} 
    = \frac{1}{\abs{V}^2} \sum_{v\in V} \alpha^2_v= \frac{1}{\abs{V}^2}. \qedhere \]
\end{proof}

%%%%%%%%%%%%%%%%%%%%%%%%%%%%%%%%%%%%%%%%%%%%%%%%%%%%%%%%%%%
%%%%%%%%%%%%%%%%%%%%%%%%%%%%%%%%%%%%%%%%%%%%%%%%%%%%%%%%%%%
%%%%%%%%%%%%%%%%%%%%%%%%%%%%%%%%%%%%%%%%%%%%%%%%%%%%%%%%%%%

\section{The Barak-Kelner-Steurer algorithm and a refined analysis}
\label{sec:BKS-analysis}

We refer the reader to the original paper of Barak, Kelner, and Steurer~\cite[Section 1]{BKS14} for the background on the sum-of-squares algorithm. Our goal is to prove the following theorem for the purpose of completeness:
\BKSsSoS*

Our treatment largely follows Barak--Kelner--Steurer's proof, with similar notation.
We use ``degree'' and ``level'' interchangeably: a degree-$D$ pseudodistribution (resp., pseudoexpectation) is what they
would call a level-$D$ pseudodistribution (resp., pseudoexpectation). We improve the analysis to have the optimal dependence on $t$, the order of the tensor, but otherwise the same. The main tool underlying their algorithm is the ``reweighting'':
\begin{definition}[Conditioning, or reweighting] 
Let $X$ be a random variable over the $(d-1)$-dimensional sphere, and  $\EE$ be a degree-$D$ pseudoexpectation of $X$. Let $B(x)=S(x)^2$ be a square polynomial with $\EE_{x\sim X}[B]>0$. The reweighted pseudoexpectation for
$X\mid B$ is defined by
\[
\EE_{x\sim X\mid B}[P(x)]
 \coloneqq 
\frac{\EE_{x\sim X}[P(x)B(x)]}{\EE_{x\sim X}[B(x)]}
\]
for every polynomial $P$ of degree at most $D-\deg(B)$.
\end{definition}
For a tuple $i_1,\dots,i_{t-1}\in[d]$, we write
\[
X_{i_1,\dots,i_{t-1}}  \coloneqq  X\mid \prod_{j=1}^{t-1} x_{i_j}^2.
\]
This is exactly the conditioning operation used in~\cite{BKS14}.

\subsection{SoS algorithm and rounding}
\paragraph{The feasibility SDP.} Binary search for value $\nu$ such that the following degree-$D$ pseudoexpectation is feasible, for $D=O(t^2\log d/\epsilon^2)$:
\begin{align}
    &\EE \vDash ~\Tr M (x\cdot x^T)^{\otimes t} \ge \nu,\label{eq:hsep-constraint}
    \\
    &\EE \vDash ~\|x\|^2 = 1.\label{eq:sphere-constraint}
\end{align}
Here $\EE \vDash P(x)\ge 0 \iff \EE [P(x)Q^2(x)] \ge 0$  for any $\deg P + 2\deg Q \le D;$ while $\EE \vDash P(x)=0 \iff \EE [P(x)Q(x)]=0$ for all $\deg P + \deg Q\le D.$

\paragraph{The random variables $A(X),A_1,\dots,A_t$.}
Fix any level-$D$ pseudodistribution on $X$ with $D\ge 2t$. Define random variables $A_1,\dots,A_t$ over $[d]$ by
\[
\Pr[(A_1,\dots,A_t)=(i_1,\dots,i_t)]
 \coloneqq 
\EE_{x\sim X}\Big[\prod_{r=1}^t x_{i_r}^2\Big].
\]
Since $\prod_{r=1}^t x_{i_r}^2=(\prod_{r=1}^t x_{i_r})^2$ is a square of degree $2t$, these numbers are
non-negative. Also,
\[
\sum_{i_1,\dots,i_t}
\EE_{x\sim X}\Big[\prod_{r=1}^t x_{i_r}^2\Big]
=
\EE_{x\sim X}\Big[\Big(\sum_{i=1}^d x_i^2\Big)^t\Big]=1,
\]
so $A_1,\dots,A_t$ are honest random variables. Their common marginal is the distribution $A(X)$ on $[d]$
given by
\[
\Pr[A(X)=i]  \coloneqq  \EE_{x\sim X}[x_i^2].
\]
We also write $A$ for a random variable distributed according to $A(X)$, and define
\[
\Psi(X) \coloneqq \H(A(X))=\H(A).
\]
Because the joint moments are symmetric under permutation of the $t$ coordinates, the tuple
$(A_1,\dots,A_t)$ is exchangeable.

\paragraph{The direct rounding.}
For $\alpha=(i_1,\dots,i_t)\in[d]^t$, write $x_\alpha \coloneqq \prod_{r=1}^t x_{i_r}$. 

Define vectors
$y,z\in\mathbb R^{[d]^t}$ by
\[
y_\alpha  \coloneqq  \Big(\EE_{x\sim X}[x_\alpha^2]\Big)^{1/2},
\qquad
z_\alpha  \coloneqq  \prod_{r=1}^t x_{i_r}^*  \coloneqq \prod_{r=1}^t \Big(\EE_{x\sim X}[x_{i_r}^2]\Big)^{1/2}.
\]
Then $\norm{y}=\norm{z}=1$, because
\[
\sum_\alpha y_\alpha^2
=
\EE_{x\sim X}\Big[\Big(\sum_{i=1}^d x_i^2\Big)^t\Big]=1,
\qquad
\sum_\alpha z_\alpha^2
=
\Big(\sum_{i=1}^d (x_i^*)^2\Big)^t=1.
\]
Here $x^*$ corresponds to the \emph{direct} rounding ignoring the correlation arising from the tensor product.

Abbreviate $M(x) \coloneqq \sum_{\alpha,\beta\in[d]^t} M_{\alpha\beta}x_\alpha x_\beta.$
Because $M_{\alpha\beta}\ge 0$ and pseudoexpectations satisfy Cauchy--Schwarz,
\[
\EE_{x\sim X}[x_\alpha x_\beta]
\le
\Big(\EE_{x\sim X}[x_\alpha^2]\,\EE_{x\sim X}[x_\beta^2]\Big)^{1/2}
=
y_\alpha y_\beta.
\]
Hence
\[
\nu\le \EE_{x\sim X}[M(x)]
\le
\sum_{\alpha,\beta} M_{\alpha\beta}y_\alpha y_\beta
=
\innerprodF{y}{My}.
\]
On the other hand, by our rounding and the choice of $z$, we have
\[
M(x^*)=\innerprodF{z}{Mz}.
\]
So for symmetric $M$,
\[
\nu-M(x^*)
\le
\innerprodF{y}{My}-\innerprodF{z}{Mz}
=
\innerprodF{y-z}{M(y+z)}
\le
\norm{M}\,\norm{y-z}\,\norm{y+z}.
\]
Since $y$ and $z$ are unit vectors, $\norm{y+z}\le 2$; and $\|M\|\le1$, one concludes that the direct rounding is worse than the SDP value by an additive term $2\|y-z\|$:
\begin{equation}
    \nu-M(x^*) \le 2\|y-z\|
\end{equation}

The Hellinger distance is the bridge from $\|\cdot\|$ to an entropic argument. In particular, 
\begin{align*}
\norm{y-z}^2
&=
2-2\sum_\alpha \sqrt{\Pr[(A_1,\dots,A_t)=\alpha]\,\Pr[\{A_1\}\cdots\{A_t\}=\alpha]}
\\
&=
2\,\dH\big(\{A_1\cdots A_t\},\{A_1\}\cdots\{A_t\}\big)^2.   
\end{align*}
$\{X\}$ denotes the law of $X$, and $\dH$ is the Hellinger distance.
So the above analysis on the direct rounding can be summarized below:
\begin{lemma}[Direct rounding]\label{lem:direct}
Let $X$ be a level-$2t$ pseudodistribution satisfying (\ref{eq:hsep-constraint})-(\ref{eq:sphere-constraint}). If
\[
\dH\big(\{A_1\cdots A_t\},\{A_1\}\cdots\{A_t\}\big)\le \epsilon,
\]
then the unit vector $x^*\in\mathbb R^d$ with
\[
x_i^*  \coloneqq  \Big(\EE_{x\sim X}[x_i^2]\Big)^{1/2}
\]
satisfies
\[
M(x^*)\ge \nu-2\sqrt{2}\,\epsilon.
\]
\end{lemma}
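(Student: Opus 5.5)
The proof is essentially a repackaging of the computation preceding the statement, so the plan is to assemble three facts in order.

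\emph{Step 1: bound the SDP value by $y$.} Using entrywise non-negativity of $M$ and pseudo-Cauchy--Schwarz for degree-$2t$ pseudoexpectations, we have $\EE[x_\alpha x_\beta]\le y_\alpha y_\beta$. Summing against $M_{\alpha\beta}\ge 0$, and using constraint (\ref{eq:hsep-constraint}), gives $\nu\le \EE[M(x)]\le\innerprodF{y}{My}$. The constraint $D\ge 2t$ is exactly what makes $x_\alpha^2$ a valid square within the pseudoexpectation's degree.

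\emph{Step 2: express the rounded value through $z$ and compare the two.} By construction $M(x^*)=\innerprodF{z}{Mz}$, and $x^*$ is a unit vector because $\sum_i\EE[x_i^2]=\EE[\|x\|^2]=1$ by (\ref{eq:sphere-constraint}). Symmetry of $M$ gives
\[
\innerprodF{y}{My}-\innerprodF{z}{Mz}=\innerprodF{y-z}{M(y+z)}.
\]
Combining $\|M\|\le 1$ with $\|y+z\|\le 2$ (both $y$ and $z$ have unit norm, as computed above via $\EE[(\sum_i x_i^2)^t]=1$) yields $\nu-M(x^*)\le 2\|y-z\|$.

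\emph{Step 3: convert $\|y-z\|$ into Hellinger distance.} Since $y_\alpha^2=\Pr[(A_1,\dots,A_t)=\alpha]$ and $z_\alpha^2=\prod_r\Pr[A_r=i_r]$ (each $A_r$ has the same marginal $A(X)$, by exchangeability), and both vectors are non-negative unit vectors, we have $\|y-z\|^2=2-2\sum_\alpha y_\alpha z_\alpha=2\dH^2$. Hence $\|y-z\|\le\sqrt2\,\epsilon$, and therefore $M(x^*)\ge\nu-2\sqrt2\,\epsilon$.

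The only delicate point is Step 1's use of pseudo-Cauchy--Schwarz, $\EE[pq]\le\sqrt{\EE[p^2]\EE[q^2]}$. It needs $\deg p,\deg q\le t$ so that $(p-\lambda q)^2$ has degree at most $2t\le D$. We obtain it by expanding $\EE[(sp-q/s)^2]\ge0$ and optimizing over $s$, handling the degenerate case $\EE[p^2]=0$ separately. Everything else is routine.
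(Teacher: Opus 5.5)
Your proof is correct and follows the paper's argument step by step. You use pseudo-Cauchy--Schwarz and entrywise non-negativity to get $\nu\le\langle y,My\rangle$, then symmetry with $\|M\|\le 1$ and $\|y+z\|\le 2$ to get $\nu-M(x^*)\le 2\|y-z\|$, and finally the identity $\|y-z\|^2=2\dH^2$.

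One small precision: exchangeability alone only shows that the marginals $\{A_r\}$ agree with one another. Their identification with $A(X)$ comes from the sphere constraint, via $\EE[x_i^2\|x\|^{2(t-1)}]=\EE[x_i^2]$, which is valid at level $2t$.
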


\paragraph{Reweighting.}
If direct rounding fails to provide a good approximation, then $$\dH\big(\{A_1\cdots A_t\},\{A_1\}\cdots\{A_t\}\big) \gtrsim \epsilon.$$ Hence the variables $A_i$ are somewhat correlated, and one can pin some variables to make the remaining variables more deterministic. Updating the pseudodistribution by reweighting (conditioning on the pinnings) iteratively decreases the entropy of the variables $A_i$. Since entropy is bounded, within a bounded number of iterations, the Hellinger distance must be small and at that time the direct rounding would work.
This is the part where we tighten the analysis in BKS exploiting the chain rule and symmetry bypassing a wasteful triangle inequality.
\begin{lemma}[Entropy decrement per conditioning round]\label{lem:progress}
If
\[
\dH\big(\{A_1\cdots A_t\},\{A_1\}\cdots\{A_t\}\big)>\epsilon,
\]
then
\begin{equation}
\H(A_t\mid A_1,\dots,A_{t-1}) \le \H(A)-\frac{2\epsilon^2}{t}.\label{eq:entropy-decrement}
\end{equation}
\end{lemma}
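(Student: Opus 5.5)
We argue by converting the Hellinger lower bound into a KL lower bound, expanding that KL via the chain rule, and then using exchangeability to isolate the last conditional entropy. First, by \Cref{lemma:Hsquare-leq-KL}, the hypothesis $\dH(\{A_1\cdots A_t\},\{A_1\}\cdots\{A_t\})>\epsilon$ gives
\[
\KL\big(\{A_1\cdots A_t\}\,\big\|\,\{A_1\}\cdots\{A_t\}\big) \ge 2\dH^2 > 2\epsilon^2 .
\]
Since all marginals equal the law of $A$, the KL to the product of marginals is the total correlation:
\[
\KL = \sum_{r=1}^t \H(A_r) - \H(A_1,\dots,A_t) = t\,\H(A) - \H(A_1,\dots,A_t).
\]

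Next we expand the joint entropy with the chain rule, $\H(A_1,\dots,A_t)=\sum_{r=1}^t \H(A_r\mid A_1,\dots,A_{r-1})$. This turns the total correlation into a sum of nonnegative terms
\[
\KL=\sum_{r=1}^t \delta_r, \qquad \delta_r \coloneqq \H(A)-\H(A_r\mid A_1,\dots,A_{r-1}) .
\]
Each $\delta_r = \rmI(A_r; A_1,\dots,A_{r-1})$ because $\H(A_r)=\H(A)$.

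The key step is to show that the $\delta_r$ are nondecreasing in $r$. By exchangeability, $(A_1,\dots,A_{r-1},A_r)$ has the same law as $(A_2,\dots,A_r,A_{r+1})$. Hence
\[
\H(A_r\mid A_1,\dots,A_{r-1}) = \H(A_{r+1}\mid A_2,\dots,A_r) \ge \H(A_{r+1}\mid A_1,\dots,A_r),
\]
where the inequality holds because conditioning reduces entropy. So $\delta_r\le\delta_{r+1}$, and in particular $\delta_t$ is the largest term. Therefore $t\,\delta_t \ge \sum_r \delta_r > 2\epsilon^2$, which gives $\H(A_t\mid A_1,\dots,A_{t-1}) = \H(A)-\delta_t \le \H(A)-2\epsilon^2/t$. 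This is \eqref{eq:entropy-decrement}.

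The main obstacle is that the inputs are only pseudo-moments, so we must confirm they define a genuine exchangeable joint distribution on $[d]^t$. This was already established before the lemma: the numbers $\EE[\prod_r x_{i_r}^2]$ are nonnegative as squares, they sum to $1$ by the sphere constraint, and they are symmetric in the coordinates. Marginalizing out the last coordinate reproduces the $(t-1)$-fold moments, using $\sum_i x_i^2=1$ inside the pseudoexpectation. This requires degree $D\ge 2t$, which holds. Once this is in place, every step above is classical information theory applied to honest random variables.
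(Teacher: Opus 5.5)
Your proof is correct and follows the paper's route. You convert the Hellinger bound to a KL bound via \Cref{lemma:Hsquare-leq-KL}, use the chain rule to write the KL as $\sum_j \rmI(A_j;A_1,\dots,A_{j-1})$, and use exchangeability to bound every term by the last one. Your shift-invariance step, $\H(A_r\mid A_1,\dots,A_{r-1})=\H(A_{r+1}\mid A_2,\dots,A_r)\ge \H(A_{r+1}\mid A_1,\dots,A_r)$, simply spells out the monotonicity that the paper asserts directly from exchangeability.
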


\begin{proof}
Let $P \coloneqq \{A_1\cdots A_t\}$ and $Q \coloneqq \{A_1\}\cdots\{A_t\}$. 
Using \Cref{lemma:Hsquare-leq-KL}, we obtain
\begin{equation}
\KL(P\|Q)\ge 2\,\dH(P,Q)^2 > 2\epsilon^2.\label{eq:KL-bound}
\end{equation}
By the chain rule for relative entropy,
\begin{equation}
    \label{eq:chain-rule-symmetry}
    \KL(P\|Q) 
    = \sum_{j=1}^t \rmI(A_j;A_1,\dots,A_{j-1}) 
    \leq  t \cdot \rmI(A_t;  A_1,\dots,A_{t-1})
\end{equation}
The inequality holds, because $(A_1,\dots,A_t)$ is exchangeable with marginal $A$, and for every $j$,
\[
\rmI(A_j; A_1,\dots,A_{j-1})
\le
\rmI(A_t; A_1,\dots,A_{t-1}) = \H(A)-\H(A_t\mid A_1,\dots,A_{t-1}).
\]
Then (\ref{eq:entropy-decrement}) in the statement follows by combining the above inequalities.
\end{proof}

\begin{corollary}[BKS conditioning succeeds when direct rounding fails]\label{cor:progress}
If
\[
\dH\big(\{A_1\cdots A_t\},\{A_1\}\cdots\{A_t\}\big)>\epsilon,
\]
then there exist $i_1,\dots,i_{t-1}\in[d]$ with
\[
\Pr[A_1=i_1,\dots,A_{t-1}=i_{t-1}]>0
\]
such that
\[
\Psi(X_{i_1,\dots,i_{t-1}})
\le
\Psi(X)-\frac{2\epsilon^2}{t}.
\]
\end{corollary}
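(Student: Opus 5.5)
The plan is to deduce the corollary from \Cref{lem:progress} by an averaging argument over the pinned values $(i_1,\dots,i_{t-1})$, after identifying the marginal of the reweighted pseudodistribution with a conditional law of $A_t$. First, fix $\alpha'=(i_1,\dots,i_{t-1})$ with $p_{\alpha'} \coloneqq \Pr[A_1=i_1,\dots,A_{t-1}=i_{t-1}] = \EE_{x\sim X}[\prod_{j} x_{i_j}^2] > 0$. For such $\alpha'$ the reweighting $X_{\alpha'}$ is well defined, since $B=\prod_j x_{i_j}^2$ is a square and $\EE[B]>0$. It is a pseudodistribution of degree $D-2(t-1)$, which is still at least $2$ because $D \ge 2t$.

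Next, compute the marginal $A(X_{\alpha'})$. For every $i\in[d]$,
\[
\Pr[A(X_{\alpha'})=i] = \frac{\EE_{x\sim X}\bigl[x_i^2\prod_{j=1}^{t-1}x_{i_j}^2\bigr]}{\EE_{x\sim X}\bigl[\prod_{j=1}^{t-1}x_{i_j}^2\bigr]} = \frac{\Pr[A_1=i_1,\dots,A_{t-1}=i_{t-1},A_t=i]}{p_{\alpha'}} = \Pr[A_t=i\mid A_1=i_1,\dots,A_{t-1}=i_{t-1}].
\]
This uses exactly the definition of the joint law of $(A_1,\dots,A_t)$. Hence $\Psi(X_{\alpha'}) = \H(A_t \mid A_{1..t-1}=\alpha')$.

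Finally, average over $\alpha'$. By definition of conditional entropy,
\[
\sum_{\alpha'} p_{\alpha'}\,\Psi(X_{\alpha'}) = \H(A_t\mid A_1,\dots,A_{t-1}),
\]
where the sum runs only over $\alpha'$ with $p_{\alpha'}>0$, and the weights sum to $1$. By \Cref{lem:progress} and the hypothesis $\dH>\epsilon$, this quantity is at most $\H(A)-2\epsilon^2/t = \Psi(X)-2\epsilon^2/t$. Since a weighted average is at least its minimum term, some $\alpha'$ with $p_{\alpha'}>0$ satisfies $\Psi(X_{\alpha'}) \le \Psi(X)-2\epsilon^2/t$.

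The only delicate step is the marginal identification. One must check that the reweighted pseudoexpectation evaluated on $x_i^2$ is legitimate, i.e., that the degree $2+2(t-1)=2t\le D$ suffices. One must also check that it matches the honest conditional law, so that the classical chain-rule/averaging identity for conditional entropy applies verbatim to the pseudodistribution.
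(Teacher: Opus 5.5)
Your proposal is correct and follows essentially the same route as the paper. You average the bound $\H(A_t\mid A_1,\dots,A_{t-1})\le \H(A)-2\epsilon^2/t$ from \Cref{lem:progress} over pinned tuples of positive probability, select a tuple at or below the average, and identify the conditional law of $A_t$ with the marginal $A(X_{i_1,\dots,i_{t-1}})$ of the reweighted pseudodistribution. Your remark about the degree budget ($2t\le D$) is a harmless extra check that the paper leaves implicit.
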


\begin{proof}
\Cref{lem:progress} says that the average, over $(A_1,\dots,A_{t-1})$, of the conditional entropy
$\H(A_t\mid A_1,\dots,A_{t-1})$ is at most $\H(A)-2\epsilon^2/t$. Hence there exists a particular
$(i_1,\dots,i_{t-1})$ of positive probability such that
\[
\H(A_t\mid A_1=i_1,\dots,A_{t-1}=i_{t-1})
\le
\H(A)-\frac{2\epsilon^2}{t}.
\]
It remains to identify this conditional law with $A(X_{i_1,\dots,i_{t-1}})$. For every $i\in[d]$,
\begin{align*}
\Pr[A_t=i\mid A_1=i_1,\dots,A_{t-1}=i_{t-1}]
&=
\frac{\EE_{x\sim X}[x_i^2\prod_{j=1}^{t-1}x_{i_j}^2]}{\EE_{x\sim X}[\prod_{j=1}^{t-1}x_{i_j}^2]}
\\
&=
\EE_{x\sim X_{i_1,\dots,i_{t-1}}}[x_i^2]
=
\Pr[A(X_{i_1,\dots,i_{t-1}})=i].
\end{align*}
Therefore
\[
\Psi(X_{i_1,\dots,i_{t-1}})=\H(A(X_{i_1,\dots,i_{t-1}}))
=
\H(A_t\mid A_1=i_1,\dots,A_{t-1}=i_{t-1}),
\]
and the claim follows.
\end{proof}

\paragraph{Putting things together.}
Set
\[
\delta  \coloneqq  \frac{\varepsilon}{2\sqrt2}
\qquad\text{and}\qquad
R  \coloneqq  \left\lceil \frac{t\log d}{2\delta^2}\right\rceil+1.
\]
We now imitate the BKS combining algorithm, but with the corrected entropy decrement.

Start with $X^{(0)} \coloneqq X$. At stage $s$, let $A^{(s)}(X)$, $A_1^{(s)},\dots,A_t^{(s)}$, and $\Psi(X^{(s)})$ denote the
objects associated with the current pseudodistribution $X^{(s)}$.
If
\[
\dH\big(\{A_1^{(s)}\cdots A_t^{(s)}\},\{A_1^{(s)}\}\cdots\{A_t^{(s)}\}\big)\le \delta,
\]
then \Cref{lem:direct} yields a unit vector $x^*$ with
\[
M(x^*)\ge \nu-2\sqrt2\,\delta\,\norm{M}
\ge
\nu-\varepsilon\norm{M},
\]
and we stop.
Otherwise, direct rounding fails at stage $s$, so \Cref{cor:progress} gives a tuple $i_1^{(s)},\dots,i_{t-1}^{(s)}\in[d]$ such that
\[
\Psi\big(X^{(s)}_{i_1^{(s)},\dots,i_{t-1}^{(s)}}\big)
\le
\Psi(X^{(s)})-\frac{2\delta^2}{t}.
\]
We then set
\[
X^{(s+1)}  \coloneqq  X^{(s)}_{i_1^{(s)},\dots,i_{t-1}^{(s)}}.
\]

If direct rounding were to fail for every stage $s=0,1,\dots,R-1$, then repeated application of \Cref{cor:progress} would give
\[
\Psi(X^{(R)})
\le
\Psi(X^{(0)})-R\cdot \frac{2\delta^2}{t}
\le
\log d - R\cdot \frac{2\delta^2}{t}
<0,
\]
which is impossible because entropy is always nonnegative. Hence some stage must succeed, and the resulting
vector $x^*$ satisfies
$M(x^*)\ge \nu-\varepsilon.$

Each conditioning step reweights by the square monomial $\prod_{j=1}^{t-1} x_{i_j}^2$, whose degree is
$2(t-1)$. Hence after $s$ rounds, the total degree loss is $2s(t-1)$. To keep direct rounding available at
stage $s$, we still need level at least $2t$ at that stage. Therefore it is enough to start with
$D \ge 2t+2(t-1)R.$ By our choice of $R$,
\[
D = O\left(\frac{t^2\log d}{\varepsilon^2}\right).
\]
Thus, we have completed the proof of \Cref{thm:bks-sos}.

%%%%%%%%%%%%%%%%%%%%%%%%%%%%%%%%%%%%%%%%%%%%%%%%%%%%%%%%%%%
%%%%%%%%%%%%%%%%%%%%%%%%%%%%%%%%%%%%%%%%%%%%%%%%%%%%%%%%%%%
%%%%%%%%%%%%%%%%%%%%%%%%%%%%%%%%%%%%%%%%%%%%%%%%%%%%%%%%%%%

\section{Omitted proofs}

\subsection{Omitted proofs in \texorpdfstring{\Cref{sec:parallel-repetition-StoqMAk}}{Section 3}}
\label{subsec:omitted-parallel-repetition-StoqMAk}

We first show the multiplicativity of stoquastic separable values of product Hermitian matrices: 

\stoqSepValMultiplicativeProduct*

\begin{proof}
For simplicity, we prove only the case where $k=2$, and the same argument directly extends to general $k$. Similar to \Cref{prop:opt-stoqSepVal-nonNeg}, it suffices to consider only non-negative quantum states. Also, the lower bound is straightforward, following the same argument as that used for the corresponding part of the proof of \Cref{lemma:stoqSepVal-multiplicative-PSD}. 

We now establish the upper bound. Let the operator $M$ act on quantum registers $\sfA$ and $\sfB$, and let the operator $M'$ act on quantum registers $\sfA'$ and $\sfB'$. 
After regrouping the registers as $\sfA\sfA'$ and $\sfB\sfB'$, we obtain 
\[ M_{\sfA\sfB}\otimes M'_{\sfA'\sfB'} = (M_1 \otimes M'_1)_{\sfA\sfA'}\otimes (M_2\otimes M'_2)_{\sfB\sfB'}\]
from \Cref{eq:sep-vals-product-form}. Then it follows that:
    \begin{align*}
        &\hsepPlus{d_1d'_1}{d_2d'_2}(M \totimes M') \\
        =~& \max_{\ket{\phi_1}\in \calS_+(d_1d'_1),\,\ket{\phi_2}\in \calS_+(d_2d'_2)} \bra{\phi_1}(M_1\otimes M'_1)\ket{\phi_1} \cdot \bra{\phi_2}(M_2\otimes M'_2)\ket{\phi_2}  \\
        \leq~& \lambda_{\max}(M_1\otimes M'_1)\cdot \lambda_{\max}(M_2\otimes M'_2) \\
        \leq~& \lambda_{\max}(M_1) \cdot \lambda_{\max}(M_2) \cdot \lambda_{\max}(M'_1) \cdot \lambda_{\max}(M'_2)\\
        =~& \max_{\ket{\psi_1},\ket{\psi_2}} \bra{\psi_1}\totimes \bra{\psi_2}\rbra*{M_1\totimes M_2}\ket{\psi_1}\totimes\ket{\psi_2} \cdot \max_{\ket{\psi'_1},\ket{\psi'_2}} \bra{\psi'_1}\totimes\bra{\psi'_2} \rbra*{M'_1\totimes M'_2} \ket{\psi'_1}\totimes\ket{\psi'_2}\\
        =~& \hsepPlus{d_1}{d_2}(M) \cdot \hsepPlus{d'_1}{d'_2}(M').
    \end{align*}
Here, the fourth line follows from Theorem 3 in~\cite[Section VIII.3]{Gantmakher00}, which states that the absolute value of each eigenvalue of an entrywise non-negative matrix $A$ is at most the largest eigenvalue $\lambda_{\max}(A)$. The fifth line uses the fact that $\lambda_{\max}(A)=\max_{\ket{\psi}} \bra{\psi} A \ket{\psi} = \bra{\psi_\star} A \ket{\psi_\star}$ for some non-negative unit vector $\ket{\psi_\star}$ that attains the optimum. 
\end{proof}

Similar to the notion of \emph{separable} in~\cite{HM13}, one may wonder whether \Cref{lemma:stoqSepVal-multiplicative-product} extends to convex combinations of such matrices, which we refer to as \emph{product-convex Hermitian} matrices. An explicit counterexample can be found in an earlier version of our work~\cite[Remark 3.10]{LW26}.

\vspace{1em}
Next, we prove that $\SepRCD_k$ is complete for \StoqMAk{}: 

\SepRCDisComplteForStoqMA*

\begin{proof}
    The \StoqMAk{} containment follows directly from applying the branch-overlap test (\Cref{lemma:branch-overlap-test}) to the reversible circuit pair $(R_0,R_1)$, with the input state chosen as
    \[\ket{\Psi} = \ket{\psi_1}\otimes\cdots\otimes\ket{\psi_k}.\]

    The \StoqMAk{}-hardness follows from~\cite[Figure 2]{Liu21}. Let $V_x$ be a stoquastic verifier corresponding to $x\in(\calI_\yes,\calI_\no)$, where $(\calI_\yes,\calI_\no) \in \StoqMA\rbra[\big]{k,\frac{1+a}{2},\frac{1+b}{2}}$.  Let the reversible circuits $R_0$ and $R_1$ be defined by $R_0 \coloneqq V_x^\dagger X_{\sfO} V_x$ and $R_1 \coloneqq I$. Using the construction $\calA$ for these choices of $R_0$ and $R_1$ to define a new stoquastic verifier $\widehat{V}_x$, \Cref{lemma:branch-overlap-test} yields the corresponding acceptance probability as follows:   
    \begin{subequations}
    \label{eq:StoqMAk-pacc-alternative}
    \begin{align}
        \Pr\sbra*{\widehat{V}_x\text{ accepts }\ket{\psi'}} &= \frac{1}{2} + \frac{\innerprod{R_0}{R_1}}{2}\\
        &= \bra{\psi'}\bra{\bar{0}}\bra{\bar{+}} V_x^\dagger \rbra[\Big]{\frac{X_{\sfO}+I}{2}} V_x \ket{\psi'}\ket{\bar{0}}\ket{\bar{+}}\\ 
        &= \Pr\sbra*{V_x\text{ accepts }\ket{\psi'}}.
    \end{align}
    \end{subequations}
    This identity completes the proof, as desired.  
\end{proof}

%%%%%%%%%%%%%%%%%%%%%%%%%%%%%%%%%%%%%%%%%%%%%%%%%%%%%%%%%%%

\subsection{Omitted proof of the generalized birthday paradox}
\label{subsec:omitted-birthday}

\birthdayParadox*

\begin{proof}
Let $\bbI_{ij}$ be the indicator that $(X_i,X_j)\in B\cap(\Omega_0\times\Omega_0)$, and
\[
    Z \coloneqq \sum_{1\le i<j\le K} \bbI_{ij}.
\]
Then $\E[Z]=\binom{K}{2}\lambda.$
Because $\lambda\ge \beta/n$ and $K=C\sqrt n$, we have
$\E[Z]\ge \beta C^2/3$
for all sufficiently large $n$.

We claim that $\E[Z^2]=O(1)+O(\E[Z]^2)$, with constants depending only on
$\alpha,\beta,D, C$.
Expand
\begin{equation}
    Z^2=\sum_{i<j} \bbI_{ij} + 2\sum_{(i,j)<(k,\ell)} \bbI_{ij}\bbI_{k\ell}. \label{eq:Z-squared}
\end{equation}
There are three kinds of terms in the above equation:

\medskip
\noindent
\emph{Diagonal terms.}
These contribute exactly $\E[Z]$.

\medskip
\noindent
\emph{Disjoint pairs of pairs.}
If $\{i,j\}\cap\{k,\ell\}=\varnothing$, then $\bbI_{ij}$ and $\bbI_{k\ell}$ are independent, so
\[
    \E[\bbI_{ij}\bbI_{k\ell}]=\lambda^2.
\]
The total contribution of all disjoint pairs of pairs is therefore $O(K^4\lambda^2)$, which is
$O(\E[Z]^2)$.

\medskip
\noindent
\emph{Overlapping pairs of pairs.}
It remains to bound terms such as $\E[\bbI_{ij}\bbI_{ik}]$.
Define
\[
    s(x) \coloneqq \sum_{y\in\Omega_0:(x,y)\in B} \mu(y) \le \frac{D \alpha}{n},
\]
where the inequality is due to our assumption (i) and (ii). 
Hence
\begin{align*}
    \E[\bbI_{ij}\bbI_{ik}]
    &=\sum_{x\in\Omega_0} \mu(x)s(x)^2 
    \le \frac{D \alpha}{n}\sum_{x\in\Omega_0} \mu(x)s(x) 
    =\frac{D \alpha}{n}\lambda.
\end{align*}
Because $\lambda\le D \alpha/n$ as well, this contribution is $O(1/n^2)$.
There are only $O(K^3)=O(n^{3/2})$ overlapping pairs of pairs, so their total contribution is
$O(K^3/n^2)=o(1)$.

Combining all contributions,
\[
    \E[Z^2]\le \E[Z] + O(\E[Z]^2) + o(1).
\]
Choose $C$ large enough so that $\E[Z]\ge 1$ for all sufficiently large $n$.
Then, for large $n$,
\[
    \E[Z^2]\le A\E[Z]^2
\]
for some constant $A=A(\alpha,\beta,D, C)>0$.
Paley--Zygmund now yields
\[
    \Pr[Z>0]\ge \frac{\E[Z]^2}{\E[Z^2]}\ge \frac{1}{A}.
\]

To further amplify the above probability to $1-\epsilon$, we increase the number of samples by a factor of $\log(1/\epsilon)$. More precisely, we repeat the experiment in $O(\log(1/\epsilon))$ independent blocks and accept if any block finds a bad pair; this procedure reduces the failure probability to at most $\epsilon$.  For constant $\epsilon$, this procedure only changes the absolute constant $C$. This gives the desired conclusion.
\end{proof}

%%%%%%%%%%%%%%%%%%%%%%%%%%%%%%%%%%%%%%%%%%%%%%%%%%%%%%%%%%%

\subsection{Omitted proofs of rETH-optimality of the \BPTIME{} upper bound for \StoqMA{}}
\label{subsec:omitted-rETH-opt-BPTIME-bound}

\rETHoptimal*

\begin{proof}
    Let $\varphi$ be an $n$-variable \SAT{} instance. We now provide an efficient construction of a single-prover stoquastic verification circuit with witness length $\ell(n)=n$, completeness $1$, and soundness $1/2$. The witness register $\sfW$ consists of $n$ qubits and is interpreted as an assignment $a\in\binset^n$. Let $\sfQ$ be a single-qubit register initialized to $\ket{+}$, let $\sfF$ be a single-qubit flag register initialized to $\ket{0}$, and let $\sfA$ be a $\poly(n)$-qubit work register initialized to $\ket{\bar{0}}$ that includes a single-qubit register $\sfB$. 

    The stoquastic verification circuit $V_{\varphi}$ does the following:
    \begin{enumerate}[label=\upshape(\arabic*)]
        \item Compute the bit $\SAT_\varphi(a) = \bbI\sbra{\varphi(a)\text{ is satisfied}}$ and write the resulting bit into $\sfB$.
        \item Apply $X_\sfB\,{\rm Toffoli}_{\sfQ,\sfB\to \sfF}\,X_\sfB$, which flips $\sfF$ when $\sfQ$ contains $\ket{1}$ and $\varphi(a)$ is unsatisfied. 
        \item Uncompute the work register and measure $\sfQ$ in the Hadamard basis.
    \end{enumerate}

    \parheading{Correctness analysis.}
    For a basis witness $\ket{a}$, if $\varphi(a)$ is satisfied, then $V_{\varphi}$ accepts with probability $1$. If $\varphi(a)$ is unsatisfied, then the resulting state in the registers $(\sfQ,\sfF)$ is $(\ket{00}+\ket{11})/\sqrt{2}$, and thus $V_\varphi$ accepts with probability $1/2$. 
    
    Now let us move on to the general case. Consider the witness state $\ket{\psi}=\sum_{a\in\binset^n} \alpha_a \ket{a}$, where $\sum_a \abs{\alpha_a}^2=1$. By linearity, the acceptance probability can be expressed as 
    \[ \Pr\sbra*{V_\varphi \text{ accepts } \ket{\psi}} = \frac{1}{2} + \frac{1}{2} \sum_{a\in\binset^n \,\wedge\, \varphi(a)\text{ is satisfied}} \abs*{\alpha_a}^2. \]
    
    Consequently, if $\varphi$ is satisfiable, a satisfying assignment basis state makes the verifier accept with probability $1$; whereas if $\varphi$ is unsatisfiable, then every witness is accepted with probability exactly $1/2$. Therefore, $\varphi$ is efficiently mapped to a stoquastic verification circuit for $\StoqMA_\ell\ssbra{1,1,1/2}$ with $\ell(n)=n$ and $\abs{V_\varphi}=\poly(n)$. Together with the hypothesis, this reduction yields a randomized $2^{o(n)}\poly(n)$-time algorithm for \SAT{}, which contradicts randomized ETH~\cite[Section 1.3]{DHMTW14} (see also~\cite[Hypothesis 1]{CRTY23}).
\end{proof}

\end{document}